\providecommand{\tabularnewline}{\\}
\numberwithin{section}{chapter}
\newlength{\lyxlistindent}      
\theoremstyle{plain}
\newtheorem{thm}{Theorem}
\newtheorem{thm}{Theorem}[chapter]
  \theoremstyle{definition}
  \newtheorem{defn}[thm]{Definition}
 \theoremstyle{definition}
  \newtheorem{example}[thm]{Example}
  \theoremstyle{plain}
  \newtheorem{lem}[thm]{Lemma}
  \theoremstyle{remark}
  \newtheorem{rem}[thm]{Remark}
  \theoremstyle{plain}
  \newtheorem{cor}[thm]{Corollary}
  \theoremstyle{plain}
  \newtheorem{prop}[thm]{Proposition}
\DeclareMathAlphabet{\mathpzc}{OT1}{pzc}{m}{it}
\renewcommand{\thechapter}{\Roman{chapter}}
\numberwithin{equation}{chapter}
\theoremstyle{remark}
\newtheorem*{not*}{Notation}
\newcommand{\adv}{\mathrm{adv}}
\newcommand{\ap}{\mathcal{A}} 
\newcommand{\Aut}{\mathrm{Aut}}
\newcommand{\bbs}{\hspace{0.1em}\boxbackslash\hspace{0.1em}}
\newcommand{\cc}{\mathpzc{c}}
\newcommand{\C}{\mathscr{C}}
\newcommand{\Char}{\mathrm{Char}}
\newcommand{\coker}{\mathrm{coker}}
\newcommand{\dc}{\mathpzc{d}}
\newcommand{\D}{\mathscr{D}} 
\newcommand{\diag}{\mathrm{diag}}
\newcommand{\Diag}{\mathrm{Diag}}
\newcommand{\DIAG}{\mathrm{DIAG}}
\renewcommand{\div}{\mathrm{div}}
\newcommand{\E}{\mathscr{E}} 
\newcommand{\F}{\boldsymbol{\mathbb{F}}}
\newcommand{\feyn}{\mathtt{feyn}} 
\newcommand{\FT}{\mathfrak{F}} 
\newcommand{\h}{\mathpzc{h}}
\newcommand{\Halg}{\mathfrak{H}} 
\newcommand{\homog}{\mathrm{homog}}
\newcommand{\id}{\mathrm{id}}
\newcommand{\im}{\mathrm{im}}
\renewcommand{\Im}{\mathrm{Im}}
\newcommand{\Lrt}{\mathcal{L}}
\newcommand{\lc}{\mathcal{V}} 
\newcommand{\Lin}{\mathrm{Lin}}
\newcommand{\linc}{\mathtt{C}} 
\newcommand{\loc}{\mathrm{loc}}
\newcommand{\Log}{\mathrm{Log}}
\newcommand{\M}{\mathbb{M}}
\newcommand{\md}{\mathrm{md}} 
\newcommand{\MS}{\mathrm{MS}}
\newcommand{\oast}{\circledast}
\newcommand{\obs}{\obackslash}
\newcommand{\oc}{\mathpzc{o}}
\newcommand{\ocomp}{\copyright}
\newcommand{\openone}{\mathbbm{1}}
\newcommand{\Part}{\mathrm{Part}}
\newcommand{\Perm}{\mathrm{Perm}}
\DeclareMathOperator*{\pos}{pos}
\newcommand{\pp}{\mathrm{pp}} 
\newcommand{\prep}{\mathrm{prep}}
\newcommand{\Ptn}{\mathcal{P}} 
\newcommand{\Pto}{\mathbb{P}} 
\newcommand{\R}{\mathcal{R}} 
\newcommand{\ran}{\mathrm{ran}}
\renewcommand{\Re}{\mathrm{Re}}
\newcommand{\ren}{\mathrm{ren}}
\newcommand{\ret}{\mathrm{ret}}
\newcommand{\Res}{\mathrm{Res}}
\newcommand{\rp}{\mathrm{rp}} 
\newcommand{\sd}{\mathrm{sd}}
\newcommand{\Sm}{\boldsymbol{\mathcal{S}}} 
\newcommand{\source}{\mathfrak{s}}
\newcommand{\supp}{\mathrm{supp}} 
\newcommand{\singsupp}{\mathrm{singsupp}}
\newcommand{\Sym}{\mathrm{Sym}}
\newcommand{\target}{\mathfrak{t}}
\newcommand{\tube}{\mathbb{T}}
\newcommand{\Time}{\boldsymbol{\mathcal{T}}}
\newcommand{\dT}{\cdot_{\Time}}
\newcommand{\V}{\mathscr{V}}
\newcommand{\WF}{\mathrm{WF}} 
\newcommand{\xyC}[1]{%
\xydef@\xymatrixcolsep@{#1}
} 
\newcommand{\xyR}[1]{%
\xydef@\xymatrixrowsep@{#1}
} 
\begin{document}
\global\long\def\fps#1#2{#1\![[#2]]}

\global\long\def\poisson#1#2{\left\lfloor #1,#2\right\rceil }

\global\long\def\bld#1{\boldsymbol{#1}}

\global\long\def\oprod#1{\sideset{}{^{\geq}}\prod_{#1}}

\title{Dimensional Regularization in Position Space and~a~Forest~Formula~for
Regularized Epstein-Glaser~Renormalization\emph{\vspace{3cm}}\\
\emph{Dissertation}\\
\emph{zur Erlangung des Doktorgrades}\\
\emph{des Department Physik}\\
\emph{der Universität Hamburg}\vspace{5cm}}

\author{vorgelegt von\\
Kai Johannes Keller\\
aus Landau in der Pfalz\vspace{1cm}}

\author{Hamburg\\
2010}

\maketitle
\vfill

\noindent \begin{flushleft}
\begin{minipage}[t]{1\columnwidth}%
\begin{tabular}{>{\raggedright}b{7cm}>{\raggedright}p{7cm}}
Gutachter der Dissertation: & Prof. Dr. K. Fredenhagen

Prof. Dr. J. M. Gracia-Bondía\tabularnewline
\noalign{\vskip3mm}
Gutachter der Disputation: & Prof. Dr. K. Fredenhagen

Prof. Dr. B. Kniehl\tabularnewline
\noalign{\vskip3mm}
Datum der Disputation: & Mittwoch, 7. April 2010\tabularnewline
\noalign{\vskip3mm}
Vorsitzender des Prüfungsausschusses: & Prof. Dr. G. Sigl \tabularnewline
\noalign{\vskip3mm}
Vorsitzender des Promotionsausschusses: & Prof. Dr. J. Bartels\tabularnewline
\noalign{\vskip3mm}
Dekan der Fakultät für Mathematik, Informatik und Naturwissenschaften: & Prof. Dr. H. Graener\tabularnewline
\noalign{\vskip3mm}
\end{tabular}%
\end{minipage}
\par\end{flushleft}

$\phantom{Hallo,\, ich\, bin\, da,\, aber\, Du\, siehst\, mich\, nicht!}$

\vspace{3cm}\thispagestyle{empty}

\framebox{\begin{minipage}[t]{1\columnwidth}%
\begin{center}
\textbf{\textsc{\Large Zusammenfassung}}
\par\end{center}{\Large \par}

In der vorliegenden Arbeit werden die Methoden von Dimensionaler Regularisierung
(DimReg) und Minimaler Subtraktion (MS) konsistent im Minkowsiki-Ortsraum
formuliert, und in den Rahmen der perturbativen Algebraischen Quantenfeldtheorie
(pAQFT) implementiert. Die entwickelten Kozepte werden benutzt, um
die Rekursion von Epstein und Glaser für die Konstruktion zeitgeordneter
Produkte in allen Ordnungen der kausalen Störungstheorie zu lösen.
Es wird eine geschlossene Lösung in Form einer Waldformel à la Zimmermann
angegeben. Eine Verbindung zu dem alternativen Zugang zur Renormierungstheorie
über Hopf-Algebren wird hergestellt. %
\end{minipage}}

\vspace{3cm}

\framebox{\begin{minipage}[t]{1\columnwidth}%
\begin{center}
\textbf{\textsc{\Large Abstract}}
\par\end{center}{\Large \par}

The present work contains a consistent formulation of the methods
of dimensional regularization (DimReg) and minimal subtraction (MS)
in Minkowski position space. The methods are implemented into the
framework of perturbative Algebraic Quantum Field Theory (pAQFT).
The developed methods are used to solve the Epstein-Glaser recursion
for the construction of time-ordered products in all orders of causal
perturbation theory. A solution is given in terms of a forest formula
in the sense of Zimmermann. A relation to the alternative approach
to renormalization theory using Hopf algebras is established.%
\end{minipage}}

\begin{spacing}{1.2}

\tableofcontents{}

\end{spacing}

\chapter*{Introduction}

\begin{center}
\begin{minipage}[t][1.5\totalheight]{0.5\columnwidth}%
\begin{flushleft}
\emph{Yet knowing how way leads on to way,}
\par\end{flushleft}

\begin{flushleft}
\emph{I doubted if I should ever come back. }
\par\end{flushleft}

\begin{flushright}
\emph{Robert Frost: The Road Not Taken}
\par\end{flushright}%
\end{minipage}
\par\end{center}

In 1957 Bogoliubov and Parasiuk introduced an inductive method for
the solution of the renormalization problem of perturbative Quantum
Field Theory, that is the problem of constructing the terms in the
perturbative expansion of the scattering matrix ($\Sm$-matrix) \citep{Bogoliubow1957,BogoliubovShirkov1959}.
It was later shown by Hepp {}``that the subtraction rules of Bogoliubov
an Parasiuk lead to well-defined renormalized Green's distributions''
\citep{Hepp1966}. From this common origin the method of Bogoliubov,
Parasiuk, and Hepp (BPH) evolved mainly along two different roads.
The \emph{BPHZ method} induced by the solution of the BPH recursion
in terms of Zimmermann's forest formula in momentum space on the one
hand side \citep{Zimmermann1969}, and \emph{causal perturbation theory}
induced by Epstein and Glaser's rigorous solution of the renormalization
problem in position space on the other \citep{Epstein1973}. Both
methods are rigorous incarnations of BPH, however, they have played
quite different roles in the development of perturbative Quantum Field
Theory (pQFT).

Causal perturbation theory has proven to be superior to the so-called
{}``standard approach'' to renormalization in momentum space when
it comes to more conceptual questions of perturbative renormalization,
and it is widely accepted as the landmark with which one has to test
new approaches to renormalization (see, e.g., \citep{Falk2009}).
What is more, Epstein-Glaser renormalization is the only renormalization
method which has been successfully formulated on more general, physical
backgrounds \citep{Brunetti2000}. Induced by the development of Quantum
Field Theory on curved spacetimes \citep{Radzikowski1996,Brunetti1996,Brunetti2000}%
\footnote{See also \citep{QFTonCST2009} for a selfcontained treatment of the
topic.%
} and along with the successful formulation of the renormalization
group in generic, globally hyperbolic spacetimes by Hollands and Wald
\citep{HollandsWald2001,HollandsWald2002,Hollands2003}, Brunetti,
Dütsch, and Fredenhagen started a program on the structural analysis
of perturbative Quantum Field Theory in the algebraic approach \citep{Dutsch1999,Dutsch2000,Dutsch2001,DuetschFredenhagen2003,Hollands2004,DuetschFredenhagen2004,Brunetti2004,Duetsch2005,Brunetti2007a}.
One of the main results of this program was the precise formulation
and proof of what Popineau and Stora called the \emph{Main Theorem
of Perturbative Renormalization} \citep{StoraPopineau1982}. This
is the fact that the definition of the $\Sm$-matrix of pQFT involves
a freedom described by the Stückelberg-Petermann renormalization group
\citep{Stueckelberg1953}. A milestone of the algebraic approach to
perturbative Quantum Field Theory was reached with the formulation
of \emph{perturbative Algebraic Quantum Field} \emph{Theory (pAQFT)},
which has been shown to give a common basis to the different other
incarnations of the renormalization group in literature \citep{Brunetti2009}.

Despite these deep results in perturbative renormalization, causal
perturbation theory has its weak point when it comes to concrete predictions,
say, for scattering amplitudes in collision processes of elementary
particles. Here the standard approach to pQFT in momentum space and
in particular the method of dimensional regularization (DimReg) and
minimal subtraction (MS) \citep{BolliniGiambiagi1972a,tHooftVeltman1972}
combined with Zimmermann's forest formula has proven to be efficient
in its application and to produce predictions which are in astonishing
accordance with measurements in accelerator experiments. The proof
that DimReg+MS is compatible with the combinatorics described by BPHZ
was given in \citep{BreitenlohnerMaison1977a,BreitenlohnerMaison1977b,BreitenlohnerMaison1977c},
and in particular the compatibility with gauge theories has contributed
to the success of dimensional regularization in favor of other analytic
renormalization techniques in elementary particles phenomenology \citep{BecchiRouetStora1975}.
A seemingly forgotten part of this road is that Zimmermann realized
in 1970 and proved in 1975 that the additional subtractions found
in his formula in comparison to BPH do not contribute in the limit
where the regularization is removed \citep{Zimmermann1970,Zimmermann1975}.%
\footnote{I want to thank José Gracia-Bondía for directing my attention to these
references. See also reference \citep{Falk2009} in this respect.%
}

In recent years great interest in the mathematical community for the
renormalization method of DimReg+MS combined with BPHZ has been triggered
by Kreimer's discovery of a Hopf algebra structure underlying the
BPHZ renormalization method \citep{Kreimer1998}. Connes and Kreimer
pointed out various relations of this discovery to fields of research
in pure mathematics, such as Number Theory and Noncommutative Geometry
\citep{Connes1998,Connes2000,Connes2001}. Consequently, by now the
field has grown to a research area of considerable extent between
the poles of more mathematically oriented research in Algebraic Geometry
and Number Theory \citep{BlochEsnaultKreimer2006,ConnesMarcolli2007}
and applications in the computation of higher order contributions
to the perturbative expansion of the $\Sm$-matrix \citep{Weinzierl2006,BognerWeinzierl2009,BognerWeinzierl2010}.
Shortly after Kreimer's discovery Gracia-Bondía and Lazzarini observed,
and Pinter showed that also Epstein-Glaser renormalization exhibits
a Hopf algebra structure of the Connes-Kreimer type \citep{Gracia-Bondia2000,PinterPHD2000,Pinter2000b}.
Thus the Hopf algebra structure was observed to be a remnant of the
common origin of the two roads in perturbative Quantum Field Theory
briefly outlined above. It was shown that the Hopf algebra structure
of BPHZ renormalization is invariant under certain partial summations
of graphs in the perturbative expansion \citep{BrouderFrabetti2000,BrouderFrabetti2001,Frabetti2007,Suijlekom2007c}.
Furthermore, many results on the occurrence of different Hopf algebras
in perturbative QFT have been obtained in recent years and it was
found that the Faà~di~Bruno Hopf algebra plays a distinguished role
among them \citep{Figueroa2005}.

The aim of this thesis is to {}``combine the good parts'' of both
roads to renormalization in perturbative Quantum Field Theory. That
is, to incorporate the effective methods of dimensional regularization
and minimal subtraction combined with Zimmermann's forest formula
in the conceptually clear setting of causal perturbation theory. After
a brief introduction to the theory of extension of distributions,
which is the main tool in modern formulations of causal perturbation
theory, in the first chapter, I will describe the setting of perturbative
Algebraic Quantum Field Theory in the special case of flat Minkowski
spacetime in Chapter~\ref{cha:Setting-pAQFT}. Following the arguments
in an appendix of the pAQFT article (loc.~cit.) I will show in Chapter~\ref{cha:DimRegHadamard}
how one can use a modification of the Bessel parameter in a representation
of the Wightman two point function in Minkowski space to construct
a dimensionally regularized analytic (Hadamard-) two point function
in flat spacetime which depends smoothly on the mass parameter $m^{2}$.
As shown by Hollands, smooth mass dependence is a suitable requirement
for a covariant treatment of renormalization \citep{Hollands2004}.
In Chapter~\ref{cha:DimReg-PositionSpace} I will then construct
the dimensionally regularized position space amplitude to any graph
$\Gamma$ in scalar quantum field theory as a distribution in $\D'(\M^{\left|V(\Gamma)\right|})$
($\M$ denotes Minkowski spacetime and $\left|V(\Gamma)\right|$ the
number of vertices of $\Gamma$). I have to remark here that Bollini
and Giambiagi already gave a formulation of dimensional regularization
in position space by Fourier transforming the regularized momentum
space amplitude to position space and found a modification in the
Bessel parameter of the corresponding two point function \citep{BolliniGiambiagi1996}.
Conversely, a Fourier transformation of the amplitudes constructed
in this work (which are different from the ones found by Bollini and
Giambiagi) to momentum space is not possible in general, since the
condition of smoothness in $m^{2}$ will select a propagator which
is not in Schwartz space. As a result of this chapter, I define the
position space dimensionally regularized $\Sm$-matrix, $\Sm_{\mu,\zeta}$,
which fulfills the conditions of the main theorem of perturbative
renormalization as proven in \citep{DuetschFredenhagen2004,Brunetti2009}.
In Chapter~\ref{cha:Minimal-Subtraction} I will show how minimal
subtraction can be applied to the dimensionally regularized position
space amplitudes in a graph by graph manner, and will test the method
by reproducing the result of Zimmermann that so-called {}``pure BPHZ
subgraphs'' do not contribute to the forest formula in the limit
where the regularization is removed \citep{Zimmermann1975}. The last
chapter of my thesis will use $\Sm_{\mu,\zeta}$ as an example for
an analytically regularized $\Sm$-matrix, but does not depend on
the way it was constructed. In this sense, the results of the last
chapter are independent of the formulation of dimensional regularization
in position space summarized above, and consequently they can be applied
in a much wider range. I will show in Chapter~\ref{cha:ForestFormula}
of the present thesis that a forest formula for regularized Epstein-Glaser
renormalization can be derived directly from the main theorem of perturbative
renormalization. I will give the formula and prove locality of the
MS counterterms. Furthermore I will show that the Hopf algebra structure
observed in perturbative renormalization theory can be understood
as a direct consequence of the main theorem. However, in contrast
to the Connes-Kreimer theory of renormalization, the Feynman rules
will emerge naturally from the construction, and it will be shown
that the commutative Hopf algebra of graphs introduced by Connes and
Kreimer is not enough for an algebraic construction of the counterterms\linebreak[3]
found in pAQFT. Another difference to the original Hopf algebra approach
is that the Hopf algebraic structure found in the construction of
pAQFT counterterms will correspond to sums of graphs rather than individual
ones, however, the correspondence to the Hopf algebra of graphs is
established by linearity; in accordance with the results of \citep{BrouderFrabetti2000,BrouderFrabetti2001,Frabetti2007,Suijlekom2007c}.

In order to be precise and prevent confusion, I want to remark that
I mean by {}``regularization'' in this thesis always parametric
regularization, i.e., the introduction or modification of a parameter,
which makes the extensions of the regularized distributions unique.
{}``Dimensional regularization'' is one example. {}``Renormalization'',
on the other hand, I want to use as a synonym for the extension of
distributions, as it is widely used in the terms {}``Epstein-Glaser
renormalization'' or {}``BPHZ renormalization''. Observe, however,
that the extension of its time-ordered products is only a necessary
but by no means sufficient prerequisite for a quantum field theory
to be renormalizable by power counting. That is to say, we are not
concerned with the number of counterterms that are to be introduced
at each order of perturbation theory, but only with the fact that
this number is finite. Neither will we treat the question whether
the counterterms can be absorbed in a redefinition of the parameters
in a Lagrangian of the theory.

A last remark I want to make is that the extension of the time-ordered
products to the total diagonal, which will be treated in some detail
below and corresponds to the elimination of ultraviolet (UV) divergences
in the standard approach, suffices for the perturbative definition
of the quantum field theory under investigation in the algebraic adiabatic
limit. This was shown in \citep{Brunetti2000,Dutsch2000,Hollands2003}.
The algebraic adiabatic limit is a way to remove the explicit spacetime
dependence of the interaction without introducing so-called infrared
(IR) divergences. IR divergences typically appear in the standard
approach if one removes the cutoff at small momenta (or large distances)
in theories with long range interactions. Such divergences appear
also in the causal approach of Epstein and Glaser in the strong and
in the weak adiabatic limit. Neither strong nor weak adiabatic limit
will be treated in this thesis, and as much as the algebraic adiabatic
limit is concerned I cannot add anything new to the discussion in
\citep[Chap.~6]{Brunetti2009}.

\chapter{Mathematical~Preliminaries: Extension~of~Distributions}

The main tool in renormalization in position space is the extension
of distributions, thus we want to summarize here the basic definitions
and main results of this part of distribution theory. We will first
give the general result on the existence of extensions of distributions
with the same scaling degree and will indicate how such extensions
are constructed. In the second section we will review the special
case of homogeneous distributions; homogeneity being a suitable condition
for the existence of a unique extension. We will generalize the uniqueness
result on homogeneous extensions to the case of heterogeneous distributions
in the third section. The fourth section will be devoted to the definition
of an (analytic) regularization of a distribution. We will derive
some direct consequences to be used in later chapters. A general reference
for this chapter, and a guidance for mathematical questions throughout
the thesis is the book of Hörmander \citep{Hoermander2003}.

We generally use the notation of Laurent Schwartz for the function
spaces, $\E(\mathbb{R}^{d})=C^{\infty}(\mathbb{R}^{d})$ of smooth
functions, and $\D(\mathbb{R}^{d})=C_{0}^{\infty}(\mathbb{R}^{d})$
of smooth functions with compact support (test functions) with their
respective standard topologies; and $\E'$, respectively $\D'$ for
their dual spaces.

\section{Extensions and Steinmann Scaling Degree}
\begin{defn}
[Extension] Let $u\in\D'(\mathbb{R}^{d}\backslash\left\{ 0\right\} )$
be a distribution defined for all test functions supported in the
complement of the origin. We call $\dot{u}\in\D'(\mathbb{R}^{d})$
an \emph{extension} of $u$, if\begin{equation}
\forall f\in\D(\mathbb{R}^{d}\backslash\left\{ 0\right\} ):\quad\dot{u}(f)=u(f)\,.\label{eq:ExtensionOfDistribution}\end{equation}

\end{defn}
Not every distribution $u\in\D(\mathbb{R}^{d}\backslash\left\{ 0\right\} )$
has an extension, and if there is one it is not unique. However, by
(\ref{eq:ExtensionOfDistribution}) two extensions of $u$ differ
by a distribution supported at the origin. By \citep[Thm.~2.3.4]{Hoermander2003}
any distribution supported at the origin is a polynomial in the derivatives
of Dirac's $\delta$-distribution. We call such distributions \emph{local}
and denote the space of all local distributions by $\E_{\mathrm{Dirac}}'$.
One way to restrict the freedom in the extension procedure is to require
that the extension should have the same scaling degree, cf.~\citep{Steinmann1971,Brunetti2000}.
\begin{defn}
[{Steinmann Scaling Degree}] Let\begin{equation}
\begin{array}{rccl}
\Lambda: & \mathbb{R}_{+}\times\D & \rightarrow & \D\\
 & \left(\rho,\phi\right) & \mapsto & \phi^{\rho}:=\rho^{-d}\phi(\rho^{-1}\cdot)\end{array}\label{eq:ScalingOfFunctions}\end{equation}
be the action of the positive reals on test functions in $\D\in\left\{ \D(\mathbb{R}^{d}),\D(\mathbb{R}^{d}\backslash\left\{ 0\right\} )\right\} $.
This induces, via the pullback, the action on distributions. For $u\in\D'$
we define\[
\forall\phi\in\D:\quad u_{\rho}(\phi):=u(\phi^{\rho})\,.\]
The \emph{scaling degree} $\sd(u)$ of a distribution $u$ with respect
to the origin is defined to be\[
\sd(u):=\inf\left\{ \omega\in\mathbb{R}:\lim_{\rho\rightarrow0^{+}}\rho^{\omega}u_{\rho}=0\in\D'\right\} \,.\]
\end{defn}
\begin{example}
Dirac's $\delta$-distribution has scaling degree $\sd(\delta)=d$,
since\[
\lim_{\rho\rightarrow0^{+}}\rho^{\omega}\left\langle \delta,\phi^{\rho}\right\rangle =\lim_{\rho\rightarrow0^{+}}\rho^{\omega-d}\phi(0)\,.\]

\end{example}
Furthermore, a similar argument shows that any smooth function has
scaling degree smaller than or equal to zero. The basic properties
of the scaling degree are summarized in the following
\begin{lem}
[{cf. \citep[Lem.~5.1]{Brunetti2000}}]\label{lem:ScalingDegreeProperties}Let
$u\in\D'(\mathbb{R}^{d})$, $v\in\D'(\mathbb{R}^{k})$ and let $\alpha\in\mathbb{N}^{d}$
be a multiindex, then
\begin{enumerate}
\item [(a)]$\sd(\partial^{\alpha}u)\leq\sd(u)+\left|\alpha\right|$
\item [(b)]$\sd(x^{\alpha}u)\leq\sd(u)-\left|\alpha\right|$
\item [(c)]$\forall f\in\E(\mathbb{R}^{d}):\sd(fu)\leq\sd(u)$
\item [(d)]$\sd(u\otimes v)=\sd(u)+\sd(v)$
\end{enumerate}
\end{lem}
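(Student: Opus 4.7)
The plan is to handle each of the four items by first computing how the scaling operation $(\cdot)_\rho$ commutes (up to an explicit power of $\rho$) with the operation in question, and then reading the claim off from the definition of $\sd$. Throughout, I will use that for $\omega > \sd(u)$ one has $\rho^{\omega}u_{\rho}\to 0$ in $\D'$, and by Banach--Steinhaus this convergence is uniform on bounded subsets of $\D$.

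For (a), I would differentiate the scaled test function and collect the chain-rule factor: $\partial^{\alpha}(\phi^{\rho})=\rho^{-|\alpha|}(\partial^{\alpha}\phi)^{\rho}$. This yields the identity $(\partial^{\alpha}u)_{\rho}=\rho^{-|\alpha|}\partial^{\alpha}(u_{\rho})$. Since differentiation is continuous on $\D'$, multiplying by $\rho^{\omega+|\alpha|}$ gives $\partial^{\alpha}(\rho^{\omega}u_{\rho})\to 0$ for any $\omega>\sd(u)$, proving (a). For (b), the analogous computation $x^{\alpha}\phi^{\rho}=\rho^{|\alpha|}(x^{\alpha}\phi)^{\rho}$ gives $(x^{\alpha}u)_{\rho}=\rho^{|\alpha|}\,x^{\alpha}u_{\rho}$, and continuity of multiplication by $x^{\alpha}$ on $\D'$ yields the claim.

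For (c), a short calculation with a change of variables yields $(fu)_{\rho}(\phi)=u_{\rho}\bigl(f(\rho\,\cdot)\,\phi\bigr)$. The family $\{f(\rho\,\cdot)\,\phi\}_{\rho\in(0,1]}$ is bounded in $\D$ (all elements are supported in $\supp\phi$, and all derivatives are uniformly bounded since $f\in\E$), so by the uniform-on-bounded-sets version of the convergence $\rho^{\omega}u_{\rho}\to 0$, one concludes $\rho^{\omega}(fu)_{\rho}(\phi)\to 0$ for every $\omega>\sd(u)$ and every $\phi\in\D$.

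For (d), the starting point is that the $(d{+}k)$-dimensional scaling of a product test function factorizes: $(\phi\otimes\psi)^{\rho}=\phi^{\rho}\otimes\psi^{\rho}$, so $(u\otimes v)_{\rho}(\phi\otimes\psi)=u_{\rho}(\phi)\,v_{\rho}(\psi)$. The upper bound $\sd(u\otimes v)\leq \sd(u)+\sd(v)$ then follows on product test functions by multiplying the two scalings; to pass to a general $\Phi\in\D(\mathbb{R}^{d+k})$, I would view the tensor product as an iterated evaluation, so that the argument reduces to the one-variable bounded-convergence reasoning used in (c). For the lower bound $\sd(u\otimes v)\geq \sd(u)+\sd(v)$, suppose $\omega<\sd(u)$ and $\eta<\sd(v)$: then there exist $\phi,\psi$ and sequences $\rho_{n}\to 0^{+}$ with $\rho_{n}^{\omega}u_{\rho_{n}}(\phi)\not\to 0$ and $\rho_{n}^{\eta}v_{\rho_{n}}(\psi)\not\to 0$; passing to a common subsequence, the product $\rho_{n}^{\omega+\eta}(u\otimes v)_{\rho_{n}}(\phi\otimes\psi)$ does not tend to zero, so $\sd(u\otimes v)\geq\omega+\eta$, and taking the supremum gives the claim.

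The main obstacle is the upper bound in (d) on non-product test functions, which strictly speaking requires invoking the sequential density of finite sums of tensor products in $\D(\mathbb{R}^{d+k})$ (a consequence of nuclearity) together with the uniform-on-bounded-sets convergence of $\rho^{\omega}u_{\rho}$ and $\rho^{\eta}v_{\rho}$; parts (a)--(c) are essentially immediate from the scaling identities derived above.
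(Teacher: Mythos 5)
The thesis itself offers no proof of this lemma --- it is quoted from Brunetti--Fredenhagen --- so your argument has to stand on its own. Parts (a)--(c) are correct: the commutation identities $\partial^{\alpha}(\phi^{\rho})=\rho^{-|\alpha|}(\partial^{\alpha}\phi)^{\rho}$, $x^{\alpha}\phi^{\rho}=\rho^{|\alpha|}(x^{\alpha}\phi)^{\rho}$ and $f\,\phi^{\rho}=\bigl(f(\rho\,\cdot)\phi\bigr)^{\rho}$ are exactly right, and the bounded-family/Banach--Steinhaus argument in (c) is the standard one. The upper bound $\sd(u\otimes v)\leq\sd(u)+\sd(v)$ in (d) is also sound; instead of invoking density of tensor products you can write $(u\otimes v)_{\rho}(\Phi)=u_{\rho}(G_{\rho})$ with $G_{\rho}(x)=v_{\rho}(\Phi(x,\cdot))$, observe that $\rho^{b}G_{\rho}\to 0$ in $\D(\mathbb{R}^{d})$ for $b>\sd(v)$ while $\{\rho^{a}u_{\rho}\}_{\rho\in(0,1]}$ is equicontinuous for $a>\sd(u)$, and conclude directly.

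The genuine gap is the lower bound in (d). From $\omega<\sd(u)$ and $\eta<\sd(v)$ you obtain a sequence $\rho_{n}\to 0$ with $|\rho_{n}^{\omega}u_{\rho_{n}}(\phi)|\geq\epsilon$ and a sequence $\sigma_{m}\to 0$ with $|\sigma_{m}^{\eta}v_{\sigma_{m}}(\psi)|\geq\epsilon$, but these are two unrelated sequences; ``passing to a common subsequence'' is impossible when they are eventually disjoint, and nothing forces the two factors to stay away from zero simultaneously. This is not a repairable technicality: the failure of $\rho^{\omega}u_{\rho}\to 0$ can be confined to a lacunary set of scales, and if the bad scales of $u$ and of $v$ are interleaved, the product $u_{\rho}(\phi)v_{\rho}(\psi)$ is small for \emph{every} $\rho$. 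Concretely, with $d=k=1$, $0<s,t<1$, $x_{n}=e^{-(2n)!}$, $y_{m}=e^{-(2m+1)!}$, set $u=\sum_{n}x_{n}^{-s}\mathbf{1}_{[x_{n},2x_{n}]}$ and $v=\sum_{m}y_{m}^{-t}\mathbf{1}_{[y_{m},2y_{m}]}$ (both in $L^{1}$, hence in $\D'(\mathbb{R})$). One checks $\sd(u)=s$ and $\sd(v)=t$, with $\rho^{s}|u_{\rho}(\phi)|$ bounded below only for $\rho$ within a bounded ratio of some $x_{n}$; on the other hand the $(u\otimes v)$-mass of $[0,X]^{2}$ is $O(X^{N})$ for every $N$, because the smallest even-indexed and odd-indexed factorials exceeding $\log(1/X)$ can never both be comparable to $\log(1/X)$, and therefore $\sd(u\otimes v)=-\infty$. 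So only ``$\leq$'' in (d) is available for arbitrary distributions; the equality requires an additional hypothesis, e.g.\ that the scaling degree is realized as a genuine power law, which holds for the homogeneous and heterogeneous distributions to which the thesis actually applies this lemma and for which both sides can be computed directly. Your write-up should either prove only the inequality or state and use such a hypothesis.
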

For later reference we also define the related concept of degree of
divergence of a distribution.
\begin{defn}
[Degree of Divergence]\label{def:DegreeOfDivergence}Let $u\in\D'\in\left\{ \D'(\mathbb{R}^{d}),\D'(\mathbb{R}^{d}\backslash\left\{ 0\right\} )\right\} $,
then we define the degree of divergence of $u$\[
\div(u):=\sd(u)-d\,.\]

\end{defn}
Observe that the scaling degree of a product of distributions $u,v\in\D'(\mathbb{R}^{d})$,
if it exists, is given by the scaling degree of the tensor product
$u\otimes v\in\D'(\mathbb{R}^{2d})$,\[
\sd(uv)=\sd(u\otimes v)\,,\]
whereas the degree of divergence of the product is greater\[
\div(uv)=\div(u\otimes v)+d\,.\]
Although this observation follows directly from Lemma~\ref{lem:ScalingDegreeProperties}(d)
the following theorem shows that it reflects the freedom involved
in the definition of the product of distributions. Recall that, if
it exists, the pointwise product $uv\in\D'(\mathbb{R}^{d})$ is defined
as the pullback of $u\otimes v\in\D'(\mathbb{R}^{2d})$ via the diagonal
map\[
\diag:\mathbb{R}^{d}\ni x\mapsto\left(x,x\right)\in\mathbb{R}^{2d}\,,\]
cf.~\citep[Thm.~8.2.10]{Hoermander2003}). One often encounters the
situation, that the pullback $uv=\diag^{*}(u\otimes v)$ defines the
product only in the complement of the origin, even if $u,v\in\D'(\mathbb{R}^{d})$.
This is the case, e.g., in perturbative renormalization theory and
hence one is naturally lead to the problem of finding extensions of
certain (products of) distributions \citep{Bogoliubow1957}. This
lead Epstein and Glaser to their constructive extension procedure
by {}``distribution splitting'' \citep{Epstein1973}. The mathematically
quite involved inductive procedure carried out by Epstein and Glaser
may be called the first rigorous construction of extensions of distributions
in position space. It was Steinmann who introduced the concept of
scaling degree in the discussion related to the construction of extensions
of certain distributions \citep{Steinmann1971}. There are later works
contributing to this topic, such as \citep{Estrada1998a}, and it
is treated by now in several text books \citep{Hoermander2003}, however,
the most general result known to the author is the theorem to be cited
below. It was to my best knowledge first proven in \citep[Thms.~5.2~\&~5.3]{Brunetti2000}.
\begin{thm}
[Extension of Distributions]\label{thm:ExtensionScalingDegree} Let
$u\in\D'(\mathbb{R}^{d}\backslash\left\{ 0\right\} )$ have scaling
degree $\sd(u)$ with respect to the origin. Let
\begin{itemize}
\item $\sd(u)<d$. Then there exists a unique extension $\dot{u}\in\D'(\mathbb{R}^{d})$
of $u$, which has the same scaling degree, $\sd(\dot{u})=\sd(u)$.
\item $d\leq\sd(u)<\infty$. Then there exist several extensions $\dot{u}\in\D'(\mathbb{R}^{d})$
with $\sd(\dot{u})=\sd(u)$. They are uniquely defined by their values
on a finite set of test functions.
\end{itemize}
\end{thm}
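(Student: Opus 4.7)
The plan is to reduce both cases to a single cutoff-based construction, where the ambiguity in the over-critical case $\sd(u)\geq d$ is parametrised by the choice of a finite-dimensional projection operator. I would fix once and for all a cutoff $\chi\in\D(\mathbb{R}^d)$ equal to one on a neighbourhood of the origin and write $\chi_\rho(x):=\chi(x/\rho)$. In the easy case $\sd(u)<d$ I define
\[
\dot u(\phi):=\lim_{\rho\to 0^+} u\bigl((1-\chi_\rho)\phi\bigr),
\]
which is well posed because $(1-\chi_\rho)\phi\in\D(\mathbb{R}^d\setminus\{0\})$. Convergence is established by the substitution $y=x/\rho$ yielding $u(\chi_\rho\phi) = \rho^d\,u_\rho(\psi_\rho)$ with $\psi_\rho(y):=\chi(y)\phi(\rho y)$ bounded in $\D$, and then choosing any $\omega$ with $\sd(u)<\omega<d$: the definition of scaling degree gives $\rho^\omega u_\rho\to 0$ in $\D'$, so $u(\chi_\rho\phi)=\rho^{d-\omega}(\rho^\omega u_\rho)(\psi_\rho)\to 0$. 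A parallel estimate identifies the full Cauchy limit, continuity of $\dot u$ in $\phi$ is immediate, and the extension property follows since $(1-\chi_\rho)\phi=\phi$ for small enough $\rho$ when $0\notin\supp\phi$.

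Uniqueness in this regime is forced by the scaling-degree obstruction. Two extensions differ by a distribution $v$ supported at $\{0\}$, which by H\"ormander Thm.~2.3.4 is a finite sum $\sum_{|\alpha|\leq N}c_\alpha\,\partial^\alpha\delta$; but each $\partial^\alpha\delta$ has $\sd=d+|\alpha|\geq d$ while $\sd(v)\leq\sd(u)<d$, so every $c_\alpha$ vanishes. The same bookkeeping yields $\sd(\dot u)\leq\sd(u)$ for the constructed extension, the reverse inequality being trivial by restriction.

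For the over-critical case $\sd(u)\geq d$, set $\omega:=\lfloor\sd(u)-d\rfloor$ and let $\D_\omega(\mathbb{R}^d)$ denote the subspace of test functions with $\partial^\alpha\phi(0)=0$ for all $|\alpha|\leq\omega$. Taylor's integral remainder gives a factorisation $\phi(x)=\sum_{|\alpha|=\omega+1}x^\alpha\phi_\alpha(x)$ for $\phi\in\D_\omega$, which inserts an additional $\rho^{\omega+1}$ into $\psi_\rho$ in the calculation above; since $d+\omega+1>\sd(u)$, the cutoff limit converges by the same mechanism and produces a unique extension $\tilde u$ on $\D_\omega$ with $\sd(\tilde u)=\sd(u)$. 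To pass from $\D_\omega$ to all of $\D(\mathbb{R}^d)$ I would pick auxiliary test functions $w_\alpha$, $|\alpha|\leq\omega$, with $\partial^\beta w_\alpha(0)=\delta^\beta_\alpha$, define the projection
\[
W\phi:=\phi-\sum_{|\alpha|\leq\omega}\partial^\alpha\phi(0)\,w_\alpha\in\D_\omega,
\]
and set $\dot u(\phi):=\tilde u(W\phi)$. A different choice $w'_\alpha$ shifts $\dot u$ by $\sum_\alpha \tilde u(w_\alpha-w'_\alpha)\,\partial^\alpha\delta$, i.e.\ by a local distribution of order $\leq\omega$; conversely the uniqueness argument above, now permitting orders up to $\omega$, shows that every extension of scaling degree $\sd(u)$ arises in this way, so the freedom is the finite-dimensional family $\{\dot u(w_\alpha)\}_{|\alpha|\leq\omega}$.

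The main obstacle is verifying $\sd(\dot u)\leq\sd(u)$ in the over-critical case. The projection $W$ interacts with the scaling action through $\partial^\alpha\phi^\rho(0)=\rho^{-d-|\alpha|}\partial^\alpha\phi(0)$, and these diverging factors must cancel exactly against the scaling weights $d+|\alpha|$ of the corresponding $w_\alpha$-contributions. I would manage this by splitting $\dot u(\phi^\rho)=\tilde u(W\phi^\rho)$ into a piece analysed through $\sd(\tilde u)=\sd(u)$ on $\D_\omega$ and a manifestly local remainder controlled by Lemma~\ref{lem:ScalingDegreeProperties}. Everything else --- continuity, independence of the cutoff $\chi$, and the basic Cauchy estimate --- reduces to routine $\D$-topology manipulations.
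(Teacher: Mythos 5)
Your proposal is correct and follows essentially the same route as the paper: a cutoff limit for the subcritical case (your $1-\chi_\rho$ with $\rho\to0^+$ is the paper's $\vartheta_\rho$ with $\rho\to\infty$), uniqueness from $\sd(\partial^\alpha\delta)=d+|\alpha|\geq d$, and in the overcritical case restriction to $\D_\lambda$ via the Taylor factorisation $\phi=\sum_{|\alpha|=\lfloor\lambda\rfloor+1}x^\alpha\phi_\alpha$ followed by a $W$-projection built from dual functions $w_\alpha$, with the residual freedom parametrised by the finitely many values $\dot u(w_\alpha)$. The only cosmetic difference is that you define $\tilde u$ on $\D_\omega$ directly as the cutoff limit (making independence of the Taylor representation automatic), whereas the paper defines it through the factorisation and then verifies that independence; the scaling-degree verification for $W'\tilde u$ that you flag as the main remaining obstacle is likewise only referenced, not carried out, in the paper's sketch.
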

For completeness we remark that $u$ has no extension, if $\sd(u)=\infty$,
the distribution $f\mapsto\int dx\, e^{\frac{1}{x}}f(x)$ is a standard
example of this case. Furthermore we remark that the scaling degree
of the extension $\dot{u}$ cannot be smaller than that of $u$. Thus
the condition that $\dot{u}$ should have the same scaling degree
as $u$ is a condition of minimal scaling degree or {}``maximal smoothness''
at the origin.
\begin{proof}
[{Sketch of Proof of Thm. \ref{thm:ExtensionScalingDegree}}]Let
first $\sd(u)<d$. Uniqueness follows immediately from the fact that
two extensions $\dot{u},\ddot{u}\in\D'(\mathbb{R}^{d})$ differ by
a polynomial $P(\delta)$ in derivatives of Dirac's $\delta$-distribution,
which has scaling degree $\sd(P(\delta))\geq d$, cf.~Lemma~\ref{lem:ScalingDegreeProperties}(a).

Let $\vartheta\in\E(\mathbb{R}^{d})$, $0\leq\vartheta\leq1$, such
that $\vartheta(x)=0$ for $\left|x\right|<1$ and $\vartheta(x)=1$
for $\left|x\right|\geq2$ and set $\vartheta_{\rho}(x)=\vartheta(\rho x)$,
then\begin{equation}
\dot{u}:=\lim_{\rho\rightarrow\infty}\vartheta_{\rho}u\label{eq:ExtensionScalingDegreeUnique}\end{equation}
converges in $\D'(\mathbb{R}^{d})$, i.e.,\[
\forall g\in\D(\mathbb{R}^{d}):\quad\lim_{\rho\rightarrow\infty}\left\langle u,\vartheta_{\rho}g\right\rangle \in\mathbb{C}\,,\]
and $\dot{u}$ defines an extension of $u$ with the same scaling
degree, cf.~\citep{Brunetti2000}.

Now regard the case $d\leq\sd(u)<\infty$. We define the space $\D_{\lambda}(\mathbb{R}^{d})$
of functions, which vanish up to order $\lambda>0$ at the origin,\begin{equation}
\D_{\lambda}(\mathbb{R}^{d})=\left\{ f\in\D(\mathbb{R}^{d})|\quad\forall\left|\alpha\right|\leq\lambda:\,\left(\partial^{\alpha}f\right)(0)=0\right\} .\label{eq:FunctionVanishingAtOriginToOrder}\end{equation}
Then $u$ is uniquely defined on functions, which vanish up to the
order given by the degree of divergence of $u$, i.e., $u$ has a
unique extension $\tilde{u}\in\D_{\lambda}'(\mathbb{R}^{d})$, $\lambda=\div(u)$,
with the same scaling degree. Any function $f\in\D_{\lambda}(\mathbb{R}^{d})$
can be written in the form\begin{equation}
f(x)=\sum_{\left|\alpha\right|=\left\lfloor \lambda\right\rfloor +1}x^{\alpha}g_{\alpha}(x)\,,\quad g_{\alpha}\in\D(\mathbb{R}^{d})\,,\label{eq:ExtensionSDUniqueRepresentationF}\end{equation}
where $\left\lfloor \lambda\right\rfloor $ denotes Gau{\ss}'s floor
function, i.e., the largest integer smaller or equal $\lambda$. We
define $\tilde{u}$ by\[
\left\langle \tilde{u},f\right\rangle :=\sum_{\left|\alpha\right|=\left\lfloor \lambda\right\rfloor +1}\left\langle \left(x^{\alpha}u\right)^{\cdot},g_{\alpha}\right\rangle \,,\]
where the extensions $\left(x^{\alpha}u\right)^{\cdot}$ on the right
hand side are unique by Lemma~\ref{lem:ScalingDegreeProperties}(b)
and the first part of the theorem. They can be computed as weak limits
of the form (\ref{eq:ExtensionScalingDegreeUnique}), and thus the
limit exist for each term separately. We have \[
\sum_{\left|\alpha\right|=\left\lfloor \lambda\right\rfloor +1}\lim_{\rho\rightarrow\infty}\left\langle x^{\alpha}u,\vartheta_{\rho}g_{\alpha}\right\rangle =\lim_{\rho\rightarrow\infty}\left\langle u,\vartheta_{\rho}f\right\rangle \,,\]
which shows that the extension $\tilde{u}$ does not depend on the
chosen representation (\ref{eq:ExtensionSDUniqueRepresentationF})
of $f\in\D_{\lambda}(\mathbb{R}^{d})$, cf.~\citep{DuetschFredenhagen2004}.

Regard now the ambiguity left in the extension to $\D=\D(\mathbb{R}^{d})$.
$\D_{\lambda}\subset\D$ is a closed subspace, hence there are projections
$W:\D\rightarrow\D_{\lambda}$, one for each choice of the complement
$\C$,\[
\D=\D_{\lambda}\oplus\C\,.\]
where $\D_{\lambda}\equiv\ran(W)$ and $\C\equiv\ran(1-W)$. This
split of $\D$ induces a split of the dual space $\D'$ according
to the following diagram, cf. \citep[Lem.~VI.3.3]{DunfordSchwartz1967},\\
\begin{minipage}[c][1.1\totalheight]{0.45\columnwidth}%
\begin{equation}
\xymatrix{\C & \ar[l]_{1-W}\D\ar@{->}[d]|{\vphantom{\big|}\mathrm{duality}}\ar[r]^{W} & \D_{\lambda}\\
\D_{\lambda}^{\perp} & \ar[l]_{1-W'}\D'\ar[r]^{W'} & \C^{\perp}}
\label{eq:SplitDualSpace}\end{equation}
\end{minipage}%
\begin{minipage}[c]{0.45\columnwidth}%
\[
\D'=\D_{\lambda}^{\perp}\oplus\C^{\perp}\]
\end{minipage}\\
where \[
{\displaystyle \D_{\lambda}^{\perp}:=\left\{ u\in\D'|\quad\forall f\in\D_{\lambda}:\left\langle u,f\right\rangle =0\right\} ,}\]
and \[
{\displaystyle \C^{\perp}:=\left\{ v\in\D'|\quad\forall g\in\C:\left\langle v,g\right\rangle =0\right\} .}\]
The dual projections are induced by\[
\forall u\in\D',\,\forall f\in\D:\quad\left\langle W'u,f\right\rangle :=\left\langle u,Wf\right\rangle \,.\]
Any dual basis $\left\{ w_{\alpha}\in\D:\left\langle \delta^{\left(\beta\right)},w_{\alpha}\right\rangle =\delta_{\alpha}^{\beta}\right\} $
of the basis $\left\{ \delta^{\left(\alpha\right)}:\left|\alpha\right|\leq\lambda\right\} $
of $\D_{\lambda}^{\perp}$ spans a complement $\C=\D\ominus\D_{\lambda}$
and thus defines a projection $W$. As a consequence we have the following
characterization of projections $W:\D\rightarrow\D_{\lambda}$.
\begin{lem}
[{cf.~\citep[Lem.~B.1]{DuetschFredenhagen2004}}]\label{lem:W-Projection-Functions}
There is a one-to-one correspondence between families of functions\begin{equation}
\left\{ w_{\alpha}\in\D\,|\quad\forall\left|\beta\right|\leq\lambda:\left(\partial^{\beta}w_{\alpha}\right)(0)=\delta_{\alpha}^{\beta},\,\left|\alpha\right|\leq\lambda\right\} \label{eq:LemmaProjectionWFunctions}\end{equation}
and projections $W:\D\rightarrow\D_{\lambda}$. The set (\ref{eq:LemmaProjectionWFunctions})
defines a projection $W$ by\[
Wf:=f-\sum_{\left|\alpha\right|\leq\lambda}f^{\left(\alpha\right)}(0)\, w_{\alpha}\,.\]
Conversely a set of functions of the form (\ref{eq:LemmaProjectionWFunctions})
is given by any basis of $\C\equiv\ran(1-W)$ dual to the basis $\left\{ \delta^{\left(\alpha\right)}:\left|\alpha\right|\leq\lambda\right\} $
of $\D_{\lambda}^{\perp}\subset\D'$.
\end{lem}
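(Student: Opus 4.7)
The plan is to exhibit both directions of the correspondence explicitly and then check that they invert each other. Throughout I will use the pairing $\langle\delta^{(\alpha)},f\rangle=(\partial^\alpha f)(0)$ together with the Taylor characterization of $\D_\lambda$ from (\ref{eq:FunctionVanishingAtOriginToOrder}).

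First I would show that any family $\{w_\alpha\}$ satisfying the condition in (\ref{eq:LemmaProjectionWFunctions}) induces a projection $W:\D\to\D_\lambda$ via the stated formula. Linearity is immediate, and $Wf\in\D_\lambda$ because for every multiindex $|\beta|\leq\lambda$ the condition $(\partial^\beta w_\alpha)(0)=\delta_\alpha^\beta$ gives $(\partial^\beta Wf)(0)=f^{(\beta)}(0)-\sum_{|\alpha|\leq\lambda}f^{(\alpha)}(0)\,\delta_\alpha^\beta=0$. If $f\in\D_\lambda$ already, then all correction terms vanish and $Wf=f$; this yields both $W^2=W$ and that the range of $W$ is precisely $\D_\lambda$.

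For the reverse direction, given a projection $W$, I pick auxiliary functions $f_\alpha\in\D$ with $(\partial^\beta f_\alpha)(0)=\delta_\alpha^\beta$ for $|\beta|\leq\lambda$ (for instance $f_\alpha(x)=(\alpha!)^{-1}x^\alpha\chi(x)$ with a bump $\chi$ equal to $1$ near the origin) and set $w_\alpha:=(1-W)f_\alpha$. Since $Wf_\alpha\in\D_\lambda$ has vanishing derivatives up to order $\lambda$ at $0$, the required condition $(\partial^\beta w_\alpha)(0)=\delta_\alpha^\beta$ is inherited from $f_\alpha$. Independence of the choice of $f_\alpha$ follows from $\ker(1-W)=\D_\lambda$: any two admissible choices differ by an element of $\D_\lambda$, on which $1-W$ vanishes, so the $w_\alpha$ are determined by $W$ alone.

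The only mildly delicate step is to verify that these two constructions are mutual inverses. Starting from a family $\{w_\alpha\}$, forming $W$, and then evaluating $(1-W)f_\alpha$ recovers $\sum_\beta f_\alpha^{(\beta)}(0)\,w_\beta=w_\alpha$ by direct substitution. Conversely, given $W$ with associated $w_\alpha=(1-W)f_\alpha$, every $f\in\D$ admits the decomposition $f=g+\sum_\alpha f^{(\alpha)}(0)\,f_\alpha$ with $g:=f-\sum_\alpha f^{(\alpha)}(0)\,f_\alpha\in\D_\lambda$; both the original projection and the one reconstructed from $\{w_\alpha\}$ send this to $g+\sum_\alpha f^{(\alpha)}(0)\,Wf_\alpha$, so the two coincide. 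The concluding dual-basis assertion follows at once from $\langle\delta^{(\beta)},w_\alpha\rangle=(\partial^\beta w_\alpha)(0)=\delta_\alpha^\beta$ together with the decomposition above (for $g=0$), which shows that $\{w_\alpha\}$ spans $\C=\ran(1-W)$ and is linearly independent. No substantial obstacle is anticipated; the only point demanding care is the well-definedness in the passage $W\mapsto\{w_\alpha\}$, which is secured by the kernel computation above.
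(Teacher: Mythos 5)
Your proof is correct and fills in, by direct verification, exactly the argument the paper only sketches (the lemma is stated there with a citation to D\"utsch--Fredenhagen, preceded by the remark that dual bases of $\left\{ \delta^{\left(\alpha\right)}\right\}$ span complements of $\D_{\lambda}$ and hence determine projections). Both directions, the well-definedness of $W\mapsto\left\{ w_{\alpha}\right\}$ via $\ker(1-W)=\D_{\lambda}$, and the check that the two constructions are mutually inverse are all sound; note that your converse argument in fact shows any algebraic projection onto $\D_{\lambda}$ is automatically of the stated (finite-rank-corrected, hence continuous) form, since $\D_{\lambda}$ has finite codimension.
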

We reach the conclusion that for each projection $W:\D\rightarrow\D_{\lambda}$
there is a unique extension $W'\tilde{u}\in\C^{\perp}$ of $u\in\D'(\mathbb{R}^{d}\backslash\left\{ 0\right\} )$.
The most general extension $\dot{u}$ of $u$, fulfilling the assumptions
of the theorem, can be read off from the split of $\D'$ (\ref{eq:SplitDualSpace}),\begin{equation}
\dot{u}=W'\tilde{u}+\sum_{\left|\alpha\right|\leq\div(u)}C_{\alpha}\delta^{\left(\alpha\right)}\,,\label{eq:ExtensionWMostGeneral}\end{equation}
where $C_{\alpha}\in\mathbb{C}$ are free constants. Observe, however,
that a particular extension is fixed by a choice of its values on
$\C$, namely $\left\langle \dot{u},w_{\alpha}\right\rangle =C_{\alpha}$.
\end{proof}
Although (\ref{eq:ExtensionWMostGeneral}) gives the most general
extension of $u$ with the same scaling degree, it is important to
note that the second term in (\ref{eq:ExtensionWMostGeneral}) does
not introduce an additional freedom, but only reflects the freedom
in the choice of the projection $W$.
\begin{lem}
[{cf. \citep[Lem.~B.2]{DuetschFredenhagen2004}}]\label{lem:W-Extension}Let
$\dot{u}\in\D'(\mathbb{R}^{d})$ be an extension of $u\in\D'(\mathbb{R}^{d}\backslash\left\{ 0\right\} )$
with $\div(\dot{u})=\div(u)=\lambda$. Then there exists a complementary
space $\C$ of $\D_{\lambda}$ in $\D$ such that\[
\dot{u}\big|_{\C}=0\,,\]
i.e., $C_{\alpha}=0$ in (\ref{eq:ExtensionWMostGeneral}).
\end{lem}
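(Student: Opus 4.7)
The plan is to construct a basis $\{w_{\alpha}\}_{|\alpha|\leq\lambda}$ of a complement $\C$ of $\D_{\lambda}$, dual to $\{\delta^{(\alpha)}\}$ in the sense of Lemma~\ref{lem:W-Projection-Functions}, on which the given extension $\dot{u}$ happens to vanish. Lemma~\ref{lem:W-Projection-Functions} then immediately produces the corresponding projection $W:\D\rightarrow\D_{\lambda}$ and guarantees that $\C=\ran(1-W)$ is genuinely a complement of $\D_\lambda$ in $\D$, as required.

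First I would fix any initial projection $W_{0}$ with its associated dual basis $\{w_{0,\alpha}\}$ (Lemma~\ref{lem:W-Projection-Functions}) and record the numbers $c_{\alpha}:=\dot{u}(w_{0,\alpha})\in\mathbb{C}$. The key move is to absorb these constants by modifying each $w_{0,\alpha}$ within $\D_{\lambda}$: set $w_{\alpha}:=w_{0,\alpha}-h_{\alpha}$ for some $h_{\alpha}\in\D_{\lambda}$ to be chosen. Since $h_{\alpha}\in\D_{\lambda}$ satisfies $\partial^{\beta}h_{\alpha}(0)=0$ for all $|\beta|\leq\lambda$, the duality relations $\partial^{\beta}w_{\alpha}(0)=\delta_{\alpha}^{\beta}$ are preserved, so Lemma~\ref{lem:W-Projection-Functions} still applies and $\C:=\spann\{w_{\alpha}\}$ is a valid complement. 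Using that $\dot{u}$ and the unique $\D_{\lambda}$-extension $\tilde{u}$ constructed in the proof of Theorem~\ref{thm:ExtensionScalingDegree} agree on $\D_{\lambda}$, one computes
\[
\dot{u}(w_{\alpha})=\dot{u}(w_{0,\alpha})-\dot{u}(h_{\alpha})=c_{\alpha}-\tilde{u}(h_{\alpha}),
\]
so the problem reduces to choosing $h_{\alpha}\in\D_{\lambda}$ with $\tilde{u}(h_{\alpha})=c_{\alpha}$ for each $\alpha$.

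The main (and essentially only) obstacle is therefore to show that $\tilde{u}:\D_{\lambda}\rightarrow\mathbb{C}$ is not the zero functional; once this is known it is surjective and the $h_{\alpha}$ can be picked independently. I expect this to follow from the finiteness of $\div(u)=\lambda$: every $f\in\D(\mathbb{R}^{d}\backslash\{0\})$ lies in $\D_{\lambda}(\mathbb{R}^{d})$ (such $f$ vanish identically in a neighbourhood of the origin), so $\tilde{u}\equiv 0$ on $\D_{\lambda}$ would force $u\equiv 0$ on $\D(\mathbb{R}^{d}\backslash\{0\})$, contradicting the assumption $\sd(u)=\lambda+d\in\mathbb{R}$. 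With this established the construction closes, and in the language of (\ref{eq:ExtensionWMostGeneral}) the resulting $\C$ is precisely the one for which the free constants $C_{\alpha}=\dot{u}(w_{\alpha})$ all vanish.
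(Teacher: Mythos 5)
Your proof is correct and follows essentially the same route as the cited reference \citep[Lem.~B.2]{DuetschFredenhagen2004} (the paper itself only cites the result): shift an arbitrary dual family $\{w_{0,\alpha}\}$ by elements of $\D_{\lambda}$ to absorb the values $\dot{u}(w_{0,\alpha})$, using that the shifts do not disturb the conditions of Lemma~\ref{lem:W-Projection-Functions} and that $\tilde{u}$ is a nonzero, hence surjective, functional on $\D_{\lambda}$ because $\D(\mathbb{R}^{d}\backslash\{0\})\subset\D_{\lambda}$ and $u\neq0$. You also correctly identify and justify the one place where the hypothesis $\div(\dot{u})=\div(u)$ enters, namely that $\dot{u}$ restricted to $\D_{\lambda}$ coincides with the unique extension $\tilde{u}$ of the same scaling degree.
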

That is, any extension $\dot{u}$ of $u$ can be written as a \emph{$W$-extension},
$\dot{u}=W'\tilde{u}$ with a suitably chosen projection $W$.

Despite its wide applicability ($\sd(u)$ is defined for any $u\in\D'$),
the scaling degree is often too rough a tool for describing the behavior
of distributions at the origin. A refinement of the notion of scaling
degree is the degree of homogeneity defined only for homogeneous distributions.
We will see in the next section that this refinement leads to a stronger
result regarding the uniqueness of extensions.

\section{Homogeneous Distributions}
\begin{defn}
[{Homogeneous Distribution, cf. \citep[Def. 3.2.2]{Hoermander2003}}]\label{def:HomogeneousDistribution}A
distribution $u\in\D'\in\left\{ \D'(\mathbb{R}^{d}),\D'(\mathbb{R}^{d}\backslash\left\{ 0\right\} )\right\} $
is called \emph{homogeneous} of degree $D\in\mathbb{C}$, if\begin{equation}
\forall\rho>0:\quad\left\langle u,\phi\right\rangle =\rho^{-D}\left\langle u_{\rho},\phi\right\rangle \qquad\forall\phi\in\D\,,\label{eq:DefinitionHomogeneous}\end{equation}
where $\left\langle u_{\rho},\phi\right\rangle :=\left\langle u,\phi^{\rho}\right\rangle $
with $\phi^{\rho}(x):=\rho^{-d}\phi(\rho^{-1}x)$ as in (\ref{eq:ScalingOfFunctions}).
We will sometimes write $D=\homog(u)$ for the homogeneity degree
of a distribution $u$.\end{defn}
\begin{rem}
\label{rem:ScalingDegreeHomogeneity}Observe that a distribution $u$,
homogeneous of degree $D\in\mathbb{C}$, has scaling degree $\sd(u)=-\Re(D)$,\begin{align*}
\sd(u) & =\inf\bigg\{\omega\in\mathbb{R}:\,\lim_{\rho\rightarrow0^{+}}\rho^{\omega}\left\langle u,\phi^{\rho}\right\rangle =\lim_{\rho\rightarrow0^{+}}\rho^{\omega+\Re(D)}\rho^{i\,\Im(D)}=0\bigg\}=-\Re(D)\,.\end{align*}
In this sense the homogeneity degree {}``$\homog$'' is a refinement
of the scaling degree {}``$\sd$'', and Consequently we will get
a stronger statement for the extendability of homogeneous distributions
in Theorem~\ref{thm:HomogeneousExtension} below. However, before
citing this result, let us regard an alternative characterization
of homogeneity.
\end{rem}

\begin{thm}
[Euler]\label{thm:Euler}A distribution ${u\in\D'\in\left\{ \D'(\mathbb{R}^{d}),\D'(\mathbb{R}^{d}\backslash\left\{ 0\right\} )\right\} }$
is homogeneous of degree $D\in\mathbb{C}$ if and only if\begin{equation}
\left\langle \left(x\cdot\partial_{x}-D\right)u,\phi\right\rangle =0\qquad\forall\phi\in\D\,,\label{eq:HomogeneousDistributionEulerThm}\end{equation}
where $x\cdot\partial_{x}=\sum_{i=1}^{d}x^{i}\frac{\partial}{\partial x_{i}}$
denotes the radial vector field or {}``Euler operator''.\end{thm}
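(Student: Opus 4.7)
The plan is to reduce the theorem to a first-order linear ODE in the scaling parameter $\rho$, obtained by differentiating the homogeneity relation at $\rho = 1$. Two ingredients make this work. First, a direct pointwise computation on $\phi^\rho(x) = \rho^{-d}\phi(\rho^{-1}x)$ gives
\[
\frac{\partial}{\partial\rho}\phi^\rho = -\rho^{-1}\bigl(d + x\cdot\partial_x\bigr)\phi^\rho,
\]
specializing at $\rho=1$ to $-\,(d + x\cdot\partial_x)\phi$. Second, the distributional transpose of the Euler operator is
\[
\bigl\langle (x\cdot\partial_x)u,\phi\bigr\rangle = -\bigl\langle u,(d + x\cdot\partial_x)\phi\bigr\rangle,
\]
which is obtained by applying the definition of distributional derivative coordinate by coordinate, using $\partial_i(x^i\phi) = \phi + x^i\partial_i\phi$, and summing over $i$.

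For the forward implication I would start from $\langle u,\phi^\rho\rangle = \rho^D\langle u,\phi\rangle$, which is equivalent to~\eqref{eq:DefinitionHomogeneous}, and differentiate both sides at $\rho=1$. The left-hand derivative, after passing the $\partial_\rho$ into the pairing, produces $-\langle u,(d+x\cdot\partial_x)\phi\rangle$, which by the transposition identity equals $\langle (x\cdot\partial_x)u,\phi\rangle$. The right-hand derivative yields $D\langle u,\phi\rangle$. Since $\phi\in\D$ is arbitrary, this is exactly~\eqref{eq:HomogeneousDistributionEulerThm}.

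For the converse, I would fix $\phi\in\D$ (or $\phi\in\D(\mathbb{R}^d\setminus\{0\})$ in the punctured case) and introduce the scalar function $F(\rho) := \langle u,\phi^\rho\rangle$ for $\rho>0$. Using the computation of $\partial_\rho\phi^\rho$ above together with the transpose identity,
\[
F'(\rho) = -\rho^{-1}\bigl\langle u,(d+x\cdot\partial_x)\phi^\rho\bigr\rangle = \rho^{-1}\bigl\langle (x\cdot\partial_x)u,\phi^\rho\bigr\rangle.
\]
The hypothesis $(x\cdot\partial_x - D)u = 0$ reduces this to the separable ODE $F'(\rho) = D\rho^{-1}F(\rho)$, whose unique solution on $\mathbb{R}_+$ is $F(\rho) = F(1)\,\rho^D$. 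Unwinding definitions gives $\langle u,\phi^\rho\rangle = \rho^D\langle u,\phi\rangle$, i.e.\ homogeneity of degree $D$.

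The main technical obstacle is justifying differentiation under the distributional pairing, i.e.\ showing that $\rho\mapsto\phi^\rho$ is $C^1$ as a map into $\D$ with derivative $-\rho^{-1}(d + x\cdot\partial_x)\phi^\rho$. This is handled by the standard argument that, for $\rho$ in a compact subset $K\subset\mathbb{R}_+$, the supports $\supp(\phi^\rho)$ remain contained in a fixed compact set, and all $x$-derivatives of $\phi^\rho$ depend smoothly on $\rho$ uniformly on that set, so the difference quotients converge in the topology of $\D$. In the case $u\in\D'(\mathbb{R}^d\setminus\{0\})$ the same argument applies because rescaling preserves the condition $0\notin\supp\phi$. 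Once this is in place, both directions of the theorem follow from the ODE manipulation sketched above.
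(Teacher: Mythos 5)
Your proof is correct and follows essentially the same route as the paper's: both directions reduce to differentiating $\rho\mapsto\langle u,\phi^{\rho}\rangle$, transposing the Euler operator onto the test function via $\langle(x\cdot\partial_{x})u,\phi\rangle=-\langle u,(d+x\cdot\partial_{x})\phi\rangle$, and solving the resulting scalar ODE $\varphi'(\rho)=D\rho^{-1}\varphi(\rho)$. Your explicit justification of differentiating under the pairing is a detail the paper leaves implicit, but the argument is the same.
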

\begin{proof}
By (\ref{eq:DefinitionHomogeneous}) we infer that\[
\left(\rho\partial_{\rho}\right)\rho^{-D}\left\langle u_{\rho},\phi\right\rangle =\left(\rho\partial_{\rho}\right)\left\langle u,\phi\right\rangle =0\,.\]
Computing the derivative gives:\begin{align*}
\rho\partial_{\rho}\left(\rho^{-D}\left\langle u_{\rho},\phi\right\rangle \right) & =-D\rho^{-D}\left\langle u_{\rho},\phi\right\rangle +\rho^{-D}\left\langle \left(x\cdot\partial_{x}\right)u_{\rho},\phi\right\rangle \\
 & =\rho^{-D}\left\langle \left(x\cdot\partial_{x}-D\right)u_{\rho},\phi\right\rangle \,,\end{align*}
hence, evaluating at $\rho=1$ gives (\ref{eq:HomogeneousDistributionEulerThm}).
Conversely let $\varphi(\rho):=\left\langle u,\phi^{\rho}\right\rangle $,
then\begin{align*}
\rho\varphi'(\rho) & =\rho\partial_{\rho}\left\langle u(x),\rho^{-d}\phi(\rho^{-1}x)\right\rangle \\
 & =-d\left\langle u,\phi^{\rho}\right\rangle -\left\langle u(x),\left(x\cdot\partial_{x}\right)\phi^{\rho}(x)\right\rangle \\
 & =-d\left\langle u,\phi^{\rho}\right\rangle +d\left\langle u,\phi^{\rho}\right\rangle +\left\langle \left(x\cdot\partial_{x}\right)u(x),\phi^{\rho}(x)\right\rangle \\
 & =D\left\langle u,\phi^{\rho}\right\rangle =D\varphi(\rho)\end{align*}
Hence we have the differential equation\[
\frac{\varphi'(\rho)}{\varphi(\rho)}=\frac{D}{\rho}\]
which is solved by $\varphi(\rho)=C\rho^{D}$, i.e., $C=\varphi(1)$.
This means\[
\left\langle u_{\rho},\phi\right\rangle =\rho^{D}\left\langle u,\phi\right\rangle \,.\qedhere\]

\end{proof}
Observe that $u$ is a (weak) eigenvector of $x\cdot\partial_{x}$
to the (weak) eigenvalue $D$.
\begin{thm}
[cf. { \citep[Thm. 3.2.3]{Hoermander2003}}]\label{thm:HomogeneousExtension}
Let $u\in\D'(\mathbb{R}^{d}\backslash\left\{ 0\right\} )$ scale homogeneously
of degree $D\in\mathbb{C}$ and let $-D\notin\mathbb{N}_{0}+d$, then
$u$ has a unique extension $\dot{u}\in\D'(\mathbb{R}^{d})$ which
is homogeneous of degree $D$. The map\[
\D'(\mathbb{R}^{d}\backslash\left\{ 0\right\} )\ni u\mapsto\dot{u}\in\D'(\mathbb{R}^{d})\]
is continuous.
\end{thm}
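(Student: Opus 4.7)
The plan is to combine the extension result of Theorem~\ref{thm:ExtensionScalingDegree} with the Euler characterization of homogeneity in Theorem~\ref{thm:Euler}. Remark~\ref{rem:ScalingDegreeHomogeneity} gives $\sd(u) = -\Re(D)$, so I would split the argument along the dichotomy between the subcritical regime $\Re(D) > -d$, where $\sd(u) < d$ and Theorem~\ref{thm:ExtensionScalingDegree} already supplies a unique extension, and the supercritical regime $\Re(D) \leq -d$, where the extension is a priori non-unique. In the subcritical case homogeneity comes ``for free'' from the uniqueness clause: both $(\dot u)_\rho$ and $\rho^D \dot u$ restrict to $u_\rho = \rho^D u$ on $\D(\mathbb{R}^d \setminus \{0\})$, so their difference is supported at the origin and hence is a polynomial $P(\delta)$ in derivatives of $\delta$; but $\sd(P(\delta)) \geq d$ while the difference has scaling degree $-\Re(D) < d$, forcing $P = 0$. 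Continuity in this regime is inherited from the explicit weak limit~(\ref{eq:ExtensionScalingDegreeUnique}).

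For the supercritical regime I would take any extension $\dot u$ produced by Theorem~\ref{thm:ExtensionScalingDegree} (say in the form~(\ref{eq:ExtensionWMostGeneral})) and spend the free local parameters to enforce Euler's equation. The key ingredient is that each derivative $\delta^{(\alpha)}$ is homogeneous of degree $-d - |\alpha|$, so by Theorem~\ref{thm:Euler},
\[
(x \cdot \partial_x - D)\delta^{(\alpha)} = -(D + d + |\alpha|)\,\delta^{(\alpha)}.
\]
Since $u$ is homogeneous on $\mathbb{R}^d \setminus \{0\}$, the defect $(x \cdot \partial_x - D)\dot u$ is supported at the origin, hence a finite sum $\sum_{|\alpha| \leq N} c_\alpha \delta^{(\alpha)}$. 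The hypothesis $-D \notin \mathbb{N}_0 + d$ is precisely the statement that $D + d + |\alpha| \neq 0$ for every multiindex $\alpha$, so the correction
\[
\dot u_{\homog} := \dot u + \sum_{|\alpha| \leq N} \frac{c_\alpha}{D + d + |\alpha|}\,\delta^{(\alpha)}
\]
satisfies $(x \cdot \partial_x - D)\dot u_{\homog} = 0$ and is therefore the desired homogeneous extension of $u$. Uniqueness follows from the same eigenvalue computation: the difference of two homogeneous extensions of degree $D$ is supported at the origin and simultaneously homogeneous of degree $D$, but $\homog(\delta^{(\alpha)}) = -d - |\alpha|$ never equals $D$ under the non-resonance assumption, so all coefficients in that difference must vanish.

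Continuity of the full map $u \mapsto \dot u_{\homog}$ then follows because every intermediate step is continuous in the weak-$*$ topology: the extension from Theorem~\ref{thm:ExtensionScalingDegree} is so by construction, and the coefficients $c_\alpha$ are continuous linear functionals of $\dot u$, extractable by pairing with a dual basis to $\{\delta^{(\alpha)}\}$ in the spirit of Lemma~\ref{lem:W-Projection-Functions}. I expect the main obstacle to be the bookkeeping in the supercritical case: one must verify that the defect $(x \cdot \partial_x - D)\dot u$ really is supported at the origin with \emph{bounded} polynomial order (controlled by $\div(u)$ via Lemma~\ref{lem:ScalingDegreeProperties}(a,b)) so that only finitely many $\delta^{(\alpha)}$ actually appear, and one has to interpret the non-resonance hypothesis carefully in the complex case, where it is automatically satisfied whenever $\Im(D) \neq 0$ and genuine obstructions occur only at $D \in -\mathbb{N}_0 - d \subset \mathbb{R}$.
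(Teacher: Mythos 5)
Your proof is correct, and it supplies more than the paper does: the paper defers the actual proof to H\"ormander's Thm.~3.2.3 and only remarks that existence of \emph{some} extension follows from Theorem~\ref{thm:ExtensionScalingDegree} and that uniqueness of a homogeneous extension follows from the scaling degrees of the $\delta^{(\alpha)}$ --- which is exactly your eigenvalue argument, since $\delta^{(\alpha)}$ is homogeneous of degree $-d-|\alpha|$ and the non-resonance hypothesis $-D\notin\mathbb{N}_{0}+d$ excludes $D=-d-|\alpha|$. Where you genuinely diverge is in the existence part: H\"ormander constructs the homogeneous extension directly by an explicit averaging over the scaling action, whereas you take an arbitrary scaling-degree-preserving extension from Theorem~\ref{thm:ExtensionScalingDegree}, observe via Theorem~\ref{thm:Euler} that its Euler defect $(x\cdot\partial_{x}-D)\dot{u}$ is supported at the origin, and invert $(x\cdot\partial_{x}-D)$ on $\E_{\mathrm{Dirac}}'$ using the non-resonance condition. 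This is a clean and fully valid alternative; note also that your worry about the defect having bounded order is automatically resolved by \citep[Thm.~2.3.4]{Hoermander2003} (a distribution supported at a point has finite order and is a \emph{finite} sum of $\delta^{(\alpha)}$'s), and that your subcritical/supercritical split is not strictly needed, since the defect-correction argument covers both regimes. The continuity claim is the only place where you argue at the level of plausibility rather than proof (uniform convergence of the limit (\ref{eq:ExtensionScalingDegreeUnique}) in $u$ needs a word), but the paper is no more explicit on this point.
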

A proof of the theorem can be found in the book of Hörmander. Observe,
however, that the existence of a (not necessarily homogeneous) extension
already follows from Theorem~\ref{thm:ExtensionScalingDegree}. And
if there is a homogeneous extension the uniqueness follows from the
proof of the same theorem and the fact that the derivatives of Dirac's
$\delta$-distribution have integer scaling degree greater or equal
$d$. Observe that any homogeneous extension $\dot{u}\in\D'(\mathbb{R}^{d})$
of a homogeneous distribution $u\in\D(\mathbb{R}^{d}\backslash\left\{ 0\right\} )$
in particular has the same scaling degree, $\sd(\dot{u})=\sd(u)$
by Remark~\ref{rem:ScalingDegreeHomogeneity}. Thus Theorem~\ref{thm:HomogeneousExtension}
really is a refinement of the previous result (Theorem~\ref{thm:ExtensionScalingDegree})
for the special case of homogeneous distributions.

\section{Heterogeneous Distributions}

A straight forward generalization of Theorem~\ref{thm:HomogeneousExtension}
to the case when the distribution is not homogeneous, but is given
as a finite sum of homogeneous parts will be important for the construction
of the dimensionally regularized amplitude in Section~\ref{sec:TheRegularizedAmplitude}.
\begin{defn}
[Heterogeneous Distribution]\label{def:HeterogeneousDistribution}A
distribution $u\in\D'$ is called \emph{heterogeneous of order $k\in\mathbb{N}$
and multidegree} $\bld{\alpha}=\left\{ \alpha_{1},\dots,\alpha_{k}\right\} $
($i\neq j\Leftrightarrow\alpha_{i}\neq\alpha_{j}$), if \begin{equation}
\prod_{j=1}^{k}\left(x\cdot\partial_{x}-\alpha_{j}\right)u=0\,.\label{eq:FinitelyHeterogeneous}\end{equation}
\end{defn}
\begin{lem}
\label{lem:HeterogeneousUniqueDecomposition}Heterogeneous distributions
of finite order have a unique decomposition into their homogeneous
components.\end{lem}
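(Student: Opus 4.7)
The plan is to construct explicit projection operators onto the homogeneous components via Lagrange interpolation, exploiting the fact that the annihilating polynomial $\prod_{j}(x\cdot\partial_{x}-\alpha_{j})$ in the Euler operator $E:=x\cdot\partial_{x}$ has, by assumption, $k$ pairwise distinct roots. Specifically, I would set
\[
P_{j}(E):=\prod_{i\neq j}\frac{E-\alpha_{i}}{\alpha_{j}-\alpha_{i}},\qquad j=1,\dots,k,
\]
so that $P_{j}(\alpha_{i})=\delta_{ij}$. Since $\sum_{j}P_{j}$ is a polynomial of degree at most $k-1$ which takes the value $1$ at the $k$ distinct points $\alpha_{1},\dots,\alpha_{k}$, it is identically equal to $1$. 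I would then define the candidate homogeneous components as $u_{j}:=P_{j}(E)u\in\D'$.

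Existence of the decomposition reduces to verifying two properties. First, $\sum_{j=1}^{k}u_{j}=\sum_{j}P_{j}(E)u=u$ by the Lagrange identity above. Second, each $u_{j}$ is homogeneous of degree $\alpha_{j}$: computing
\[
(E-\alpha_{j})P_{j}(E)=\frac{1}{\prod_{i\neq j}(\alpha_{j}-\alpha_{i})}\prod_{i=1}^{k}(E-\alpha_{i}),
\]
the right-hand side annihilates $u$ by hypothesis (\ref{eq:FinitelyHeterogeneous}), so $(E-\alpha_{j})u_{j}=0$ and Theorem~\ref{thm:Euler} gives homogeneity of degree $\alpha_{j}$.

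For uniqueness, suppose $u=\sum_{j=1}^{k}v_{j}$ with $v_{j}$ homogeneous of degree $\alpha_{j}$. By Theorem~\ref{thm:Euler}, $Ev_{j}=\alpha_{j}v_{j}$ weakly, hence $P_{i}(E)v_{j}=P_{i}(\alpha_{j})v_{j}=\delta_{ij}v_{j}$. Applying $P_{i}(E)$ to both sides of $u=\sum_{j}v_{j}$ yields $v_{i}=P_{i}(E)u=u_{i}$, so the decomposition into homogeneous parts coincides with the one constructed above.

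There is no substantial obstacle; the only point requiring mild care is that the operator polynomials $P_{j}(E)$ act unambiguously on $\D'$, which is immediate since $E$ is a fixed continuous linear endomorphism of $\D'$ and therefore all polynomials in $E$ commute and can be freely composed. The construction works equally well on $\D'(\mathbb{R}^{d})$ and on $\D'(\mathbb{R}^{d}\setminus\{0\})$, which will be the relevant setting for the application to the dimensionally regularized amplitude.
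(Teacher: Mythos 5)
Your proof is correct and follows essentially the same route as the paper: both construct the Lagrange interpolation projectors $P_{j}=\prod_{i\neq j}\frac{x\cdot\partial_{x}-\alpha_{i}}{\alpha_{j}-\alpha_{i}}$ and identify the homogeneous components as $u_{j}=P_{j}u$. In fact you supply two details the paper's version leaves implicit, namely the polynomial identity $\sum_{j}P_{j}\equiv1$ giving $\sum_{j}u_{j}=u$, and the explicit uniqueness argument obtained by applying $P_{i}(E)$ to an arbitrary decomposition into homogeneous parts.
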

\begin{proof}
Let $u$ be a heterogeneous distribution of order $k$ and multidegree
$\bld{\alpha}=\left\{ \alpha_{1},\dotsc,\alpha_{k}\right\} $, i.e.,
$u$ fulfills (\ref{eq:FinitelyHeterogeneous}). Then \[
P_{i}:=\prod_{j\neq i}\frac{x\cdot\partial_{x}-\alpha_{j}}{\alpha_{i}-\alpha_{j}}\]
projects $u$ onto the eigenspace of $x\cdot\partial_{x}$ to eigenvalue
$\alpha_{i}$, since $\left(x\cdot\partial_{x}-\alpha_{i}\right)P_{i}u=0$
by assumption (\ref{eq:FinitelyHeterogeneous}), and\[
P_{i}^{2}u=\prod_{j\neq i}\frac{x\cdot\partial_{x}-\alpha_{j}}{\alpha_{i}-\alpha_{j}}P_{i}u=\prod_{j\neq i}\frac{\alpha_{i}-\alpha_{j}}{\alpha_{i}-\alpha_{j}}P_{i}u=P_{i}u\,.\]
Thus $u_{i}:=P_{i}u$ is homogeneous of degree $\alpha_{i}$, and
$u$ can be uniquely decomposed into eigenvectors of $x\cdot\partial_{x}$,
\[
u=\sum_{i=1}^{k}u_{i}\,.\qedhere\]
\end{proof}
\begin{cor}
[Ext. of Heterogeneous Distributions]\label{cor:ExtensionHeterogeneousDistribution}Let
$u\in\D'(\mathbb{R}^{d}\backslash\left\{ 0\right\} )$ be a heterogeneous
distribution of multidegree $\bld{\alpha}=\left\{ \alpha_{1},\dots,\alpha_{k}\right\} $.
Let furthermore \[
-\alpha_{j}\in\mathbb{C}\backslash\mathbb{N}_{0}\quad\forall j\in\left\{ 1,\dots,k\right\} .\]
Then $u$ has a unique heterogeneous extension $\dot{u}\in\D'(\mathbb{R}^{d})$
of the same multidegree.\end{cor}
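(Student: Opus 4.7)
The plan is to reduce the heterogeneous case to the homogeneous one by exploiting the decomposition provided by Lemma~\ref{lem:HeterogeneousUniqueDecomposition}. First I would apply that lemma to write
\[
u = \sum_{i=1}^{k} u_{i}, \qquad u_{i} := P_{i}\,u,
\]
where each $u_{i} \in \D'(\mathbb{R}^{d}\setminus\{0\})$ is homogeneous of degree $\alpha_{i}$, and this decomposition is unique. The hypothesis $-\alpha_{j} \in \mathbb{C}\setminus\mathbb{N}_{0}$ in particular forces $-\alpha_{j} \notin \mathbb{N}_{0}+d$, so Theorem~\ref{thm:HomogeneousExtension} applies to each component and produces a \emph{unique} homogeneous extension $\dot{u}_{i} \in \D'(\mathbb{R}^{d})$ of degree $\alpha_{i}$. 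I would then set
\[
\dot{u} := \sum_{i=1}^{k} \dot{u}_{i}.
\]

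Next I would verify the three required properties. (i) \emph{Extension}: restricting to test functions supported away from the origin, $\dot{u}|_{\mathbb{R}^{d}\setminus\{0\}} = \sum_{i}\dot{u}_{i}|_{\mathbb{R}^{d}\setminus\{0\}} = \sum_{i} u_{i} = u$. (ii) \emph{Heterogeneity of multidegree $\boldsymbol{\alpha}$}: by Theorem~\ref{thm:Euler} each $\dot{u}_{i}$ satisfies $(x\cdot\partial_{x}-\alpha_{i})\dot{u}_{i}=0$, and since the factors $(x\cdot\partial_{x}-\alpha_{j})$ commute, $\prod_{j=1}^{k}(x\cdot\partial_{x}-\alpha_{j})\dot{u}=0$. (iii) \emph{Uniqueness}: if $\dot{v}\in\D'(\mathbb{R}^{d})$ is another heterogeneous extension of $u$ with the same multidegree, Lemma~\ref{lem:HeterogeneousUniqueDecomposition} yields a unique decomposition $\dot{v} = \sum_{i}\dot{v}_{i}$ into homogeneous pieces of degree $\alpha_{i}$. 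Restricting to the complement of the origin and invoking uniqueness of decomposition for $u$, I would conclude $\dot{v}_{i}|_{\mathbb{R}^{d}\setminus\{0\}} = u_{i}$, and then the uniqueness clause of Theorem~\ref{thm:HomogeneousExtension} forces $\dot{v}_{i} = \dot{u}_{i}$, hence $\dot{v}=\dot{u}$.

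The only subtle point — the one I expect to require the most care — is the compatibility step in (iii): that the homogeneous decomposition of the extension on $\mathbb{R}^{d}$ truly restricts to the homogeneous decomposition of $u$ on $\mathbb{R}^{d}\setminus\{0\}$. This reduces to checking that the projectors $P_{i}$ from Lemma~\ref{lem:HeterogeneousUniqueDecomposition} commute with restriction to open subsets, which is immediate since they are built from the differential operator $x\cdot\partial_{x}$ and differential operators act locally. Once this compatibility is in hand, everything else is a direct concatenation of Lemma~\ref{lem:HeterogeneousUniqueDecomposition}, Theorem~\ref{thm:HomogeneousExtension}, and Theorem~\ref{thm:Euler}.
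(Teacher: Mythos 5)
Your proof is correct, and it splits naturally into two halves that compare differently with the paper's. The existence half is essentially identical to the paper's: decompose $u$ via Lemma~\ref{lem:HeterogeneousUniqueDecomposition}, extend each homogeneous component uniquely via Theorem~\ref{thm:HomogeneousExtension} (your remark that $-\alpha_{j}\notin\mathbb{N}_{0}$ forces $-\alpha_{j}\notin\mathbb{N}_{0}+d$ is exactly the right way to see that the theorem applies), and sum the extensions. Your uniqueness half, however, takes a genuinely different route. The paper argues directly on the difference of two extensions: $\ddot{u}-\dot{u}$ is supported at the origin, hence a finite linear combination of terms $\delta^{(\beta)}$, each an eigenvector of $x\cdot\partial_{x}$ with integer eigenvalue $-d-\left|\beta\right|$; since the hypothesis excludes these values from $\bld{\alpha}$, the operator $\prod_{j}\left(x\cdot\partial_{x}-\alpha_{j}\right)$ cannot annihilate a nonzero such combination, so a second heterogeneous extension of the same multidegree cannot exist. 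You instead apply the decomposition lemma to the second extension on all of $\mathbb{R}^{d}$, use that the projectors $P_{i}$ commute with restriction to $\mathbb{R}^{d}\backslash\left\{ 0\right\}$ (correctly justified by locality of differential operators), match components with those of $u$ by uniqueness of the homogeneous decomposition, and finish with the uniqueness clause of Theorem~\ref{thm:HomogeneousExtension} componentwise. Both arguments are valid: the paper's version makes explicit which Euler eigenvalues are responsible for the local ambiguity, and hence why the hypothesis on the $\alpha_{j}$ is what it is, whereas yours reuses the machinery already assembled and avoids a second appeal to the structure theorem for distributions supported at a point.
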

\begin{proof}
Uniqueness. Let $\ddot{u}\in\D'(\mathbb{R}^{d})$ be a second extension
of $u$, then $\ddot{u}$ differs from $\dot{u}$ by a distribution
supported at $\left\{ 0\right\} $,\[
\ddot{u}-\dot{u}=\sum_{\left|\alpha\right|\leq\rho}C_{\alpha}\delta^{\left(\alpha\right)}\,.\]
However, any term in the sum on the right hand side has integer degree
of homogeneity smaller or equal to $-d$. Hence the sum in the above
expression is not annihilated by $\prod_{j=1}^{k}\left(x\cdot\partial_{x}-\alpha_{j}\right)$
with $-\alpha_{j}\notin\mathbb{N}_{0}$, whereas $\dot{u}$ is annihilated
by assumption. Thus $\ddot{u}$ is not annihilated by $\prod_{j=1}^{k}\left(x\cdot\partial_{x}-\alpha_{j}\right)$
and hence is not heterogeneous of multidegree $\bld{\alpha}=\left\{ \alpha_{1},\dots,\alpha_{k}\right\} $.

Existence. By Lemma~\ref{lem:HeterogeneousUniqueDecomposition} above,
$u$ has a unique decomposition into homogeneous parts, $u_{i}\in\D'(\mathbb{R}^{d}\backslash\left\{ 0\right\} )$,
each of which has a unique homogeneous extension $\dot{u}_{i}$ by
Theorem~\ref{thm:HomogeneousExtension}. Hence\[
\dot{u}=\sum_{i=1}^{k}\dot{u}_{i}\]
is an extension of $u$ which is heterogeneous of order $k$ and multidegree
$\bld{\alpha}$.
\end{proof}
If the eigenvalues in the product (\ref{eq:FinitelyHeterogeneous})
coincide, $\alpha_{i}=\alpha$ $\forall i\in\left\{ 1,\dots,k\right\} $,
we get to the notion of \emph{almost homogeneous distributions}, which
are homogeneous up to a polynomial of order $k-1$ in $\ln(\rho)$,
where $\rho>0$ is the scaling parameter. Hollands and Wald proved
a uniqueness result for the extension also in this case \citep[Lem.~4.1]{HollandsWald2002},
see also \citep[Prop.~3.3]{DuetschFredenhagen2004}. However, the
distributions we will analyze in this work are at most heterogeneous
of finite order, and thus we will not need the lemma of Hollands and
Wald. This is due to the fact that we regard only regularized distributions,
a concept to be defined in the next section. We want to remark, however,
that in the limit where the regularization is removed we will get
back almost homogeneous distributions in the generic case.

\section{\label{sec:RegularizationOfDistributions}Regularization of Distributions}
\begin{defn}
[Regularization]\label{def:Regularization}Let $u\in\D'(\mathbb{R}^{d}\backslash\left\{ 0\right\} )$
be a distribution with degree of divergence $\div(u)=\lambda$. Let
$\tilde{u}\in\D_{\lambda}'(\mathbb{R}^{d})$ be the unique extension
of $u$ with the same degree of divergence. A family $\left\{ u^{\zeta}\right\} _{\zeta\in\Omega\backslash\left\{ 0\right\} }$
of distributions $u^{\zeta}\in\D'(\mathbb{R}^{d})$, where $\Omega\subset\mathbb{C}$
is a neighborhood of the origin, is called a \emph{regularization
of $u$}, if\begin{equation}
\forall g\in\D_{\lambda}(\mathbb{R}^{d}):\quad\lim_{\zeta\rightarrow0}\left\langle u^{\zeta},g\right\rangle =\left\langle \tilde{u},g\right\rangle \,.\label{eq:RegularizationDefiningProerty}\end{equation}
The regularization $\left\{ u^{\zeta}\right\} $ is called \emph{analytic},
if for any function $f\in\D(\mathbb{R}^{d})$ the map \begin{equation}
\zeta\mapsto\left\langle u^{\zeta},f\right\rangle \label{eq:RegularizationAnalyticity}\end{equation}
is analytic for $\zeta\in\Omega\backslash\left\{ 0\right\} $, possibly
with a pole of finite order at the origin, i.e., (\ref{eq:RegularizationAnalyticity})
is a meromorphic function. We speak of a \emph{finite regularization},
if\[
\forall f\in\D(\mathbb{R}^{d}):\quad\lim_{\zeta\rightarrow0}\left\langle u^{\zeta},f\right\rangle \in\mathbb{C}\,,\]
in this case $\lim_{\zeta\rightarrow0}u^{\zeta}\in\D'(\mathbb{R}^{d})$
is a renormalization, or extension of $u$.
\end{defn}
Given a regularization $\left\{ u^{\zeta}\right\} $ of $u$ we have
for all $f\in\D(\mathbb{R}^{d})$ and any projection $W:\D\rightarrow\D_{\lambda}$
that\begin{equation}
\left\langle \tilde{u},Wf\right\rangle =\lim_{\zeta\rightarrow0}\left\langle u^{\zeta},Wf\right\rangle \,.\label{eq:Regularization-1}\end{equation}
According to Lemma~\ref{lem:W-Projection-Functions} for any $W$-projection
there exists a family of functions $\left\{ w_{\alpha}\in\D:\left(\partial_{\beta}w_{\alpha}\right)\!(0)=\delta_{\alpha\beta}\right\} $
such that \[
W=1-\sum_{\left|\alpha\right|\leq\lambda}\left(-1\right)^{\left|\alpha\right|}w_{\alpha}\delta^{\left(\alpha\right)}\,,\]
and since $u^{\zeta}\in\D'(\mathbb{R}^{d})$ we can write (\ref{eq:Regularization-1})
as\begin{equation}
\left\langle \tilde{u},Wf\right\rangle =\lim_{\zeta\rightarrow0}\left[\left\langle u^{\zeta},f\right\rangle -\sum_{\left|\alpha\right|\leq\lambda}\left\langle u^{\zeta},w_{\alpha}\right\rangle f^{\left(\alpha\right)}(0)\right]\,.\label{eq:Regularization-FiniteLimit}\end{equation}
In the generic case the limit on the right hand side cannot be split,
since the limits of the individual terms might not exist. However,
if $\left\{ u^{\zeta}:\zeta\in\Omega\backslash\left\{ 0\right\} \right\} $
is an analytic regularization, the individual terms can be expanded
in Laurent series around $\zeta=0$, and since the overall limit is
finite the principal parts ($\pp$) of these Laurent series have to
coincide,\begin{align*}
\forall f\in\D:\quad\pp\left\langle u^{\zeta},f\right\rangle  & =\pp\!\left(\sum_{\left|\alpha\right|\leq\lambda}\left\langle u^{\zeta},w_{\alpha}\right\rangle f^{\left(\alpha\right)}(0)\right)=\sum_{\left|\alpha\right|\leq\lambda}f^{\left(\alpha\right)}(0)\,\pp\left\langle u^{\zeta},w_{\alpha}\right\rangle .\end{align*}
We conclude that the principal part of any analytic regularization
$\left\{ u^{\zeta}\right\} $ is a polynomial in derivatives of Dirac's
$\delta$-distribution up to order $\lambda=\div(u)$,\begin{equation}
\pp(u^{\zeta})=\sum_{\left|\alpha\right|\leq\div(u)}C_{\alpha}(\zeta)\delta^{\left(\alpha\right)}\in\E_{\mathrm{Dirac}}'\,,\label{eq:Regularization-PP-Local}\end{equation}
where we set $C_{\alpha}(\zeta)=\pp\left\langle u^{\zeta},w_{\alpha}\right\rangle $.
That is, $\pp(u^{\zeta})$ is local for all $\zeta\in\Omega\backslash\left\{ 0\right\} $
and vanishes if $\div(u)<0$. The fact that the principal part of
an analytic regularization is a local distribution will be crucial
for the discussion in Chapter~\ref{cha:ForestFormula} of the present
thesis. In particular this implies that we can fix an extension $\dot{u}_{\MS}\in\D'$
of $u$ by setting\begin{equation}
\left\langle \dot{u}_{\MS},f\right\rangle :=\lim_{\zeta\rightarrow0}\left[\left\langle u^{\zeta},f\right\rangle -\pp(\left\langle u^{\zeta},f\right\rangle )\right]\,.\label{eq:RegularizationMSDefinition}\end{equation}
This way of choosing a renormalization of $u$ is called \emph{minimal
subtraction} (MS). By construction $\dot{u}_{\MS}$ has the same scaling
degree as $u$, and thus minimal subtraction can be implemented as
a $W$-extension, cf.~Lemma~\ref{lem:W-Extension}. We choose a
projection\[
W^{\MS}:\D\rightarrow\D_{\lambda}\,,\quad\lambda=\div(u)\,,\]
which fulfills\begin{equation}
\forall f\in\D:\quad\left\langle \dot{u}_{\MS},f\right\rangle =\left\langle \tilde{u},W^{\MS}f\right\rangle \,.\label{eq:RegularizationMS-ExtensionW}\end{equation}
Let us regard this projection for finite $\zeta\in\Omega\backslash\left\{ 0\right\} $.
The regular part of $\left\langle u^{\zeta},f\right\rangle $ is given
by \begin{align*}
\left\langle \rp(u^{\zeta}),f\right\rangle  & =\left\langle u^{\zeta}-\pp(u^{\zeta})\,,\, W^{\MS}f+\sum_{\left|\alpha\right|\leq\div(u)}w_{\alpha}^{\MS}f^{\left(\alpha\right)}(0)\right\rangle \\
 & =\left\langle u^{\zeta},W^{\MS}f\right\rangle +\sum_{\left|\alpha\right|\leq\div(u)}\left\langle \rp(u^{\zeta}),w_{\alpha}^{\MS}\right\rangle f^{\left(\alpha\right)}(0)\,,\end{align*}
since $\pp\left\langle u^{\zeta},W^{\MS}f\right\rangle =0$ by (\ref{eq:RegularizationDefiningProerty}).
The first term on the right hand side, as well as the left hand side
of this equation tend to $\left\langle \dot{u}_{\MS},f\right\rangle $
as $\zeta\rightarrow0$, cf.~(\ref{eq:RegularizationMSDefinition})/(\ref{eq:RegularizationMS-ExtensionW}),\begin{equation}
\lim_{\zeta\rightarrow0}\left\langle \rp(u^{\zeta}),f\right\rangle =\lim_{\zeta\rightarrow0}\left\langle u^{\zeta},W^{\MS}f\right\rangle =\left\langle \dot{u}_{\MS},f\right\rangle \,.\label{eq:Regularization-MS}\end{equation}
Hence the sum on the right hand side has to vanish in this limit.
Since it is the regular part of some Laurent series we infer that
it is at least of order one in $\zeta$,\begin{equation}
\forall\zeta\in\Omega\backslash\left\{ 0\right\} :\quad\left\langle \rp(u^{\zeta}),f\right\rangle =\left\langle u^{\zeta},W^{\MS}f\right\rangle +\mathcal{O}(\zeta)\,.\label{eq:RegularizationMS-FiniteZeta}\end{equation}
Hence for finite regularization parameter, $\zeta\in\Omega\backslash\left\{ 0\right\} $,
minimal subtraction can be expressed as a $W$-projection up to a
contribution which vanishes identically in the limit $\zeta\rightarrow0$.
This fact will become important in the discussion of minimal subtraction
on the level of graph amplitudes in Chapter~\ref{cha:Minimal-Subtraction}
and in particular for the proof of Proposition~\ref{pro:RedundantProjections}.

The coefficients $C_{\alpha}(\zeta)$ in (\ref{eq:Regularization-PP-Local})
are called \emph{counterterms}. They are local in the sense that they
are the coefficients of a local distribution. In particular they do
not depend on the chosen $W$-projection. The $C_{\alpha}(\zeta)$
are often referred to as infinite counterterms, since they do not
possess a limit as $\zeta\rightarrow0$ and the way of introducing
them by splitting the $W$-projection before taking the limit in (\ref{eq:Regularization-FiniteLimit})
was also discussed in \citep{KuznetsovTkachovVlasov1996}.

In Fourier space the $C_{\alpha}(\zeta)$ are the coefficients of
a polynomial in (external) momenta $p$, \[
\FT(\sum_{\left|\alpha\right|\leq\lambda}C_{\alpha}(\zeta)\delta^{\left(\alpha\right)})(p)=\sum_{\left|\alpha\right|\leq\lambda}\tfrac{i^{\left|\alpha\right|}}{\left(2\pi\right)^{\frac{d}{2}}}C_{\alpha}(\zeta)p^{\alpha}\,.\]
In this sense the counterterms are invariant under Fourier transform
and thus provide a basis on which one can compare the position space
approach to dimensional regularization and minimal subtraction ($\mbox{DimReg}^{x}+\mbox{MS})$,
to be discussed in the present work, with the standard approach in
momentum space.

\global\long\def\fps#1#2{#1\![[#2]]}

\global\long\def\poisson#1#2{\left\lfloor #1,#2\right\rceil }

\global\long\def\bld#1{\boldsymbol{#1}}

\begin{fmffile}{SettingPAQFT03}

\def\FG{\parbox{13mm}{
\begin{center}
\begin{fmfgraph}(20,15)
\fmfleft{F} \fmfright{G}
\fmfdot{F,G}
\fmf{phantom}{F,G}
\end{fmfgraph}
\end{center}}}

\def\FoneFG{\parbox{20mm}{
\begin{center}
\begin{fmfgraph}(20,15)
\fmfleft{F} \fmfright{G}
\fmfdot{F,G}
\fmf{phantom}{F,G}
\fmf{plain}{F,F}
\end{fmfgraph}
\end{center}}}

\def\FGoneG{\parbox{20mm}{
\begin{center}
\begin{fmfgraph}(20,15)
\fmfleft{F} \fmfright{G}
\fmfdot{F,G}
\fmf{phantom}{F,G}
\fmf{plain}{G,G}
\end{fmfgraph}
\end{center}}}

\def\FoneG{\parbox{13mm}{
\begin{center}
\begin{fmfgraph}(20,15)
\fmfleft{F} \fmfright{G}
\fmfdot{F,G}
\fmf{plain}{F,G}
\end{fmfgraph}
\end{center}}}

\chapter[Setting of pAQFT]{\label{cha:Setting-pAQFT}The Setting of Perturbative~Algebraic~Quantum~Field~Theory}

We want to analyze the methods of analytic regularization (in particular
dimensional regularization) and minimal subtraction, introduced on
the level of distributions in the previous chapter, in the algebraic
approach to perturbative Quantum Field Theory. More specifically,
we will use the framework of \emph{perturbative Algebraic Quantum
Field Theory (pAQFT)} \citep{Brunetti2009}. Although the methods
of pAQFT apply in a much more general framework we want to restrict
ourselves in the present work to the case of $d$-dimensional Minkowski
spacetime. The aim of this chapter is to introduce the main concepts
and the basic constructions of pAQFT, which will be used in the main
part of this work.

\section{Classical Field Theory and Deformation Quantization}

In 1990 Dito showed how the formalism of deformation quantization
can be applied to field theory \citep{Dito1990}. He constructed the
algebra of the free scalar field without reference to an underlying
Hilbert space. In his work he used the earlier analysis of the deformation
of algebras by Bayen, Flato, Fronsdal, Lichnerowicz, and Sternheimer
\citep{Bayen1978-I+II}. Dito also related his approach to the (despite
its mathematical problems) widely known and used Feynman path integral
approach to quantum field theory. The work of Brunetti, Dütsch, and
Fredenhagen \citep{Dutsch2000,Dutsch2001,DuetschFredenhagen2003,DuetschFredenhagen2004,Brunetti2007a,Brunetti2009}
showed that the star product approach of Dito can be extended to a
purely algebraic formulation of perturbative Quantum Field Theory
(pQFT) in general and perturbative renormalization theory in particular.
To give a motivation for the deformation view point in field theory,
we briefly review in this first section the structure of the algebra
of classical field theory and define its Poisson structure. The construction
of the algebra of observables of pQFT will then be carried out in
full detail in the next section for the case of flat Minkowski spacetime.

Let $\M$ denote the $d$-dimensional Minkowski spacetime with metric
tensor $\eta=\left(1,-1,\dots,-1\right)$ on the diagonal. Timelike
(spacelike) vectors fulfill $x^{2}>0$ ($x^{2}<0$) and the set of
all timelike vectors is called the open lightcone $\lc=\lc^{+}\dot{\cup}\lc^{-}$.
It is the disjoint union of two connected components, which we refer
to as the forward and backward lightcones, $\lc^{\pm}=\left\{ x\in\M|\, x^{2}>0,\,\pm x^{0}>0\right\} $.
We denote by $\overline{\lc^{\pm}}$ and $\partial\lc^{\pm}$ the
closure and boundary of these sets, respectively.

The configuration space of classical field theory is the space of
smooth functions\[
\varphi:\M\rightarrow\mathbb{C}\,,\quad\varphi\in\E(\M)\,,\]
and the observables are (not necessarily linear) functionals on this
space.
\begin{defn}
[Smooth Functional] A functional\[
\begin{array}{rccl}
F: & \E(\M) & \rightarrow & \mathbb{C}\\
 & \varphi & \mapsto & F(\varphi)\,,\end{array}\]
is called \emph{smooth}, if for any $\varphi\in\E(\M)$ and for all
$n\in\mathbb{N}$ its $n$th functional derivative,\[
\left\langle F^{\left(n\right)}(\varphi),h_{1}\otimes\cdots\otimes h_{n}\right\rangle :=\frac{d^{n}}{d\lambda_{1}\cdots d\lambda_{n}}F(\varphi+\sum_{i=1}^{n}\lambda_{i}h_{i})\bigg|_{\lambda_{1}=\dots=\lambda_{n}=0}\,,\quad h_{i}\in\E(\M)\,,\]
exists as a symmetric distribution with compact support, $F^{\left(n\right)}(\varphi)\in\E'(\M^{n})$.
Occasionally we will write $F^{\left(n\right)}(\varphi)=\frac{\delta^{n}F}{\delta\varphi^{n}}$.
The triangular brackets denote the dual pairing (of $\E'(\M^{n})$
with $\E(\M^{n})$ here). We denote the space of smooth functionals
by $\widetilde{\mathcal{F}}(\M)$.
\end{defn}
{}
\begin{defn}
[Support of a Functional]Let $F\in\widetilde{\mathcal{F}}(\M)$,
then we define the support of $F$ implicitly by the equivalence,
$h\in\E(\M)$,\[
\supp(F)\cap\supp(h)=\emptyset\quad\Leftrightarrow\quad\forall\varphi\in\E(\M):F(\varphi+h)=F(\varphi)\,.\]

\end{defn}
Smooth functionals form a (commutative) algebra, $\left(\widetilde{\mathcal{F}}(\M),\cdot\right)$,
with respect to the pointwise product, $\forall F,G\in\widetilde{\mathcal{F}}(\M)$,
$\forall\varphi\in\E(\M)$,\[
\left(F\cdot G\right)(\varphi):=F(\varphi)G(\varphi)\,.\]
Typical examples of such functionals are,\begin{equation}
\mbox{(i)}\quad F(\varphi)=\int f(x)\,{\varphi(x)}^{3}\, dx\quad,\quad\mbox{(ii)}\quad G(\varphi)=\int g(x)\left[\partial\varphi(x)\right]^{2}\, dx\,,\label{eq:FunctionalsExamples}\end{equation}
\[
\mbox{or}\quad\mbox{(iii)}\quad K(\varphi)=\int k(x,y)\,\varphi(x)\,\varphi(y)\, dx\, dy\,,\]
where $f,g\in\D(\M)$, $k\in\D(\M^{2})$ are test functions of compact
support, and the integral is taken over the whole spacetime $\M$,
or over $\M^{2}$, respectively. The field itself is represented as
a linear evaluation functional, $\forall f\in\D(\M)$,\[
\varphi\mapsto\varphi(f)=\int f(x)\,\varphi(x)\, dx\,.\]

It is not possible (in the framework presented here) to deform the
whole algebra of smooth functionals $\left(\widetilde{\mathcal{F}}(\M),\cdot\right)$.
However, we can restrict ourselves to a suitably chosen subalgebra
of functionals $\mathcal{F}(\M)\subset\widetilde{\mathcal{F}}(\M)$,
which will have a quantized counterpart in deformation quantization.
This {}``deformable algebra'' is defined by imposing conditions
on the \emph{wave front set} of the functional derivatives $F^{\left(n\right)}(\varphi)\in\E'(\M^{n})$,
$F\in\widetilde{\mathcal{F}}(\M)$. The wave front set of a distribution
$u\in\D'(\M^{n})$, roughly speaking, is a conic subset of the cotangent
bundle $\WF(u)\subset\dot{T}^{*}(\M^{n})=T^{*}(\M^{n})\backslash\left\{ 0\right\} $,
where the first component gives the singular support, $\singsupp(u)$,
and the second gives the directions in which the Fourier transform
$\FT(u)$ does not decrease rapidly. The precise definition of this
mathematical tool, as well as the large number of deep implications
of the wave front set properties of distributions can be found in
the book of Hörmander, cf.~\citep[Def.~8.1.2]{Hoermander2003}.
\begin{defn}
[Deformable Algebra]\label{def:DeformableAlgebra}A smooth functional
$F\in\widetilde{\mathcal{F}}(\M)$ is an element of the \emph{deformable
algebra} $\left(\mathcal{F}(\M),\cdot\right)$, if for all $n\in\mathbb{N}$
the wave front set of $F^{\left(n\right)}(\varphi)$ does not meet
the $n$-fold product of the closed forward or backward light cone,\[
\forall\varphi:\quad\WF(F^{\left(n\right)}(\varphi))\cap\left(\supp(F^{\left(n\right)}(\varphi))\times\left[{\overline{\lc^{+}}}^{n}\cup{\overline{\lc^{-}}}^{n}\right]\right)=\emptyset\,.\]
We refer to elements of the deformable algebra as \emph{deformable
functionals}.
\end{defn}
Since $\WF(u)\subset\dot{T}^{*}(\M^{n})$, the forward and backward
lightcone are here to be understood as subsets of the cotangent space,
${\overline{\lc^{+}}}^{n},{\overline{\lc^{-}}}^{n}\subset T_{x}^{*}(\M^{n}$).

The deformable algebra $\left(\mathcal{F}(\M),\cdot\right)$ can be
made into a Poisson algebra by using the Peierls bracket \citep{DuetschFredenhagen2003}.
The Poisson structure is defined by\begin{equation}
\poisson FG(\varphi):=i\left\langle F^{\left(1\right)}(\varphi),\Delta*G^{\left(1\right)}(\varphi)\right\rangle \,.\label{eq:PoissonBracket}\end{equation}
where $*$ denotes convolution, and $\Delta$ is defined as the difference
of the unique retarded and advanced fundamental solutions $\Delta_{\ret},\Delta_{\adv}\in\D'(\M)$
of the Klein-Gordon operator, cf.~\citep[Eq.~(28)]{DuetschFredenhagen2003}.
Regard these fundamental solutions in more detail, we have\[
\left(\Box+m^{2}\right)\Delta_{\substack{\ret\\
\adv}
}=\delta\,,\quad\mbox{with}\qquad\supp(\Delta_{\substack{\ret\\
\adv}
})=\overline{\lc^{\pm}}\]
\[
\Delta:=\left(\Delta_{\ret}-\Delta_{\adv}\right)\,.\]
We also define the corresponding {}``causal two point function''
or {}``commutator function'', \[
\forall f,g\in\D(\M):\quad\Delta(f,g):=\left\langle f,\Delta*g\right\rangle \,,\]
and it should be clear from the context, which of the two interpretations
of the symbol $\Delta$ is to be understood. The wave-front set of
the commutator function is given by\begin{equation}
\WF(\Delta)=\left\{ \left(x,y,k,-k\right)\in\dot{T}^{*}(\M^{2})|\left(x-y\right)^{2}=0,\, k\Vert(x-y),\, k^{2}=0\right\} \,.\label{eq:WFCommutatorFunction}\end{equation}
And by \citep[Thm.~8.2.10]{Hoermander2003} the pointwise product
of the distributions in\[
\poisson FG=\int\Delta(x,y)\cdot F^{\left(1\right)}(x)\, G^{\left(1\right)}(y)\, dx\, dy\,,\quad F,G\in\mathcal{F}(\M)\,,\]
is well-defined, if the covectors in the second component cannot add
up to zero. The wave front set of the tensor power $\left(F^{\left(1\right)}\otimes G^{\left(1\right)}\right)(\varphi)\in\E'(\M^{2})$
is given by, cf.~\citep[Thm.~8.2.9]{Hoermander2003}, \begin{align*}
\WF(F^{\left(1\right)}\otimes G^{\left(1\right)}) & \subset\WF(F^{\left(1\right)})\times\WF(G^{\left(1\right)})\\
 & \qquad\cup\left\{ \left[\supp(F)\times\left\{ 0\right\} \right]\times\WF(G^{\left(1\right)})\right\} \\
 & \qquad\cup\left\{ \WF(F^{\left(1\right)})\times\left[\supp(G)\times\left\{ 0\right\} \right]\right\} \,,\end{align*}
and by comparing this with (\ref{eq:WFCommutatorFunction}) we see
that the covectors cannot add up to zero, if $F,G\in\mathcal{F}(\M)$.
For $\left(x,y,k_{x},k_{y}\right)\in\WF(\Delta)$ we clearly have
$\left(k_{x},k_{y}\right)\in\overline{\lc}\times\overline{\lc}$,
and since $k_{f},k_{g}\notin\overline{\lc}$ for $(x,k_{f})\in\WF(F^{\left(1\right)}(\varphi))$
and $(y,k_{g})\in\WF(G^{\left(1\right)}(\varphi))$ neither one of
the equations \[
k_{x}+k_{f}=0\qquad\mbox{or}\qquad k_{y}+k_{g}=0\]
has a solution. By a similar argument we also have $\poisson FG\in\mathcal{F}(\M)$
in that case. It was proven that, besides linearity, antisymmetry
and the Leibniz rule, the bracket $\poisson{\cdot}{\cdot}$ fulfills
also the Jacobi identity, and thus defines a genuine Poisson structure
on $\mathcal{F}(\M)$ \citep{DuetschFredenhagen2003}. As was discussed
there a (formal) quantization of $\left(\mathcal{F}(\M),\poisson{\cdot}{\cdot},\cdot\right)$
can be understood as a map\[
\left(\mathcal{F}(\M),\poisson{\cdot}{\cdot},\cdot\right)\rightarrow\left(\fps{\mathcal{F}(\M)}{\hbar},\left[\cdot,\cdot\right]_{\star},\star\right)\,,\]
to a non-commutative, associative algebra, such that\begin{equation}
F\star G\xrightarrow{\hbar\rightarrow0}F\cdot G\quad\mbox{and}\quad\tfrac{1}{\hbar}\left[F,G\right]_{\star}\xrightarrow{\hbar\rightarrow0}\poisson FG\,.\label{eq:QuantizationCondition}\end{equation}
In particular we have for the field itself\begin{equation}
\left[\varphi(f),\varphi(g)\right]_{\star}=i\hbar\Delta(f,g)\,.\label{eq:QuantizationConditionFieldCommutator}\end{equation}
The product $\star$ of the quantized algebra is not to be confused
with the notation $*$ for convolution.

\section{\label{sec:AlgebraOfObservables}The Algebra of Observables}

As the name suggests the subalgebra $\left(\mathcal{F}(\M),\cdot\right)$
can be deformed to give the algebra of perturbative Quantum Field
Theory. To construct this algebra, let $H\in\D'(\M)$ be a \emph{Hadamard
distribution}. That is, cf.~\citep{Radzikowski1996}, $H$ is a (weak)
solution of the Klein Gordon equation

\begin{equation}
\left(\Box+m^{2}\right)H=0\label{eq:HadamardKleinGordonSolution}\end{equation}
and the corresponding two point distribution, defined by \begin{equation}
H(f,g):=\left\langle f,H*g\right\rangle \,,\label{eq:HadamardTwoPointDistribution}\end{equation}
satisfies the causality condition\begin{equation}
\forall f,g\in\D(\M):\quad H(f,g)-H(g,f)=i\Delta(f,g)\label{eq:CausalityHadamardAntisymmetricPart}\end{equation}
and has the {}``positive frequency'' - or Hadamard wave front set\begin{align}
\WF(H)=\left\{ \left(x,y,k,-k\right)\in\dot{T}^{*}(\M^{2})|\right. & \mbox{if }x\neq y:\,\left(x-y\right)^{2}=0,\, k\Vert(x-y),\, k^{0}>0\,;\nonumber \\
 & \mbox{if }x=y:\, k^{2}=0,\, k^{0}>0\left.\vphantom{\left(x,y,k,-k\right)\in T^{*}(\M^{2})|}\right\} .\label{eq:HadamardWF}\end{align}
One example of such a Hadamard distribution is the Wightman function
$\Delta_{+}$, i.e. the positive frequency part of the commutator
function $i\Delta$, $\Delta_{+}=\tfrac{i}{2}\Delta+\Delta_{1}$.
However, any other distribution $H$, which differs from $\Delta_{+}$
by a smooth, symmetric, Lorentz invariant solution of the Klein-Gordon
equation will also fulfill the defining equations (\ref{eq:HadamardKleinGordonSolution}),
(\ref{eq:CausalityHadamardAntisymmetricPart}), and (\ref{eq:HadamardWF}),
and thus be a valid Hadamard function for the construction below.
The algebraic properties are completely independent of this choice,
the analytic properties, however, will change significantly for different
choices of $H$. A Hadamard solution, especially well-suited for our
purposes, will be constructed explicitly in Chapter~\ref{cha:DimRegHadamard}.
Equation (\ref{eq:CausalityHadamardAntisymmetricPart}) fixes the
antisymmetric part of $H$ to be $\frac{i}{2}\Delta$, which implies
that the Poisson structure (\ref{eq:PoissonBracket}) can be induced
by the bidifferential operator\begin{equation}
\Gamma_{H}:=\int dx\, dy\, H(x,y)\frac{\delta}{\delta\varphi(x)}\otimes\frac{\delta}{\varphi(y)}\label{eq:BiDifferentialStar}\end{equation}
in the following sense\[
\poisson FG=M\circ\Gamma_{H}\left(F\otimes G-G\otimes F\right)\,.\]
By means of this differential operator a formal quantization of the
Poisson algebra $\left(\mathcal{F}(\M),\poisson{\cdot}{\cdot},\cdot\right)$
can be given in form of the following
\begin{prop}
[Deformed Algebra]\label{pro:DeformedAlgebra} Let $\fps{\mathcal{F}(\M)}{\hbar}$
be the space of formal power series in $\hbar$ with coefficients
in the deformable algebra $\mathcal{F}(\M)$ and regard $F,G\in\fps{\mathcal{F}(\M)}{\hbar}$.
Then\\
\begin{minipage}[c][1\totalheight][t]{0.57\columnwidth}%
\begin{equation}
\xyC{1pt}\xymatrix{\fps{\mathcal{F}(\M)}{\hbar}^{\otimes2}\ar@{-->}[dr]_{\star}\ar[rr]^{\exp(\hbar\Gamma_{H})} &  & \fps{\mathcal{F}(\M)}{\hbar}^{\otimes2}\,,\ar[dl]^{\cdot}\\
 & \fps{\mathcal{F}(\M)}{\hbar}}
\xyC{2pc}\label{eq:StarProduct}\end{equation}
\end{minipage}\hfill%
\begin{minipage}[c][1\totalheight][t]{0.39\columnwidth}%
\[
F\star G:=\sum_{k=0}^{\infty}\frac{\hbar^{k}}{k!}\left\langle F^{\left(k\right)},H^{\otimes k}G^{\left(k\right)}\right\rangle \]
\end{minipage}\\
defines a (non-commutative) associative product on $\fps{\mathcal{F}(\M)}{\hbar}$
which fulfills the quantization condition (\ref{eq:QuantizationCondition}).
We call $\left(\fps{\mathcal{F}(\M)}{\hbar},\star\right)$ the algebra
of pQFT.\end{prop}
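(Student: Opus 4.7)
The plan is to verify four things in turn: (i) each summand in the defining series is a well-defined complex number, (ii) the product is closed in $\fps{\mathcal{F}(\M)}{\hbar}$, (iii) associativity holds, and (iv) the two quantization conditions (\ref{eq:QuantizationCondition}) are satisfied. For (i), the $k$-th summand is an integral over $\M^{2k}$ of the pointwise product
\begin{equation*}
\prod_{i=1}^{k} H(x_{i},y_{i})\cdot F^{(k)}(\varphi)(x_{1},\dots,x_{k})\cdot G^{(k)}(\varphi)(y_{1},\dots,y_{k}),
\end{equation*}
so the issue is whether this product is a well-defined distribution. H\"ormander's criterion (\citep[Thm.~8.2.10]{Hoermander2003}) reduces this to checking that no pair of participating wave front covectors sums to zero. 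By (\ref{eq:HadamardWF}) the covectors contributed by $H^{\otimes k}$ at each $x_{i}$-slot lie in $\overline{\lc^{+}}\cup\{0\}$ and those at each $y_{i}$-slot in $\overline{\lc^{-}}\cup\{0\}$, whereas Definition~\ref{def:DeformableAlgebra} forces $\WF(F^{(k)}(\varphi))$ and $\WF(G^{(k)}(\varphi))$ to avoid $\overline{\lc^{+}}^{k}\cup\overline{\lc^{-}}^{k}$; hence no cancellation is possible. Compact support then makes the integral a number.

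For (ii), I would compute the $n$-th functional derivative of $F\star G$ via the Leibniz rule,
\begin{equation*}
(F\star G)^{(n)}(\varphi) = \sum_{k=0}^{\infty}\frac{\hbar^{k}}{k!}\sum_{n_{1}+n_{2}=n}\binom{n}{n_{1}}\bigl\langle F^{(k+n_{1})}(\varphi),\,H^{\otimes k}\,G^{(k+n_{2})}(\varphi)\bigr\rangle,
\end{equation*}
leaving $n$ external arguments free, and then apply H\"ormander's pullback and tensor-product theorems to bound the resulting wave front set. The critical observation is that wave front directions at the $n$ free arguments are inherited only from $F^{(k+n_{1})}(\varphi)$ and $G^{(k+n_{2})}(\varphi)$, which avoid $\overline{\lc^{+}}^{n}\cup\overline{\lc^{-}}^{n}$ by deformability of $F$ and $G$; the Hadamard covectors produced by $H^{\otimes k}$ sit on the contracted variables and are integrated out. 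This wave front bookkeeping is the most delicate step and the main technical obstacle.

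For (iii), since $H(x,y)$ does not depend on $\varphi$, the three copies $\Gamma_{H}^{12}$, $\Gamma_{H}^{13}$, $\Gamma_{H}^{23}$ of the bidifferential operator (\ref{eq:BiDifferentialStar}) acting on a triple tensor product commute pairwise, so that a direct Taylor expansion gives
\begin{equation*}
(F\star G)\star K = M_{3}\circ\exp\bigl(\hbar(\Gamma_{H}^{12}+\Gamma_{H}^{13}+\Gamma_{H}^{23})\bigr)(F\otimes G\otimes K) = F\star(G\star K),
\end{equation*}
where $M_{3}$ denotes threefold pointwise multiplication; associativity follows. Finally, for (iv) the $k=0$ term of $F\star G$ is $F\cdot G$, so $F\star G\to F\cdot G$ as $\hbar\to 0$, and the order-$\hbar$ antisymmetric contribution
\begin{equation*}
F\star G - G\star F = \hbar\!\int\! dx\,dy\,[H(x,y)-H(y,x)]\,F^{(1)}(\varphi)(x)\,G^{(1)}(\varphi)(y) + \mathcal{O}(\hbar^{2})
\end{equation*}
reduces via the causality condition (\ref{eq:CausalityHadamardAntisymmetricPart}) to $i\hbar\langle F^{(1)},\Delta*G^{(1)}\rangle+\mathcal{O}(\hbar^{2})$, so $\tfrac{1}{\hbar}[F,G]_{\star}\to\poisson{F}{G}$ by (\ref{eq:PoissonBracket}); the special case (\ref{eq:QuantizationConditionFieldCommutator}) is the specialization to linear evaluation functionals.
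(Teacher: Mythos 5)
Your proof is correct and, on the two points the paper actually argues in its proof, takes essentially the same route: well-definedness of each summand via H\"ormander's product criterion applied to $\WF(H^{\otimes k})$ against the deformability condition on $\WF(F^{(k)})$ and $\WF(G^{(k)})$, and associativity by reducing both iterated products to $M\circ(\id\otimes M)\circ\exp\bigl(\hbar(\Gamma_{H}^{(1,2)}+\Gamma_{H}^{(1,3)}+\Gamma_{H}^{(2,3)})\bigr)$ — the paper phrases this via the Leibniz/intertwining identities $\Gamma_{H}\circ[\id\otimes M]=[\id\otimes M]\circ(\Gamma_{H}^{(1,2)}+\Gamma_{H}^{(1,3)})$ rather than by invoking pairwise commutativity directly, but the mechanism is identical. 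You additionally verify the quantization condition, which the paper establishes in the discussion preceding the proposition rather than inside the proof, and you attempt the closure statement $F\star G\in\fps{\mathcal{F}(\M)}{\hbar}$, which the paper's proof omits entirely; your wave-front bookkeeping for the $n$-th derivative is only sketched (and you rightly flag it as the delicate step), but since the paper does not address it at all, this is an honest addition rather than a defect relative to the reference proof.
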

\begin{proof}
First we want to argue that $F\star G$ is well-defined for any pair
$F,G\in\fps{\mathcal{F}(\M)}{\hbar}$. The kernel representation of
the $k$th term of (\ref{eq:StarProduct}) is given by \[
\left\langle F^{\left(k\right)},H^{\otimes k}G^{\left(k\right)}\right\rangle =\int d\vec{x}\, d\vec{y}\, F^{\left(k\right)}(x^{1},\dots,x^{k})\,\prod_{i=1}^{k}H(x^{i},y^{i})\, G^{\left(k\right)}(y^{1},\dots,y^{k})\,.\]
The wave front set of $H^{\otimes k}$ is given by, cf.~\citep[Thm.~8.2.9]{Hoermander2003},\[
\WF(H^{\otimes k})\subset\bigcup_{\sigma\in\Perm(k)}\bigcup_{\substack{i+j=k\\
i\geq1}
}\sigma\left\{ \WF(H)^{i}\times\left[\supp(H)\times\left\{ 0\right\} \right]^{j}\right\} \,,\]
where $\Perm(k)$ denotes the symmetric group in $k$ variables. Hence
all covectors in the $x$-components, $k_{x^{i}}$, are elements of
the closed forward lightcone $\overline{\lc^{+}}$, cf. (\ref{eq:HadamardWF}),
and equivalently for the $y$-components we have $\forall i$: $k_{y^{i}}\in\overline{\lc^{-}}$.
Using \citep[Thm.~8.2.10]{Hoermander2003} we conclude that the pointwise
product of distributions above is well-defined, if \[
\WF(F^{\left(k\right)})\cap\left(\overline{\lc^{-}}\right)^{k}=\emptyset=\WF(G^{\left(k\right)})\cap\left(\overline{\lc^{+}}\right)^{k}\,,\]
which is automatically fulfilled for $F,G\in\fps{\mathcal{F}(\M)}{\hbar}$.

Associativity of $\star$ follows directly from the associativity
of the pointwise product $M\left(F\otimes G\right)=F\cdot G$ and
the form of (\ref{eq:StarProduct}). The detailed argument can be
found in \citep[Sec. 6.2.4]{Waldmann2007}; it is summarized in the
following. Let\[
\Gamma_{H}^{\left(1,2\right)}\left(A\otimes B\otimes C\right):=\Gamma_{H}\left(A\otimes B\right)\otimes C\,,\]
\[
\Gamma_{H}^{\left(1,3\right)}\left(A\otimes B\otimes C\right):=\left\langle H,A^{\left(1\right)}\otimes B\otimes C^{\left(1\right)}\right\rangle \,,\]
and\[
\Gamma_{H}^{\left(2,3\right)}\left(A\otimes B\otimes C\right):=A\otimes\Gamma_{H}\left(B\otimes C\right)\,.\]
Then the Leibniz rule in the second argument implies\[
\Gamma_{H}\circ\left[\id\otimes M\right]=\left[\id\otimes M\right]\circ\left(\Gamma_{H}^{\left(1,2\right)}+\Gamma_{H}^{\left(1,3\right)}\right)\,,\]
and analogously\[
\Gamma_{H}\circ\left[M\otimes\id\right]=\left[M\otimes\id\right]\circ\left(\Gamma_{H}^{\left(1,3\right)}+\Gamma_{H}^{\left(2,3\right)}\right)\,.\]
These formulas generalize to the exponential of the bidifferential
operators $\Gamma_{H}$ and $\Gamma_{H}^{\left(i,j\right)}$ by linearity
of the tensor product. Hence for the star product one gets\begin{align*}
A\star\left(B\star C\right) & =M\circ e^{\hbar\Gamma_{H}}\circ\left(\id\otimes M\circ e^{\hbar\Gamma_{H}}\right)\left(A\otimes B\otimes C\right)\\
 & =M\circ e^{\hbar\Gamma_{H}}\circ\left(\id\otimes M\right)\circ e^{\hbar\Gamma_{H}^{\left(2,3\right)}}\left(A\otimes B\otimes C\right)\\
 & =M\circ\left(\id\otimes M\right)\circ e^{\hbar\left(\Gamma_{H}^{\left(1,2\right)}+\Gamma_{H}^{\left(1,3\right)}+\Gamma_{H}^{\left(2,3\right)}\right)}\left(A\otimes B\otimes C\right)\\
 & =M\circ\left(M\otimes\id\right)\circ e^{\hbar\left(\Gamma_{H}^{\left(1,2\right)}+\Gamma_{H}^{\left(1,3\right)}+\Gamma_{H}^{\left(2,3\right)}\right)}\left(A\otimes B\otimes C\right)\\
 & =\left(A\star B\right)\star C\end{align*}
by associativity of the pointwise product $M$.
\end{proof}
An especially important subclass of $\left(\fps{\mathcal{F}(\M)}{\hbar},\star\right)$
for the description of interactions in QFT is the class of local functionals. 
\begin{defn}
[Local Functional]\label{def:LocalFunctional}Let $\Diag(\M^{n})=\left\{ \vec{x}\in\M^{n}:x_{1}=\cdots=x_{n}\right\} $
denote the thin diagonal in $\M^{n}$. A functional $F\in\mathcal{F}(\M)$
is called local, if for all $n\in\mathbb{N}$:
\begin{itemize}
\item [{[LF-1]}]the $n$th order functional derivative is supported on
the thin diagonal,\[
\forall n\in\mathbb{N}:\quad\supp(F^{\left(n\right)}(\varphi))\subset\Diag(\M^{n})\,,\]
and
\item [{[LF-2]}]\hypertarget{LF-2}the wave front set of the $n$th order
functional derivative lies transversal to the thin diagonal,\[
\forall n\in\mathbb{N}:\quad\WF(F^{n}(\varphi))\subset\left[T\Diag(\M^{n})\right]^{\perp}\,.\]

\end{itemize}
We denote the space of local functionals by $\mathcal{F}_{\loc}(\M)$.

\end{defn}
The first two functionals in (\ref{eq:FunctionalsExamples}) are examples
of local functionals, the third is not local. Observe that \[
\left[T\Diag(\M^{n})\right]^{\perp}\cap\left(\M^{n}\times\left[\left(\overline{\lc^{+}}\right)^{n}\cup\left(\overline{\lc^{-}}\right)^{n}\right]\right)=\emptyset\,,\]
and hence the wave front set condition for local functionals \hyperlink{LF-2}{[LF-2]}
implies the wave front set condition of Definition~\ref{def:DeformableAlgebra};
local functionals are deformable. However, they do not form a subalgebra
of $\left(\fps{\mathcal{F}(\M)}{\hbar},\star\right)$, since the product
of two local functionals is not local in the generic case. Furthermore
we want to remark that, in contrast to the definition of deformability,
the definition of locality does not depend on the underlying Minkowski
signature, a fact which is a major ingredient in the Euclidean formulation
of Epstein-Glaser renormalization in \citep{Keller2009}.

From the viewpoint of microlocal analysis the wave front set condition\linebreak{}
{\hyperlink{LF-2}{[LF-2]}} implies that, if $F\in\mathcal{F}_{\loc}(\M)$,
the distributions $F^{\left(n\right)}(\varphi)\in\E'(\M^{n})$ can
be pulled back to surfaces which lie transversal to the thin diagonal\linebreak[4]
\citep[Thm.~8.2.4]{Hoermander2003}. This implies that their distributional
part depends only on relative coordinates, i.e. the Schwartz kernel
of the functional derivative of any local functional can be written
as\begin{equation}
F^{\left(n\right)}(\varphi)(x_{1},\dots,x_{n})=\sum_{k}f_{\varphi}^{n,k}(x)\, P_{k}(\partial_{\bld r})\delta(\bld r)\,,\qquad f_{\varphi}^{n,k}\in\D(\Diag(\M^{n}))\cong\D(\M)\label{eq:LocalFunctionalKernel}\end{equation}
where $x=\frac{1}{n}\sum_{i=1}^{n}x_{i}$ is the {}``center of mass''-coordinate,
$\bld r=\left(r_{1},\dots,r_{n-1}\right)$ are relative coordinates,
and $P_{k}$ are homogeneous, symmetric polynomials in $n-1$ spacetime
variables. In flat spacetime the thin diagonal is the coordinate space
of the center of mass, and the relative coordinates are defined in
a transversal surface. In this sense \hyperlink{LF-2}{[LF-2]} is
a microlocal remnant of translation invariance.

The causal partition of unity, which was used in \citep{Brunetti2000}
for the distributional construction of time-ordered products in curved
spacetime, can generally not be used in the functional framework introduced
here. It is replaced by the following result.
\begin{lem}
[{\citep[Lem.~3.2]{Brunetti2009}}]\label{lem:LocFuncSupportDecomposition}Any
local functional can be written as a finite sum of local functionals
of arbitrarily small supports. That is\begin{equation}
F=\sum_{i}\sigma_{i}F_{i}\,,\quad\sigma_{i}\in\left\{ \pm1\right\} ;\, F,F_{i}\in\fps{\mathcal{F}_{\loc}(\M)}{\hbar}\,,\label{eq:SumLocFuncArbSmallSupp}\end{equation}
where $\supp(F_{i})\subset B_{i}$, with $B_{i}$ a ball of arbitrarily
small radius $\varepsilon>0$.
\end{lem}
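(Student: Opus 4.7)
My plan is to combine a smooth partition of unity on $\supp(F)$ with the functional-analytic fundamental theorem of calculus. Since $F$ is a smooth functional, for every $\varphi \in \E(\M)$ one has the identity
\begin{equation*}
F(\varphi) = F(0) + \int_0^1 \langle F^{(1)}(t\varphi), \varphi\rangle\, dt.
\end{equation*}
First I would cover $\supp(F)$ by a finite collection $\{B_i\}_{i=1}^N$ of open balls of radius $\varepsilon$ and pick a subordinate smooth partition of unity $\{\chi_i\}$ with $\supp(\chi_i) \subset B_i$. Substituting $\varphi = \sum_i \chi_i\varphi$ inside the pairing and setting
\begin{equation*}
F_i(\varphi) := \int_0^1 \langle F^{(1)}(t\varphi), \chi_i\varphi\rangle\, dt
\end{equation*}
gives $F = F(0) + \sum_{i=1}^N F_i$; the constant $F(0)$ has empty support and is absorbed into any of the $F_i$, yielding \eqref{eq:SumLocFuncArbSmallSupp} with all $\sigma_i = +1$.

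The support statement $\supp(F_i) \subset \supp(\chi_i) \subset B_i$ follows from the definition of the support of a functional. For $h \in \E(\M)$ with $\supp(h) \cap B_i = \emptyset$ the equality $\chi_i h \equiv 0$ gives $\chi_i(\varphi+h) = \chi_i\varphi$. Condition \hyperlink{LF-2}{[LF-2]} at $n=1$ forces $F^{(1)}(\psi)$ to be a smooth function, and locality of $F$ implies that $F^{(1)}(\psi)(x)$ depends only on the germ of $\psi$ at $x$: if $\psi'-\psi$ vanishes in a neighborhood of $x$, then since $F^{(2)}(\psi)$ is supported on $\Diag(\M^2)$ (and has wave front set transversal to it), the pairing of $F^{(2)}(\psi + \mu(\psi'-\psi))(\cdot,x)$ with $\psi'-\psi$ vanishes, hence $F^{(1)}(\psi')(x) = F^{(1)}(\psi)(x)$. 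Because $B_i$ is open and $h$ vanishes on $B_i$, this applies at every $x \in \supp(\chi_i)$, so the integrand of $F_i(\varphi+h)$ agrees with that of $F_i(\varphi)$ on $\supp(\chi_i)$ and $F_i(\varphi+h) = F_i(\varphi)$.

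The main obstacle will be to check that the $F_i$ themselves lie in $\mathcal{F}_{\loc}(\M)$, i.e.\ that they satisfy [LF-1] and \hyperlink{LF-2}{[LF-2]}. My plan is to apply the Leibniz rule to $F_i$: differentiating $n$ times in $\varphi$ produces a finite sum of terms, each of which is a contraction of some $F^{(k)}(t\varphi)$, with $1 \le k \le n+1$, against smooth factors built from $\chi_i$ and $\varphi$. Property [LF-1] for $F$ concentrates every $F^{(k)}(t\varphi)$ on $\Diag(\M^k)$, and contracting with smooth factors in some slots preserves concentration on the diagonal in the remaining ones, so $\supp(F_i^{(n)}(\varphi)) \subset \Diag(\M^n)$. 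For the wave front set, multiplication by the smooth factors $\chi_i, \varphi$ leaves $\WF(F^{(k)}(t\varphi)) \subset [T\Diag(\M^k)]^{\perp}$ intact, and integration over the compact parameter $t \in [0,1]$ does not enlarge the wave front set in the spatial variables, so \hyperlink{LF-2}{[LF-2]} is inherited by $F_i^{(n)}(\varphi)$. The passage from $\mathcal{F}_{\loc}(\M)$ to $\fps{\mathcal{F}_{\loc}(\M)}{\hbar}$ is then immediate by applying the construction coefficient-wise in $\hbar$, using a single cover of the compact union of the supports of the coefficients.
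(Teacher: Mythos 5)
Your proof is correct in its essentials, but note that the thesis itself does not prove this lemma at all: it cites \citep[Lem.~3.2]{Brunetti2009} and explicitly remarks that the proof there rests on a \emph{different} definition of locality (additivity, $F(A+B+C)=F(A+B)-F(B)+F(B+C)$ for $\supp(A)\cap\supp(C)=\emptyset$), with the equivalence to the microlocal definition [LF-1]/[LF-2] deferred to other references. That additivity-based proof splits $\varphi=\sum_i\chi_i\varphi$ and applies the additivity relation iteratively, which is where the signs $\sigma_i=\pm1$ in the statement come from. Your route --- the fundamental theorem of calculus $F(\varphi)=F(0)+\int_0^1\langle F^{(1)}(t\varphi),\varphi\rangle\,dt$ plus a partition of unity subordinate to a cover of the (compact) $\supp(F)$ --- works directly with the thesis's own microlocal definition, needs no equivalence-of-definitions detour, and delivers the stronger conclusion $\sigma_i=+1$ throughout. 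The two nontrivial points are exactly the ones you isolate: that $\supp(F_i)\subset\overline{B_i}$ follows from the germ-dependence of the smooth function $F^{(1)}(\psi)$ (which you correctly derive from [LF-1] for $F^{(2)}$ and the vanishing of $\WF(F^{(1)}(\psi))$ forced by [LF-2] at $n=1$), and that locality of $F_i$ survives the Leibniz expansion (partial contraction of $F^{(k)}(t\varphi)$ against smooth factors preserves both support on the diagonal and the conormal wave front set condition, since setting one covector to zero in $[T\Diag(\M^{k})]^{\perp}$ lands in $[T\Diag(\M^{k-1})]^{\perp}$). The remaining gaps are routine: one needs $\sum_i\chi_i=1$ on a neighbourhood of $\supp(F)\supset\supp(F^{(1)}(t\varphi))$ for all $t$, some uniformity in $t$ to control the wave front set of the $t$-integral, and --- for the $\fps{\mathcal{F}_{\loc}(\M)}{\hbar}$ statement --- a common compact set containing the supports of all $\hbar$-coefficients. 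What the published additivity proof buys is elementarity (no wave front set manipulations); what yours buys is self-containedness within the framework of this chapter.
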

The proof of this Lemma in the given reference uses a different definition
of local functionals than the one given above. However, both definitions
can be shown to be equivalent; see \citep{BrunettiFredenhagenRibeiro2009}
and \citep[App.~C]{Keller2009}.

Combining (\ref{eq:LocalFunctionalKernel}) with (\ref{eq:SumLocFuncArbSmallSupp}),
we have that the functional derivative of any local functional can
be expressed as\begin{equation}
F^{\left(n\right)}(\varphi)(x_{1},\dots,x_{n})=\sum_{k}\sum_{i}f_{\varphi}^{n,k,i}(x)\, P_{k}(\partial_{\bld r})\delta(\bld r)\,,\quad F\in\fps{\mathcal{F}_{\loc}(\M)}{\hbar}\,,\label{eq:LocFuncKernelArbSmallSupp}\end{equation}
where the support of $f_{\varphi}^{n,k,i}\in\D(\M)$ can be chosen
arbitrarily small.

\section{\label{sec:TimeOrderedProduct}The Time Ordered Product}

We want to regard the situation where the time evolution of the interacting
theory is solved perturbatively, i.e., in terms of formal power series.
A convenient way to describe this evolution is by introduction of
a time ordering prescription for the interaction functionals. Given
the theory can be described as a free theory at asymptotic times \citep{Haag1958,Ruelle1962},
one can encode the information on the transition probabilities in
collision processes of elementary particles in the so-called $\Sm$-matrix
\citep{LSZ1955,LSZ1957},\begin{equation}
\Sm(F)=\exp_{\dT}(F)\,.\label{eq:S-Matrix}\end{equation}
We will not consider the problem of defining the $\Sm$-matrix for
non-compactly supported interaction functionals, commonly referred
to as the infrared problem (IR-problem). Instead we will stay within
the algebra $\left(\fps{\mathcal{F}(\M)}{\hbar},\star\right)$ and
discuss the definition of $\Sm$ as a map\[
\Sm:\fps{\mathcal{F}_{\loc}(\M)}{\hbar}\rightarrow\fps{\mathcal{F}(\M)}{\hbar}\]
defined on local functionals, i.e., the ultraviolet problem (UV-problem).
The observation, which lies at the very heart of (perturbative) renormalization
theory is that this map cannot be defined in a unique way, but that
such a definition necessarily introduces a freedom into the theory,
commonly described in terms of the Stückelberg-Petermann renormalization
group \citep{Stueckelberg1953}. We will come to that point in greater
detail in Chapter~\ref{cha:ForestFormula}.

In this section we will define the time ordered product and introduce
the notion of its partial algebra of functionals. In the framework
of pAQFT the time ordered product, despite its significantly different
properties, can be introduced in much the same way as the star product
was defined in the last section. Regard the second order functional
differential operator\begin{equation}
\Gamma_{H_{F}}':=\frac{1}{2}\int dx\, dy\, H_{F}(x,y)\,\frac{\delta^{2}}{\delta\varphi(x)\,\delta\varphi(y)}\,,\label{eq:BiDifferentialTProduct}\end{equation}
where $H_{F}\in\D'(\M^{2})$ denotes a Feynman propagator. That is,
there is a fundamental solution of the Klein-Gordon operator, $H_{F}\in\D'(\M)$,\begin{equation}
\left(\Box+m^{2}\right)H_{F}=\delta\,,\label{eq:KGFundamentalSolution}\end{equation}
such that $\forall f,g\in\D(\M)$:\begin{equation}
H_{F}(f,g)=\left\langle f,H_{F}*g\right\rangle \quad\mbox{and}\quad H_{F}(f,g)=H_{F}(g,f)\,,\label{eq:FeynmanLikePropagator}\end{equation}
and the wave front set of the propagator $H_{F}$ is given by\begin{align}
\WF(H_{F})=\left\{ \left(x,y,k,-k\right)\in\dot{T}^{*}(\M^{2})|\right. & \mbox{for }x\neq y:\,\left(x-y\right)^{2}=0,\, k\Vert(x-y),\nonumber \\
 & \phantom{\mbox{for }x\neq y:}\quad k\in\partial\lc^{\pm}\mbox{ if }x\in\partial\lc_{y}^{\pm};\label{eq:FeynmanWF}\\
 & \mbox{for }x=y:\, k\in\dot{T}_{x}^{*}(\M)\left.\vphantom{\left(x,y,k,-k\right)\in T^{*}(\M^{2})|}\right\} ,\nonumber \end{align}
where we understand $\overline{\lc_{y}^{\pm}}$ as the causal future
/ past of $y\in\M$. Feynman propagators, $H_{F}\in\D(\M^{2})$, have
to be carefully distinguished from the \emph{two point functions},
which are (bi-) solutions of the Klein-Gordon equation (\ref{eq:HadamardKleinGordonSolution})
and have positive frequency wave front sets (\ref{eq:HadamardWF}).

We have that $u=H_{F}(f,\cdot)=H_{F}(\cdot,f)$ is a solution of the
inhomogeneous equation \[
\left(\Box+m^{2}\right)u=f\,,\]
what already implies that the wave front set of $H_{F}$ necessarily
contains $\WF(\delta)$, or more generally \citep[(8.1.11), Thm.~8.3.1]{Hoermander2003},\begin{equation}
\WF(\delta)=\WF((\Box+m^{2})H_{F})\subset\WF(H_{F})\subset\WF(\delta)\cup\Char(\Box+m^{2})\,,\label{eq:FundSolKleinGordWF}\end{equation}
and thus a definition of powers of $H_{F}$ which uses only wave front
set properties is impossible, cf. \citep[Thm.~8.2.10]{Hoermander2003}.
Such a definition will involve an extension procedure as described
in the previous chapter, i.e., renormalization, and it will be helpful
to note that one can read off the scaling degree of $H_{F}$ directly
from (\ref{eq:KGFundamentalSolution}), cf. Lemma~\ref{lem:ScalingDegreeProperties},\begin{equation}
\sd(H_{F})=d-2\,,\quad d=\dim(\M)\,.\label{eq:FeynmanScalingDegree}\end{equation}

As in the case of the star product, there is a freedom in the definition
of $H_{F}$. Observe, however, that $H_{F}$ is fixed once we have
chosen a Hadamard function $H$ which determines the star product,\begin{equation}
H_{F}=H+i\Delta_{\adv}\,.\label{eq:RelationFeynmanPTwoPointF}\end{equation}
This is also reflected by the fact that $H$ and $H_{F}$ both can
be defined as certain boundary values of the same analytic function.
We will exploit the freedom in the choice of the pair $H,H_{F}$ in
Chapter~\ref{cha:DimRegHadamard} for the construction of a propagator
which is especially well-suited for the discussion of dimensional
regularization in position space. That is, we will add a smooth, symmetric,
Lorentz invariant solution of the Klein-Gordon equation to the Wightman
function, such that the so defined Hadamard function and corresponding
Feynman propagator will have desirable additional properties. However,
let us now come to the definition of the time-ordered product and
its partial algebra.
\begin{prop}
[Partial Algebra of $\cdot_{\Time}$]\label{pro:PartialAlgebraTimeProd}Let
$H_{F}\in\D'(\M^{2})$ be a Feynman propagator in the sense of (\ref{eq:KGFundamentalSolution})-(\ref{eq:FeynmanWF})
and let $\Gamma_{H_{F}}$ be defined by (\ref{eq:BiDifferentialTProduct}).
Then the product induced by\\
\begin{minipage}[c]{0.57\columnwidth}%
\begin{equation}
\xymatrix{\fps{\mathcal{F}(\M)}{\hbar}^{\otimes2}\ar[r]^{\Time^{\otimes2}}\ar[d]_{\cdot} & \fps{\mathcal{F}(\M)}{\hbar}^{\otimes2}\ar@{-->}[d]^{\dT}\\
\fps{\mathcal{F}(\M)}{\hbar}\ar[r]^{\Time} & \fps{\mathcal{F}(\M)}{\hbar}}
\label{eq:TimeOrderedProduct}\end{equation}
\end{minipage}%
\begin{minipage}[c]{0.39\columnwidth}%
\[
F\dT G=\sum_{k=0}^{\infty}\frac{\hbar^{k}}{k!}\left\langle F^{\left(k\right)},H_{F}^{\otimes k}G^{\left(k\right)}\right\rangle \,,\]
\end{minipage}\\
where\[
\Time:=\exp(\hbar\Gamma_{H_{F}}')\qquad\left(\Time^{-1}:=\exp(-\hbar\Gamma_{H_{F}}')\right)\]
denotes the time-ordering (anti-time-ordering) operator, makes $\left(\fps{\mathcal{F}(\M)}{\hbar},\dT\right)$
into a partial algebra. That is $F\dT G$ is defined for all pairs
$F,G\in\fps{\mathcal{F}(\M)}{\hbar}$ with \[
\supp(F)\cap\supp(G)=\emptyset\,,\]
and $\dT$ is associative for any three functionals with pairwise
disjoint supports.\end{prop}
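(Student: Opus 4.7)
The plan is to establish (i) well-definedness of $F \dT G$ whenever $\supp(F) \cap \supp(G) = \emptyset$, and (ii) associativity for triples with pairwise disjoint supports, by reducing both to the star-product case of Proposition~\ref{pro:DeformedAlgebra} via \emph{causal factorization}. A naive imitation of the proof of Proposition~\ref{pro:DeformedAlgebra} must fail: the Hadamard wave-front set (\ref{eq:HadamardWF}) has all $x$-covectors in $\overline{\lc^+}$, whereas $\WF(H_F)$ in (\ref{eq:FeynmanWF}) contains both future and past lightlike directions. The deformability condition of Definition~\ref{def:DeformableAlgebra} only excludes covector tuples lying entirely in $(\overline{\lc^+})^n$ or $(\overline{\lc^-})^n$, which is not strong enough to exclude covector cancellations in the product $F^{(k)} \cdot H_F^{\otimes k} \cdot G^{(k)}$. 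This is the structural reason why $\dT$ is genuinely partial and must be controlled by support information rather than purely microlocally.

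For (i), I would exploit the identity (\ref{eq:RelationFeynmanPTwoPointF}), $H_F = H + i \Delta_{\adv}$. Assume first that $\supp(F)$ lies nowhere in the past of $\supp(G)$; then $\Delta_{\adv}(x_i, y_i)$ vanishes on the integration region $(\supp F)^k \times (\supp G)^k$, so $H_F \equiv H$ there, and the $k$-th term of $F \dT G$ coincides with that of $F \star G$, which is well-defined by Proposition~\ref{pro:DeformedAlgebra}. For the opposite ordering I would use the alternative rewriting $H_F(x,y) = H(y,x) + i \Delta_{\ret}(x,y)$, obtained from (\ref{eq:CausalityHadamardAntisymmetricPart}) and $\Delta = \Delta_{\ret} - \Delta_{\adv}$: if $\supp(F)$ lies nowhere in the future of $\supp(G)$, then $\Delta_{\ret}$ vanishes on the integration region, giving $F \dT G = G \star F$. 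Spacelike separation is consistent with either case since $\Delta$ vanishes there. For general disjoint supports I would invoke Lemma~\ref{lem:LocFuncSupportDecomposition} (and its extension to deformable functionals by a partition-of-unity argument) to decompose $F = \sum_i F_i$, $G = \sum_j G_j$ into pieces whose pairwise supports are either causally ordered or spacelike-separated, define $F \dT G := \sum_{i,j} F_i \dT G_j$, and verify independence of the decomposition from the commutation $F_i \star G_j = G_j \star F_i$ for spacelike-separated pieces, itself a consequence of (\ref{eq:QuantizationConditionFieldCommutator}) and the vanishing of $\Delta$ on spacelike separations.

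For (ii), associativity is cleanest via the intertwining relation read off from diagram (\ref{eq:TimeOrderedProduct}), $F \dT G = \Time\bigl(\Time^{-1} F \cdot \Time^{-1} G\bigr)$, which yields
\[
(F \dT G) \dT K = \Time\bigl(\Time^{-1} F \cdot \Time^{-1} G \cdot \Time^{-1} K\bigr) = F \dT (G \dT K)
\]
by associativity of the pointwise product. To make this rigorous in the partial-algebra setting, I would check that $\Time$ and $\Time^{-1}$ preserve supports: since $\Gamma'_{H_F} F$ is an integral against $F''(\varphi)$, whose support lies in $(\supp F)^2$, one has $\supp(\Gamma'_{H_F} F) \subset \supp F$ and hence $\supp(\Time^{\pm 1} F) \subset \supp F$, from which $\supp(F \dT G) \subset \supp F \cup \supp G$ follows. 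Pairwise disjointness of $\supp F$, $\supp G$, $\supp K$ then guarantees that all intermediate $\dT$-products appearing in the identity are defined in the sense of (i).

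The main obstacle is precisely the wave-front-set mismatch highlighted at the outset: the microlocal argument of Proposition~\ref{pro:DeformedAlgebra} does not adapt directly, so the entire proof hinges on the support-decomposition/causal-factorization reduction and on verifying that the resulting definition of $F \dT G$ is independent of the chosen decomposition. Once this is in place, the $\Time$-conjugation identity makes associativity essentially immediate.
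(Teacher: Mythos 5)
Your proposal follows essentially the same route as the paper: causal factorization $F\dT G = F\star G$ (resp.\ $G\star F$) via $H_{F}=H+i\Delta_{\adv}$ and the support properties of $\Delta_{\adv}$, reducing well-definedness to Proposition~\ref{pro:DeformedAlgebra}, and associativity via the conjugation identity $A\dT(B\dT C)=\Time\left[\Time^{-1}A\cdot\Time^{-1}B\cdot\Time^{-1}C\right]$ together with associativity of the pointwise product. You are in fact more explicit than the paper on the one delicate point --- two disjoint compact supports need not be causally orderable either way, so a decomposition into orderable or spacelike pieces, plus a check that the result is independent of the decomposition, is genuinely required --- a step the paper's proof passes over silently.
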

\begin{rem}
\label{rem:TimeStar-and-Tadpoles} First. Observe that equation (\ref{eq:RelationFeynmanPTwoPointF})
implies that $\cdot_{\Time}$ really is the time-ordered product for
$\star$. We have for the scalar field\[
\varphi(f)\cdot_{\Time}\varphi(g)=\varphi(f)\varphi(g)+\hbar\left\langle f,H_{F}\, g\right\rangle \]
\[
\varphi(f)\star\varphi(g)=\varphi(f)\varphi(g)+\hbar\left\langle f,H\, g\right\rangle \,.\]
Assume that the support of $f$ is \emph{later than} the support of
$g$, $\supp(f)\gtrsim\supp(g)$, i.e., $\supp(f)$ and $\supp(g)$
can be separated by a Cauchy surface $\Sigma$, such that $\supp(f)$
lies in the future and $\supp(g)$ in the past of $\Sigma$. Then
we infer from $\supp(\Delta_{\adv})\subset\overline{\lc^{-}}$ that
$\left\langle f,\Delta_{\adv}*g\right\rangle =0$ and hence\[
\left\langle f,H_{F}\, g\right\rangle =\left\langle f,H\, g\right\rangle \,.\]
We thus see that \[
\varphi(f)\cdot_{\Time}\varphi(g)=\varphi(f)\star\varphi(g)\qquad\mbox{if }\quad\supp(f)\gtrsim\supp(g)\,.\]

Second. The action of the second order differential operator $\Gamma_{H_{F}}'$
on functionals can be interpreted directly in terms of graphs. Let
$F,G\in\mathcal{F}_{\loc}(\M)$ be interaction functionals, then the
Leibniz rule implies\begin{align}
\Gamma_{H_{F}}'\left(F\cdot G\right) & =\tfrac{1}{2}\left\langle H_{F},F^{\left(2\right)}\right\rangle G+\tfrac{1}{2}F\left\langle H_{F},G^{\left(2\right)}\right\rangle +\left\langle H_{F},F^{\left(1\right)}\otimes G^{\left(1\right)}\right\rangle \label{eq:TadpoleLeibnizRule}\\
 & =\tfrac{1}{2}\FoneFG+\tfrac{1}{2}\hspace{-1em}\FGoneG+\FoneG\nonumber \end{align}
The first two terms in this sum are tadpoles, i.e., graphs with lines
connecting one vertex with itself. Defining the time-ordered product
as a deformation of the pointwise product (through $\Time=e^{\hbar\Gamma_{H_{F}}'}$)
removes all tadpole terms. At low orders in $\hbar$ this can be seen
by a simple computation,\[
\left(1+\hbar\Gamma_{H_{F}}'\right)\left[\left(1-\hbar\Gamma_{H_{F}}'\right)F\cdot\left(1-\hbar\Gamma_{H_{F}}'\right)G\right]=\FG+\hbar\FoneG+\mathcal{O}(\hbar^{2}).\]
See the proof below for the general case.\end{rem}
\begin{proof}
[{Proof of Proposition~\ref{pro:PartialAlgebraTimeProd}}]Before
proving the main part of the proposition, we want to show that the
diagram in (\ref{eq:TimeOrderedProduct}) is equivalent to the given
formula. This is the same as showing that there are no tadpole terms
in the graphical expansion of $F\dT G$ and higher oder products.
The Leibniz rule (\ref{eq:TadpoleLeibnizRule}) can be written as
a coproduct rule for $\Gamma_{H_{F}}'$~,\[
\Delta\Gamma_{H_{F}}'=\Gamma_{H_{F}}'\otimes\id+\id\otimes\Gamma_{H_{F}}'+\Gamma_{H_{F}}\,.\]
Writing $M$ for pointwise multiplication and abbreviating $\Gamma_{H_{F}}\equiv\Gamma$,
we read off the diagram\begin{align*}
F\dT G & =e^{\hbar\Gamma'}\circ M\left(e^{-\hbar\Gamma'}F\otimes e^{-\hbar\Gamma'}G\right)\\
 & =M\circ e^{\hbar\Delta\Gamma'}\left(e^{-\hbar\Gamma'}F\otimes e^{-\hbar\Gamma'}G\right)\\
 & =M\circ e^{\hbar\Gamma}\left(F\otimes G\right)=\sum_{k=0}^{\infty}\frac{\hbar^{k}}{k!}\left\langle F^{\left(k\right)},H_{F}^{\otimes k}G^{\left(k\right)}\right\rangle \,.\end{align*}
And we see that the tadpoles drop out of the expansion. The graph
expansion of the time-ordered will be discussed in more detail in
Section~\ref{sec:GraphStructureT}.

One infers from (\ref{eq:RelationFeynmanPTwoPointF}) that \begin{equation}
F\dT G=\begin{cases}
F\star G & \mbox{if }\supp(F)\gtrsim\supp(G)\\
G\star F & \mbox{if }\supp(G)\gtrsim\supp(F)\,,\end{cases}\label{eq:CausalityProducts}\end{equation}
where, as before, $\mathcal{O}\gtrsim\mathcal{U}$ for two regions
$\mathcal{O},\mathcal{U}\subset\M$ denotes that $\mathcal{O}$ is
later than $\mathcal{U}$. Observe that if $\mathcal{O}\gtrsim\mathcal{U}$
and $\mathcal{U}\gtrsim\mathcal{O}$ then $\mathcal{\mathcal{O}}$
and $\mathcal{U}$ are \emph{causally disjoint}. With (\ref{eq:CausalityProducts})
it follows from Proposition~\ref{pro:DeformedAlgebra} that $F\dT G$
is well-defined as long as the functionals have disjoint supports.
Associativity follows from the same proposition, but can also be proven
directly. Let $A,B,C\in\fps{\mathcal{F}(\M)}{\hbar}$ be three deformable
functionals with pairwise disjoint supports, then the twofold product
is defined, and we have \begin{align*}
A\dT\left(B\dT C\right) & =\Time\left[\Time^{-1}A\cdot\Time^{-1}\Time\left(\Time^{-1}B\cdot\Time^{-1}C\right)\right]\\
 & =\Time\left[\Time^{-1}A\cdot\Time^{-1}B\cdot\Time^{-1}C\right]=\left(A\dT B\right)\dT C\qedhere\end{align*}

\end{proof}
We want to remark that there is a subalgebra $\fps{\mathcal{F}_{0}(\M)}{\hbar}\subset\fps{\mathcal{F}(\M)}{\hbar}$,
where besides the pointwise, $\cdot$ , and the star, $\star$ , also
the time-ordered product, $\dT$ , can be defined as a full product.
Namely, $\mathcal{F}_{0}(\M)$ is given as the algebra of functionals,
such that for any element all functional derivatives are smooth, compactly
supported functions,\[
\forall F\in\mathcal{F}_{0}(\M),\,\forall n\in\mathbb{N}:\quad F^{\left(n\right)}(\varphi)\in\D(\M^{n})\,.\]
The third example given in (\ref{eq:FunctionalsExamples}) is an element
of $\mathcal{F}_{0}(\M)$. The field equation, $\left(\Box+m^{2}\right)\varphi=0$,
generates an ideal in $\left(\mathcal{F}_{0}(\M),\star\right)$,\[
\mathcal{J}=\left\{ F\in\mathcal{F}_{0}(\M):F(\varphi)=\sum_{a}G_{a}(\varphi)\,\partial^{a}(\Box+m^{2})\varphi,\, G_{a}\in\mathcal{F}_{0}(\M)\right\} \,,\]
$a\in\mathbb{N}_{0}^{d}$, $d=\dim(\M)$. And following \citep[Footnote~5]{DuetschFredenhagen2003}
we want to assume that $\mathcal{J}$ is the set of functionals, which
vanish on the space of (smooth) solutions of the Klein-Gordon equation.
Let $F\in\mathcal{J}$ and $G\in\mathcal{F}_{0}(\M)$, one easily
checks that $F\star G\in\mathcal{J}$. However, the same is not true
for the time-ordered product, $F\dT G\notin\mathcal{J}$, in the region
where $\dT$ and $\star$ are different, since $H_{F}$ (in contrast
to $H$) is not a solution of the Klein-Gordon equation. Since we
do not want to deal with this and related issues, it is more convenient
to work with fields, which are not required to satisfy the field equation,
so-called \emph{off-shell fields}. As shown in \citep{DuetschFredenhagen2003}
a restriction to the space of solutions is always possible. See also
\citep{BrouderDuetsch2008} for an explicit construction of the maps
involved.

\section{\label{sec:The-Renormalization-Problem}The Renormalization Problem}

Associativity of the time-ordered product makes it possible to speak
of $n$-fold products\[
\begin{array}{rccl}
\Time_{n}: & \fps{\mathcal{F}(\M)}{\hbar}^{\otimes n} & \rightarrow & \fps{\mathcal{F}(\M)}{\hbar}\\
 & F_{1}\otimes\cdots\otimes F_{n} & \mapsto & F_{1}\dT\cdots\dT F_{n}\,,\end{array}\]
which are well-defined if the supports of the functionals $F_{1},\dots,F_{n}\in\fps{\mathcal{F}(\M)}{\hbar}$
are pairwise disjoint,\[
\supp(F_{i})\cap\supp(F_{j})=\emptyset\quad\forall i,j\in\left\{ 1,\dots,n\right\} ,\, i\neq j\,.\]
The aim of perturbative QFT, however, is to define the terms of the
$\Sm$-matrix (\ref{eq:S-Matrix}), which are time ordered products
of the same interaction functional\[
\Sm(F)=\sum_{n=0}^{\infty}\frac{1}{n!}\Time_{n}(F\otimes\cdots\otimes F)\,.\]
Hence one has to extend the definition of $\Time_{n}$ towards functionals
with overlapping supports. In the present formalism such an extension
is only possible for local functionals $F\in\fps{\mathcal{F}_{\loc}(\M)}{\hbar}$.
One way to extend the maps $\Time_{n}$ to local functionals with
overlapping supports is the inductive procedure of Epstein and Glaser
\citep{Epstein1973}. See also \citep{Brunetti2000} and \citep{Brunetti2009}
for modern generalizations of the original treatment. We want to remark
that the recursive construction of Epstein-Glaser can be performed
without reference to the star-product structure of pQFT, and hence
is suitable also for a discussion of the renormalization problem on
Euclidean space, \citep{Keller2009}.

We will show in the last chapter that the inductive procedure of Epstein
and Glaser can be solved, by implementing analytic regularization
and minimal subtraction, which gives preferred extensions in each
step of the induction. An analytic regularization, which has been
proven to have especially nice properties when it comes to gauge theories,
is dimensional regularization \citep{BecchiRouetStora1975}. And the
next two chapters will be devoted to the implementation of this method
into the framework of perturbative Algebraic Quantum Field Theory,
although we will restrict ourselves to the study of scalar quantum
field theory, only.

\end{fmffile}

\global\long\def\fps#1#2{#1\![[#2]]}

\global\long\def\poisson#1#2{\left\lfloor #1,#2\right\rceil }

\global\long\def\bld#1{\boldsymbol{#1}}

\chapter{\label{cha:DimRegHadamard}The~Dimensionally~Regularized Analytic~Hadamard~Function}

Despite its rigorous formulation in \citep{BolliniGiambiagi1972a,tHooftVeltman1972}
dimensional regularization has always been a somewhat shady or almost
mystic concept, since the idea of a complex spacetime dimension is
quite obscure from a conceptual point of view. Hence, the insight
of Bollini and Giambiagi that dimensional regularization can be implemented
in position space as a modification of the Bessel parameter in the
two point function was, although passing largely unnoticed, an important
one for the mathematical physicist interested in a conceptually clear
formulation of perturbative quantum field theory \citep{BolliniGiambiagi1996}.
We will follow the detailed argument of \citep[App.~A]{Brunetti2009},
which, however, contains a small flaw, to show how a modification
of the Bessel parameter leads to the notion of a {}``dimensionally
regularized'' two point function for arbitrary, integer spacetime
dimensions. This two point function will then be used in the next
chapter to define a dimensionally regularized time-ordered product
and the corresponding $\Sm$-matrix.

In a series of articles Hollands and Wald developed a description
of the renormalization group flow on globally hyperbolic spacetimes
by investigating the behavior of renormalizable theories under rescalings
of the metric \citep{HollandsWald2001,HollandsWald2002,Hollands2003}.
A major ingredient of the approach is their {}``scaling expansion''
of time-ordered products around the thin diagonal. This expansion
has the property that the scaling degree of the individual terms becomes
smaller and smaller as one goes to higher and higher orders in the
expansion. In scalar QFT on Minkowski spacetime such a scaling expansion
can be introduced as an expansion in the mass parameter $m^{2}$ \citep{Hollands2004,DuetschFredenhagen2004}.
This requires, however, that the two point function depends smoothly
on $m^{2}$. The Wightman two point function, $\Delta_{+}$, in even
dimensions, exhibits a logarithmic dependence on the mass parameter,
and thus cannot be used in this framework. However, we can take advantage
of the freedom involved in the choice of a (Hadamard) two point function,
briefly discussed in Sections \ref{sec:AlgebraOfObservables} and
\ref{sec:TimeOrderedProduct}, and add to $\Delta_{+}^{m}$ a smooth,
symmetric, Lorentz invariant solution of the Klein-Gordon equation,
which establishes a smooth dependence on $m^{2}$ for the sum. In
flat spacetime the requirement of smoothness in mass fixes the two
point function uniquely in odd dimensions, and up to a parameter $\mu$
of mass dimension one in even dimensions \citep{DuetschFredenhagen2004}.
Hence, the algebra and the time-ordering are fixed (up to the parameter
$\mu$) by this smoothness condition.

\section{Odd Dimensions}

The Wightman distribution $\Delta_{+}^{m}$ on $d$-dimensional Minkowski
spacetime can be expressed, for spacelike arguments $x\in\M$ (i.e.,
$x^{2}<0$ in our choice of the metric) in terms of modified Bessel
functions (see e.g. \citep{BogoliubovShirkov1959}), \begin{equation}
\Delta_{+}^{m}(x)=\left(2\pi\right)^{1-\nu}m^{\nu}\left(-x^{2}\right)^{-\frac{\nu}{2}}K_{\nu}(\sqrt{-m^{2}x^{2}})\,,\quad\nu=\tfrac{d}{2}-1\,.\label{eq:WightmanTwoPointFunctionModBessel}\end{equation}
The right hand side of this equation is a function of one (real) variable
$x^{2}$, which is parametrized by the complex order $\nu$ of the
modified Bessel function. In the case where $\nu\in\frac{1}{2}\mathbb{N}_{0}$,
and only then, this function has the physical interpretation of the
Wightman two point function on a spacetime of dimension $d=2\left(\nu+1\right)$.
The Wightman two point function, as well as its generalizations for
arbitrary $\nu\in\mathbb{C}$ do not scale smoothly in $m^{2}$. However,
starting from (\ref{eq:WightmanTwoPointFunctionModBessel}) we can
construct a Hadamard two point function $H_{m}$, which scales smoothly
in $m^{2}\in\mathbb{R}$, by adding to $\Delta_{+}^{m}$ a smooth,
Lorentz invariant solution $F$ of the Klein-Gordon equation, $H_{m}=\Delta_{+}^{m}+F$.
Any such Lorentz invariant solution $F$ has the form\begin{equation}
F(x)=\left(-x^{2}\right)^{-\frac{\nu}{2}}G_{\nu}(\sqrt{-m^{2}x^{2}})\,,\label{eq:FormOfLorentzInvariantSolution}\end{equation}
for spacelike arguments $x$, where $G_{\nu}$ is a solution of the
modified Bessel equation of order $\nu$. For non-integer order, $\nu\in\mathbb{C}\backslash\mathbb{N}_{0}$,
(e.g., odd dimensions $d$) $G_{\nu}$ is a linear combination of
the modified Bessel functions of first kind $\left\{ I_{\nu},I_{-\nu}\right\} $.
For integer order, $n\in\mathbb{N}_{0}$, (i.e., even dimensions $d$)
it is a linear combination of $\left\{ I_{n},K_{n}\right\} $, where
$K_{n}$ is the modified Bessel function of second kind, see Appendix~\ref{app:ModifiedBesselFunctions}
for details.

Requiring smoothness at $x=0$ implies for arbitrary order $\nu\in\mathbb{C}$
that \begin{equation}
F(x)\sim\left(-x^{2}\right)^{-\frac{\nu}{2}}I_{\nu}(\sqrt{-m^{2}x^{2}})\,.\label{eq:FsimI}\end{equation}
For $\nu\in\mathbb{C}\backslash\mathbb{N}_{0}$ the modified Bessel
functions are related by\[
K_{\nu}=\frac{\pi}{2\sin(\nu\pi)}\left[I_{-\nu}-I_{\nu}\right]\,,\tag{\ref{eq:BesselFunctionsSecondKind}}\]
hence, using this together with (\ref{eq:FsimI}) we reach\begin{align}
H_{m}^{\nu}(x) & =\Delta_{+}^{m}(x)+F(x)\nonumber \\
 & \hspace{-5mm}=\!\left(2\pi\right)^{1-\nu}\! m^{\nu}\!\left(-x^{2}\right)^{-\frac{\nu}{2}}\!\left[K_{\nu}(\sqrt{-m^{2}x^{2}})+a\cdot I_{\nu}(\sqrt{-m^{2}x^{2}})\right]\nonumber \\
 & \hspace{-5mm}=\!\left(2\pi\right)^{1-\nu}\! m^{\nu}\!\left(-x^{2}\right)^{-\frac{\nu}{2}}\!\left[\tfrac{\pi}{2\sin(\nu\pi)}I_{-\nu}(\sqrt{-m^{2}x^{2}})\!+\!\left(a-\tfrac{\pi}{2\sin(\nu\pi)}\right)I_{\nu}(\sqrt{-m^{2}x^{2}})\right]\!,\label{eq:HadamardFunctionParametrized}\end{align}
where $a\in\mathbb{C}$ is a free parameter yet to be specified. In
order to fix the parameter $a$ we regard the scaling behavior in
$m^{2}$ of the two terms in (\ref{eq:HadamardFunctionParametrized}).
The (modified) Bessel functions are of the form \[
I_{\nu}(y)=y^{\nu}f_{\nu}(y^{2})\,.\]
with an entire analytic function $f_{\nu}$. Thus, in the first term
the factor $m^{-\nu}$ in $I_{-\nu}$ cancels with the prefactor $m^{\nu}$
leaving a smooth function of $m^{2}$ behind. In the second term we
get an overall factor $m^{2\nu}$, which is not a smooth function
of $m^{2}$, unless $\nu\in\mathbb{N}_{0}$ (which is excluded). Hence
the free parameter $a$ has to be chosen in such a way that the term
proportional to $I_{\nu}$ cancels, $a=\frac{\pi}{2\sin(\nu\pi)}$.
Summarizing the above, we have found, for non-integer order $\nu\in\mathbb{C}\backslash\mathbb{N}_{0}$,
and in particular for odd dimensions $d$, a \emph{unique} Hadamard
two point function, which depends smoothly on $m^{2}$, it is given
by\begin{equation}
H_{m}^{\nu}(x)=\frac{\left(2\pi\right)^{2-\nu}m^{\nu}}{4\sin(\nu\pi)}\left(-x^{2}\right)^{-\frac{\nu}{2}}I_{-\nu}(\sqrt{-m^{2}x^{2}})\,,\quad x^{2}<0\,,\quad\nu\in\mathbb{C}\backslash\mathbb{N}_{0}.\label{eq:HadamardFunctionUnique}\end{equation}

\section{Analytic continuation}

We have already seen that $H_{m}^{\nu}(x)$ scales smoothly in the
mass parameter $m^{2}$ for spacelike $x\in\M$, and want to discuss
now the analytic properties of $H_{m}^{\nu}$. It is a fundamental
result of complex analysis that the analytic continuation of $H_{m}^{\nu}$
is unique in the region where it exists, so let $\mathcal{H}_{m}^{\nu}:\M^{\mathbb{C}}\rightarrow\mathbb{C}$
be this continuation, defined as a function on the complexified Minkowski
space, $\M^{\mathbb{C}}:=\M\otimes_{\mathbb{R}}\mathbb{C}$. The modified
Bessel functions are defined for arbitrary complex arguments and writing
$I_{-\nu}(\sqrt{m^{2}z^{2}})=\left(\sqrt{m^{2}z^{2}}\right)^{-\nu}f_{\nu}(m^{2}z^{2})$,
with an entire analytic function $f_{\nu}$, we see that the analytic
continuation of (\ref{eq:HadamardFunctionUnique}) can be written
as \begin{equation}
\mathcal{H}_{m}^{\nu}(z)=\left(2\pi\right)^{1-\nu}\frac{\pi}{2\sin(\nu\pi)}\,\left(-z^{2}\right)^{-\nu}f_{\nu}(m^{2}\, z^{2})\,,\quad z\in\M^{\mathbb{C}},\label{eq:HadamardAnalytic}\end{equation}
from which the smoothness in $m^{2}$ is obvious. Using the series
representation of the modified Bessel function (\ref{eq:ModifiedBesselFirstKind})
we immediately get a series expansion of $\mathcal{H}_{m}^{\nu}$
in $m^{2}$,\begin{equation}
\mathcal{H}_{m}^{\nu}(z)=\frac{\pi^{2-\nu}}{\sin(\nu\pi)}\,\left(-z^{2}\right)^{-\nu}\sum_{s=0}^{\infty}\frac{1}{s!\,\Gamma(-\nu+s+1)}\left(\frac{-z^{2}}{4}\right)^{s}\left(m^{2}\right)^{s}.\label{eq:HadamardAnalyticSeriesRep}\end{equation}
Since this formula contains the power of a complex number, $\left(-z^{2}\right)^{-\nu}=e^{-\nu\,\Log(-z^{2})}$,
we have to choose a branch of the logarithm in order to make $\mathcal{H}_{m}^{\nu}$
single-valued. As we shall see below, choosing the principal branch,
$\Log$, of the complex logarithm,\[
\begin{array}{rccl}
\Log: & \mathbb{C}\backslash\left\{ 0\right\}  & \rightarrow & \mathbb{R}\oplus i\left(-\pi,\pi\right]\\
 & r\, e^{i\vartheta} & \mapsto & \Log(r\, e^{i\vartheta})=\ln(r)+i\vartheta\,,\quad\vartheta\in\left(-\pi,\pi\right]\cong\mathbb{R}/2\pi\,.\end{array}\]
gives a single-valued function $\mathcal{H}_{m}^{\nu}$ with the analytic
properties of the Wightman function. The principal branch has a discontinuity
along the negative real axis ($\vartheta=\pi$), resulting in the
fact that the function on the right hand side of (\ref{eq:HadamardAnalyticSeriesRep}),
regarded as a function of one complex variable \[
z^{2}=\left(x-iy\right)^{2}=x^{2}-y^{2}-2i\, xy\,,\]
i.e., $\mathcal{H}_{m}^{\nu}(z)=\h_{m}^{\nu}(z^{2})$, is analytic
in the cut plane $\mathbb{C}\backslash\mathbb{R}_{+}^{0}$, see Figure~\ref{fig:HadamardVisualized}(a).
The condition $z^{2}\notin\mathbb{R}_{+}^{0}$ is fulfilled if the
complex four vector%
\footnote{By abuse of terminology we will use the term (complex) {}``four vector''
for elements of (complex) Minkowski space of arbitrary dimension $d\geq2$.%
} $z\in\M^{\mathbb{C}}$ lies in the so-called \emph{future} or \emph{past
tube}\[
\tube^{\pm}=\left\{ x-iy:\, x\in\M,\, y\in\lc^{\pm}\right\} \subset\M^{\mathbb{C}}\,,\]
where, as before, $\lc^{\pm}=\left\{ y\in\M:\,\pm y^{0}>\left\Vert \bld y\right\Vert \right\} $
denote the open forward and backward light cones. Observe that $\Im(z^{2})=0$
implies $z^{2}<0$, if $z=x-iy\in\tube^{\pm}$: \[
x^{0}y^{0}=\bld x\cdot\bld y\leq\left\Vert \bld x\right\Vert \left\Vert \bld y\right\Vert <\pm\left\Vert \bld x\right\Vert y^{0}\,,\]
hence $x^{2}<0$ and $z^{2}<0$. We reach the conclusion that the
function $\mathcal{H}_{m}^{\nu}:\M^{\mathbb{C}}\rightarrow\mathbb{C}$,
given by (\ref{eq:HadamardAnalyticSeriesRep}), is analytic in the
future and past tubes $\tube^{\pm}$.%
\footnote{It is well-known that the analyticity domain of the (Hadamard) two
point function is bigger than just the future and past tubes. By Lorentz
invariance, the extended tube, and by permutation symmetry even the
so-called \emph{permuted extended tubes} are part of the analyticity
domain of the corresponding analytic $n$ point functions \citep{HallWightman1957}.
However, for our purposes it will suffice to consider the subsets
$\tube^{\pm}$ of the analyticity domain of $\mathcal{H}_{m}^{\nu}$.%
} Furthermore we see from the explicit formula (\ref{eq:HadamardAnalyticSeriesRep})
that the growth of $\left|\mathcal{H}_{m}^{\nu}(x-iy)\right|$ is
bounded by an inverse polynomial as $y$ approaches zero from within
the forward or backward light cone. Hence, by \citep[Thm.~3.1.15]{Hoermander2003},
the boundary values of $\mathcal{H}_{m}^{\nu}$ from inside the future
and past tube exist as distributions in $\D'(\M)$.

Hence we can define the Hadamard distribution $H_{m}^{\nu}\in\D'(\M)$
to be the boundary value of $\mathcal{H}_{m}^{\nu}$ as the real subspace
$\M\subset\M^{\mathbb{C}}$ is approached from the future tube, $y=\left(\tfrac{\varepsilon}{2},\bld 0\right)\in\lc^{+}$,\begin{equation}
\left\langle H_{m}^{\nu},f\right\rangle :=\lim_{\varepsilon\rightarrow0^{+}}\int_{\M}dx\, f(x)\,\h_{m}^{\nu}(x^{2}-ix^{0}\varepsilon)\,,\quad\nu\in\mathbb{C}\backslash\mathbb{N}_{0}\,.\label{eq:HadamardBoundaryValue}\end{equation}
The wave front set of $H_{m}^{\nu}\in\D'(\M)$ lies within the dual
cone of $\lc^{+}$, cf. \citep[Thm.~8.1.6]{Hoermander2003}, which
is the closed cone $\overline{\lc^{+}}$, and hence we have\[
\WF(H_{m}^{\nu})\subset\M\times\left(\overline{\lc^{+}}\backslash\left\{ 0\right\} \right)\,,\qquad\nu\in\mathbb{C}\backslash\mathbb{N}_{0}\,.\]
Thus we have found a parametrized Hadamard distribution, which can
be used to define a star product of functionals as described in the
previous chapter, cf. Proposition~\ref{pro:DeformedAlgebra}.

The corresponding Feynman fundamental solution can be defined in the
same way, it is the time-ordered version of $H_{m}^{\nu}$,\[
H_{F}^{m,\nu}(x):=\theta(x^{0})H_{m}^{\nu}(x)+\theta(-x^{0})H_{m}^{\nu}(-x)\,,\]
where $\theta$ is the Heaviside step function. Rephrased in the language
of complex analysis, for $x^{0}>0$ the Feynman fundamental solution
is the boundary value of the analytic Hadamard function $\mathcal{H}_{m}^{\nu}:\M^{\mathbb{C}}\rightarrow\mathbb{C}$
from inside the future tube, i.e. $H_{F}^{m,\nu}(x)=\lim_{\varepsilon\rightarrow0^{+}}\h_{m}^{\nu}(x^{2}-ix^{0}\varepsilon)$,
and for $x^{0}<0$ it is the boundary value of the same analytic function
from inside the past tube, $H_{F}^{m,\nu}(x)=\lim_{\varepsilon\rightarrow0^{+}}\h_{m}^{\nu}(x^{2}+ix^{0}\varepsilon)$.
Hence in both cases it results \begin{equation}
\forall f\in\D(\M\backslash\left\{ 0\right\} ):\,\left\langle H_{F}^{m,\nu},f\right\rangle =\lim_{\varepsilon\rightarrow0^{+}}\int dx\, f(x)\,\h_{m}^{\nu}(x^{2}-i\varepsilon)\,.\label{eq:FeynmanBoundaryValue}\end{equation}
The distribution $H_{F}^{m,\nu}$ is given as the boundary value of
the analytic two point function $\mathcal{H}_{m}^{\nu}$ from two
disjoint areas of its analyticity domain in the cases $x^{0}>0$ and
$x^{0}<0$. For $x^{0}=0$ and $\bld x\neq0$ we are in the analyticity
domain of $\mathcal{H}_{m}^{\nu}$. For $x=0$ a definition of $H_{F}^{m,\nu}$
as boundary value of $\mathcal{H}_{m}^{\nu}$ is not possible, hence
we have defined the Feynman propagator as a distribution $H_{F}^{m,\nu}\in\D'(\M\backslash\left\{ 0\right\} )$.
Observe, however, that $H_{F}^{m,\nu}$ has a unique extension to
$\D'(\M)$ with the same scaling degree. The scaling degree of $H_{F}^{m,\nu}$
can be read off directly from (\ref{eq:HadamardAnalyticSeriesRep}),\begin{equation}
\sd(H_{F}^{m,\nu})=2\nu\,.\label{eq:ScalingDegreeHFm-nu}\end{equation}
Hence for half-integer Bessel order $\nu=\frac{d}{2}-1$, we have
that $\sd(H_{F}^{m,d})=d-2$ and hence $H_{F}^{m,d}\in\D'(\M\backslash\left\{ 0\right\} )$
has a unique extension $\dot{H}_{F}^{m,d}\in\D'(\M)$. Observe that
the terms proportional to $\left(m^{2}\right)^{s}$ in the expansion
(\ref{eq:HadamardAnalyticSeriesRep}) are homogeneous of degree $D=2\left(s-\nu\right)$,
cf. Definition~\ref{def:HomogeneousDistribution}. Hence these terms
have unique homogeneous extensions for $2\left(\Re(\nu)-s\right)\in\mathbb{N}_{0}+d$,
cf. Remark~\ref{rem:ScalingDegreeHomogeneity} and Theorem~\ref{thm:HomogeneousExtension},
and in particular for $\nu\notin\frac{1}{2}\mathbb{N}_{0}$. This
observation is the basis for the discussion undertaken in the next
section and the following chapter.

Hence we have found a unique analytic Hadamard function $\mathcal{H}_{m}^{\nu}$,
which depends smoothly on $m^{2}$ for Bessel order $\nu\in\mathbb{C}\backslash\mathbb{N}_{0}$,
and hence in particular for odd dimensions. Before turning to the
more intricate case of even dimensions, let us briefly discuss the
properties of the analytic Hadamard function by visualizing $\mathcal{H}_{m}^{\nu}$
in the two pictures of Fig.~\ref{fig:HadamardVisualized}. In particular
observe that the Hadamard boundary value $H_{m}^{\nu}(x)$ grows exponentially
in spacelike directions, a fact which makes a direct comparison in
terms of the full time-ordered products of our formulation to the
well-established formulation of dimensional regularization in momentum
space difficult, if not impossible. A comparison of the counterterms,
as described briefly in Section~\ref{sec:RegularizationOfDistributions},
however, should be possible.

\begin{figure}[h]
\subfloat[]{\includegraphics[height=5cm]{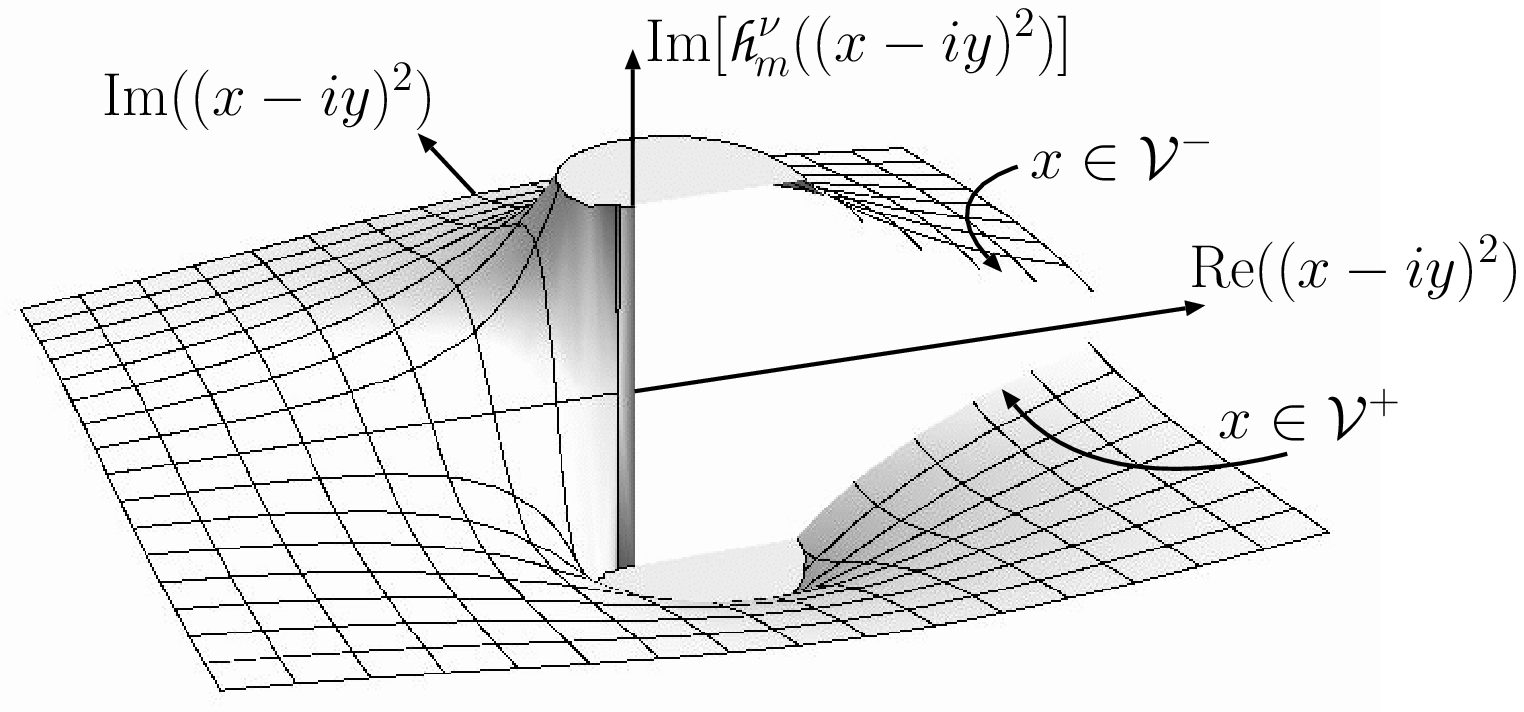}

}\hfill\subfloat[]{\includegraphics[height=5cm]{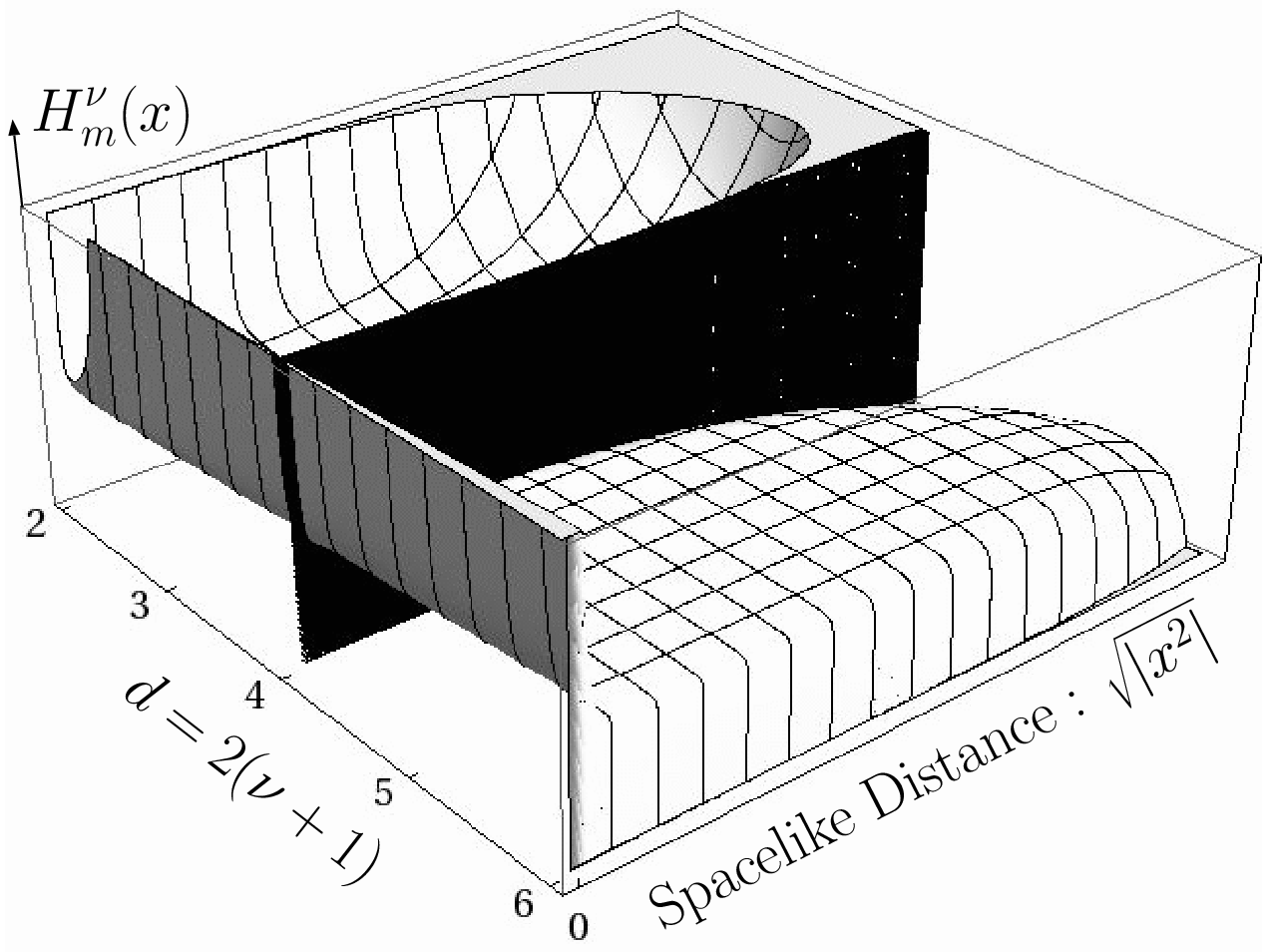}

}

\caption{\label{fig:HadamardVisualized}Generalizations of the Hadamard distribution.\protect \\
(a)~to complex arguments. $\h(\left(x-iy\right)^{2})$ is analytic
in the cut plane $\mathbb{C}\backslash\mathbb{R}_{+}^{0}$, which
implies that $\mathcal{H}_{m}^{\nu}(x-iy)$ is analytic in the future
and past tube $\tube^{\pm}$. The values of $H_{m}^{\nu}(x)$ for
timelike vectors $x$ are the boundary values of this function, as
$y$ approaches zero from inside the forward light cone. For the plot
$\nu=\frac{1}{2}$ and $m^{2}=1$ was chosen.\protect \\
(b)~to {}``complex dimensions''. The plot shows the qualitative
behavior of the Hadamard function $H_{m}^{\nu}(x)$ for spacelike
distances, $x^{2}<0$, in dependence of the parameter $\nu\in\mathbb{C}$.
One sees the (simple) poles at integer values $\nu=\frac{d}{2}-1\in\mathbb{N}_{0}$,
cf. (\ref{eq:HadamardAnalyticSeriesRep}), and that the local singularity
structure at $x=0$ does not change as $\nu$ varies. Observe also
the (alternating) behavior for large spacelike distances, the absolute
value of $H_{m}^{\nu}$ grows exponentially, hence it cannot be the
kernel of a Schwartz distribution, cf. also (\ref{eq:HadamardAnalyticSeriesRep}).
In the plot $\nu$ varies over the reals from $0$ to $2$. We have
chosen $m^{2}=1$ for the plot.}

\end{figure}

\section{Even dimensions}

The fact that (\ref{eq:HadamardFunctionUnique}) or respectively (\ref{eq:HadamardAnalyticSeriesRep})
fixes the analytic two point function uniquely for any complex parameter
$\nu\in\mathbb{C}\backslash\mathbb{N}_{0}$ suggests to construct
the corresponding two point function for $\nu\in\mathbb{N}_{0}$,
i.e., for even dimensions $d=2\left(\nu+1\right)$, by a limiting
procedure. We introduce a regularization parameter $\zeta\in\mathbb{C}$
and set \[
\nu\mapsto\frac{d+\zeta}{2}-1\,,\quad d\in2\mathbb{N},\quad0<\left|\zeta\right|<2\,,\]
in (\ref{eq:HadamardFunctionUnique}). The resulting parametrization
of $\mathcal{H}_{m}^{\nu}$ we denote by \begin{align}
\widetilde{\mathcal{H}}_{m}^{\mu,\zeta}(z) & :=\frac{\left(2\pi\right)^{1-\frac{\left(d+\zeta\right)}{2}}\mu^{-\zeta}m^{\frac{\left(d+\zeta\right)}{2}-1}}{4\sin(\left(\frac{\left(d+\zeta\right)}{2}-1\right)\pi)}\left(-z^{2}\right)^{\frac{2-\left(d+\zeta\right)}{4}}I_{1-\frac{\left(d+\zeta\right)}{2}}(\sqrt{-m^{2}z^{2}})\label{eq:DimRegHadamard}\\
 & \hspace{-10mm}=\left(-1\right)^{\frac{d}{2}-1}\left(\tfrac{1}{2\pi}\right)^{\frac{d}{2}}m^{d-2}\tfrac{\pi}{2\sin(\zeta\frac{\pi}{2})}\left(\tfrac{m}{\mu}\right)^{\zeta}\left(\sqrt{-m^{2}z^{2}}\right)^{1-\frac{\left(d+\zeta\right)}{2}}I_{1-\frac{\left(d+\zeta\right)}{2}}(\sqrt{-m^{2}z^{2}})\,.\nonumber \end{align}
We had to introduce a parameter $\mu$ of mass dimension one, in
order to get the right mass dimension for the two point function,
$\md(\widetilde{\mathcal{H}}_{m}^{\mu,\zeta})=d-2$. Apart from the
dependence on the free parameter $\mu$, $\widetilde{\mathcal{H}}_{m}^{\mu,\zeta}$
has for $0<\left|\zeta\right|<2$ the same properties as $\mathcal{H}_{m}^{\nu}$
for $\left|\nu\right|$ between two integers. In particular $\widetilde{\mathcal{H}}_{m}^{\mu,\zeta}$
is analytic in future and past tube $\tube^{\pm}$ and depends smoothly
on the mass parameter $m^{2}$.

In order to get an expression for even dimensions, we want to perform
the limit $\zeta\rightarrow0$. This limit can not be performed directly,
since $\widetilde{\mathcal{H}}_{m}^{\mu,\zeta}$ diverges as the parameter
$\zeta$ tends to zero. The aim of this section is to construct a
dimensionally regularized analytic two point function $\mathcal{H}_{m}^{\mu,\zeta}$
(without the tilde), which is a smooth function of $m^{2}$, and solves
the Klein-Gordon equation in the limit $\zeta\rightarrow0$. This
is done by exploiting the freedom in the choice of the Hadamard distribution
$H$ mentioned earlier. We will see that $\mathcal{H}_{m}^{\mu,\zeta}$
differs from $\widetilde{\mathcal{H}}_{m}^{\mu,\zeta}$ by an analytic,
Lorentz invariant function, which is a smooth function of $m^{2}$
and solves the Klein-Gordon equation {}``in $d+\zeta$ dimensions''.
The exact meaning of this assertion will become clear in the construction
to be carried out now.

For $0<\left|\zeta\right|<2$ we can express $\widetilde{\mathcal{H}}_{m}^{\mu,\zeta}$
in terms of the modified Bessel functions, cf. (\ref{eq:BesselFunctionsSecondKind}),
\begin{align}
\widetilde{\mathcal{H}}_{m}^{\mu,\zeta}(z) & =\left(2\pi\right)^{-\frac{\left(d+\zeta\right)}{2}}\mu^{-\zeta}m^{\frac{\left(d+\zeta\right)}{2}-1}\left(-z^{2}\right)^{\frac{2-\left(d+\zeta\right)}{4}}\cdot\label{eq:DimRegHadamardMS1}\\
 & \hspace{10mm}\cdot\left[K_{\frac{\left(d+\zeta\right)}{2}-1}(\sqrt{-m^{2}z^{2}})+\tfrac{\pi}{2\sin(\left(\frac{\left(d+\zeta\right)}{2}-1\right)\pi)}I_{\frac{\left(d+\zeta\right)}{2}-1}(\sqrt{-m^{2}z^{2}})\right]\nonumber \\
 & =\mathcal{W}_{m}^{\mu,\zeta}(z)+\widetilde{\mathcal{B}}_{m}^{\mu,\zeta}(z)\,,\nonumber \end{align}
where we set $\mathcal{W}_{m}^{\mu,\zeta}$ for the term proportional
to $K_{\frac{\left(d+\zeta\right)}{2}-1}$ and $\widetilde{\mathcal{B}}_{m}^{\mu,\zeta}$
for the $I_{\frac{\left(d+\zeta\right)}{2}-1}$-term. The first term
in (\ref{eq:DimRegHadamardMS1}), $\mathcal{W}_{m}^{\mu,\zeta}(z)$,
is well defined for $\zeta=0$, it is (the analytic continuation of)
the Wightman function (\ref{eq:WightmanTwoPointFunctionModBessel})
in $d$ dimensions. The second term, \begin{align*}
 & \widetilde{\mathcal{B}}_{m}^{\mu,\zeta}(z)=\\
 & \qquad\left(-1\right)^{\left(\frac{d}{2}-1\right)}\left(2\pi\right)^{-\frac{\left(d+\zeta\right)}{2}}\mu^{-\zeta}m^{\frac{\left(d+\zeta\right)}{2}-1}\left(-z^{2}\right)^{\frac{2-\left(d+\zeta\right)}{4}}\cdot\tfrac{\pi}{2\sin(\zeta\frac{\pi}{2})}I_{\frac{\left(d+\zeta\right)}{2}-1}(\sqrt{-m^{2}z^{2}})\,,\end{align*}
is a meromorphic function in $\zeta$. More precisely, for any fixed
$d\in2\mathbb{N}$ the map $\zeta\mapsto\widetilde{\mathcal{B}}_{m}^{\mu,\zeta}$
is analytic in the punctured disk $\left\{ 0<\left|\zeta\right|<2\right\} $
and has a simple pole at $\zeta=0$. That is, $\zeta\widetilde{\mathcal{B}}_{m}^{\mu,\zeta}(z)$
has a removable singularity at $\zeta=0$, or equivalently $\lim_{\zeta\rightarrow0}\zeta^{2}\widetilde{\mathcal{B}}_{m}^{\mu,\zeta}(z)=0$,
cf. \citep[Def. 1.6 and Thm. 1.2]{Conway1978}. Using the fact that
$\nu\mapsto I_{\nu}(y)$ is an entire analytic function \citep[\textsection\,3$\cdot$13]{Watson1922},
\citep{WhittakerWatson1902}, and abbreviating \[
f(\zeta)=\left(2\pi\right)^{-\frac{\left(d+\zeta\right)}{2}}\mu^{-\zeta}m^{\frac{\left(d+\zeta\right)}{2}-1}\left(-z^{2}\right)^{\frac{2-\left(d+\zeta\right)}{4}}I_{\frac{\left(d+\zeta\right)}{2}-1}(\sqrt{-m^{2}z^{2}})\,,\]
we compute for $d\in2\mathbb{N}$ (using l'Hôspital's rule), \[
\lim_{\zeta\rightarrow0}\tfrac{\pi}{2\sin(\left(\frac{\left(d+\zeta\right)}{2}-1\right)\pi)}\left[\zeta^{2}f(\zeta)\right]=\left\{ \tfrac{1}{\cos(\left(\frac{\left(d+\zeta\right)}{2}-1\right)\pi)}\left[\zeta^{2}f'(\zeta)+2\zeta f(\zeta)\right]\right\} _{\zeta=0}=0\,.\]
Given these properties, $\zeta\mapsto\widetilde{\mathcal{B}}_{m}^{\mu,\zeta}(z)$
can be expanded in a Laurent series \citep[1.11]{Conway1978},\begin{equation}
\widetilde{\mathcal{B}}_{m}^{\mu,\zeta}(z)=\sum_{n=-1}^{\infty}a_{n}(x)\,\zeta^{n}=\tfrac{1}{\zeta}\Res(\widetilde{\mathcal{B}}_{m}^{\mu,\zeta}(z),\zeta=0)+\mathcal{G}_{m}^{\mu,\zeta}(z)\,,\label{eq:DimRegHadamardSecondTermLaurentSeries}\end{equation}
where $\zeta\rightarrow\mathcal{G}_{m}^{\mu,\zeta}(x)$ is analytic
in the full disk $\left\{ \left|\zeta\right|<2\right\} $. For finite
$\zeta$ the function $\widetilde{\mathcal{B}}_{m}^{\mu,\zeta}$ is
(the analytic continuation of) a smooth, Lorentz invariant solution
of the Klein-Gordon equation {}``in $d+\zeta$ dimensions'', i.e.,
$\widetilde{\mathcal{B}}_{m}^{\mu,\zeta}(z)\!\sim\!\left(-z^{2}\right)^{-\frac{\nu}{2}}\! I_{\nu}(\sqrt{-m^{2}z^{2}})$,
where $I_{\nu}$ is the modified Bessel function of first kind of
order $\nu=\frac{d+\zeta}{2}-1$, cf.~(\ref{eq:FormOfLorentzInvariantSolution}).
Furthermore $\widetilde{\mathcal{H}}_{m}^{\mu,\zeta}(z)=\mathcal{W}_{m}^{\mu,\zeta}(z)+\widetilde{\mathcal{B}}_{m}^{\mu,\zeta}(z)$
is a smooth function of $m^{2}$. It is the whole purpose of this
derivation, to maintain as many of these properties as possible as
$\zeta$ tends to zero. Subtracting just the pole part of (\ref{eq:DimRegHadamardSecondTermLaurentSeries}),
as suggested in the original treatment \citep[App. A]{Brunetti2009},
although preserving smoothness in $z$ and $m^{2}$, does not lead
to a solution of the Klein-Gordon equation. The reason for this is
that the residue does not solve the Klein-Gordon equation in $d+\zeta$,
but in $d$ dimensions,\begin{align}
a_{-1}(x) & =\Res(\widetilde{\mathcal{B}}_{m}^{\mu,\zeta}(z),\zeta=0)=\lim_{\zeta\rightarrow0}\zeta\widetilde{\mathcal{B}}_{m}^{\mu,\zeta}(z)\nonumber \\
 & =\left(-1\right)^{\left(\frac{d}{2}-1\right)}\left(2\pi\right)^{-\frac{d}{2}}m^{\frac{d}{2}-1}\left(-z^{2}\right)^{\frac{2-d}{4}}I_{\frac{d}{2}-1}(\sqrt{-m^{2}z^{2}})\,.\label{eq:HadamardResidue}\end{align}
In order to maintain also the solution property in the limit $\zeta\rightarrow0$,
we need to subtract from $\widetilde{\mathcal{B}}_{m}^{\mu,\zeta}$
a smooth, Lorentz invariant solution of the Klein-Gordon equation
{}``in $d+\zeta$ dimensions''. Furthermore, in order not to spoil
the smoothness of $\widetilde{\mathcal{H}}_{m}^{\mu,\zeta}(z)$ in
the mass parameter $m^{2}$, the subtraction has to be a smooth function
of $m^{2}$. We conclude that the subtraction is a scalar multiple
of \begin{align}
\mathcal{S}_{m}^{\zeta}(z) & =\left(-1\right)^{\left(\frac{d}{2}-1\right)}m^{d-2}\left(\sqrt{-m^{2}z^{2}}\right)^{1-\frac{d+\zeta}{2}}I_{\frac{d+\zeta}{2}-1}(\sqrt{-m^{2}z^{2}})\,.\label{eq:MinimalSubtractionSmoothSolution}\end{align}
Observe that $\mathcal{S}_{m}^{\zeta}$ is an entire analytic function
of $z^{2}$, since $I_{\nu}(z)=z^{\nu}f_{\nu}(z^{2})$ with $f_{v}$
entire analytic. A possible subtraction is given by \begin{align*}
\mathcal{B}_{m}^{\mu,\zeta}(z) & =\widetilde{\mathcal{B}}_{m}^{\mu,\zeta}(z)-\alpha\tfrac{\pi}{2\sin(\zeta\frac{\pi}{2})}\mathcal{S}_{m}^{\zeta}(z)\\
 & \hspace{-20mm}=\left[\left(2\pi\right)^{-\frac{\zeta}{2}}\left(\tfrac{m}{\mu}\right)^{\zeta}\tfrac{\pi}{2\sin(\zeta\frac{\pi}{2})}-\tfrac{\pi}{2\sin(\zeta\frac{\pi}{2})}\right]\cdot\\
 & \hspace{-15mm}\cdot\left(-1\right)^{\left(\frac{d}{2}-1\right)}\left(2\pi\right)^{-\frac{d}{2}}m^{\frac{d}{2}-1}\left(-z^{2}\right)^{\frac{2-d}{4}}\left(\sqrt{-m^{2}z^{2}}\right)^{-\frac{\zeta}{2}}I_{\frac{d+\zeta}{2}-1}(\sqrt{-m^{2}z^{2}})\,,\end{align*}
where we had to set the factor in front of the subtraction to be $\alpha=\left(2\pi\right)^{-\frac{d}{2}}$
in order to get a well-defined limit. Adding this to the analytic
Wightman function $\mathcal{W}_{m}^{\mu,\zeta}$ defines the \emph{dimensionally
regularized analytic Hadamard function}\[
\mathcal{H}_{m}^{\mu,\zeta}:=\mathcal{W}_{m}^{\mu,\zeta}+\mathcal{B}_{m}^{\mu,\zeta}\equiv\widetilde{\mathcal{H}}_{m}^{\mu,\zeta}-\left(2\pi\right)^{-\frac{d}{2}}\tfrac{\pi}{2\sin(\zeta\frac{\pi}{2})}\mathcal{S}_{m}^{\zeta}(z)\,,\]
which has the explicit form(s)\begin{align}
\mathcal{H}_{m}^{\mu,\zeta}(z) & =\left(2\pi\right)^{-\frac{d}{2}}m^{d-2}\left(m\sqrt{-z^{2}}\right)^{1-\frac{d+\zeta}{2}}\cdot\nonumber \\
 & \qquad\cdot\left(-1\right)^{\frac{d}{2}-1}\frac{\pi}{2\sin(\zeta\tfrac{\pi}{2})}\left[\left(\frac{m}{\mu}\right)^{\zeta}I_{1-\frac{d+\zeta}{2}}(m\sqrt{-z^{2}})-I_{\frac{d+\zeta}{2}-1}(m\sqrt{-z^{2}})\right]\label{eq:DimRegAnalyticTwoPointFunction-1}\\
 & =\left(2\pi\right)^{-\frac{d}{2}}m^{d-2}\left(m\sqrt{-z^{2}}\right)^{1-\frac{d+\zeta}{2}}\cdot\nonumber \\
 & \qquad\cdot\left[\left(\frac{m}{\mu}\right)^{\zeta}K_{\frac{d+\zeta}{2}-1}+\left(-1\right)^{\frac{d}{2}-1}\frac{\pi}{2\sin(\zeta\frac{\pi}{2})}\left\{ \left(\frac{m}{\mu}\right)^{\zeta}-1\right\} I_{\frac{d+\zeta}{2}-1}\right]\,.\label{eq:DimRegAnalyticTwoPointFunction-2}\end{align}

From the representation (\ref{eq:DimRegAnalyticTwoPointFunction-1})
we can directly derive a series expansion of the dimensionally regularized
analytic two point function in powers of $m^{2}$ by inserting the
power series expansion (\ref{eq:ModifiedBesselFirstKind}) for the
modified Bessel functions, $d\in2\mathbb{N}$,\begin{align}
\mathcal{H}_{m}^{\mu,\zeta}(z) & =\left(-1\right)^{\frac{d-2}{2}}\left(2\pi\right)^{-\frac{d}{2}}2^{\frac{2-\left(d+\zeta\right)}{2}}\frac{\pi}{2\sin(\zeta\tfrac{\pi}{2})}\cdot\label{eq:DimRegAnalyticHadamardSeriesM}\\
 & \cdot\left[\sum_{s=0}^{\infty}\left\{ \left(\frac{2}{\mu\sqrt{-z^{2}}}\right)^{\zeta}\frac{1}{s!\,\Gamma(-\frac{\zeta}{2}-\frac{d-4}{2}+s)}-\frac{\theta(s-\frac{d-3}{2})}{\left(s-\frac{d-2}{2}\right)!\,\Gamma(\frac{\zeta}{2}+s+1)}\right\} \right.\cdot\nonumber \\
 & \hspace{70mm}\cdot\left.\left(\frac{\sqrt{-z^{2}}}{2}\right)^{2s-\left(d-2\right)}\left(m^{2}\right)^{s}\right]\!.\nonumber \end{align}
Observe that the second term in curly brackets is a smooth function
of $z^{2}$, since it only contributes for $s>\frac{d-3}{2}$, i.e.
$2s-\left(d-2\right)\geq0$.

Since $\mathcal{H}_{m}^{\mu,\zeta}$ differs from $\widetilde{\mathcal{H}}_{m}^{\mu,\zeta}$
by an entire analytic function, cf. (\ref{eq:MinimalSubtractionSmoothSolution}),
it is also analytic in the future and past tube. Furthermore $\zeta\mapsto\mathcal{H}_{m}^{\mu,\zeta}(z)$
is analytic in $\left\{ \left|\zeta\right|<2\right\} $ by construction,
however, for completeness we want to give an explicit argument here.
Regard (\ref{eq:DimRegAnalyticTwoPointFunction-2}), $K_{\frac{d+\zeta}{2}-1}$
and $I_{\frac{d+\zeta}{2}-1}$ are entire analytic functions of $\zeta$
by general properties of the (modified) Bessel functions (see e.g.
\citep[Sec.~9.6]{Abramowitz1970}). The analyticity domain of $\zeta\mapsto\mathcal{H}_{m}^{\mu,\zeta}(z)$
is thus determined by the factor $\alpha(\zeta)=\frac{\pi}{2\sin(\zeta\frac{\pi}{2})}\left\{ \left(\frac{m}{\mu}\right)^{\zeta}-1\right\} $,
which obviously is analytic in $\left\{ 0<\left|\zeta\right|<2\right\} $.
One can show differentiability in $\zeta=0$ by computing the differential
quotient directly,\[
\lim_{\zeta\rightarrow0}\frac{1}{\zeta}\left[\alpha(\zeta)-\alpha(0)\right]=\frac{1}{2}\left[\ln\left(\frac{m}{\mu}\right)\right]^{2}\,.\]
Hence $\left(z,\zeta\right)\mapsto\mathcal{H}_{m}^{\mu,\zeta}(z)$
is analytic for $z\in\tube^{\pm}$ and $\left|\zeta\right|<2$ as
asserted above. The limit $\zeta\rightarrow0$ of $\mathcal{H}_{m}^{\mu,\zeta}$
exists and defines the analytic Hadamard function in even dimensions\[
\mathcal{H}_{m}^{\mu}(z):=\lim_{\zeta\rightarrow0}\mathcal{H}_{m}^{\mu,\zeta}(z)\,,\quad d\in2\mathbb{N}\,,\]
it can be read off directly from (\ref{eq:DimRegAnalyticTwoPointFunction-2}),
$d\in2\mathbb{N}$, \[
\mathcal{H}_{m}^{\mu}(z)=\frac{m^{\frac{d}{2}-1}\left(-z^{2}\right)^{\frac{2-d}{4}}}{\left(2\pi\right)^{\frac{d}{2}}}\left[K_{\frac{d}{2}-1}(\sqrt{-m^{2}z^{2}})+\frac{\left(-1\right)^{\frac{d}{2}}}{2}\ln\left(\frac{\mu^{2}}{m^{2}}\right)I_{\frac{d}{2}-1}(\sqrt{-m^{2}z^{2}})\right].\]
By construction $\mathcal{H}_{m}^{\mu}$, as well as its regularization
$\mathcal{H}_{m}^{\mu,\zeta}$, $0\leq\left|\zeta\right|<2$, is a
smooth function of $m^{2}$. The boundary values of $\mathcal{H}_{m}^{\mu}(z)=\h_{m}^{\mu}(z^{2})$
define the Hadamard distribution and Feynman fundamental solution,\begin{equation}
\left\langle H_{m}^{\mu},f\right\rangle =\lim_{\varepsilon\rightarrow0^{+}}\int dx\, f(x)\,\h_{m}^{\mu}(x^{2}-ix^{0}\varepsilon)\,,\qquad H_{m}^{\mu}\in\D'(\M)\,,\label{eq:HadamardEvenDimensions}\end{equation}
and\begin{equation}
\left\langle H_{F}^{m,\mu},g\right\rangle =\lim_{\varepsilon\rightarrow0^{+}}\int dx\, f(x)\,\h_{m}^{\mu}(x^{2}-i\varepsilon)\,,\quad H_{F}^{m,\mu}\in\D'(\M\backslash\left\{ 0\right\} )\,.\label{eq:FeynmanEvenDimensions}\end{equation}

Within the analyticity domain of $\mathcal{H}_{m}^{\mu,\zeta}(z)$
the limit $\zeta\rightarrow0$ can be exchanged with taking boundary
values, resulting in regularizations of these distributions. What
seems artificial at this stage, since $ $$H_{m}^{\mu}$ needs no
regularization at all and $H_{F}^{m,\mu}$ already has a unique extension
with the same scaling degree by Theorem~\ref{thm:ExtensionScalingDegree},
will prove to be useful for the regularization of higher time-ordered
products in the next chapter. Hence we define the dimensionally regularized
Hadamard distribution, as well as the corresponding Feynman fundamental
solution as boundary values of $\mathcal{H}_{m}^{\mu,\zeta}(z)=\h_{m}^{\mu,\zeta}(z^{2})$,\begin{equation}
\left\langle H_{m}^{\mu,\zeta},f\right\rangle =\lim_{\varepsilon\rightarrow0^{+}}\int dx\, f(x)\,\h_{m}^{\mu,\zeta}(x^{2}-ix^{0}\varepsilon)\,,\quad H_{m}^{\mu,\zeta}\in\D'(\M)\,;\label{eq:DimRegHadamardBoundaryValue}\end{equation}
and\begin{equation}
\left\langle H_{F}^{m,\mu,\zeta},g\right\rangle =\lim_{\varepsilon\rightarrow0^{+}}\int dx\, g(x)\,\h_{m}^{\mu,\zeta}(x^{2}-i\varepsilon)\,,\quad H_{F}^{\mu,\zeta}\in\D'(\M\backslash\left\{ 0\right\} )\,.\label{eq:DimRegFeynmanBoundaryValue}\end{equation}
The scaling degree of $H_{F}^{m,\mu,\zeta}$ can be read off directly
from (\ref{eq:DimRegAnalyticHadamardSeriesM}), \begin{equation}
\sd(H_{F}^{m,\mu,\zeta})=d+\Re(\zeta)-2\,.\label{eq:ScalingDegreeHF-m-mu-zeta}\end{equation}
It is smaller than $d$ for $\Re(\zeta)<2$ and we infer again from
Theorem~\ref{thm:ExtensionScalingDegree} that $H_{F}^{m,\mu,\zeta}$
has a unique extension in this case. For $\zeta\notin\mathbb{R}$,
observe that the singular term of the expansion (\ref{eq:DimRegAnalyticHadamardSeriesM})
is homogeneous of degree $D=4-\left(d+\zeta\right)$ and hence we
get a unique extension by means of Theorem~\ref{thm:HomogeneousExtension}.

This unique extension $\dot{H}_{F}^{m,\mu,\zeta}\in\D'(\M)$ is a
regularization of $H_{F}^{m,\mu}\in\D'(\M\backslash\left\{ 0\right\} )$
in the sense of Definition~\ref{def:Regularization}, since by what
was said above we have\[
\forall f\in\D(\M\backslash\left\{ 0\right\} ):\quad\lim_{\zeta\rightarrow0}\left\langle \dot{H}_{F}^{m,\mu,\zeta},f\right\rangle =\lim_{\zeta\rightarrow0}\left\langle H_{F}^{m,\mu,\zeta},f\right\rangle =\left\langle H_{F}^{m,\mu},f\right\rangle \,.\]
And since the distribution on the right hand side also has a unique
extension $\dot{H}_{F}^{m,\mu}$, we even have\begin{equation}
\forall f\in\D(\M):\quad\lim_{\zeta\rightarrow0}\left\langle \dot{H}_{F}^{m,\mu,\zeta},f\right\rangle =\left\langle \dot{H}_{F}^{m,\mu},f\right\rangle \,.\label{eq:RegularizedFeynman}\end{equation}
Observe that $H_{F}^{m,\mu,\zeta}$ has a unique extension by means
of homogeneity, and not, as $H_{F}^{m,\mu}$, by a pure scaling degree
argument. We will see in the next chapter how this leads to an analytic
regularization of arbitrary time-ordered products.

\global\long\def\fps#1#2{#1\![[#2]]}

\global\long\def\poisson#1#2{\left\lfloor #1,#2\right\rceil }

\global\long\def\bld#1{\boldsymbol{#1}}

\begin{fmffile}{DimRegPos01}



\def\FGH{\parbox{10mm}{
\begin{center}
\begin{fmfgraph}(20,15)
\fmfbottom{g,h} \fmftop{f}
\fmfdot{f,g,h}
\end{fmfgraph}
\end{center}}}


\def\FGoneHone{\parbox{10mm}{
\begin{center}
\begin{fmfgraph}(20,15)
\fmfbottom{g,h} \fmftop{f}
\fmfdot{f,g,h}
\fmf{plain}{g,h}
\end{fmfgraph}
\end{center}}}

\def\FoneGHone{\parbox{10mm}{
\begin{center}
\begin{fmfgraph}(20,15)
\fmfbottom{g,h} \fmftop{f}
\fmfdot{f,g,h}
\fmf{plain}{f,h}
\end{fmfgraph}
\end{center}}}

\def\FoneGoneH{\parbox{10mm}{
\begin{center}
\begin{fmfgraph}(20,15)
\fmfbottom{g,h} \fmftop{f}
\fmfdot{f,g,h}
\fmf{plain}{f,g}
\end{fmfgraph}
\end{center}}}


\def\FGtwoHtwo{\parbox{10mm}{
\begin{center}
\begin{fmfgraph}(20,15)
\fmfbottom{g,h} \fmftop{f}
\fmfdot{f,g,h}
\fmf{plain,right=.5}{g,h}
\fmf{plain,left=.5}{g,h}
\end{fmfgraph}
\end{center}}}

\def\FoneGoneHtwo{\parbox{10mm}{
\begin{center}
\begin{fmfgraph}(20,15)
\fmfbottom{g,h} \fmftop{f}
\fmfdot{f,g,h}
\fmf{plain}{f,h}
\fmf{plain}{g,h}
\end{fmfgraph}
\end{center}}}

\def\FoneGoneoneHone{\parbox{10mm}{
\begin{center}
\begin{fmfgraph}(20,15)
\fmfbottom{g,h} \fmftop{f}
\fmfdot{f,g,h}
\fmf{plain}{f,g}
\fmf{plain}{g,h}
\end{fmfgraph}
\end{center}}}

\def\FtwoGHtwo{\parbox{10mm}{
\begin{center}
\begin{fmfgraph}(20,15)
\fmfbottom{g,h} \fmftop{f}
\fmfdot{f,g,h}
\fmf{plain,right=.5}{f,h}
\fmf{plain,left=.5}{f,h}
\end{fmfgraph}
\end{center}}}

\def\FtwoGoneHone{\parbox{10mm}{
\begin{center}
\begin{fmfgraph}(20,15)
\fmfbottom{g,h} \fmftop{f}
\fmfdot{f,g,h}
\fmf{plain}{f,g}
\fmf{plain}{f,h}
\end{fmfgraph}
\end{center}}}

\def\FtwoGtwoH{\parbox{10mm}{
\begin{center}
\begin{fmfgraph}(20,15)
\fmfbottom{g,h} \fmftop{f}
\fmfdot{f,g,h}
\fmf{plain,right=.5}{f,g}
\fmf{plain,left=.5}{f,g}
\end{fmfgraph}
\end{center}}}


\def\FthreeGoneHtwoF{\parbox{35mm}{
\begin{center}
\begin{fmfgraph*}(40,30)
\fmfbottom{g,h} \fmftop{f}
\fmfdot{f,g,h}
\fmflabel{$F^{(5)}$}{f}
\fmflabel{$G^{(4)}$}{g}
\fmflabel{$H^{(3)}$}{h}
\fmf{plain,right=.3}{f,g}
\fmf{plain}{f,g}
\fmf{plain,left=.3}{f,g}
\fmf{plain,right=.3}{g,h}
\fmf{plain,right=.3}{h,f}
\fmf{plain,left=.3}{h,f}
\end{fmfgraph*}
\end{center}}}


\def\LineX{\parbox{15mm}{
\begin{center}
\begin{fmfgraph*}(30,20)
\fmfleft{v} \fmfright{w}
\fmfdot{v,w}
\fmf{plain,label=$e$, l.side=left, l.dist=1mm}{v,w}
\fmfv{label=$v$,l.angle=-90}{v}
\fmfv{label=$w$,l.angle=-90}{w}
\end{fmfgraph*}
\end{center}}}

\chapter{\label{cha:DimReg-PositionSpace}Dimensional Regularization in Position
Space}

The work of Bollini and Giambiagi on dimensional regularization in
position space, mentioned previously, focused on the Fourier transform
of this regularization method between momentum space and position
space \citep{BolliniGiambiagi1996}. In contrast to their work, our
analysis will be formulated exclusively in position space, and a direct
translation to momentum space will generally not be possible. However,
an advantage of the method presented here is that all expressions
will depend smoothly on the mass parameter $m^{2}$, which makes it
possible to apply the covariant framework of \citep{Hollands2004},
see also \citep{DuetschFredenhagen2004}. We will analyze the graph
structure of the time-ordered product in the first section and use
this in the second section to construct for any graph a unique dimensionally
regularized amplitude. We will define the dimensionally regularized
time-ordered product and the corresponding scattering matrix. This
dimensionally regularized $\Sm$-matrix will then be used as an example
in the solution of the Epstein-Glaser recursion, and to establish
the relation to Connes-Kreimer theory of renormalization in the last
chapter.

\section{\label{sec:GraphStructureT}Graph structure of the Time-Ordered Product}

Before turning to the time-ordered product we want to introduce very
briefly the basic notions connected with the definition of a graph.
An oriented graph $\Gamma$ is a set of vertices $V(\Gamma)$ and
a set of edges $E(\Gamma)$ together with maps \[
\source,\target:E(\Gamma)\rightarrow V(\Gamma)\,,\]
which give source and target of an edge $e\in E(\Gamma)$, respectively.
Furthermore we give $\Gamma$ an \emph{orientation} by assigning to
any pair $\left(e,v\right)\in E(\Gamma)\times V(\Gamma)$ the value\[
\left(e:v\right):=\left\{ \begin{array}{cl}
+1 & \mbox{if }\target(e)=v\\
-1 & \mbox{if }\source(e)=v\\
0 & \mbox{otherwise.}\end{array}\right.\]
We call $e$ \emph{adjacent} \emph{to} $v$ if $\left(e:v\right)\neq0$.
A graph for which the orientation map $\left(e,v\right)\mapsto\left(e:v\right)$
is multi-valued we call \emph{tadpole}. However, the definition of
the time-ordered product in (\ref{eq:TimeOrderedProduct}) implies
that there are no tadpoles occurring in its graph expansion, i.e.,
Equation~(\ref{eq:TimeOrderedGraphStructure}) below; see the proof
of Proposition~\ref{pro:PartialAlgebraTimeProd} and the preceding
Remark~\ref{rem:TimeStar-and-Tadpoles}. In particular this implies
that we will only need to consider graphs for which\begin{equation}
\forall e\in E(\Gamma):\quad\source(e)\neq\target(e).\label{eq:NoTadpoleCondition}\end{equation}
Furthermore we remark that for scalar QFT the orientation of a given
(Feynman-) graph can be chosen freely, one speaks of an \emph{unoriented
graph} in this case. Let $\mathcal{G}$ denote the set of all unoriented
graphs $\Gamma$ for which the orientation map $\left(e,v\right)\mapsto\left(e:v\right)$
is single-valued, i.e. (\ref{eq:NoTadpoleCondition}) holds.

Consider the $n$-fold time-ordered product introduced in Section~\ref{sec:The-Renormalization-Problem}
as a map\[
\begin{array}{rccl}
\Time_{n}: & \fps{\mathcal{F}_{\loc}(\M)}{\hbar}^{\otimes n} & \rightarrow & \fps{\mathcal{F}(\M)}{\hbar}\\
 & F_{1}\otimes\cdots\otimes F_{n} & \mapsto & F_{1}\dT\cdots\dT F_{n}\,.\end{array}\]
It was defined with the help of a second order, symmetric functional
differential operator (\ref{eq:BiDifferentialTProduct}), which can
be written as \[
\Gamma_{H_{F}}(F\cdot G)=\left\langle H_{F},F^{\left(1\right)}\otimes G^{\left(1\right)}\right\rangle \,,\]
due to the absence of tadpoles. On the level of graphs this operation
is represented by drawing one line, $H_{F}$, between the interaction
vertices, $F$ and $G$. We can split the time-ordered product, $\Time_{n}$,
in a similar way into two parts \citep{FredenhagenGoettingen2009};
a differential operator, \[
\bld{\delta}^{\alpha}:F_{1}\otimes\cdots\otimes F_{n}\mapsto F_{1}^{\left(\alpha_{1}\right)}\otimes\cdots\otimes F_{n}^{\left(\alpha_{n}\right)}\,,\quad\alpha\in\mathbb{N}^{n}\,,\]
where $\alpha_{i}$ is the number of lines adjacent to the vertex
with interaction $F_{i}$,%
\footnote{Since we do not want to restrict ourselves to any particular type
of interaction, the number of edges adjacent to a given vertex is
not fixed a priori.%
} and a distribution,\[
S_{\Gamma}:F_{1}^{\left(\alpha_{1}\right)}\otimes\cdots\otimes F_{n}^{\left(\alpha_{n}\right)}\mapsto\left\langle S_{\Gamma},F_{1}^{\left(\alpha_{1}\right)}\otimes\cdots\otimes F_{n}^{\left(\alpha_{n}\right)}\right\rangle ,\]
containing the information as to which vertices of the graph $\Gamma$
are connected by a line. The $n$-fold time-ordered product can then
be written as\begin{equation}
F_{1}\dT\cdots\dT F_{n}=\sum_{\alpha\in\mathbb{N}^{n}}\sum_{\Gamma\in\mathcal{G}_{\alpha}}\frac{\hbar^{\left|E(\Gamma)\right|}}{\Sym(\Gamma)}\left\langle S_{\Gamma},\bld{\delta}^{\alpha}\left(F_{1}\otimes\cdots\otimes F_{n}\right)\right\rangle \,,\label{eq:TimeOrderedGraphStructure}\end{equation}
where $\mathcal{G}_{\alpha}$ is the set of (non-tadpole) graphs with
$n=\dim(\alpha)$ vertices and $\tfrac{\left|\alpha\right|}{2}$ lines
such that there are $\alpha_{i}$ lines joining at vertex $i$. $\Sym(\Gamma)\in\mathbb{N}$
is the symmetry factor of the graph $\Gamma$ to be defined below.
Observe that in the case of polynomial interactions, e.g., $F_{i}(\varphi)=\left\langle \varphi^{k_{i}},f\right\rangle $,
$i\in\left\{ 1,\dots,n\right\} $, and fixed $n\in\mathbb{N}$ only
finitely many of the functional derivatives $\bld{\delta}^{\alpha}$
give non-vanishing contributions to (\ref{eq:TimeOrderedGraphStructure}).
For arbitrary interactions the limiting parameter is the order in
$\hbar$ up to which one wants to compute. One can generate a dependence
on the loop number $\ell(\Gamma)$ for connected graphs $\Gamma$
by absorbing one factor $\hbar$ in each interaction functional,\[
\ell(\Gamma)=\left|E(\Gamma)\right|-\left|V(\Gamma)\right|+1\,,\]
a well known identity from graph theory \citep{GrossYellen2003}.
\begin{example}
As an example regard the threefold time-ordered product of (not necessarily
local) functionals $F,G,H\in\fps{\mathcal{F}(\M)}{\hbar}$. Using
Cauchy's product formula and the Leibniz rule one derives from the
power series expansion (\ref{eq:TimeOrderedProduct}) the following
expression for the threefold time-ordered product \begin{equation}
F\dT G\dT H=\sum_{n=0}^{\infty}\frac{\hbar^{n}}{n!}\sum_{m=0}^{n}\sum_{k=0}^{m}{n \choose m}{m \choose k}\left\langle F^{\left(k+m-k\right)}G_{\left(k\right)}^{\left(n-m\right)}H_{\left(n-m\right)\left(m-k\right)}\right\rangle \,,\label{eq:ExampleThreefoldProduct-1}\end{equation}
where we used the abbreviation $G_{\left(k\right)}:=H_{F}^{\otimes k}*G^{\left(k\right)}$,
cf.~\citep{Keller2009}. The first terms of this expansion are given
by\begin{align*}
F\dT G\dT H & =FGH+\hbar\left\{ \left\langle F^{\left(1\right)}G_{\left(1\right)}H\right\rangle +\left\langle F^{\left(1\right)}GH_{\left(1\right)}\right\rangle +\left\langle FG^{\left(1\right)}H_{\left(1\right)}\right\rangle \right\} \\
 & \quad+\hbar^{2}\left\{ \tfrac{1}{2}\left\langle F^{\left(2\right)}G_{\left(2\right)}H\right\rangle +\left\langle F^{\left(2\right)}G_{\left(1\right)}H_{\left(1\right)}\right\rangle +\tfrac{1}{2}\left\langle F^{\left(2\right)}GH_{\left(2\right)}\right\rangle \right.\\
 & \qquad\qquad\qquad\left.+\left\langle F^{\left(1\right)}G^{\left(1\right)}H_{\left(1\right)\left(1\right)}\right\rangle +\tfrac{1}{2}\left\langle FG^{\left(2\right)}H_{\left(2\right)}\right\rangle +\left\langle F^{\left(1\right)}G_{\left(1\right)}^{\left(1\right)}H_{\left(1\right)}\right\rangle \right\} \\
 & \quad+\cdots\end{align*}
\begin{align*}
\phantom{F\cdot_{T}G\cdot_{T}H} & =\FGH+\hbar\left\{ \FoneGoneH+\FoneGHone+\FGoneHone\right\} \\
 & \quad+\hbar^{2}\left\{ \frac{1}{2}\FtwoGtwoH+\FtwoGoneHone+\frac{1}{2}\FtwoGHtwo+\FoneGoneHtwo+\frac{1}{2}\FGtwoHtwo+\FoneGoneoneHone\right\} \\
 & \quad+\cdots\end{align*}
Observe that the prefactor of each graph is given by its symmetry
factor, \linebreak ${\Sym(\Gamma)^{-1}=\frac{1}{n!}{n \choose m}{m \choose k}=\frac{1}{\left(n-m\right)!k!\left(m-k\right)!}}$,
where for a general graph, $\Gamma\in\mathcal{G}$, $\Sym(\Gamma)$
is the product of the number of possible permutations of edges which
join the same two vertices in $\Gamma$.

The terms in (\ref{eq:ExampleThreefoldProduct-1}) can equivalently
be expressed as a composition of the maps $S_{\Gamma}$ and $\bld{\delta}^{\alpha}$
above,\[
\xymatrix{F\otimes G\otimes H\ar[d]^{\bld{\delta}^{\alpha}}\\
F^{\left(m\right)}\otimes G^{\left(n-m+k\right)}\otimes H^{\left(n-k\right)}\ar[d]^{\frac{1}{\Sym(\Gamma)}S_{\Gamma}}\\
\frac{1}{n!}{n \choose m}{m \choose k}\left\langle F^{\left(k+m-k\right)}G_{\left(k\right)}^{\left(n-m\right)}H_{\left(n-m\right)\left(m-k\right)}\right\rangle \,,}
\]
hence we can write (\ref{eq:ExampleThreefoldProduct-1}) equivalently
as graph expansion,\[
F\dT G\dT H=\sum_{\Gamma\in\mathcal{G}}\frac{\hbar^{\left|E(\Gamma)\right|}}{\Sym(\Gamma)}\left\langle S_{\Gamma},\bld{\delta}^{\alpha}(F\otimes G\otimes H)\right\rangle \,,\quad\alpha\in\mathbb{N}^{3},\quad\left|\alpha\right|=2\left|E(\Gamma)\right|\,.\]
We will properly define these maps, $\bld{\delta}^{\alpha}$ and $S_{\Gamma}$,
in the sequel.
\end{example}
In the case of local functionals $F_{v}\in\fps{\mathcal{F}_{\loc}(\M)}{\hbar}$
we have that the functional derivative can be written in the form,
cf. Eq.~(\ref{eq:LocFuncKernelArbSmallSupp}), \[
F_{v}^{\left(\alpha_{v}\right)}(\varphi)(x_{1},\dots,x_{\alpha_{v}})=\sum_{k}\sum_{i}f_{\varphi}^{v,k,i}(x_{v})\, P_{k}(\partial_{\bld r_{v}})\delta(\bld r_{v})\in\D(\M)\otimes\E_{\mathrm{Dirac}}'(\M^{\alpha_{i}-1})\,,\]
where $x_{v}=\frac{1}{\alpha_{v}}\sum_{k=1}^{\alpha_{v}}x_{k}$ is
the center of mass coordinate and $\bld r_{v}$ are relative coordinates
at vertex $v\in V(\Gamma)$. $P_{k}$ are homogeneous polynomials
of order $k$ in $\alpha_{v}-1$ variables, $\E_{\mathrm{Dirac}}'$
denotes the space of distributions supported at zero, and $\supp(f_{\varphi}^{v,k,i})$
can be chosen arbitrarily small. We want to introduce the short hand
notation\[
\D_{\loc}(\M^{\alpha_{v}}):=\D(\M)\otimes\E_{\mathrm{Dirac}}'(\M^{\alpha_{v}-1})\]
for the space of the $\alpha_{v}$th functional derivative of a local
functional, $F^{\left(\alpha_{v}\right)}(\varphi)\in\D_{\loc}(\M^{\alpha_{v}})$
and call \[
\begin{array}{rccl}
\bld{\delta}^{\alpha}\big|_{\varphi}: & \fps{\mathcal{F}_{\loc}(\M)}{\hbar}^{\otimes\left|V(\Gamma)\right|} & \rightarrow & \bigotimes_{v\in V(\Gamma)}\D_{\loc}(\M^{\alpha_{v}})\\
 & F_{1}\otimes\cdots\otimes F_{n} & \mapsto & F_{1}^{\left(\alpha_{1}\right)}(\varphi)\otimes\cdots\otimes F_{n}^{\left(\alpha_{n}\right)}(\varphi)\quad,\, n=\left|V(\Gamma)\right|\end{array}\]
the \emph{adjacency differential operator}.

While the definition of $\bld{\delta}^{\alpha}$ can be done purely
algebraically, the construction of the distribution $S_{\Gamma}$
on the other hand involves renormalization, i.e., an extension procedure
for distributions. We start from the tensor power\begin{equation}
\widetilde{S_{\Gamma}}=\bigotimes_{e\in E(\Gamma)}H_{F}(e)\,,\label{eq:TensorTProduct}\end{equation}
containing one factor $H_{F}\in\D'(\M\backslash\left\{ 0\right\} )$
for every edge $e$ in $\Gamma$. Hence $\widetilde{S_{\Gamma}}$
is a well-defined distribution in $\D'((\M\backslash\left\{ 0\right\} )^{\left|E(\Gamma)\right|})$
that can be uniquely extended to \linebreak[4]$\D'(\M^{\left|E(\Gamma)\right|})$,
since the Feynman fundamental solution $H_{F}\in\D'(\M\backslash\left\{ 0\right\} )$
has a unique extension $\dot{H}_{F}\in\D'(\M)$ with the same scaling
degree.

The renormalization problem is now to find a restriction $S_{\Gamma}$
of the tensor distribution $\widetilde{S_{\Gamma}}$ to the space
\[
\bigotimes_{v\in V(\Gamma)}\D_{\loc}(\M^{\alpha_{v}})=\D(\M^{\left|V(\Gamma)\right|})\otimes\bigotimes_{v\in V(\Gamma)}\E_{\mathrm{Dirac}}'(\M^{\alpha_{v}-1})\,.\]
The space $\E_{\mathrm{Dirac}}'$ is spanned by the $\delta$-distribution
and its derivatives \linebreak[4] \citep[Thm.~2.3.4]{Hoermander2003},
thus the tensor product \[
\V=\bigotimes_{v\in V(\Gamma)}\E_{\mathrm{Dirac}}'(\M^{\alpha_{v}-1})\]
is graded by the number of derivatives in front of the $\delta$-distributions.\[
\V=\bigoplus_{\left|\vec{k}\right|}\V_{\left|\vec{k}\right|}\,,\quad\left|\vec{k}\right|=\sum_{v\in V(\Gamma)}k_{v}\,.\]
Regard the application of $\widetilde{S_{\Gamma}}$ to the image of
$\bld{\delta}^{\alpha}\big|_{\varphi}$ ,\begin{align*}
\left\langle \widetilde{S_{\Gamma}},\bigotimes_{v\in V(\Gamma)}F_{v}^{\left(\alpha_{v}\right)}\right\rangle  & =\left\langle \widetilde{S_{\Gamma}},\bigotimes_{v\in V(\Gamma)}\sum_{k_{v}}\sum_{I_{v}}f_{\varphi}^{v,k_{v},I_{v}}\, P_{k_{v}}(\partial_{\bld r_{v}})\delta(\bld r_{v})\right\rangle \\
 & =\left\langle \widetilde{S_{\Gamma}},P_{\vec{k}}(\partial_{\vec{\bld r}})\bigotimes_{v\in V(\Gamma)}\sum_{k_{v}}\sum_{I_{v}}f_{\varphi}^{v,k_{v},I_{v}}\,\delta(\bld r_{v})\right\rangle \end{align*}
where $\vec{k}=\left(k_{v}\right)_{v\in V(\Gamma)}$ and $\vec{\bld r}=\left(\bld r_{v}\right)_{v\in V(\Gamma)}$.
We dualize the application of $P_{\vec{k}}(\partial_{\vec{\bld r}})$
and get\[
\left\langle \widetilde{S_{\Gamma}},\bigotimes_{v\in V(\Gamma)}F_{v}^{\left(\alpha_{v}\right)}\right\rangle =\left\langle \widetilde{S_{\left(\Gamma,\vec{k}\right)}}\,,\bigotimes_{v\in V(\Gamma)}\sum_{k_{v}}\sum_{I_{v}}f_{\varphi}^{v,k_{v},I_{v}}\,\delta(\bld r_{v})\right\rangle \]
where according to Lemma~\ref{lem:ScalingDegreeProperties} this
will increase the scaling degree of the distribution by $\left|\vec{k}\right|$,\begin{equation}
\sd(\widetilde{S_{\left(\Gamma,\vec{k}\right)}})=\sd(\widetilde{S_{\Gamma}})+\left|\vec{k}\right|\,.\label{eq:ScalingDegreeWithExternalStructure}\end{equation}
The multiindex $\vec{k}$ thus encodes the derivative couplings (i.e.,
the interaction functionals containing derivatives of the fields)
in the graph $\Gamma$. In the framework of Connes-Kreimer Hopf algebras,
or Feynman graphs in general, $\vec{k}$ sometimes is called the {}``external
structure of the graph'', see \citep{ConnesMarcolli2007} for instance. 

The remaining restriction of $\widetilde{S_{\left(\Gamma,\vec{k}\right)}}$
to a distribution in $\D'(\M^{\left|V(\Gamma)\right|})$ can conveniently
be described by the (simplicial) cohomology of the graph $\Gamma$.
For ease of presentation, we will forget about the external structure
$\vec{k}$ for the time being. The algebraic structure to be presented
below can be developed to a large extend without recourse to the external
structure. We will reintroduce $\vec{k}$ by replacing $\Gamma\mapsto\left(\Gamma,\vec{k}\right)$,
where we find it to be relevant for the understanding.

\subsection{\label{sub:SimplicialCohomology}Simplicial cohomology of a graph
and choice of relative coordinates}

The presentation in this subsection is very much inspired by \citep[Sec.~2.1]{BergbauerBrunettiKreimer2009}.
Let $\mathbb{K}\in\left\{ \mathbb{R},\mathbb{C}\right\} $ be a field.
We define the (simplicial) cohomology $H^{1}(\Gamma,\mathbb{K})$
with coefficients in $\mathbb{K}$ of a connected graph $\Gamma\in\mathcal{G}$
by the exact sequence\begin{equation}
\xymatrix{0\ar[r] & \mathbb{K}\ar@{^{(}->}[r]^{\cc\quad} & \mathbb{K}^{\left|V(\Gamma)\right|}\ar@{->}[r]^{\dc} & \mathbb{K}^{\left|E(\Gamma)\right|}\ar@{->>}[r]^{\oc} & H^{1}(\Gamma,\mathbb{K})\ar[r] & 0}
\,.\label{eq:SimplicialCohomology}\end{equation}
Let $\left\{ a_{v}:v\in V(\Gamma)\right\} $ be a basis of $\mathbb{K}^{\left|V(\Gamma)\right|}$
and $\left\{ b_{e}:e\in E(\Gamma)\right\} $ a basis of $\mathbb{K}^{\left|E(\Gamma)\right|}$.
The maps in (\ref{eq:SimplicialCohomology}) are then defined as the
{}``center of mass'', \[
\cc:x\mapsto x\sum_{v\in V(\Gamma)}a_{v}\,,\]
and \[
\dc:a_{v}\mapsto\sum_{e\in E(\Gamma)}\left(e:v\right)b_{e}\,.\]
One immediately checks that\[
\forall x\in\mathbb{K}:\quad\left(\dc\circ\cc\right)(x)=x\sum_{e\in E(\Gamma)}\sum_{v\in V(\Gamma)}\left(e:v\right)b_{e}=0\,.\]
Furthermore, $\oc\circ\dc=0$ is equivalent to \[
H^{1}(\Gamma,\mathbb{K})=\coker(\dc)=\mathbb{K}^{\left|E(\Gamma)\right|}/\im(\dc)\,,\]
an alternative definition of $H^{1}(\Gamma,\mathbb{K})$. The dimension
of this cohomology is called the first Betti number and gives the
number of independent loops of the graph $\Gamma$, $\dim(H^{1}(\Gamma,\mathbb{K}))=\left|E(\Gamma)\right|-\left|V(\Gamma)\right|+1$.

Let us regard the map $\dc$. The image of a general element $\vec{x}=\sum_{v\in V(\Gamma)}x^{v}a_{v}$
is given by\[
\dc(\vec{x})=\sum_{e\in E(\Gamma)}\left(x^{\target(e)}-x^{\source(e)}\right)b_{e}\,.\]
Thus $\dc$ expresses the coordinates of a given edge $e\in E(\Gamma)$
in terms of the coordinates of the adjacent vertices, $r^{e}=x^{\target(e)}-x^{\source(e)}$.
\begin{example}
Regard the very simple graph with two vertices and one edge, \[
\gamma=\LineX\,.\]
Let $v=\source(e)$ and $w=\target(e)$, and choose a basis $\left\{ a_{v},a_{w}\right\} $
of $\mathbb{K}^{\left|V(\gamma)\right|}$. Then $x\, a_{v}+y\, a_{w}\in\mathbb{K}^{\left|V(\gamma)\right|}$
and we have\[
\dc(x\, a_{v}+y\, a_{w})=\left(y-x\right)b_{e}\,.\]
Thus the pullback of $\dc:\mathbb{K}^{\left|V(\gamma)\right|}\rightarrow\mathbb{K}^{\left|E(\gamma)\right|}\equiv\mathbb{K}$
maps a function $f\in\D(\mathbb{K})$ to\[
\left(\dc^{*}f\right)(x,y)=f(y-x)\,.\]
Consequently, a distribution $u\in\D'(\mathbb{K})$ will be mapped
to $\dc^{*}u\in\D'(\mathbb{K}^{2})$, with\[
\left(\dc^{*}u\right)(f\otimes g)=\int_{\mathbb{K}^{2}}dx\, dy\, u(y-x)\, f(x)\, g(y)=\left\langle f,u*g\right\rangle \,,\]
where $*$ denotes the convolution product and the pullback is defined
in the sense of \citep[Thm.~6.1.2]{Hoermander2003}.
\end{example}
This construction can be lifted to any $\mathbb{K}$-vector space
$V$, by forming the tensor product $\mathbb{K}\otimes V$. We are
interested here in the lift to Minkowski spacetime $\M=\mathbb{R}\otimes\M$.
Thus we have, $\M^{n}=\mathbb{R}^{n}\otimes\M$,\[
\xymatrix{0\ar[r] & \M\ar@{^{(}->}[r]^{\hat{\cc}\quad} & \M^{\left|V(\Gamma)\right|}\ar@{->}[r]^{\hat{\dc}} & \M^{\left|E(\Gamma)\right|}\ar@{->>}[r] & H^{1}(\Gamma,\M)\ar[r] & 0}
\,,\]
i.e., one short exact sequence for each component of $z\in\M$.\addtocounter{thm}{-1}
\begin{example}
[revisited]In terms of this cohomology the Hadamard two point function
is the pullback of the Hadamard solution $H\in\D'(\M)$ by $\dc_{\gamma}$,\[
\left(\hat{\dc}_{\gamma}^{*}H\right)(f,g)=\left\langle f,H*g\right\rangle \,,\]
analogously the Feynman propagator is the pullback of the Feynman
fundamental solution $H_{F}\in\D'(\M\backslash\left\{ 0\right\} )$,
\[
\left(\hat{\dc}_{\gamma}^{*}H_{F}\right)(f,g)=\left\langle f,H_{F}*g\right\rangle \,,\]
$\supp(f)\cap\supp(g)=\emptyset$.
\end{example}
Also translation invariance can be formulated very conveniently in
this cohomological framework. The image of $\cc$ gives all possible
translations of the vertex coordinates by a given vector $a\in\mathbb{K}$.
Hence the orbits of these translations are the elements of the cokernel
$\coker(\cc)=\mathbb{K}^{\left|V(\Gamma)\right|}/\im(\cc$). We can
fix a basis of $\coker(\cc)$ by choosing the coordinates of a vertex
$v_{0}$ and setting $V_{0}:=V(\Gamma)\backslash\left\{ v_{0}\right\} $.
This provides us with a projection\[
\pi_{\Gamma}:\mathbb{K}^{\left|V(\Gamma)\right|}\rightarrow\mathbb{K}^{\left|V_{0}\right|}\equiv\mathbb{K}^{\left|V(\Gamma)\right|-1}\]
and an isomorphism $\phi$ between $\mathbb{K}^{\left|V_{0}\right|}$
and $\coker(\cc)$,\[
\begin{array}{rccl}
\phi: & \mathbb{K}^{\left|V_{0}\right|} & \rightarrow & \coker(\cc)\\
 & a_{v} & \mapsto & a_{v}+\im(\cc)\,.\end{array}\]
All translation invariant functions in $\mathbb{K}^{\left|V(\Gamma)\right|}$
can be seen as generic functions on $\coker(\cc)$, or $\mathbb{K}^{\left|V_{0}\right|}$
respectively. They are related by the pullback via $\pi_{\Gamma}$,
e.g. for smooth functions,\[
\begin{array}{rccl}
\pi_{\Gamma}^{*}: & \E(\mathbb{K}^{\left|V_{0}\right|}) & \rightarrow & \E_{\mathrm{tr.inv.}}(\mathbb{K}^{\left|V(\Gamma)\right|})\\
 & f & \mapsto & \left(\pi_{\Gamma}^{*}f\right)=f\circ\pi_{\Gamma}\,.\end{array}\]
We define the \emph{choice of relative coordinates} in $\mathbb{K}^{\left|E(\Gamma)\right|}$
by \[
\begin{array}{rccl}
\iota_{\Gamma}:=\dc_{\Gamma}\circ\phi: & \mathbb{K}^{\left|V_{0}\right|} & \rightarrow & \mathbb{K}^{\left|E(\Gamma)\right|}\\
 & \sum_{v\in V_{0}}x^{v}a_{v} & \mapsto & \sum_{e\in E(\Gamma)}r^{e}(\vec{x})\, b_{e}\,,\end{array}\]
where $r^{e}(\vec{x})=\sum_{v\in V_{0}}\left(e:v\right)x^{v}$ is
computed to be \[
r^{e}(\vec{x})=\begin{cases}
x^{\target(e)}-x^{\source(e)} & \mbox{if }v_{0}\notin\left\{ \source(e),\target(e)\right\} \\
x^{\target(e)} & \mbox{if }v_{0}=\source(e)\\
-x^{\source(e)} & \mbox{if }v_{0}=\target(e)\,,\end{cases}\]
giving the {}``coordinates of the edges'' relative to $v_{0}$.
In Minkowski spacetime we define correspondingly,\[
\hat{\iota}_{\Gamma}:=\hat{\dc}_{\Gamma}\circ\hat{\phi}:\M^{\left|V_{0}\right|}\rightarrow\M^{\left|E(\Gamma)\right|}\]
as the\emph{ }choice of relative coordinates in $\M^{\left|E(\Gamma)\right|}$.

We now want to define $S_{\Gamma}\in\D'(\M^{\left|V(\Gamma)\right|})$
as the pullback of $\widetilde{S_{\Gamma}}$ via $\hat{\dc}_{\Gamma}$,
\[
S_{\Gamma}=\hat{\dc}_{\Gamma}^{*}\,\widetilde{S_{\Gamma}}\,.\]
Let us regard the case of the unextended amplitude $\widetilde{S_{\Gamma}}\in\D'(\left(\M\backslash\left\{ 0\right\} \right)^{\left|E(\Gamma)\right|})$.
Each edge corresponds to a Feynman propagator $H_{F}$ and any set
of edges joining the same two vertices will have the same coordinate
$r^{e}(\vec{x})$. This introduces powers of $H_{F}$, which are well-defined
distributions only outside the origin, $\left(H_{F}\right)^{k}\in\D'(\M\backslash\left\{ 0\right\} )$.
As a consequence the pullback $S_{\Gamma}$ is a well-defined distribution
only outside the large diagonal \[
\DIAG=\left\{ \vec{x}\in\M^{\left|V(\Gamma)\right|}|\,\exists v,w\in V(\Gamma),v\neq w:\, x_{v}=x_{w}\right\} \,,\]
$S_{\Gamma}\in\D'(\M^{\left|V(\Gamma)\right|}\backslash\DIAG)$. As
remarked before a restriction of $\widetilde{S_{\Gamma}}$ by means
of a wave front set argument, i.e. by applying \citep[Thm.~8.2.4]{Hoermander2003},
is not possible due to the wave front set of $H_{F}$. A restriction
of $\widetilde{S_{\Gamma}}$, or equivalently an extension of $\hat{\dc}_{\Gamma}^{*}\widetilde{S_{\Gamma}}$
to $\D'(\M^{\left|V(\Gamma)\right|})$, will involve renormalization.
In the case of even dimensions, $d\in2\mathbb{N}$, the amplitude
is a tensor power of the Feynman propagator (\ref{eq:FeynmanEvenDimensions}),\[
\widetilde{S_{\Gamma}^{\mu}}:=\bigotimes_{e\in E(\Gamma)}H_{F}^{\mu}(e)\in\D'(\left(\M\backslash\left\{ 0\right\} \right)^{\left|E(\Gamma)\right|})\,,\]
and hence depends on an additional parameter $\mu$ of mass dimension
one.

We review briefly the Epstein-Glaser induction for constructing the
extension $S_{\Gamma}$ in order to discuss the renormalization freedom
in the cohomological framework advertised here.

\subsection{\label{sub:EpsteinGlaserInduction}The Epstein-Glaser induction}

Having defined what we mean by a graph $\Gamma\in\mathcal{G}$, we
define an \emph{Epstein-Glaser subgraph (EG subgraph)} $\gamma\subseteq\Gamma$
to be a subset of the set of vertices $V(\gamma)\subseteq V(\Gamma)$
together with all lines in $\Gamma$ connecting them,\[
E(\gamma)=\left\{ e\in E(\Gamma):\left\{ \source(e),\target(e)\right\} \subset V(\gamma)\right\} .\]
The orientation of $\gamma$ is inherited from $\Gamma$. The first
step of the Epstein-Glaser induction is to choose extensions for all
EG subgraphs with two vertices, $\left|V(\gamma)\right|=2$. In this
case we have translation invariant distributions $\hat{\dc}_{\gamma}^{*}\widetilde{S_{\gamma}}\in\D'(\M^{2}\backslash\Diag)$
($\Diag=\left\{ \vec{x}\in\M^{\left|V(\gamma)\right|}|\,\forall v,w\in V(\Gamma):\, x_{v}=x_{w}\right\} $
denotes the thin diagonal), which correspond to generic distributions
$\hat{\iota}_{\gamma}^{*}\widetilde{S_{\gamma}}\in\D'(\M\backslash\left\{ 0\right\} )$.
The scaling degree of these distributions is given by their number
of lines $\sd(\hat{\iota}_{\gamma}^{*}\widetilde{S_{\gamma}})=\left|E(\gamma)\right|\left(d-2\right)$,
and we can choose a (possibly unique) extension according to Theorem~\ref{thm:ExtensionScalingDegree}.
By translation invariance this gives extensions $S_{\gamma}\in\D'(\M^{2})$
of the distributions $\hat{\dc}_{\gamma}^{*}S_{\gamma}\in\D'(\M^{2}\backslash\Diag)$.
By causality, i.e. relation (\ref{eq:CausalityProducts}) of time-ordered
and algebra product, these extensions define the (translation-invariant)
restrictions of all EG subgraphs with three vertices up to the thin
diagonal.

For a generic EG subgraph $\gamma\subseteq\Gamma$ we make the assumption
that the restrictions of all EG subgraphs of $\gamma$ with less vertices
have already been chosen (induction hypothesis). The causality condition
then gives a translation invariant distribution $\hat{\dc}_{\gamma}^{*}\widetilde{S}_{\gamma}\in\D'(\M^{\left|V(\gamma)\right|}\backslash\Diag)$
which corresponds to a generic distribution $\hat{\iota}_{\gamma}^{*}\widetilde{S_{\gamma}}\in\D'(\M^{\left|V(\gamma)\right|-1}\backslash\left\{ 0\right\} )$.
The scaling degree and hence the degree of divergence of this distribution
is completely fixed by the structure of the graph, cf. (\ref{eq:FeynmanScalingDegree}),\begin{equation}
\div(\gamma)=\div(\hat{\iota}_{\gamma}^{*}\widetilde{S_{\gamma}})=\left|E(\gamma)\right|\left(d-2\right)-\left(\left|V(\gamma)\right|-1\right)d\,,\qquad d=\dim(\M)\,.\label{eq:DegreeOfDivergenceGraph}\end{equation}
We call $\gamma$ superficially convergent if $\div(\gamma)<0$, logarithmically
divergent if \linebreak[4]$\div(\gamma)=0$ and divergent of degree
$\div(\gamma)$ otherwise. Again by Theorem~\ref{thm:ExtensionScalingDegree}
there is a choice to be made in the extension of $\hat{\iota}_{\gamma}^{*}\widetilde{S_{\gamma}}$
in the case $\div(\gamma)\geq0$. The inductive procedure of Epstein-Glaser
will thus lead to an extension $S_{\Gamma}\in\D'(\M^{\left|V(\Gamma)\right|})$
of $\hat{\dc}_{\Gamma}^{*}\widetilde{S_{\Gamma}}\in\D'(\M^{\left|V(\Gamma)\right|}\backslash\DIAG)$.
As suggested above, we will refer to any such extension of $\hat{\dc}_{\Gamma}^{*}\widetilde{S_{\Gamma}}$
as \emph{restriction} of $\widetilde{S_{\Gamma}}\in\D'((\M\backslash\left\{ 0\right\} )^{\left|E(\Gamma)\right|})$.

In the case of couplings which involve derivatives of the fields,
also the external structure of $\gamma$ has to be taken into account,
cf.~(\ref{eq:ScalingDegreeWithExternalStructure}), \[
\div\!\left(\gamma,\vec{k}\right)=\div(\gamma)+\left|\vec{k}\right|\,.\]
This introduces an additional freedom in the choice of the extension
in each step, but does otherwise not change the inductive procedure.

The combination of all choices involved in the inductive construction
of a restriction $S_{\Gamma}\in\D'(\M^{\left|V(\Gamma)\right|})$
of $\widetilde{S_{\Gamma}}$ make up the Stückelberg-Petermann renormalization
group acting on local functionals, cf. \citep{Brunetti2009}. We will
see in the next section that the freedom in this construction is considerably
restricted, if we replace in (\ref{eq:TensorTProduct}) the Feynman
propagator by its dimensionally regularized counterpart.

\section{\label{sec:TheRegularizedAmplitude}The Regularized Amplitude}

The aim of this section is to construct a regularization of the above
defined amplitude $\hat{\dc}_{\Gamma}^{*}\widetilde{S_{\Gamma}^{\mu}}\in\D'(\M^{\left|V(\Gamma)\right|}\backslash\DIAG)$
by applying the Epstein-Glaser reduction procedure to\begin{equation}
\widetilde{S_{\Gamma}^{\mu,\zeta}}:=\bigotimes_{e\in E(\Gamma)}H_{F}^{m,\mu,\zeta}(e)\,.\label{eq:RegularizedTensorDistribution}\end{equation}
We will see in the sequel that $\widetilde{S_{\Gamma}^{\mu,\zeta}}$
has a unique restriction $S_{\Gamma}^{\mu,\zeta}\in\D'(\M^{\left|V(\Gamma)\right|})$
by means of Corollary~\ref{cor:ExtensionHeterogeneousDistribution}.
This will provide a regularization of the original amplitude $\hat{\dc}_{\Gamma}^{*}\widetilde{S_{\Gamma}^{\mu}}$
outside the large diagonal.

Regard the dimensionally regularized Feynman fundamental solution
$H_{F}^{m,\mu,\zeta}\in\D'(\M\backslash\left\{ 0\right\} )$ constructed
in Chapter~\ref{cha:DimRegHadamard}. The expansion of $H_{F}^{m,\mu,\zeta}$
in powers of the mass parameter $m^{2}$ follows directly from the
expansion of the analytic Hadamard function (\ref{eq:DimRegAnalyticHadamardSeriesM}),\begin{align}
H_{F}^{m,\mu,\zeta}(x) & =\left(-1\right)^{\frac{d-2}{2}}\left(2\pi\right)^{-\frac{d}{2}}2^{\frac{2-\left(d+\zeta\right)}{2}}\frac{\pi}{2\sin(\zeta\tfrac{\pi}{2})}\cdot\nonumber \\
 & \hspace{-5mm}\cdot\left[\sum_{s=0}^{\infty}\left\{ \left(\frac{2}{\mu\sqrt{-x^{2}+i0}}\right)^{\zeta}\frac{1}{s!\,\Gamma(-\frac{\zeta}{2}-\frac{d-4}{2}+s)}-\frac{\theta(s-\frac{d-3}{2})}{\left(s-\frac{d-2}{2}\right)!\,\Gamma(\frac{\zeta}{2}+s+1)}\right\} \right.\cdot\label{eq:DimRegFeynmanSeriesM}\\
 & \hspace{60mm}\cdot\left.\left(\frac{\sqrt{-x^{2}+i0}}{2}\right)^{2s-\left(d-2\right)}\left(m^{2}\right)^{s}\right]\!,\nonumber \\
 & =:\sum_{s=0}^{\infty}H_{F}^{s,\mu,\zeta}(x)\,\left(m^{2}\right)^{s}\,,\nonumber \end{align}
where we used the common shorthand $f(x^{2}-i0)=\lim_{\varepsilon\rightarrow0^{+}}f(x^{2}-i\varepsilon)$.
The coefficients of this series, $H_{F}^{s,\mu,\zeta}\in\D'(\M\backslash\left\{ 0\right\} )$,
are sums of a distributional and a smooth part, both of which are
homogeneous, but of different degree. The distributional part is homogeneous
of degree $2s-\left(d+\zeta-2\right)$, whereas the smooth part is
identically zero for $s<\frac{d-3}{2}$ and homogeneous of degree
$2s-\left(d-2\right)$ otherwise.

Regard now the finite tensor powers of the dimensionally regularized
Feynman distribution. The expansion of $\left(H_{F}^{m,\mu,\zeta}\right)^{\otimes k}$,
$k=\left|E(\Gamma)\right|$, in $m^{2}$ follows directly from (\ref{eq:DimRegFeynmanSeriesM})
and Cauchy's product formula,\begin{align}
\left(H_{F}^{m,\mu,\zeta}\right)^{\otimes k} & =\sum_{s_{k}=0}^{\infty}\left(m^{2}\right)^{s_{k}}\sum_{s_{k-1}=0}^{s_{k}}\cdots\sum_{s_{1}=0}^{s_{2}}H_{F}^{\left(s_{k}-s_{k-1}\right),\mu,\zeta}\otimes\cdots\otimes H_{F}^{s_{1},\mu,\zeta}\label{eq:DimRegFeynmanTensorPowerSeriesM}\\
 & =:\sum_{s_{k}=0}^{\infty}\left(m^{2}\right)^{s_{k}}\left(H_{F}^{\mu,\zeta}\right)^{\otimes\vec{s}},\nonumber \end{align}
where $\vec{s}=\left(s_{k}-s_{k-1},\dots,s_{1}\right)\in\mathbb{N}^{k}$.
The scaling degree of the coefficient of $\left(m^{2}\right)^{s_{k}}$
is given by the sum of the scaling degrees of the individual factors,
cf. Lemma~\ref{lem:ScalingDegreeProperties} and Remark~\ref{rem:ScalingDegreeHomogeneity},\begin{equation}
\sd(\left(H_{F}^{\mu,\zeta}\right)^{\otimes\vec{s}})=k\left(d+\Re(\zeta)-2\right)-2s_{k}\,.\label{eq:ScalingDegreeCoefficientM}\end{equation}
Hence the scaling degree of the coefficients become arbitrarily small
as one regards higher powers of $m^{2}$. Thus according to Theorem~\ref{thm:ExtensionScalingDegree}
these coefficients will have unique restrictions to arbitrary subdiagonals.%
\footnote{Keep in mind that the product of distributions can be defined as the
restriction of their tensor product to subdiagonals, cf.~\citep[Thm.~8.2.10]{Hoermander2003}.%
} In other words, only a finite number of coefficients in (\ref{eq:DimRegFeynmanTensorPowerSeriesM})
need renormalization. This is one of the advantages of the concept
of a {}``scaling expansion'' introduced in \citep{HollandsWald2002};
see also \citep{Hollands2004}, \citep[p.~1310ff]{DuetschFredenhagen2004}.
The coefficients in (\ref{eq:DimRegFeynmanTensorPowerSeriesM}) which
need renormalization when restricted to subdiagonals are in general
not homogeneous, since the coefficients $H_{F}^{s,\mu,\zeta}$ in
(\ref{eq:DimRegFeynmanSeriesM}) are not homogeneous. However, since
we regard graphs with a finite number of edges, they are certainly
heterogeneous of finite order. If we assume $\zeta\notin\mathbb{Q}_{+}$
their multidegree contains no integer number. We can construct a restriction
of $\widetilde{S_{\Gamma}^{\mu,\zeta}}$ by the same procedure described
in Section~\ref{sub:EpsteinGlaserInduction}, with the only difference
that we have a preferred choice of the extension at each order according
to Corollary~\ref{cor:ExtensionHeterogeneousDistribution}, namely
the extensions which are heterogeneous of the same multidegree. Thus
we are lead to a unique restriction $S_{\Gamma}^{\mu,\zeta}\in\D'(\M^{\left|V(\Gamma)\right|})$.
\begin{prop}
[Regularization outside $\DIAG$]\label{pro:RegularizedAmplitude}The
restriction of $\widetilde{S_{\Gamma}^{\mu,\zeta}}$ is \linebreak[4]uniquely
defined by the above homogeneity condition and gives a distribution
$S_{\Gamma}^{\mu,\zeta}\in\D'(\M^{\left|V(\Gamma)\right|})$. \textup{$S_{\Gamma}^{\mu,\zeta}$}
is a regularization of $\hat{\dc}_{\Gamma}^{*}\widetilde{S_{\Gamma}^{\mu}}\in\D'(\M^{\left|V(\Gamma)\right|}\backslash\DIAG)$,
in the sense that\[
\forall f\in\D(\M^{\left|V(\Gamma)\right|}\backslash\DIAG):\quad\lim_{\zeta\rightarrow0}\left\langle S_{\Gamma}^{\mu,\zeta},f\right\rangle =\left\langle \hat{\dc}_{\Gamma}^{*}\widetilde{S_{\Gamma}^{\mu}},f\right\rangle \,.\]
We will refer to $S_{\Gamma}^{\mu,\zeta}\in\D'(\M^{\left|V(\Gamma)\right|})$
as the dimensionally regularized amplitude of $\Gamma$. By translation
invariance it naturally corresponds to a unique dimensionally regularized
amplitude in relative coordinates, $s_{\Gamma}^{\mu,\zeta}\in\D'(\M^{\left|V(\Gamma)\right|-1})$,
which is the unique extension of $\hat{\iota}_{\Gamma}^{*}\widetilde{S_{\Gamma}^{\mu,\zeta}}\in\D'(\left(\M\backslash\left\{ 0\right\} \right)^{\left|V(\Gamma)\right|-1})$.\end{prop}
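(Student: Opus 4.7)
The plan is to prove the three assertions—uniqueness, existence, and the regularization property—by induction on the number of vertices of Epstein--Glaser subgraphs, closely following the inductive scheme of Section~\ref{sub:EpsteinGlaserInduction} but replacing the scaling--degree extension of Theorem~\ref{thm:ExtensionScalingDegree} with the heterogeneous extension of Corollary~\ref{cor:ExtensionHeterogeneousDistribution}. The base case ($|V(\gamma)|=2$) is immediate: after passing to relative coordinates via $\hat\iota_\gamma$, one obtains a finite tensor power of $H_F^{m,\mu,\zeta}$ on the single relative variable, and its expansion (\ref{eq:DimRegFeynmanTensorPowerSeriesM}) in $m^2$ exhibits it as a convergent sum whose coefficients are heterogeneous distributions of finite order, with multidegree avoiding $\mathbb{Z}$ when $\zeta\notin\mathbb{Q}_+$.

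The inductive step is the key technical point. Assume that unique heterogeneous restrictions $S_{\gamma'}^{\mu,\zeta}$ have been constructed for all EG subgraphs $\gamma'\subsetneq\gamma$. By causality, as in Section~\ref{sub:EpsteinGlaserInduction}, these determine a translation--invariant distribution on $\M^{|V(\gamma)|}\setminus\Diag$, and passing to relative coordinates via $\hat\iota_\gamma$ yields an element of $\D'(\M^{|V(\gamma)|-1}\setminus\{0\})$. I would expand this distribution in powers of $m^2$, term by term, using~(\ref{eq:DimRegFeynmanTensorPowerSeriesM}) and the induction hypothesis. By~(\ref{eq:ScalingDegreeCoefficientM}), the scaling degree of the coefficient of $(m^2)^{s_k}$ decreases linearly in $s_k$, so only finitely many coefficients have $\sd\geq d(|V(\gamma)|-1)$ and therefore require extension; the remaining tail extends uniquely by Theorem~\ref{thm:ExtensionScalingDegree}. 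For the finitely many divergent coefficients, the assumption $\zeta\notin\mathbb{Q}_+$ ensures that every homogeneous component has non--integer degree, so the hypothesis $-\alpha_j\in\mathbb{C}\setminus\mathbb{N}_0$ of Corollary~\ref{cor:ExtensionHeterogeneousDistribution} is satisfied, and the unique heterogeneous extension of the same multidegree is selected. Reassembling the series (which converges in the appropriate topology since only finitely many terms were modified at bounded order in $m^2$) yields $s_\Gamma^{\mu,\zeta}\in\D'(\M^{|V(\Gamma)|-1})$, and pullback by $\hat\pi_\Gamma$ gives $S_\Gamma^{\mu,\zeta}\in\D'(\M^{|V(\Gamma)|})$.

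To verify the regularization property, observe that for any $f\in\D(\M^{|V(\Gamma)|}\setminus\DIAG)$ the unextended tensor distribution $\widetilde{S_\Gamma^{\mu,\zeta}}$ acts by integration against the product of smooth functions $\h_m^{\mu,\zeta}(r_e^2-i\varepsilon)$, which by construction of $\mathcal H_m^{\mu,\zeta}$ in Chapter~\ref{cha:DimRegHadamard} converge, uniformly on compact subsets of $\tube^\pm$, to $\h_m^{\mu}(r_e^2-i\varepsilon)$ as $\zeta\to 0$. Since the supports of such $f$ stay away from the large diagonal, dominated convergence applied to the resulting integral representation yields $\lim_{\zeta\to 0}\langle S_\Gamma^{\mu,\zeta},f\rangle=\langle\hat\dc_\Gamma^*\widetilde{S_\Gamma^{\mu}},f\rangle$. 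The translation invariance of $\widetilde{S_\Gamma^{\mu,\zeta}}$ and the uniqueness of the heterogeneous extensions together guarantee that $S_\Gamma^{\mu,\zeta}$ remains translation invariant, so the identification with a distribution $s_\Gamma^{\mu,\zeta}\in\D'(\M^{|V(\Gamma)|-1})$ via $\hat\iota_\Gamma$ is well-defined and itself unique.

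The main obstacle I expect is bookkeeping in the inductive step: one must check that the heterogeneous extensions chosen at each stage of the Epstein--Glaser recursion are compatible with the causal gluing that produces the distribution on $\M^{|V(\gamma)|}\setminus\Diag$ from subgraph data, so that the resulting object really is heterogeneous of the expected multidegree on the whole neighborhood of the thin diagonal. This reduces to showing that the causal combinations of heterogeneous distributions remain annihilated by the product operator $\prod_j(x\cdot\partial_x-\alpha_j)$ appearing in Definition~\ref{def:HeterogeneousDistribution}, which follows from the linearity of the Euler operator and the fact that the multidegrees produced by (\ref{eq:DimRegFeynmanTensorPowerSeriesM}) are stable under the relevant tensor--product and pullback operations used in the recursion.
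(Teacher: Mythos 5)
Your proof is correct and follows essentially the same route as the paper: existence and uniqueness of $S_{\Gamma}^{\mu,\zeta}$ come from running the Epstein--Glaser induction with the unique heterogeneous extension of Corollary~\ref{cor:ExtensionHeterogeneousDistribution} applied to the finitely many divergent coefficients of the $m^{2}$-expansion (which is precisely what the paper's proof delegates to ``the construction above''), and the regularization property away from $\DIAG$ reduces to the convergence $H_{F}^{m,\mu,\zeta}\rightarrow H_{F}^{m,\mu}$ established in Chapter~\ref{cha:DimRegHadamard}. The only cosmetic difference is that you justify interchanging $\lim_{\zeta\rightarrow0}$ with the evaluation by dominated convergence on the explicit boundary-value representation (where one should also note that the $\varepsilon\rightarrow0^{+}$ and $\zeta\rightarrow0$ limits commute thanks to the uniform bounds in the tube), whereas the paper invokes continuity of the tensor product, of the pullback $\hat{\dc}_{\Gamma}^{*}$, and of the extension map from Theorem~\ref{thm:HomogeneousExtension}.
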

\begin{proof}
The first part follows from the construction above. By the discussion
at the end of the previous chapter we have that the unique extensions
$\dot{H}_{F}^{m,\mu,\zeta}\in\D'(\M)$ is a regularization of $H_{F}^{m,\mu}\in\D'(\M\backslash\left\{ 0\right\} )$,
cf. (\ref{eq:RegularizedFeynman}). And by continuity of the maps
involved we get the assertion,\begin{align*}
\hat{\dc}_{\Gamma}^{*}\left(\bigotimes_{e\in E(\Gamma)}H_{F}^{m,\mu}(e)\right) & =\hat{\dc}_{\Gamma}^{*}\left(\bigotimes_{e\in E(\Gamma)}\lim_{\zeta\rightarrow0}H_{F}^{m,\mu,\zeta}(e)\right)\\
 & =\lim_{\zeta\rightarrow0}\hat{\dc}_{\Gamma}^{*}\left(\bigotimes_{e\in E(\Gamma)}H_{F}^{m,\mu,\zeta}(e)\right)\end{align*}
Observe that the extension map $\hat{\dc}_{\Gamma}^{*}\widetilde{S_{\Gamma}^{\mu,\zeta}}\mapsto S_{\Gamma}^{\mu,\zeta}$
is also continuous by Theorem~\ref{thm:HomogeneousExtension}.
\end{proof}
We have that $S_{\Gamma}^{\mu,\zeta}$ is a regularization of $\hat{\dc}_{\Gamma}^{*}\widetilde{S_{\Gamma}^{\mu}}$
in a broader sense of the word, since $\hat{\dc}_{\Gamma}^{*}\,\widetilde{S_{\Gamma}^{\mu}}$
is defined only in the complement of the large diagonal. However,
the regularization $S_{\Gamma}^{\mu,\zeta}\in\D'(\M^{\left|V(\Gamma)\right|})$
comes with a natural renormalization prescription, defined at any
order of (causal) perturbation theory: minimal subtraction (MS). This
has already been introduced on the conceptual level in Section~\ref{sec:RegularizationOfDistributions},
and we will see in the next chapter, how minimal subtraction is to
be applied to the regularized graph amplitudes $s_{\Gamma}^{\mu,\zeta}$
and $S_{\Gamma}^{\mu,\zeta}$, respectively. The complete renormalization
of the graph amplitudes will then be discussed in Chapter~\ref{cha:ForestFormula}
and it will be useful, for the derivation of the underlying combinatorial
structure to collect all the different contributions to the perturbative
expansion in the definition of a unique dimensionally regularized
$\Sm$-matrix, defined as a map on local functionals, cf.~Section~\ref{sec:TimeOrderedProduct}.
\begin{defn}
[{Dimensionally Regularized $\Sm$-matrix}]\label{def:DimRegSMatrix}Let
\[
\Gamma_{H_{F}^{m,\mu,\zeta}}':=\frac{1}{2}\int dx\, dy\, H_{F}^{m,\mu,\zeta}(x,y)\frac{\delta^{2}}{\delta\varphi(x)\,\delta\varphi(y)}\,,\]
be the dimensionally regularized Feynman bidifferential operator.
Define the regularized time-ordering operator\[
\Time_{\mu,\zeta}:=\exp(\hbar\Gamma_{H_{F}^{m,\mu,\zeta}}')\,,\]
and the dimensionally regularized time-ordered product on local functionals
$F,G\in\fps{\mathcal{F}_{\loc}(\M)}{\hbar}$\[
F\cdot_{\Time_{\mu,\zeta}}G:=\Time_{\mu,\zeta}\left(\Time_{\mu,\zeta}^{-1}F\cdot\Time_{\mu,\zeta}^{-1}G\right)\,.\]
Then we define the dimensionally regularized $\Sm$-matrix as\[
\Sm_{\mu,\zeta}(F):=\exp_{\cdot_{\Time_{\mu,\zeta}}}(F)=\sum_{n=0}^{\infty}\frac{1}{n!}\Time_{\mu,\zeta}^{n}(F^{\otimes n})\,,\qquad F\in\fps{\mathcal{F}_{\loc}(\M)}{\hbar}\,,\]
where $\Time_{\mu,\zeta}^{n}$ denotes the uniquely extended regularized
$n$-fold time-ordered product constructed (graph by graph) by Epstein-Glaser
induction.
\end{defn}
Inserting (\ref{eq:TimeOrderedGraphStructure}) we can write the regularized
$\Sm$-matrix also in terms of a graph expansion\begin{equation}
\Sm_{\mu,\zeta}(F)=\sum_{n=0}^{\infty}\frac{1}{n!}\sum_{\alpha\in\mathbb{N}^{n}}\sum_{\Gamma\in\mathcal{G}_{\alpha}}\frac{\hbar^{\left|E(\Gamma)\right|}}{\Sym(\Gamma)}\left\langle S_{\Gamma}^{\mu,\zeta},\bld{\delta}^{\alpha}\left(F^{\otimes n}\right)\right\rangle \,,\qquad F\in\fps{\mathcal{F}_{\loc}(\M)}{\hbar}.\label{eq:PerturbationSeries}\end{equation}
This expansion is often referred to as the perturbative expansion
of the $\Sm$-matrix. And we want to remark that the sum over all
graphs at a fixed order $n$ of causal perturbation theory is finite,
if we assume that $F\in\fps{\mathcal{F}_{\loc}(\M)}{\hbar}$ is a
polynomial interaction functional. This remains valid, if $F$ contains
derivatives of the field. The order of causal perturbation theory
is given by the number of vertices of the graphs contributing to (\ref{eq:PerturbationSeries}),
irrespective of the fact if they are connected or not. Conversely,
the sum is finite at each order $\mathcal{O}($$\hbar^{\left|E(\Gamma)\right|}$),
and we repeat the remark that this is in essence the {}``loop order'',
if we regard only graphs with a fixed number of connected components
$c(\Gamma)$. The order is given by the Betti number of the graph
\citep{GrossYellen2003}, \[
\ell(\Gamma)=\left|E(\Gamma)\right|-\left|V(\Gamma)\right|+c(\Gamma)\,,\]
if we {}``hide'' one power of $\hbar$ in the interaction functional
$F$.

\end{fmffile}

\global\long\def\fps#1#2{#1\![[#2]]}

\global\long\def\poisson#1#2{\left\lfloor #1,#2\right\rceil }

\global\long\def\bld#1{\boldsymbol{#1}}

\global\long\def\oprod#1{\sideset{}{^{\geq}}\prod_{#1}}

\begin{fmffile}{MS01}

\def\FthreeGoneHtwoF{\parbox{20mm}{
\begin{center}
\begin{fmfgraph}(24,18)
\fmfbottom{g,h} \fmftop{f}
\fmfdot{f,g,h}
\fmf{plain,right=.3}{f,g}
\fmf{plain}{f,g}
\fmf{plain,left=.3}{f,g}
\fmf{plain,right=.3}{g,h}
\fmf{plain,right=.3}{h,f}
\fmf{plain,left=.3}{h,f}
\end{fmfgraph}
\end{center}}}

\def\FthreeG{\parbox{20mm}{
\begin{center}
\begin{fmfgraph}(24,18)
\fmfleft{f} \fmfright{g}
\fmfdot{f,g}
\fmf{plain,right=.3}{f,g}
\fmf{plain}{f,g}
\fmf{plain,left=.3}{f,g}
\end{fmfgraph}
\end{center}}}

\def\FGoneHtwoF{\parbox{20mm}{
\begin{center}
\begin{fmfgraph}(24,18)
\fmfbottom{g,h} \fmftop{f}
\fmfdot{f,g,h}
\fmf{phantom,right=.3}{f,g}
\fmf{phantom}{f,g}
\fmf{phantom,left=.3}{f,g}
\fmf{plain,right=.3}{g,h}
\fmf{plain,right=.3}{h,f}
\fmf{plain,left=.3}{h,f}
\end{fmfgraph}
\end{center}}}

\def\DotH{\parbox{5mm}{
\begin{center}
\begin{fmfgraph}(10,5)
\fmfleft{f} \fmfright{g}
\fmf{phantom}{f,h,g}
\fmfdot{h}
\end{fmfgraph}
\end{center}}}

\chapter{\label{cha:Minimal-Subtraction}Minimal Subtraction}

Minimal subtraction (MS) in combination with dimensional regularization
\linebreak[4](DimReg) and Zimmermann's forest formula as a renormalization
technique has earned wide acclaim in the standard approach to perturbative
renormalization in momentum space. After having constructed the dimensionally
regularized position space amplitude to any graph $\Gamma\in\mathcal{G}$,
we want to extend the notion of minimal subtraction given in Section~\ref{sec:RegularizationOfDistributions}
also to graph amplitudes and products thereof. As a matter of fact,
we will find that minimal subtraction can be formulated independently
of the graph expansion and the representation (position- or momentum
space). This is to say that we can define a minimal subtraction operator
which acts directly on the prepared, dimensionally regularized time-ordered
product, regarded as a linear map between functional spaces,\[
\Time_{\mu,\zeta,\prep}^{n}:\fps{\mathcal{F}_{\loc}(\M)}{\hbar}^{\otimes n}\rightarrow\fps{\mathcal{F}(\M)}{\hbar}\,.\]
The fact that this leads to local counterterms will be the crucial
observation which makes the abstraction in the next chapter possible,
and the presented forest formula for Epstein-Glaser renormalization
applicable in any chosen representation.

What will be said in this chapter relies on the fact that we dispose
of a \emph{prepared amplitude}. This will be defined in the first
section, and we will implement the graph structure in the second.
In the third section we will define minimal subtraction at subgraphs,
and we will test our method by rederiving the result of Zimmermann
that only Epstein-Glaser subgraphs contribute to nested projections
in the limit where the regularization is removed \citep{Zimmermann1975}.
The independence on the representation will be discussed in the fourth
section and as a result we will define the minimal subtraction operator
on prepared time-ordered products.

\section{Prepared Amplitude}
\begin{defn}
[Prepared Amplitude]\label{def:PreparedAmplitude}A regularization
$\left\{ s_{\Gamma,\prep}^{\mu,\zeta}:\zeta\in\Omega\backslash\left\{ 0\right\} \right\} $
(in the strict sense of Definition~\ref{def:Regularization}) is
called \emph{prepared amplitude} of $\left\{ s_{\Gamma}^{\mu,\zeta}\right\} $,
if it is a regularization of $\hat{\iota}_{\Gamma}^{*}\widetilde{S_{\Gamma}^{\mu}}\in\D'(\left(\M\backslash\left\{ 0\right\} \right)^{\left|V(\Gamma)\right|-1})$
outside the large diagonal in the sense of Proposition~\ref{pro:RegularizedAmplitude},
i.e.,\[
\forall f\in\D(\left(\M\backslash\left\{ 0\right\} \right)^{\left|V(\Gamma)\right|-1}):\qquad\lim_{\zeta\rightarrow0}\left\langle s_{\Gamma,\prep}^{\mu,\zeta},f\right\rangle =\left\langle \hat{\iota}_{\Gamma}^{*}\widetilde{S_{\Gamma}^{\mu}},f\right\rangle ,\]
and $s_{\Gamma,\prep}^{\mu,\zeta}$ is heterogeneous of finite, non-integer
order in $\Omega\backslash\left\{ 0\right\} $.
\end{defn}
Observe that for $\left|V(\Gamma)\right|=2$ the regularization outside
the large diagonal is already a regularization in the strict sense
of Definition~\ref{def:Regularization}, and thus a prepared amplitude.
Hence minimal subtraction can be applied and leads to a finite regularization.
In the logical framework of Epstein-Glaser one would then define the
prepared amplitude of the third order and subtract the counterterm,
and so on to the order one chooses to compute. At each step minimal
subtraction is applied to a prepared amplitude, but if we want to
define the subtraction performed on the unrenormalized amplitude,
these subtractions will be nested. One aim of this chapter is to analyze
these nested subtractions. They will be used in Chapter~\ref{cha:ForestFormula}
to solve the recursion of Epstein-Glaser. A closed expression for
the prepared amplitude will then follow immediately from the solution.
Thus we can assume here that we dispose of a prepared amplitude $s_{\Gamma,\prep}^{\mu,\zeta}$.
Since $s_{\Gamma,\prep}^{\mu,\zeta}$ is a regularization by assumption,
we can directly apply the analysis of Section~\ref{sec:RegularizationOfDistributions}
and have that the principal part of its Laurent series is a local
distribution, \begin{equation}
\pp(s_{\Gamma,\prep}^{\mu,\zeta})\in\E_{\mathrm{Dirac}}'(\M^{\left|V(\Gamma)\right|-1})\,,\label{eq:PP-Prepared-is-Local}\end{equation}
where we denoted by $\E_{\mathrm{Dirac}}'$ the space of distributions
supported at the origin. We infer that\[
\left\{ \rp(s_{\Gamma,\prep}^{\mu,\zeta}):\,\zeta\in\Omega\backslash\left\{ 0\right\} \right\} \]
is a finite regularization of $\hat{\iota}_{\Gamma}^{*}\widetilde{S_{\Gamma}^{\mu}}$
and hence\[
s_{\Gamma,\ren}^{\mu}:=\lim_{\zeta\rightarrow0}\rp(s_{\Gamma,\prep}^{\mu,\zeta})\in\D'(\M^{\left|V(\Gamma)\right|-1})\,.\]
is a renormalization. To have a name for it, we call $\pp(s_{\Gamma,\prep}^{\mu,\zeta})$
and $\rp(s_{\Gamma,\prep}^{\mu,\zeta})$ the \emph{projected} prepared
amplitudes. Nested projections will lead to projections in different
parts of the same graph. The different components of a graph needed
for the discussion later on will be defined in the following section.

\section{Subgraphs and Complements}

Since the method we are analyzing here was originally formulated in
momentum space, where the edges of the graphs carry as label the {}``momentum
flowing through this line'', it is natural to consider as subgraphs
all graphs, which are given by a subset of the set of edges. Given
a graph $\Gamma$, we call a \emph{BPHZ subgraph} any subgraph $\gamma\subseteq\Gamma$
given by a subset of the set of edges, $E(\gamma)\subseteq E(\Gamma)$,
and all adjacent vertices,\[
V(\gamma)=\left\{ v\in V(\Gamma)|\:\exists e\in E(\gamma):\left(e:v\right)\neq0\right\} .\]
The orientation is inherited from $\Gamma$. See, e.g., \citep{CaswellKennedy1982}
for a description of the BPHZ procedure within dimensional regularization
and minimal subtraction in momentum space. 

The set of BPHZ subgraphs of a graph $\Gamma\in\mathcal{G}$ is a
superset to the set of Epstein-Glaser subgraphs defined in Section~\ref{sub:EpsteinGlaserInduction},
and we can associate to any BPHZ subgraph a unique Epstein-Glaser
subgraph. Given a graph $\Gamma$ with BPHZ subgraph $\gamma\subseteq\Gamma$,
we define the \emph{full vertex part} $\overline{\gamma}$ of $\gamma$
to be the graph with the same set of vertices, $V(\overline{\gamma})=V(\gamma)$,
and all lines in $\Gamma$ connecting them,\[
E(\overline{\gamma})=\left\{ e\in E(\Gamma):\source(e),\target(e)\in V(\Gamma)\right\} .\]
$\overline{\gamma}$ obviously is an Epstein-Glaser subgraph. Any
BPHZ subgraph, which is not a full vertex part, we call a \emph{pure
BPHZ subgraph}.

For the definition of products of (projected) amplitudes corresponding
to different parts of the same graph $\Gamma$ it is important to
have the notion of a complement of a subgraph. Observe, however, that
there are two natural ways to define this complement, and both will
be of relevance in the sequel. 
\begin{defn}
[Complements of a graph]\label{def:GraphComplements}Let $\Gamma\in\mathcal{G}$
be a graph and $G\subset\Gamma$ be a subgraph. We define the \emph{line
complement} $\Gamma\bbs G$ of $G$ in $\Gamma$ to be the graph with
\[
V(\Gamma\bbs G)=V(\Gamma)\qquad\mbox{and}\qquad E(\Gamma\bbs G)=E(\Gamma)\backslash E(G)\,.\]
Furthermore we define the \emph{vertex complement} $\Gamma\obs G$
to be the full vertex part with vertex set\[
V(\Gamma\obs G)=V(\Gamma)\backslash V(G)\,,\]
i.e.,\[
E(\Gamma\obs G)=\left\{ e\in E(\Gamma)|\source(e),\target(e)\in V(\Gamma\obs G)\right\} .\]

\end{defn}
Observe that, while the vertex complement $\Gamma\obs G$ is a full
vertex part by definition, the line complement is not a full vertex
part in the generic case. For the line complement the number of lines
is preserved in the sense that\[
E(\Gamma)=E(G)\dot{\cup}E(\Gamma\bbs G)\,.\]
For the vertex complement, on the other hand, the number of lines
is not preserved; $E(G)\dot{\cup}E(\Gamma\obs G)$ will be a subset
of $E(\Gamma)$ in general, because the lines connecting $G$ with
$\Gamma\obs G$ are not considered. We have $E(\Gamma)=E(G)\dot{\cup}E(\Gamma\obs G)$
if and only if $\Gamma$ is multiply connected with $G$ one of its
(possibly also multiply connected) components. However, the vertex
set is preserved for the vertex complement,\[
V(\Gamma)=V(G)\dot{\cup}V(\Gamma\obs G)\,,\]
a fact that will be of importance in the discussion of Chapter~\ref{cha:ForestFormula}.
\begin{example}
\label{exa:GraphComplements}Regard the graph $\Gamma$ and subgraph
$G\subset\Gamma$,\[
\Gamma=\FthreeGoneHtwoF\,,\quad G=\FthreeG.\]
Then the two complements of Definition~\ref{def:GraphComplements}
are depicted by\[
\Gamma\bbs G=\FGoneHtwoF\qquad\mbox{and}\qquad\Gamma\obs G=\DotH\,,\]
respectively.
\end{example}
We will now use the line complement for the definition of minimal
subtractions at BPHZ subgraphs, and meet the vertex complement again
in the next chapter.

\section{\label{sec:RedundantProjections}MS at Subgraphs and Redundant Projections}

Let $\gamma\subset\Gamma$ be a proper BPHZ subgraph. Then $\gamma$
has less edges than $\Gamma$, $E(\gamma)\subsetneq E(\Gamma)$, and
it may have less vertices, $V(\gamma)\subseteq V(\Gamma)$. Let \begin{equation}
\varpi_{\Gamma,\gamma}:\M^{\left|V(\Gamma)\right|-1}\rightarrow\M^{\left|V(\gamma)\right|-1}\label{eq:ProjectionVertexSpaces}\end{equation}
be the induced projection ($\varpi_{\Gamma,\gamma}=\id$, if $V(\Gamma)=V(\gamma)$).
Then the pullback $\varpi_{\Gamma,\gamma}^{*}s_{\gamma}^{\mu,\zeta}$
exists as a distribution in $\D'(\M^{\left|V(\Gamma)\right|-1})$,
cf.~\citep[Thm.~6.1.2]{Hoermander2003}, and we have that\begin{equation}
s_{\Gamma}^{\mu,\zeta}=s_{\Gamma\bbs\gamma}^{\mu,\zeta}\cdot\varpi_{\Gamma,\gamma}^{*}s_{\gamma}^{\mu,\zeta}\,,\quad\zeta\notin\mathbb{Q}_{+}\,,\label{eq:ProductSubgraphAmplitudes}\end{equation}
where by the expression on the right hand side we understand the unique
heterogeneous extension of the pointwise product of the distributions,
as constructed in Section~\ref{sec:TheRegularizedAmplitude}. Let
us now regard the same product (\ref{eq:ProductSubgraphAmplitudes}),
when the subgraph part is replaced by a projected prepared amplitude,\begin{equation}
\left(1-T_{\MS}^{\gamma}\right)s_{\Gamma}^{\mu,\zeta}:=s_{\Gamma\bbs\gamma}\cdot\rp(s_{\gamma,\prep}^{\mu,\zeta})\,,\quad\mbox{ or }\quad T_{\MS}^{\gamma}s_{\Gamma}^{\mu,\zeta}:=s_{\Gamma\bbs\gamma}\cdot\pp(s_{\gamma,\prep}^{\mu,\zeta})\,,\label{eq:SubgraphProjections}\end{equation}
respectively. Where the pullback via the projection (\ref{eq:ProjectionVertexSpaces})
is understood, but not explicitly written to improve readability.
Then we define the product on the respective right hand sides term
by term in the Laurent expansion, i.e.,\begin{equation}
s_{\Gamma\bbs\gamma}\cdot\rp(s_{\gamma,\prep}^{\mu,\zeta})=\sum_{n=0}^{\infty}\zeta^{n}\frac{1}{2\pi i}\oint_{C}d\xi\,\frac{1}{\xi^{n+1}}s_{\Gamma\bbs\gamma}^{\mu,\zeta}\cdot\varpi_{\Gamma,\gamma}^{*}s_{\gamma}^{\mu,\xi}\,,\label{eq:rpSubgraphLaurentSeries}\end{equation}
and\begin{equation}
s_{\Gamma\bbs\gamma}\cdot\pp(s_{\gamma,\prep}^{\mu,\zeta})=\sum_{n=-\infty}^{-1}\zeta^{n}\frac{1}{2\pi i}\oint_{C}d\xi\,\frac{1}{\xi^{n+1}}s_{\Gamma\bbs\gamma}^{\mu,\zeta}\cdot\varpi_{\Gamma,\gamma}^{*}s_{\gamma}^{\mu,\xi}\,,\label{eq:ppSubgraphLaurentSeries}\end{equation}
where $C\subset\Omega\backslash\left\{ 0\right\} $ is a small circle
around the origin. The product of the distributions under the complex
line integral is defined by (\ref{eq:ProductSubgraphAmplitudes})
for an appropriate parameter value $\zeta$ and almost all values
of $\xi$. Observe also that the extension map ($u\mapsto\dot{u}$)
is continuous for a homogeneous distribution, in the case the map
is uniquely defined, cf. Theorem~\ref{thm:HomogeneousExtension}.
The fact that we regard finite sums of homogeneous distributions (i.e.,
heterogeneous distributions of finite order) does not spoil this continuity,
and hence the extension of the distribution under the integral above
commutes with the integration over one of its parameters.

In Zimmermann's forest formula \citep[Thm.~3.3]{Zimmermann1969},
if one reads it as if it was formulated in position space with the
above definitions, there occur nested projections of the form\begin{equation}
\left(1-T_{G}^{\MS}\right)\left(1-T_{\gamma}^{\MS}\right)s_{G,\prep}^{\mu,\zeta}=\rp\!\left[s_{G\bbs\gamma}^{\mu,\zeta}\cdot\rp\!(s_{\gamma,\prep}^{\mu,\zeta})\right],\quad\mbox{\ensuremath{\gamma\subset}G}.\label{eq:ProductMSOperatorsGraphwise}\end{equation}
Shortly after the publication in 1969 Zimmermann himself realized
that not all nested projections of the above form contribute in the
limit where the regularization is removed, i.e., $\zeta\rightarrow0$
\citep{Zimmermann1970}. The projection $\left(1-T_{\gamma}^{\MS}\right)$
in (\ref{eq:ProductMSOperatorsGraphwise}) is redundant, if $G$ and
$\gamma$ have the same set of vertices. This, in turn, leads to the
fact that only Epstein-Glaser graphs contribute to the forest formula.
Zimmermann used the Pauli-Villars regularization method to prove this
fact in \citep{Zimmermann1975}. In the momentum space version of
dimensional regularization and minimal subtraction the canceling of
spurious terms in the limit has also been observed by Falk, Häußling,
and Scheck by calculating explicit examples. Consequently the authors
proposed an alternative renormalization method in momentum space,
which takes into account the spurious subtractions \citep{Falk2009}.

We want to use Zimmermann's observation as a test of our position
space dimensional regularization method, and the prescription for
minimal subtraction. We will see that in position space, i.e., for
the nested projection (\ref{eq:ProductMSOperatorsGraphwise}) with
the definitions given above, Zimmermann's result is a direct consequence
of the fact that we can write the projection to the regular part,
$\rp$, as a $W$-projection on test functions (up to a term of $\mathcal{O}(\zeta)$),
cf.~Equation~(\ref{eq:RegularizationMS-FiniteZeta}).
\begin{prop}
[Redundant Projections]\label{pro:RedundantProjections}Let $\gamma\subsetneq G$
be two BPHZ subgraphs of $\Gamma\in\mathcal{G}$ with the same vertex
set, i.e., $\gamma$ is a pure BPHZ subgraph, \[
V(\gamma)=V(G)\,,\quad\mbox{and}\quad E(\gamma)\subsetneq E(G)\,.\]
The contribution of the pure BPHZ subgraph $\gamma\subsetneq\Gamma$
to\[
\left(1-T_{G}^{\MS}\right)\left(1-T_{\gamma}^{\MS}\right)s_{G,\prep}^{\mu,\zeta}=\rp\!\left[s_{G\bbs\gamma}^{\mu,\zeta}\cdot\rp(s_{\gamma,\prep}^{\mu,\zeta})\right]\]
vanishes identically in the limit $\zeta\rightarrow0$. That is, $\forall f\in\D(\M^{\left|V(\overline{\gamma})\right|-1})$:\begin{equation}
\lim_{\zeta\rightarrow0}\left\langle \rp(s_{G,\prep}^{\mu,\zeta}),f\right\rangle =\lim_{\zeta\rightarrow0}\left\langle \rp\left[s_{G\bbs\gamma}^{\mu,\zeta}\cdot\rp(s_{\gamma,\prep}^{\mu,\zeta})\right],f\right\rangle .\label{eq:ReductionAssertion}\end{equation}
\end{prop}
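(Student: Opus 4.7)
The plan rests on two tools from Section~\ref{sec:RegularizationOfDistributions}: the locality (\ref{eq:Regularization-PP-Local}) of the principal part of any analytic regularization, and the identification (\ref{eq:RegularizationMS-FiniteZeta}) of $\rp$ with a $W^{\MS}$-projection on test functions up to an $\mathcal{O}(\zeta)$ error.

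First I would split $\rp(s^{\mu,\zeta}_{\gamma,\prep}) = s^{\mu,\zeta}_{\gamma,\prep} - \pp(s^{\mu,\zeta}_{\gamma,\prep})$. Since $V(\gamma) = V(G)$ the projection $\varpi_{G,\gamma}$ is the identity, and both $s^{\mu,\zeta}_{G \bbs \gamma}\cdot s^{\mu,\zeta}_{\gamma,\prep}$ and $s^{\mu,\zeta}_{G,\prep}$ are heterogeneous regularizations of the same distribution $\hat{\iota}^{*}_{G}\widetilde{S^{\mu}_{G}}$ outside the large diagonal; by the uniqueness part of Corollary~\ref{cor:ExtensionHeterogeneousDistribution} they coincide as distributions in $\D'(\M^{|V(G)|-1})$. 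Linearity of $\rp$ then reduces the claim (\ref{eq:ReductionAssertion}) to showing
\[
\lim_{\zeta\to 0}\,\langle \rp[\,s^{\mu,\zeta}_{G \bbs \gamma}\cdot \pp(s^{\mu,\zeta}_{\gamma,\prep})\,],\, f\rangle = 0.
\]

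Next I would apply (\ref{eq:RegularizationMS-FiniteZeta}) to the outer projection, which yields
\[
\langle \rp[\,s^{\mu,\zeta}_{G \bbs \gamma}\cdot \pp(s^{\mu,\zeta}_{\gamma,\prep})\,],\, f\rangle = \langle s^{\mu,\zeta}_{G \bbs \gamma}\cdot \pp(s^{\mu,\zeta}_{\gamma,\prep}),\, W^{\MS}_{G}f\rangle + \mathcal{O}(\zeta).
\]
By (\ref{eq:Regularization-PP-Local}) the pole part is the local distribution $\sum_{|\alpha|\le \div(\gamma)} C_{\alpha}(\zeta)\, \delta^{(\alpha)}(\bld r_{G})$, and by the representation (\ref{eq:ExtensionSDUniqueRepresentationF}) the projected test function can be written $W^{\MS}_{G}f = \sum_{|\beta|=\div(G)+1} r^{\beta} g_{\beta}$ with $g_{\beta}\in\D$. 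The essential combinatorial input comes from the degree formula (\ref{eq:DegreeOfDivergenceGraph}): the hypothesis $V(\gamma)=V(G)$ together with $E(\gamma)\subsetneq E(G)$ gives
\[
\div(G) - \div(\gamma) = (|E(G)|-|E(\gamma)|)(d-2) \ge 0,
\]
so that $|\beta| = \div(G)+1 > \div(\gamma) \ge |\alpha|$ throughout.

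The hard part, and the reason the preceding combinatorics matters, is to make rigorous sense of the product $s^{\mu,\zeta}_{G \bbs \gamma}\cdot \delta^{(\alpha)}(\bld r_{G})$, whose factors are both singular at the origin. My strategy is to perform the multiplication by the smooth factor $r^{\beta}g_{\beta}$ first and exploit the elementary identity $r^{\beta}\delta^{(\alpha)} = 0$ whenever $|\beta| > |\alpha|$ (since $\beta \not\le \alpha$ componentwise), so that the inner product $(r^{\beta}g_{\beta})\cdot \delta^{(\alpha)}$ vanishes as a distribution in $\D'(\M^{|V(G)|-1})$. The heterogeneous extension of its product with $s^{\mu,\zeta}_{G \bbs \gamma}$, defined as in Section~\ref{sec:TheRegularizedAmplitude} via Corollary~\ref{cor:ExtensionHeterogeneousDistribution}, is then zero as well. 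Summing over $\alpha$ and $\beta$ gives $\langle s^{\mu,\zeta}_{G \bbs \gamma}\cdot \pp(s^{\mu,\zeta}_{\gamma,\prep}),\, W^{\MS}_{G}f\rangle = 0$ identically in $\zeta \in \Omega\backslash\{0\}$, and sending $\zeta\to 0$ absorbs the $\mathcal{O}(\zeta)$ remainder to finish the proof.
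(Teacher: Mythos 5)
Your reduction coincides with the paper's: you identify $s_{G\bbs\gamma}^{\mu,\zeta}\cdot s_{\gamma,\prep}^{\mu,\zeta}$ with $s_{G,\prep}^{\mu,\zeta}$, use (\ref{eq:RegularizationMS-FiniteZeta}) to trade the outer $\rp$ for the projection $W_{G}^{\MS}$ up to $\mathcal{O}(\zeta)$, and reduce everything to showing $\bigl\langle s_{G\bbs\gamma}^{\mu,\zeta}\cdot\pp(s_{\gamma,\prep}^{\mu,\zeta}),W_{G}^{\MS}f\bigr\rangle=0$ for finite $\zeta$. The gap is in how you establish this last vanishing. You bound the order of the $\delta$-derivatives by $\div(\gamma)$ --- the order of $\pp(s_{\gamma,\prep}^{\mu,\zeta})$ \emph{alone} --- and then kill them by writing $W_{G}^{\MS}f=\sum_{|\beta|=\div(G)+1}r^{\beta}g_{\beta}$ and moving $r^{\beta}$ onto the inner $\delta^{(\alpha)}$ \emph{before} the product with $s_{G\bbs\gamma}^{\mu,\zeta}$ is formed. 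That reassociation is not available as stated: the product with $\pp(s_{\gamma,\prep}^{\mu,\zeta})$ is not a literal pointwise product with a polynomial in $\delta$-derivatives, but is defined by the contour prescription (\ref{eq:ppSubgraphLaurentSeries}) as the $\xi$-pole part of the uniquely extended two-parameter family $s_{G\bbs\gamma}^{\mu,\zeta}\cdot s_{\gamma}^{\mu,\xi}$. If one could freely pass smooth factors through this product one could equally argue that $s_{G\bbs\gamma}^{\mu,\zeta}\cdot\delta^{(\alpha)}$ is the extension of its restriction to the complement of the origin, hence zero, which would trivialize the proposition. In fact the product genuinely raises the order: its degree of divergence is $\div(G)+|E(G\bbs\gamma)|\Re(\zeta)$, so it contains $\delta^{(\alpha')}$ up to order $\div(G)$, which exceeds $\div(\gamma)$ by $|E(G\bbs\gamma)|(d-2)$. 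Your commutation could perhaps be salvaged by proving that multiplication by $r^{\beta}$ commutes with the unique heterogeneous extension and that the extended product is regular in $\xi$ at $\xi=0$ for generic $\zeta$, but neither is established, and that is precisely where the work lies.

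The repair is short and is what the paper does: since $\pp(s_{\gamma,\prep}^{\mu,\zeta})$ is supported at the origin, so is the whole product, hence $s_{G\bbs\gamma}^{\mu,\zeta}\cdot\pp(s_{\gamma}^{\mu,\zeta})\in\E_{\mathrm{Dirac}}'$; its degree of divergence is read off from the scaling degrees of the factors as $\div(G)+|E(G\bbs\gamma)|\Re(\zeta)$, whose floor equals $\div(G)$ once $\Re(\zeta)<1/|E(G\bbs\gamma)|$. A local distribution of order at most $\div(G)$ annihilates $\D_{\div(G)}\ni W_{G}^{\MS}f$, and analyticity in $\zeta$ extends the vanishing to a punctured neighbourhood of the origin. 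Your combinatorial observation $\div(G)-\div(\gamma)=|E(G\bbs\gamma)|(d-2)\geq0$ is correct, but it is not what makes the term vanish; what matters is that the order of the \emph{product} is bounded by $\div(G)$, which is exactly the order to which $W_{G}^{\MS}f$ vanishes at the origin.
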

\begin{proof}
The argument of the limit on the right hand side of (\ref{eq:ReductionAssertion})
can be rewritten using (\ref{eq:RegularizationMS-FiniteZeta}),\begin{align}
\left\langle \rp\left[s_{G\bbs\gamma}^{\mu,\zeta}\cdot\rp(s_{\gamma,\prep}^{\mu,\zeta})\right],f\right\rangle  & =\left\langle s_{G\bbs\gamma}^{\mu,\zeta}\cdot\rp(s_{\gamma,\prep}^{\mu,\zeta}),W_{G}^{\MS}f\right\rangle +\mathcal{O}(\zeta)\nonumber \\
 & \hspace{-7mm}=\left\langle s_{G,\prep}^{\mu,\zeta},W_{G}^{\MS}f\right\rangle -\left\langle s_{G\bbs\gamma}^{\mu,\zeta}\cdot\pp(s_{\gamma,\prep}^{\mu,\zeta}),W_{G}^{\MS}f\right\rangle +\mathcal{O}(\zeta)\,,\label{eq:2ndTermHasToVanish}\end{align}
and we will show in the sequel that the second term in this expression
vanishes identically for finite $\zeta$.

The principal part $\pp(s_{\gamma,\prep}^{\mu,\zeta})$ is a local
distribution, $\supp(\pp(s_{\gamma,\prep}^{\mu,\zeta}))=\left\{ 0\right\} $,
cf.~(\ref{eq:PP-Prepared-is-Local}). Hence also the product in the
second term of (\ref{eq:2ndTermHasToVanish}) is supported at the
origin,\[
\supp(s_{G\bbs\gamma}^{\mu,\zeta}\cdot\pp(s_{\gamma}^{\mu,\zeta}))=\left\{ 0\right\} ,\]
and thus local, $s_{G\bbs\gamma}^{\mu,\zeta}\cdot\pp(s_{\gamma}^{\mu,\zeta})\in\E_{\mathrm{Dirac}}'$.
The degree of divergence of this local distribution can be inferred
directly from the scaling degrees of the individual lines, cf. (\ref{eq:Regularization-PP-Local})
and (\ref{eq:ScalingDegreeHF-m-mu-zeta}), \begin{align*}
\sd(s_{G\bbs\gamma}^{\mu,\zeta}\cdot\pp(s_{\gamma}^{\mu,\zeta})) & =\left|E(G\bbs\gamma)\right|\left(d+\Re(\zeta)-2\right)+\left|E(\gamma)\right|\left(d-2\right)\,,\\
\div(s_{G\bbs\gamma}^{\mu,\zeta}\cdot\pp(s_{\gamma}^{\mu,\zeta})) & =\left|E(G)\right|\left(d-2\right)-\left|V(G)\right|\left(d-1\right)+\left|E(G\bbs\gamma)\right|\Re(\zeta)\\
 & =\div(G)+\left|E(G\bbs\gamma)\right|\Re(\zeta)\,.\end{align*}
Hence we infer that \[
s_{G\bbs\gamma}^{\mu,\zeta}\cdot\pp(s_{\gamma}^{\mu,\zeta})=\sum_{\left|\alpha\right|\leq\left\lfloor \div(s_{G\bbs\gamma}^{\mu,\zeta}\cdot\pp(s_{\gamma}^{\mu,\zeta}))\right\rfloor }C_{\alpha}(\zeta)\delta^{\left(\alpha\right)}\,,\]
where $\left\lfloor \cdot\right\rfloor $ denotes, as before, Gauß's
floor function. Given $\left\lfloor \div\Big(s_{G\bbs\gamma}^{\mu,\zeta}\cdot\pp(s_{\gamma}^{\mu,\zeta})\Big)\right\rfloor =\div(G)$,
which is the case if $\Re(\zeta)<\frac{1}{\left|E(G\bbs\gamma)\right|}$,
we have that \[
\forall g\in\D_{\div(G)}:\quad\left\langle s_{G\bbs\gamma}^{\mu,\zeta}\cdot\pp(s_{\gamma}^{\mu,\zeta}),g\right\rangle =0\]
and hence, by the uniqueness property of analytic functions, \[
\forall f\in\D:\quad\left\langle s_{G\bbs\gamma}^{\mu,\zeta}\cdot\pp(s_{\gamma}^{\mu,\zeta}),W_{G}^{\MS}f\right\rangle =0\]
for $\zeta$ in a neighborhood of the origin. 
\end{proof}
As a matter of fact Proposition~\ref{pro:RedundantProjections} implies
only that all forests containing the same set of full vertex parts
give the same contribution to the sum. Hence it could happen that
the contributions add up to give multiple contributions to the forest
formula. However, one can show that of all forests with the same set
of full vertex parts only one contributes to Zimmermann's formula.
The combinatorial argument is also given in Zimmermann's proof in
\citep{Zimmermann1975}. We don't want to repeat it at this point,
since the result is implied by the forest formula for regularized
Epstein-Glaser renormalization we will prove in the next chapter.
Motivated by these results we drop the cumbersome distinction between
Epstein-Glaser - and BPHZ subgraphs and define a subgraph to be what
we called to this point an Epstein-Glaser subgraph or full vertex
part.
\begin{defn}
[Subgraph]\label{def:Subgraph}Let $\Gamma\in\mathcal{G}$ be a graph.
We define a subgraph $\gamma\subseteq\Gamma$ to be given by a subset
of the set of vertices $V(\gamma)\subseteq V(\Gamma)$ and all lines
in $\Gamma$ connecting them,\begin{equation}
E(\gamma)=\left\{ e\in E(\Gamma):\left\{ \source(e),\target(e)\right\} \subset V(\gamma)\right\} .\label{eq:EGSubgraphSetOfEdges}\end{equation}
We explicitly allow single vertices as subgraphs, and since there
are no tadpoles in $\mathcal{G}$ (cf.~Section~\ref{sec:GraphStructureT}),
these one vertex subgraphs will have no lines. Observe that also $\Gamma\subseteq\Gamma$,
trivially, is a subgraph.
\end{defn}

\section{\label{sec:MS-For-TProduct}MS for the Time-ordered Product}

Regard the set $\mathcal{G}_{V}\subset\mathcal{G}$ of all graphs
with the same set of vertices $V$,\[
\mathcal{G}_{V}=\left\{ \Gamma\in\mathcal{G}:V(\Gamma)=V\right\} .\]
This set gives all graph contributions to the order $\left|V\right|$
of causal perturbation theory, and $\mathcal{G}_{V}(\left|E\right|)=\mathcal{G}_{\alpha}$,
$\dim(\alpha)=\left|V\right|$, $\left|\alpha\right|=2\left|E\right|$,
is finite if one regards only the contributions up to a given order
$\left|E\right|$ in $\hbar$, cf. Section~\ref{sec:GraphStructureT}
and Equation~(\ref{eq:PerturbationSeries}). Assume that we have
a prepared amplitude $s_{\Gamma,\prep}^{\mu,\zeta}\in\D'(\M^{\left|V\right|-1})$
for all graphs $\Gamma\in\mathcal{G}_{V}(\left|E\right|)$ at any
order $\left|E\right|$ of $\hbar$. Let $S_{\Gamma,\prep}^{\mu,\zeta}\in\D'(\M^{\left|V\right|})$
be the corresponding translation invariant amplitude defined for local
functionals in the sense of formal power series in $\hbar$. Then
we can write the minimal subtraction operator at order $\left|V\right|$
of causal perturbation theory on the level of graph amplitudes as,\begin{equation}
R_{V}(\Time_{\mu,\zeta,\prep}^{\left|V\right|})=\sum_{\Gamma\in\mathcal{G}_{\alpha}}\frac{\hbar^{\left|E(\Gamma)\right|}}{\Sym(\Gamma)}\, R_{V}\!\left\langle S_{\Gamma,\prep}^{\mu,\zeta},\bld{\delta}^{\alpha}\right\rangle \quad\mbox{ with }\quad R_{V}=\begin{cases}
\id & \mbox{if }\left|V\right|=1\\
-\pp & \mbox{if }\left|V\right|>1\,.\end{cases}\label{eq:MS-Operator-GraphExpansion}\end{equation}
Here $\pp(S_{\Gamma,\prep}^{\mu,\zeta})$ is the translation invariant
analogue of the local distribution \linebreak[4]$\pp(s_{\Gamma,\prep}^{\mu,\zeta})\in\E_{\mathrm{Dirac}}'(\M^{\left|V(\Gamma)\right|-1})$
defined above, $\supp(\pp(S_{\Gamma,\prep}^{\mu,\zeta}))\subset\Diag(\M^{\left|V(\Gamma)\right|})$.
We want to apply the corresponding term in the above sum to a tensor
product of local functionals. Analogous to the discussion in Section~\ref{sec:GraphStructureT}
we get\begin{equation}
\left\langle -\pp(S_{\Gamma,\prep}^{\mu,\zeta}),\bld{\delta}^{\alpha}\right\rangle (\bigotimes_{v\in V}F_{v})=\left\langle -\pp(S_{\Gamma,\prep}^{\mu,\zeta}),\bigotimes_{v\in V(\Gamma)}f_{\varphi}^{v}\,\delta(\bld r_{v})\right\rangle \,,\quad f_{\varphi}^{v}\in\D(\M),\label{eq:LocalDistriButionsLocalFunctionals}\end{equation}
where $f_{\varphi}^{v}\in\D(\M)$ is a sum of pointwise products of
test functions with the field $\varphi$. Since $-\pp(S_{\Gamma,\prep}^{\mu,\zeta})$
is supported on the thin diagonal, all functions $f_{\varphi}^{v}$
are evaluated at the same point and the expression on the right hand
side of (\ref{eq:LocalDistriButionsLocalFunctionals}) gives a local
functional. We have \[
\left\langle \pp(S_{\Gamma,\prep}^{\mu,\zeta}),\bld{\delta}^{\alpha}\right\rangle :\fps{\mathcal{F}_{\loc}(\M)}{\hbar}^{\otimes\left|V\right|}\rightarrow\fps{\mathcal{F}_{\loc}(\M)}{\hbar}\,.\]
The fact which establishes the independence of the presented formalism
on the chosen representation, is that the projection to the principal
part, $\pp$, is an operation with respect to the parameter $\zeta$,
and can be performed outside the brackets {}``~$\left\langle \cdot\right\rangle $~''.
Actually it was defined like that in Section~\ref{sec:RegularizationOfDistributions}.
Although the evaluation of these brackets might look very different,
depending on the chosen representation. Thus minimal subtraction is
really an operation which can be performed directly on time-ordered
products,and it is sensible to define
\begin{defn}
[Minimal Subtraction Operator on Subsets]\label{def:MS-OperatorSubsets}For
any vertex set $V$, we define the \emph{minimal subtraction operator
(MS operator) on subsets} as\[
R_{V}(\Time_{\mu,\zeta,\prep}^{\left|V\right|}):=\begin{cases}
\id & \mbox{if }\left|V\right|=1\\
-\pp(\Time_{\mu,\zeta,\prep}^{\left|V\right|}) & \mbox{if }\left|V\right|>1\,,\end{cases}\]
where $\id:\fps{\mathcal{F}_{\loc}(\M)}{\hbar}\rightarrow\fps{\mathcal{F}_{\loc}(\M)}{\hbar}$
is the identity map on local functionals, and\[
-\pp(\Time_{\mu,\zeta,\prep}^{\left|I\right|}):\fps{\mathcal{F}_{\loc}(\M)}{\hbar}^{\otimes\left|I\right|}\rightarrow\fps{\mathcal{F}_{\loc}(\M)}{\hbar}\]
is the local counterterm at order $\left|I\right|$ of causal perturbation
theory.
\end{defn}
\end{fmffile}

\global\long\def\fps#1#2{#1\![[#2]]}

\global\long\def\poisson#1#2{\left\lfloor #1,#2\right\rceil }

\global\long\def\bld#1{\boldsymbol{#1}}

\global\long\def\oprod#1{\sideset{}{^{\geq}}\prod_{#1}}

\begin{fmffile}{FFRenHopf02}

\def\FourVertexPhiThree{\parbox{30mm}{
\begin{center}
\begin{fmfgraph}(80,15)
\fmfstraight
\fmftop{t1,t3,t4,t6}
\fmfbottom{b1,b2,b3,b4,b5,b6}
\fmf{plain}{b1,b2}
\fmf{plain, right=.6}{b2,b5}
\fmf{plain}{b5,b6}
\fmf{plain,left=.2}{b2,t3}
\fmf{plain,left=.3}{t3,t4}
\fmf{plain,right=.3}{t3,t4}
\fmf{plain,left=.2}{t4,b5}
\fmfdot{b2,b5,t3,t4}
\end{fmfgraph}
\end{center}}}

\chapter{\label{cha:ForestFormula}The Epstein-Glaser Forest~Formula}

\begin{center}
\begin{minipage}[t][1.2\totalheight]{0.8\columnwidth}%
\begin{flushleft}
\emph{\textquoteleft{}What\textquoteright{}s the sandwich scenario,
Mo?\textquoteright{}}
\par\end{flushleft}

\emph{\textquoteleft{}Ham and cheese; ham and tomato; cheese and tomato.\textquoteright{}}

\emph{\textquoteleft{}And ham, cheese and tomato.\textquoteright{}}

\emph{\textquoteleft{}How did you know?\textquoteright{}}

\emph{\textquoteleft{}You\textquoteright{}ve never noticed how you
group sandwiches into Venn diagrams?\textquoteright{}}

\emph{\textquoteleft{}Do I?\textquoteright{}}

\begin{flushright}
\emph{David Mitchell: Ghostwritten}
\par\end{flushright}%
\end{minipage}
\par\end{center}

It was the principle of covariance, understood as the axiom that all
physically relevant concepts must have an analogue in (globally hyperbolic)
curved spacetime, which brought to light the more profound structures
of perturbative renormalization theory in the investigation undertaken
by Brunetti, Dütsch, Fredenhagen, Hollands, and Wald (see references
in the introduction).%
\footnote{The covariance principle was made precise in \citep{Brunetti2003}.
And we want to use this footnote to remark that despite its reputation
of being conceptually clear but {}``too far from reality'' to have
predictive power for experiments the algebraic approach and in particular
perturbative Algebraic Quantum Field Theory has lead to falsifiable
predictions in cosmology \citep{Dappiaggi2008}.%
} As already said in the introduction of this thesis one of the main
results of their program was the formulation of perturbative Algebraic
Quantum Field Theory (pAQFT), briefly introduced in Chapter~\ref{cha:Setting-pAQFT}.
In this last chapter we will show that the tools of pAQFT and in particular
the precise statement of Stora's main theorem of perturbative renormalization,
augmented by the results on analytic regularization we have gained
in the previous chapters will make it possible to solve the recursive
procedure of Epstein-Glaser renormalization and to prove a forest
formula in the sense of Zimmermann for Epstein-Glaser renormalization.
The result will be independent of the chosen representation and will
in particular be applicable in momentum and in position space. The
main theorem of renormalization, written in termwise form by using
Faà~di~Bruno's formula for the $n$-fold chain rule \citep{FaaDiBruno1855},
implies a recursion relation for the minimally subtracted counterterms
to an analytically regularized $\Sm$-matrix. This recursion relation
will be crucial for the proof of the forest formula. 

In 1982 Joni and Rota introduced a bialgebra related to Faà~di~Bruno's
formula \citep{JoniRota1982}. We will use this bialgebra to derive
(a summed up version of) the Connes-Kreimer Hopf algebra of graphs
directly from the main theorem of renormalization. However, in contrast
to the Connes-Kreimer approach the Feynman rules will emerge naturally
and are not assumed to be characters into the commutative ring of
Laurent series with scalar coefficients. The emergent Feynman rules
will rather produce linear maps between spaces of (local) functionals.
On the space of linear maps the construction induces two products,
a symmetrized tensor product, and a non-commutative product, which
is given as the composition of linear maps. Both products, as well
as the coproduct, need to be reflected in the Hopf algebra (of graphs)
in order to encode the algebraic structure of the recursive construction
of counterterms. By giving this derivation we will establish the relation
of the pAQFT formalism to the {}``Hopf algebra school'' which was
not present in the original pAQFT article (cf.~\citep[p.~45]{Brunetti2009}).

After some preliminary remarks on the differential calculus used in
this chapter, we will cite the main theorem of renormalization from
\citep{Brunetti2009} in the second section. The third section will
be devoted to the derivation of a forest formula for regularized Epstein-Glaser
renormalization from Stora's main theorem. The above described Hopf
algebra will be constructed in the fourth section of this chapter.

\section{\label{sec:PreliminariesDifferentialCalculus}Preliminaries on differential
calculus}

We take the elevator in the hierarchy of differential calculi one
floor up and want to regard functional derivatives of the $\Sm$-matrix,
regarded as a map between spaces of (local) functionals,\[
\Sm\equiv\exp_{\dT}:\,\fps{\mathcal{F}_{\loc}(\M)}{\hbar}\rightarrow\fps{\mathcal{F}(\M)}{\hbar}\,,\]
and of the renormalization group transformations $Z\in\R$ to be defined
below as maps,\[
Z:\fps{\mathcal{F}_{\loc}(\M)}{\hbar}\rightarrow\fps{\mathcal{F}_{\loc}(\M)}{\hbar}\,.\]
The $n$-fold derivative of $\Sm$ at the origin \[
\Sm^{\left(n\right)}(0)(F^{\otimes n})=\frac{d^{n}}{d\lambda^{n}}\Sm(\lambda F)\bigg|_{\lambda=0}\]
gives the $n$-fold time-ordered product, i.e. the $n$th coefficient
in the series expansion of $\Sm$, cf.~Equation~(\ref{eq:PerturbationSeries}).
The $n$-fold derivative of $Z\in\R$ gives the counterterm at order
$n$ of causal perturbation theory. We will equivalently use $\Sm^{\left(n\right)}=\Sm^{\left(n\right)}\big|_{0}\equiv\Sm^{\left(n\right)}(0)$,
and likewise for $Z$, wherever there is no risk of confusion.

The mathematically precise definition of such a differential calculus
is quite involved and a focus of research in analysis \citep{Hamilton1982,KrieglMichor1997,Neeb2005}
(taking the stairs here, might be very hard). However, it is enough
for our purposes to assume that a calculus can be defined in such
a way that the corresponding differential, \[
\frac{\delta}{\delta F}:\Sm\mapsto\Sm^{\left(1\right)}(F)\,,\]
fulfills the chain - and the Leibniz rule in the sense below. A calculus
fulfilling the chain rule was defined for locally convex spaces in
\citep{Neeb2005}. And as shown in \citep[Sec.~3.1]{Brunetti2009}
$\mathcal{F}(\M)$ can be endowed with a locally convex topology,
defined as the initial topology of the Hörmander topology on spaces
of distributions with conic wave front set.

Let $Z$ be differentiable at $F$, and $\Sm$ be differentiable at
$Z(F)$, then we want to assume that the derivative of their composition
is given by the chain rule\begin{equation}
\left(\Sm\circ Z\right)^{\left(1\right)}\!(F)=\left(\Sm^{\left(1\right)}\circ Z\right)\!(F)\cdot\left(Z^{\left(1\right)}(F)\right)\equiv\frac{\delta\Sm}{\delta F'}\bigg|_{Z(F)}\cdot\frac{\delta Z}{\delta F'}\bigg|_{F}\,,\label{eq:ChainRule}\end{equation}
where on the right hand side we have a composition of linear maps,
generally denoted by {}``~$\cdot$~'' in this chapter. For the
iteration of the chain rule and the proof of the $n$-fold chain rule
in Lemma~\ref{lem:FaaDiBrunoSetPartitionVersion} we will also need
that the derivative fulfills the Leibniz rule in the following sense,\begin{equation}
\left(Z\otimes Z\right)^{\left(1\right)}=Z^{\left(1\right)}\otimes Z+Z\otimes Z^{\left(1\right)}\,.\label{eq:LeibnizRule}\end{equation}

We call a map \[
\bld{\Psi}:\fps{\mathcal{F}_{\loc}(\M)}{\hbar}\rightarrow\fps{\mathcal{F}(\M)}{\hbar}\]
analytic (at $F$), if the $n$th functional derivative exists for
all $n\in\mathbb{N}$ as a totally symmetric, linear map \[
\bld{\Psi}^{\left(n\right)}(F):\fps{\mathcal{F}_{\loc}(\M)}{\hbar}^{\otimes n}\rightarrow\fps{\mathcal{F}(\M)}{\hbar}\,,\]
and\begin{equation}
\bld{\Psi}^{\left(n\right)}(F):\fps{\mathcal{F}_{\loc}(\M)}{\hbar}^{\otimes n}\rightarrow\fps{\mathcal{F}_{\loc}(\M)}{\hbar}\,,\quad\mbox{ if }\im(\bld{\Psi})\subset\fps{\mathcal{F}_{\loc}(\M)}{\hbar}.\label{eq:-n-foldDerivativeLocalMap}\end{equation}

\section{\label{sec:MainTheoremRenormalization}The Main Theorem of Renormalization}

An important insight in perturbative renormalization theory is the
fact that the freedom in the definition of the $\Sm$-matrix can be
described in terms of the Stückelberg-Petermann renormalization group
\citep{Stueckelberg1953}. Popineau and Stora termed this fact the
{}``main theorem of perturbative renormalization theory'' \citep{StoraPopineau1982}.
One can find it, although not under this name, already in the early
literature of renormalization theory \citep{Gell-MannLow1954,BogoliubovShirkov1959}.
Modern versions are included in \citep{Pinter2001,Grigore2001b}.
The precise statement and proof of this theorem in the algebraic approach
to perturbative QFT \citep{DuetschFredenhagen2004,Duetsch2005} made
it possible to show that the renormalization group of Stückelberg
and Petermann provides a common basis also to other renormalization
groups found in literature \citep{Brunetti2009}. We will give here
a minimalistic review of the basic definitions needed to formulate
the main theorem of renormalization in pAQFT. A more detailed summary,
including a sketch of the proof is contained in Section~4.1 of \citep{Brunetti2009}.

The $\Sm$-matrix\[
\begin{array}{rccl}
\Sm: & \fps{\mathcal{F}_{\loc}(\M)}{\hbar} & \rightarrow & \fps{\mathcal{F}(\M)}{\hbar}\\
 & F & \mapsto & \Sm(F)=\exp_{\dT}(F)\end{array}\]
is analytic at the origin, where its derivatives are given by the
$n$-fold time ordered products. However, $\Sm$ is not unique, but
needs to be defined perturbatively by renormalization. As shown in
\citep{Brunetti2009} the prerequisites needed for a definition of
$\Sm$ within causal perturbation theory can be expressed directly
in terms of properties of the $\Sm$-matrix itself. Let $A,B,F\in\fps{\mathcal{F}_{\loc}(\M)}{\hbar}$,
then $\Sm$ is required to fulfill the following conditions,
\begin{description}[topsep=1mm]
\item [{{[}C1{]}~Causality.}] $\Sm(A+B)=\Sm(A)\star\Sm(B)$, if $\supp(A)\gtrsim\supp(B)$.
\item [{{[}C2{]}~Starting~Element.}] $\Sm(0)=1:\E(\M)\rightarrow1\in\mathbb{C}$,\\
$\phantom{.}\hspace{18.5mm}\Sm^{\left(1\right)}(0)=\id:\fps{\mathcal{F}_{\loc}(\M)}{\hbar}\rightarrow\fps{\mathcal{F}_{\loc}(\M)}{\hbar}$.
\item [{{[}C3{]}~$\varphi$-Locality.}] The value of $\Sm(F)\in\fps{\mathcal{F}(\M)}{\hbar}$
at a given field configuration $\varphi_{0}$ depends only on the
Taylor expansion of $F\in\fps{\mathcal{F}_{\loc}(\M)}{\hbar}$ around
$\varphi_{0}$, \[
\Sm(F)(\varphi_{0})=\Sm(F_{\varphi_{0}}^{\left[N\right]})(\varphi_{0})+\mathcal{O}(\hbar^{N+1})\,,\]
where ${\displaystyle F_{\varphi_{0}}^{\left[N\right]}(\varphi)=\sum_{n=0}^{N}\frac{1}{n!}\left\langle F^{\left(n\right)}(\varphi_{0}),\left(\varphi-\varphi_{0}\right)^{\otimes n}\right\rangle }$
denotes the Taylor expansion of $F$ up to order $N$.
\item [{{[}C4{]}~Field~Independence.}] $\Sm$ depends only implicitly,
i.e. via the interaction $F$, on the field configuration, \[
\forall\psi:{\displaystyle \left\langle \frac{\delta\Sm(F)}{\delta\varphi},\psi\right\rangle =\Sm^{\left(1\right)}\Big|_{F}(\left\langle \frac{\delta F}{\delta\varphi},\psi\right\rangle })\,.\]

\end{description}
While {[}C1{]} and {[}C2{]} are directly related to the inductive
procedure of Epstein-Glaser, condition {[}C3{]} implies that only
finitely many terms will contribute if one cuts the perturbative expansion
of the $\Sm$-matrix at a given order in $\hbar$, see also the discussion
at the end of Chapter~\ref{cha:DimReg-PositionSpace}. This makes
it possible to regard also more general, and in particular non-polynomial
interactions $F\in\fps{\mathcal{F}_{\loc}(\M)}{\hbar}$ in pAQFT.
Furthermore {[}C3{]} implies together with the fourth condition {[}C4{]}
the Wick expansion formula for the time-ordered product of Epstein
and Glaser \citep{Epstein1973}. This is needed to reduce the problem
of renormalizing $\Sm$ to an extension problem for distributions.
See the discussion in \citep[Sec.~4.1]{Brunetti2009} and also \citep[Sec.~4.B]{Keller2009}.

The freedom in the definition of the $\Sm$-matrix is described by
the Stückelberg-Petermann renormalization group $\R$. In the framework
of perturbative Algebraic Quantum Field Theory $\R$ is the group
of analytic maps\[
Z:\fps{\mathcal{F}_{\loc}(\M)}{\hbar}\rightarrow\fps{\mathcal{F}_{\loc}(\M)}{\hbar}\,,\]
with composition as group operation, and $Z\in\R$ having the following
properties,
\begin{description}[topsep=1mm]
\item [{{[}RG1{]}}] $Z(0)=0$
\item [{{[}RG2{]}~Starting~Element.}] $Z^{\left(1\right)}(0)=\id$
\item [{{[}RG3{]}}] $Z=\id+\mathcal{O}(\hbar)$
\item [{{[}RG4{]}~Locality.}] Let $A,B,C\in\fps{\mathcal{F}_{\loc}(\M)}{\hbar}$
with $\supp(A)\cap\supp(C)=\emptyset$, then\[
Z(A+B+C)=Z(A+B)-Z(B)+Z(B+C)\]

\item [{{[}RG5{]}~$\varphi$-Locality.}] $Z(F)(\varphi_{0})=Z(F_{\varphi_{0}}^{\left[N\right]})(\varphi_{0})+\mathcal{O}(\hbar^{N+1})$
\item [{{[}RG6{]}~Field~Independence.}] $Z$ depends only implicitly
on the field $\varphi$, \[
\forall\varphi\in\E(\M):{\displaystyle \frac{\delta Z}{\delta\varphi}=0}\,.\]

\end{description}
With these definitions at hand, we can now formulate
\begin{thm}
[{Main Theorem of Renormalization, cf.~\citep[Thm.~4.1]{Brunetti2009}}]
\label{thm:MainTheoremRenormalization}Given two $\Sm$-matrices $\Sm$
and $\widehat{\Sm}$ satisfying the conditions Causality, Starting
Element, $\varphi$-Locality, and Field Independence, {[}C1{]}-{[}C4{]},
there exists a unique $Z\in\R$ such that\begin{equation}
\widehat{\Sm}=\Sm\circ Z\,.\label{eq:MainTheoremRenormalization}\end{equation}
Conversely, given an $\Sm$-matrix $\Sm$ satisfying {[}C1{]}-{[}C4{]}
and a $Z\in\R$, then (\ref{eq:MainTheoremRenormalization}) defines
a new $\Sm$-matrix satisfying conditions {[}C1{]}-{[}C4{]}.
\end{thm}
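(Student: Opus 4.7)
My plan is to prove both directions of the theorem, starting with the converse since it is more straightforward. Given $Z\in\R$ and an $\Sm$-matrix satisfying \hbox{[C1]-[C4]}, I would define $\widehat{\Sm}:=\Sm\circ Z$ and verify each condition in turn. [C2] is immediate from [RG1]--[RG2] combined with $\Sm(0)=1$ and $\Sm^{(1)}(0)=\id$ via the chain rule (\ref{eq:ChainRule}). [C3] and [C4] follow in the same algebraic manner from [RG5]/[C3] and [RG6]/[C4], since the composition $\Sm\circ Z$ retains the locality in $\varphi$ and the absence of explicit $\varphi$-dependence. For causality [C1], I would use [RG4] to rewrite $Z(A+B)=Z(A+0)-Z(0)+Z(0+B)=Z(A)+Z(B)$ whenever the supports of $A$ and $B$ are such that [RG4] applies; combining with the support information and the causality of $\Sm$ gives $\widehat{\Sm}(A+B)=\Sm(Z(A))\star\Sm(Z(B))=\widehat{\Sm}(A)\star\widehat{\Sm}(B)$.

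For the forward direction, I would exploit that $\Sm(0)=1$ and $\Sm^{(1)}(0)=\id$ make $\Sm$ formally invertible around the origin order by order in $\hbar$ (together with the order in tensor powers of the interaction). Defining $Z:=\Sm^{-1}\circ\widehat{\Sm}$ yields an analytic map $\fps{\mathcal{F}_{\loc}(\M)}{\hbar}\to\fps{\mathcal{F}(\M)}{\hbar}$, and uniqueness follows directly from the invertibility of $\Sm$. The properties [RG1], [RG2], [RG3] drop out of low-order computations using the chain rule (\ref{eq:ChainRule}): $Z(0)=\Sm^{-1}(1)=0$; differentiating $\widehat{\Sm}=\Sm\circ Z$ at $F=0$ gives $\id=\id\cdot Z^{(1)}(0)$; and comparing $\hbar^{0}$-terms shows $Z=\id+\mathcal{O}(\hbar)$. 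Conditions [RG5] and [RG6] transfer from the analogous conditions of $\Sm$ and $\widehat{\Sm}$ through $\Sm^{-1}$, since both $\varphi$-locality and field independence are preserved under composition and inversion in this formal-power-series sense.

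The main obstacle is establishing [RG4] (locality of $Z$) and, as a prerequisite, showing that $Z(F)\in\fps{\mathcal{F}_{\loc}(\M)}{\hbar}$ whenever $F$ is local. The strategy is to use Lemma~\ref{lem:LocFuncSupportDecomposition} to decompose any $A$ and $C$ with $\supp(A)\cap\supp(C)=\emptyset$ into finite sums of local functionals of arbitrarily small supports, chosen so that for every pair of small pieces one is causally later than the other. Applying causality [C1] to both $\Sm$ and $\widehat{\Sm}$ on each such pair factorizes the $\star$-product, and after accumulating these factorizations one obtains an identity in $\fps{\mathcal{F}(\M)}{\hbar}$ which, upon being pulled back through $\Sm^{-1}$, rearranges into the cocycle relation $Z(A+B+C)=Z(A+B)-Z(B)+Z(B+C)$. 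A parallel argument with $B=0$ (and the support-shrinking of Lemma~\ref{lem:LocFuncSupportDecomposition}) shows simultaneously that $Z(F)$ is supported wherever $F$ is and that its functional derivatives inherit the locality/wave-front-set conditions of Definition~\ref{def:LocalFunctional}, hence $Z$ lands in local functionals as required for (\ref{eq:-n-foldDerivativeLocalMap}). The delicate bookkeeping here — tracking how the causal factorization survives inversion of $\Sm$ order by order — is the heart of the proof; once it is in place, all remaining conditions follow routinely.
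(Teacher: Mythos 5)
You should first be aware that the thesis does not prove this theorem at all: it is imported verbatim from \citep[Thm.~4.1]{Brunetti2009} (with the proof credited to \citep{DuetschFredenhagen2004,Duetsch2005}), and the text offers only a pointer to the literature. So the comparison here is against that known proof, whose architecture your proposal does reproduce correctly: invert $\Sm$ as a formal power series using {[}C2{]}, set $Z:=\Sm^{-1}\circ\widehat{\Sm}$, get uniqueness from invertibility, and isolate the locality of $Z$ as the crux. Your converse direction is also essentially right, though note that to pass from $\widehat{\Sm}(A+B)=\Sm\bigl(Z(A)+Z(B)\bigr)$ to $\Sm(Z(A))\star\Sm(Z(B))$ you need $\supp(Z(A))\subseteq\supp(A)$, which follows from {[}RG6{]} (field independence), not from {[}RG4{]}; this should be said.

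The genuine gap is in the step you yourself flag as the heart of the proof. First, the small-support decomposition cannot be applied to the nonlinear map $Z$ directly: writing $A=\sum_i\sigma_iA_i$ does not give $Z(A)=\sum_i\sigma_iZ(A_i)$ unless you already have the additivity you are trying to establish. The argument must instead be run on the multilinear derivatives $Z^{(n)}(0)$, showing inductively (Epstein--Glaser style) that the discrepancy $\widehat{\Sm}^{(n)}-\sum_{\Ptn\neq\Ptn_1}\Sm^{(|\Ptn|)}\cdot\bigotimes Z^{(|I|)}$ vanishes on test functionals whose supports are not all coincident, using that any two distinct points of $\M$ can be causally ordered; multilinearity then absorbs the decomposition of Lemma~\ref{lem:LocFuncSupportDecomposition}. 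Second, and more seriously, support on the thin diagonal is not the same as locality in the sense of Definition~\ref{def:LocalFunctional}: the transversal wave front set condition {[}LF-2{]} does not follow from causal factorization at all. That part of the proof requires {[}C3{]} and {[}C4{]} (hence the Wick expansion) together with microlocal estimates on the resulting $\mathbb{C}$-number distributions, and your sketch asserts that the derivatives ``inherit'' the wave front set condition without supplying any mechanism for it. As written, the proposal is a correct plan for the support statement but does not close the locality claim $Z^{(n)}(0):\fps{\mathcal{F}_{\loc}(\M)}{\hbar}^{\otimes n}\rightarrow\fps{\mathcal{F}_{\loc}(\M)}{\hbar}$ on which everything else (in particular {[}RG4{]}) rests.
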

We will be interested in this chapter mainly in a special class of
scattering matrices, which we define now.
\begin{defn}
[{Analytically Regularized $\Sm$-matrix}]\label{def:AnalyticallyRegularizedSmatrix}Any
scattering matrix, $\Sm_{\kappa}$, which fulfills the conditions
{[}C1{]}-{[}C4{]} and depends analytically on an additional parameter
$\kappa\in\Omega\backslash\left\{ 0\right\} \subset\mathbb{C}$, such
that for all $n\in\mathbb{N}$, $n\geq2$, the $n$-fold functional
derivative, \[
\Sm_{\kappa}^{\left(n\right)}(0):\fps{\mathcal{F}_{\loc}(\M)}{\hbar}^{\otimes n}\rightarrow\fps{\mathcal{F}(\M)}{\hbar}\,,\]
is the analytic regularization of a time-ordered product outside the
large diagonal in the sense of Proposition~\ref{pro:RegularizedAmplitude},
we want to call an \emph{analytically regularized $\Sm$-matrix.}
\end{defn}
Observe that the definition implies that the second derivative $\Sm_{\kappa}^{\left(2\right)}(0)$
corresponds to an analytic regularization in the strict sense of Definition~\ref{def:Regularization}.
In the functional framework this implies\[
\pp(\Sm_{\kappa}^{\left(2\right)}(0)):\fps{\mathcal{F}_{\loc}(\M)}{\hbar}^{\otimes2}\rightarrow\fps{\mathcal{F}_{\loc}(\M)}{\hbar}\,,\]
cf.~Section~\ref{sec:MS-For-TProduct}. An example for such an analytically
regularized $\Sm$-matrix is the unique dimensionally regularized
$\Sm$-matrix $\Sm_{\mu,\zeta}$ of Definition~\ref{def:DimRegSMatrix}.
This follows directly from its construction, since it was defined
using the methods of Epstein-Glaser renormalization. However, that
$\Sm_{\mu,\zeta}$ fulfills {[}C1{]}-{[}C4{]} is also readily seen
from its perturbative expansion (\ref{eq:PerturbationSeries}). And
we will take $\Sm_{\mu,\zeta}$ as an example, wherever it is necessary
to introduce a regularization in the discussion below.

By the above theorem, the Stückelberg-Petermann renormalization group
acts transitively on all $\Sm$-matrices fulfilling {[}C1{]}-{[}C4{]}.
Thus, if we want to find a finitely regularized $\Sm$-matrix $\Sm_{\mu,\zeta,\ren}$
which also fulfills {[}C1{]}-{[}C4{]} we will have to construct an
element $Z_{\mu,\zeta}$ of the Stückelberg-Petermann renormalization
group, such that \begin{equation}
\Sm_{\mu,\zeta,\ren}=\Sm_{\mu,\zeta}\circ Z_{\mu,\zeta}\,.\label{eq:FiniteRegSMatrixByRGTransform}\end{equation}
has a limit $\zeta\rightarrow0$ in the set of $\Sm$-matrices. That
is\begin{equation}
\Sm_{\mu,\ren}:=\lim_{\zeta\rightarrow0}\left(\Sm_{\mu,\zeta}\circ Z_{\mu,\zeta}\right)\label{eq:RenSMatrixByRGTransform}\end{equation}
exists in the sense of formal power series in $\hbar$ term by term
in the perturbative expansion; see also \citep[Sec.~5.2]{Brunetti2009}.
In Epstein-Glaser renormalization the construction of these local
counterterms, i.e., the perturbative definition of the map $Z$ has
to be done recursively, i.e., term by term in the perturbative expansion
starting with the counterterm $Z^{\left(2\right)}$ for $\Sm^{\left(2\right)}(0)$.
There will be a choice involved in each step of this recursion and
hence it is impossible to express it in an algorithm which computes,
say, the $n$'th counterterm. However, in the case we dispose of a
regularization, $\Sm_{\mu,\zeta}$, the second term, $\Sm_{\mu,\zeta}^{\left(2\right)}$,
is a regularization in the strict sense, and hence we have a preferred
choice for the local counterterm,\[
Z_{\mu,\zeta}^{\left(2\right)}=-\pp(\Sm_{\mu,\zeta}^{\left(2\right)}):\fps{\mathcal{F}_{\loc}(\M)}{\hbar}^{\otimes2}\rightarrow\fps{\mathcal{F}_{\loc}(\M)}{\hbar}\,.\]
which renormalizes the time ordered product, \[
\Sm_{\mu,\zeta,\ren}^{\left(2\right)}:=\rp(\Sm_{\mu,\zeta}^{\left(2\right)})=\left(1-\pp\right)\Sm_{\mu,\zeta}^{\left(2\right)}\,.\]
It will be show in the following section that this preferred choice
can be done at all orders of perturbation theory in a consistent way,
i.e., with local counterterms at all orders. This, in turn, makes
it possible to solve the recursive renormalization procedure of Epstein-Glaser
in quite the same way as it was done by Zimmermann in 1969 for BPH
in momentum space. We will derive from Equation~(\ref{eq:FiniteRegSMatrixByRGTransform})
a forest formula for Epstein-Glaser renormalization which solves the
recursive construction of counterterms to all orders in causal perturbation
theory. We want to remark that the choices at all orders are unique
in the minimal subtraction scheme, such that this leads to a recursive
procedure, which, in principle, can also be taught to a computer -
in contrast to the original Epstein-Glaser method.

The relation of the presented method to the modern formulation of
renormalization in terms of Hopf algebras will be given in Section~\ref{sec:MoreThanHopfAlgebra}.

\section{\label{sec:EGFF}A Forest Formula for Epstein-Glaser Renormalization}

Since we will stay in the functional framework throughout the derivation
of the forest formula, the result will be valid independent of the
chosen representation, in particular it holds for momentum space as
well as position space, whatever is the best suited representation
for the regularization. Furthermore, it is formulated without regard
to the graph expansion of the time-ordered product. Partitions will
take the place of graphs as the basic combinatorial objects. However,
analogous to the discussion in Section~\ref{sec:MS-For-TProduct},
the forest formula for Epstein-Glaser renormalization also holds in
a graph by graph manner, and then implies Zimmermann's forest formula
of \citep{Zimmermann1969} enhanced by his discussion on spurious
subtractions in \citep{Zimmermann1975}; see also Proposition~\ref{pro:RedundantProjections}
and the discussion thereafter. However, the combinatorial structures
used here will make the role of forests in Zimmermann's formula even
more transparent.

That a version of Zimmermann's forest formula should also exist in
position space was observed before. And the assertion is natural considering
the common origin of BPHZ and Epstein-Glaser renormalization. It has
been shown that Zimmermann's Taylor subtractions with respect to external
momenta of the graphs correspond to the $W$-projections in the Epstein-Glaser
framework \citep{Prange1999,Prange2000}. Gracia-Bondía and Lazzarini
gave a direct translation of this {}``Taylor surgery'' (GB) to position
space by considering a more general test function space for the {}``infrared
regulators'', i.e. the test functions $w_{\alpha}$ of Lemma~\ref{lem:W-Projection-Functions},
in fact they allowed the $w_{\alpha}$ to be distributions of the
Cesàro type \citep{Gracia-Bondia2003,GraciaBondiaLazzarini2003}.
A translation of the complete forest formula to position space was
given by Steinmann in the case of QED \citep{Steinmann2000}. However,
Steinmann's treatment was unsatisfactory in two points. First, Steinmann's
formulation involves the differentiation of (generalized) functions
at singular points. This was recognized by the author himself and
is due to the fact that the implicit regularization%
\footnote{I hope this is the only spot in the thesis where I use the word {}``regularization''
only in the sense of {}``making things well-defined''.%
} of the momentum space framework, namely the fact that Zimmermann
performs his manipulations on the integral kernel of the convolution
rather than the integral itself, has no counterpart in position space.
Momentum space convolution corresponds to the pointwise product (of
distributions) in position space and it is partly due to this implicit
regularization that momentum space integrals were introduced in perturbative
quantum field theory in the first place \citep{Bogoliubow1957}. Second,
Steinmann regards Quantum Electro Dynamics (QED). The fact that QED
has only one basic vertex of valence three implies that there are
no graphs with less lines but the same set of vertices so that the
spurious subtractions do not occur in QED and other theories {}``of
graphical $\varphi^{3}$-type''. Consequently, Steinmann's version
of the forest formula cannot be considered as a complete translation
of Zimmermann's forest formula (which treats general graphs in $\mathcal{G}$)
to position space. Observe that Zimmermann implements a preferred
choice for the extension at all orders in perturbation theory by performing
his Taylor subtractions always at zero external momentum. In order
to define this rigorously he has to introduce additional maps which
conceal part of the underlying pattern. However, as already remarked
above and as will be clear from the construction below, such a choice
of extension at all orders of perturbation theory is indispensable
for the solution of the recursive procedure of Bogoliubov, Parasiuk
and Hepp, or Epstein and Glaser, respectively. We start by exploring
the termwise structure of the main theorem (Theorem~\ref{thm:MainTheoremRenormalization}),
by applying the $n$-fold derivative to (\ref{eq:MainTheoremRenormalization}).
The Faà~di~Bruno formula arises naturally.

\subsection{Faà~di~Bruno's formula}

In 1855 Francesco Faà~di~Bruno proved a formula for the $n$-fold
chain rule \citep{FaaDiBruno1855}. And it is quite appealing that
this old formula, when applied to Equation~(\ref{eq:MainTheoremRenormalization})
gives a termwise version of the main theorem of perturbative renormalization
(Theorem~\ref{thm:MainTheoremRenormalization}). Considering the
time since its first proof, there are quite a few versions of Faà~di~Bruno's
formula in the literature today. However, in order to keep the relation
to causal perturbation theory and BPHZ renormalization visible at
all steps in our calculation, a set partition version of the form
found in \citep[p.~219]{Johnson2002} seems to be the most appropriate.
We prove here an adjusted version. But let us first give an easy definition,
mainly to fix notation.
\begin{defn}
[Partition, Blocks]\label{def:Partition}By a \emph{partition} $\Ptn$
of a finite set $V$ we mean any set of non-empty, disjoint subsets
$V_{i}\subset V$, $i\in I$, such that\[
V=\dot{\bigcup_{i\in I}}V_{i}\,,\qquad\mbox{that is, }\quad\Ptn=\left\{ V_{i}:i\in I\right\} ,\]
where $\dot{\cup}$ denotes disjoint union. We refer to the non-empty,
disjoint subsets $V_{i}$ as \emph{blocks} of $\Ptn$, and denote
the set of all partitions of $V$ by $\Part V$.
\end{defn}
We generally consider partitions of the set which corresponds to the
set of vertices, $V(\Gamma)$, in the graphical representation, and
in most cases it will be more convenient to regard instead the set
of numbers $\left\{ 1,\dots,n\right\} $. However, this implicit numbering
of vertices, is irrelevant for the derivation due to the symmetry
of the functional derivative briefly introduced in Section~\ref{sec:PreliminariesDifferentialCalculus}.
\begin{lem}
[Main Theorem - termwise]\label{lem:FaaDiBrunoSetPartitionVersion}Let
$\Sm:\fps{\mathcal{F}_{\loc}(\M)}{\hbar}\rightarrow\fps{\mathcal{F}(\M)}{\hbar}$
be an $\Sm$-matrix fulfilling conditions {[}C1{]}-{[}C4{]} and let
$Z:\fps{\mathcal{F}_{\loc}(\M)}{\hbar}\rightarrow\fps{\mathcal{F}_{\loc}(\M)}{\hbar}$
be an element of the Stückelberg-Petermann renormalization group,
$Z\in\R$. Then the $n$th term in the perturbative expansion of the
transformed $\Sm$-matrix, $\widehat{\Sm}=\Sm\circ Z$, is given by
\begin{equation}
\left(\Sm\circ Z\right)^{\left(n\right)}\!(0)=\sum_{\Ptn\in\Part\{1,\dots,n\}}\!\!\Sm^{\left(\left|\Ptn\right|\right)}(0)\cdot\left(\bigotimes_{I\in\Ptn}\left[Z^{\left(\left|I\right|\right)}(0)\right]\right),\label{eq:FaaDiBrunoPartitionVersion}\end{equation}
where the sum is over all partitions $\Ptn$ of the index set $\left\{ 1,\cdots,n\right\} $
into $\left|\Ptn\right|$ blocks, \linebreak[4]and ''~$\cdot$~''
denotes the composition of linear maps,\begin{equation}
\fps{\mathcal{F}_{\loc}(\M)}{\hbar}^{\otimes n}\xrightarrow{\bigotimes_{I\in\Ptn}Z^{\left(\left|I\right|\right)}(0)}\fps{\mathcal{F}_{\loc}(\M)}{\hbar}^{\otimes\left|\Ptn\right|}\xrightarrow[\hphantom{\bigotimes_{I\in\Ptn}Z^{\left(\left|I\right|\right)}(0)}]{\Sm^{\left(\left|\Ptn\right|\right)}(0)}\fps{\mathcal{F}(\M)}{\hbar}.\label{eq:CompositionDiagram}\end{equation}
\end{lem}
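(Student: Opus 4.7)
The plan is to prove \eqref{eq:FaaDiBrunoPartitionVersion} by induction on $n$, with the two main ingredients being the chain rule \eqref{eq:ChainRule} and the Leibniz rule \eqref{eq:LeibnizRule} for the functional differential, together with property [RG1] which guarantees $Z(0)=0$, so that $\Sm^{(k)}(Z(F))$ collapses to $\Sm^{(k)}(0)$ upon evaluation at $F=0$. For the base case $n=1$, the chain rule \eqref{eq:ChainRule} evaluated at $F=0$ yields
\[
(\Sm\circ Z)^{(1)}(0) = \Sm^{(1)}(Z(0))\cdot Z^{(1)}(0) = \Sm^{(1)}(0)\cdot Z^{(1)}(0),
\]
which matches \eqref{eq:FaaDiBrunoPartitionVersion}, since $\Part\{1\}$ consists of the single partition $\{\{1\}\}$.

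For the inductive step, I would differentiate the $n$-th order formula once more in a variable which after evaluation is identified with the $(n+1)$-st slot. On each summand the differential acts by the Leibniz rule \eqref{eq:LeibnizRule} on the composition $\Sm^{(|\Ptn|)}(Z(F))\cdot\bigotimes_{I\in\Ptn}Z^{(|I|)}(F)$. Differentiating the leftmost factor $\Sm^{(|\Ptn|)}\circ Z$ produces $\Sm^{(|\Ptn|+1)}(Z(F))\cdot Z^{(1)}(F)$ by the chain rule \eqref{eq:ChainRule}; after evaluation at $F=0$ this extra $Z^{(1)}(0)$ corresponds to adjoining the singleton block $\{n+1\}$ to $\Ptn$. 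Differentiating one of the factors $Z^{(|I|)}$ on the right raises the derivative order by one for exactly one block $I\in\Ptn$; this contribution corresponds to refining $\Ptn$ by replacing $I$ with $I\cup\{n+1\}$.

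The main obstacle is the combinatorial bookkeeping: I need to verify that every partition $\Ptn'\in\Part\{1,\dots,n+1\}$ is produced exactly once. This follows by tracking the position of the element $n+1$ in $\Ptn'$: either $\{n+1\}$ is a singleton block (in which case $\Ptn'$ arises uniquely from the partition $\Ptn'\setminus\{\{n+1\}\}\in\Part\{1,\dots,n\}$ via the first mechanism), or $n+1$ lies in a unique block $I'$ of cardinality at least two (in which case $\Ptn'$ arises uniquely from the partition of $\{1,\dots,n\}$ obtained by replacing $I'$ with $I'\setminus\{n+1\}$ via the second mechanism). The total symmetry of the functional derivatives, recalled in Section~\ref{sec:PreliminariesDifferentialCalculus}, ensures that blocks may be treated as unordered subsets so that no combinatorial prefactor appears. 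Finally, the image/target claim in \eqref{eq:CompositionDiagram} is guaranteed by \eqref{eq:-n-foldDerivativeLocalMap}, which forces each $Z^{(|I|)}(0)$ to land in $\fps{\mathcal{F}_{\loc}(\M)}{\hbar}$, the common domain on which $\Sm^{(|\Ptn|)}(0)$ is defined.
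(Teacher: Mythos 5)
Your proof is correct and follows essentially the same route as the paper: induction on $n$, using the chain rule \eqref{eq:ChainRule} and Leibniz rule \eqref{eq:LeibnizRule} at general $F$, with the key combinatorial observation that every partition of $\{1,\dots,n+1\}$ arises exactly once from a partition of $\{1,\dots,n\}$ by either adjoining $\{n+1\}$ as a singleton block or absorbing $n+1$ into an existing block, followed by evaluation at $F=0$. Your explicit uniqueness bookkeeping via the location of $n+1$ is slightly more careful than the paper's phrasing, but it is the same argument.
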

\begin{proof}
We prove (\ref{eq:FaaDiBrunoPartitionVersion}) by induction following
\citep{Johnson2002}. For $n=1$ we have,\[
\left(\Sm\circ Z\right)^{\left(1\right)}\!\bigg|_{F}=\Sm^{\left(1\right)}\bigg|_{Z(F)}\cdot Z^{\left(1\right)}\bigg|_{F}\,.\]
However, evaluating at $F=0$ gives an empty assertion ($\id=\id$)
due to the starting element conditions {[}C2{]}, {[}RG1{]}, and {[}RG2{]}.
The first non-trivial contribution is from the second derivative,
which we compute explicitly for illustration,\begin{align*}
\left(\Sm\circ Z\right)^{\left(2\right)}\!\bigg|_{F} & =\left(\Sm^{\left(1\right)}\bigg|_{Z(F)}\cdot Z^{\left(1\right)}\bigg|_{F}\right)^{\left(1\right)}\!\bigg|_{F}\\
 & =\Sm^{\left(2\right)}\bigg|_{Z(F)}\cdot Z^{\left(1\right)}\bigg|_{F}\otimes Z^{\left(1\right)}\bigg|_{F}+\Sm^{\left(1\right)}\bigg|_{Z(F)}\cdot Z^{\left(2\right)}\bigg|_{F}\,,\end{align*}
where the two terms correspond to the two partitions $\left\{ \left\{ 1\right\} ,\left\{ 2\right\} \right\} $
and $\left\{ \left\{ 1,2\right\} \right\} $ of $\left\{ 1,2\right\} $.
Evaluating at $F=0$ gives, again by using {[}C2{]}, {[}RG1{]}, and
{[}RG2{]}, \begin{align*}
\left(\Sm\circ Z\right)^{\left(2\right)}\!(0) & =\Sm^{\left(2\right)}(0)+Z^{\left(2\right)}(0)\,,\end{align*}
and $Z^{\left(2\right)}=\left(\Sm\circ Z\right)^{\left(2\right)}-\Sm^{\left(2\right)}$
is found to be the counterterm at second order.

For the induction step regard the derivative of (\ref{eq:FaaDiBrunoPartitionVersion}).
By (\ref{eq:ChainRule})/(\ref{eq:LeibnizRule}) we get, \begin{align}
\left[\Sm^{\left(\left|\Ptn\right|\right)}\!\Big|_{Z(F)}\!\left(\bigotimes_{I\in\Ptn}\left[Z^{\left(\left|I\right|\right)}\Big|_{F}\right]\right)\right]^{\left(1\right)} & =\Sm^{\left(\left|\Ptn\right|+1\right)}\Big|_{Z(F)}\!\left(Z^{\left(1\right)}\Big|_{F}\otimes\bigotimes_{I\in\Ptn}\left[Z^{\left(\left|I\right|\right)}\Big|_{F}\right]\right)\nonumber \\
 & \hspace{-13mm}+\sum_{I'\in\Ptn}\Sm^{\left(\left|\Ptn\right|\right)}\Big|_{Z(F)}\!\left(Z^{\left(\left|I'\right|+1\right)}\Big|_{F}\otimes\bigotimes_{I\in\Ptn\backslash\left\{ I'\right\} }\left[Z^{\left(\left|I\right|\right)}\Big|_{F}\right]\right).\label{eq:FaaDiBrunoPartitionVersion-ProofInductionStep}\end{align}
Any partition $\widetilde{\Ptn}$ of $\left\{ 1,\dots,n+1\right\} $
can be written in terms of a partition $\Ptn$ of\linebreak{}
 $\left\{ 1,\dots,n\right\} $ by either adjoining $\left\{ n+1\right\} $
as a block of its own, or by adding $\left\{ n+1\right\} $ to one
of the blocks in $\Ptn$, i.e., \[
\widetilde{\Ptn}=\Ptn\cup\left\{ \left\{ n+1\right\} \right\} \qquad\mbox{or}\qquad\widetilde{\Ptn}=\left(\Ptn\backslash\left\{ I'\right\} \right)\cup\left\{ I'\cup\left\{ n+1\right\} \right\} \,,\]
for some block $I'\in\Ptn$. Thus equation (\ref{eq:FaaDiBrunoPartitionVersion-ProofInductionStep})
contains all partitions of $\left\{ 1,\dots,n+1\right\} $ which can
be obtained from $\Ptn$. Evaluating at $F=0$ gives the result.
\end{proof}
Equation (\ref{eq:FaaDiBrunoPartitionVersion}) describes the action
of the St{\"u}ckelberg-Petermann group on time-ordered products.
This action followed directly from the main theorem by applying Fa{\`a}~di~Bruno's
formula. Since Equation~(\ref{eq:FaaDiBrunoPartitionVersion}) is
not the most cited version of Faà~di~Bruno's formula, we want to
show that it reduces to the more prevalent versions if we evaluate
$\left(\Sm\circ Z\right)^{\left(n\right)}\!(0)$ at the $n$-fold
tensor power of one and the same interaction functional, $F^{\otimes n}\in\fps{\mathcal{F}_{\loc}(\M)}{\hbar}^{\otimes n}$.
Due to the symmetry of the functional derivative partitions with identical
block sizes will give the same contribution to $\left(\Sm\circ Z\right)^{\left(n\right)}\!(0)(F^{\otimes n})$.
So the question is: How many of them are there? Depending on how one
chooses to sort these partitions, one gets the different versions
of Faà~di~Bruno's formula. As an example we give one of the derivations.

Let $\Ptn\in\Part\{1,\dots,n\}$ be a partition with $\left|\Ptn\right|=k$
blocks. Let $\left(l_{1},\dots,l_{k}\right)\in\mathbb{N}^{k}$ denote
the sizes of these blocks, $l_{1}+\cdots+l_{k}=n$, $l_{i}\geq1$.
There are ${n \choose l_{1},\dots,l_{k}}$ possibilities to distribute
$n$ elements among $k$ different blocks of specified size and order.%
\footnote{${n \choose l_{1},\dots,l_{k}}:=\frac{n!}{l_{1}!\cdots l_{k}!}$ denotes
the \emph{multinomial coefficient}, see, e.g., \citep{HarrisHirstMossinghoff2008}. %
} However, for a partition the order of the blocks is irrelevant. In
a sum over the multiindex $\left(l_{1},\dots,l_{k}\right)\in\mathbb{N}^{k}$
we thus have to divide by the number of permutations of $\left\{ l_{1},\dots,l_{k}\right\} $
to reduce it to a sum over all partitions. In total we get\begin{align}
 & \left(\Sm\circ Z\right)^{\left(n\right)}\!(F^{\otimes n})\label{eq:FaaDiBrunoCauchyProductFormula}\\
 & \qquad=\sum_{k=1}^{n}\frac{1}{k!}\Sm^{\left(k\right)}\cdot\sum_{\substack{l_{1}+\cdots+l_{k}=n\\
l_{i}\geq1}
}{n \choose l_{1},\dots,l_{k}}Z^{\left(l_{1}\right)}(F^{\otimes l_{1}})\otimes\cdots\otimes Z^{\left(l_{k}\right)}(F^{\otimes l_{k}})\,,\nonumber \end{align}
which was the starting point in \citep{FigueroaGraciaBondiaVarilly2005}
for the derivation of\begin{align}
 & \left(\Sm\circ Z\right)^{\left(n\right)}\!(F^{\otimes n})\label{eq:FaaDiBrunoSortedBlockSize}\\
 & \qquad=\sum_{k=1}^{n}\sum_{\lambda_{1},\dots,\lambda_{n}}\frac{n!}{\lambda_{1}!\cdots\lambda_{n}!}\Sm^{\left(k\right)}\!\cdot\left(\frac{Z^{\left(1\right)}}{1!}\right)^{\otimes\lambda_{1}}\negmedspace\otimes\cdots\otimes\negmedspace\left(\frac{Z^{\left(n\right)}}{n!}\right)^{\otimes\lambda_{n}}\!(F^{\otimes n}).\nonumber \end{align}
Here $\lambda_{l}\in\mathbb{N}_{0}$ denotes the number of blocks
of size $l$. Equation (\ref{eq:FaaDiBrunoSortedBlockSize}) is the
version, which is usually cited as Faà~di~Bruno's formula in the
literature, and often is the starting point for the introduction of
the Faà~di~Bruno bialgebra. We will learn more about this bialgebra
in Section~\ref{sec:MoreThanHopfAlgebra}.

\subsection{Minimal Subtraction}

Regard an analytically regularized $\Sm$-matrix, e.g., $\Sm_{\mu,\zeta}$.
Then Lemma~\ref{lem:FaaDiBrunoSetPartitionVersion} implies a recursion
relation for the counterterms $Z_{\mu,\zeta}^{\left(n\right)}$ in
the minimal subtraction renormalization scheme introduced in the previous
chapter.
\begin{cor}
[Recursion Relation for MS Counterterms]\label{cor:RecursionMSCounterterms}
In the minimal subtraction renormalization scheme (MS), a recursion
relation for the counterterms $Z_{\mu,\zeta}^{\left(n\right)}$ is
given by\begin{equation}
Z_{\mu,\zeta}^{\left(n\right)}=-\pp\sum_{\Ptn\in\Part\{1,\dots,n\}\backslash\left\{ \Ptn_{1}\right\} }\!\!\Sm_{\mu,\zeta}^{\left(\left|\Ptn\right|\right)}\cdot\left(\bigotimes_{I\in\Ptn}Z_{\mu,\zeta}^{\left(\left|I\right|\right)}\right).\label{eq:CountertermsMSRecursionFormula}\end{equation}
The counterterms are local, and all counterterms on the right hand
side are of lower order than $n$, since the only partition in $\Part\{1,\dots,n\}$
with a block containing $n$ elements is removed from the sum, $\Ptn_{1}=\left\{ \left\{ 1,\dots,n\right\} \right\} $.\end{cor}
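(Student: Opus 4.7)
My plan is to derive the recursion as an immediate consequence of Lemma~\ref{lem:FaaDiBrunoSetPartitionVersion} (the termwise main theorem) combined with the minimal subtraction prescription developed in Section~\ref{sec:RegularizationOfDistributions} and Section~\ref{sec:MS-For-TProduct}. First, I would write $\widehat{\Sm}_{\mu,\zeta} := \Sm_{\mu,\zeta}\circ Z_{\mu,\zeta}$ and evaluate the $n$-fold derivative at the origin using (\ref{eq:FaaDiBrunoPartitionVersion}). The distinguished partition $\Ptn_{1}=\{\{1,\dots,n\}\}$ contributes exactly $\Sm_{\mu,\zeta}^{(1)}(0)\cdot Z_{\mu,\zeta}^{(n)}(0) = Z_{\mu,\zeta}^{(n)}$, because the starting element conditions {[}C2{]} and {[}RG2{]} give $\Sm_{\mu,\zeta}^{(1)}(0)=\id$. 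Isolating this term yields
\begin{equation*}
\widehat{\Sm}_{\mu,\zeta}^{(n)} \;=\; Z_{\mu,\zeta}^{(n)} \;+\; \sum_{\Ptn\in\Part\{1,\dots,n\}\backslash\{\Ptn_{1}\}} \Sm_{\mu,\zeta}^{(|\Ptn|)}\cdot\Bigl(\bigotimes_{I\in\Ptn}Z_{\mu,\zeta}^{(|I|)}\Bigr).
\end{equation*}
Every $\Ptn\neq\Ptn_{1}$ contains only blocks $I$ with $|I|<n$, so the right-hand sum involves only counterterms of lower order, giving the second claim.

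Next I would proceed by induction on $n$. The base case $n=2$ is handled directly: $\Sm_{\mu,\zeta}^{(2)}$ is an analytic regularization in the strict sense of Definition~\ref{def:Regularization}, so (\ref{eq:Regularization-PP-Local}) yields that $\pp(\Sm_{\mu,\zeta}^{(2)})$ is local and setting $Z_{\mu,\zeta}^{(2)} := -\pp(\Sm_{\mu,\zeta}^{(2)})$ fulfills both Definition~\ref{def:MS-OperatorSubsets} and produces a finite $\widehat{\Sm}_{\mu,\zeta}^{(2)}$. For the inductive step, I would assume that $Z_{\mu,\zeta}^{(k)}$ has been defined as a local map for all $k<n$ and that $\widehat{\Sm}_{\mu,\zeta}^{(k)}$ admits a finite limit as $\zeta\to 0$. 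Then each summand in the right-hand sum above, evaluated at $F_{1}\otimes\cdots\otimes F_{n}$ with pairwise disjoint supports, is well defined (via the causality property {[}C1{]} of $\Sm_{\mu,\zeta}$ and the locality {[}RG4{]} of the $Z^{(|I|)}$), and the whole sum defines a map whose underlying Schwartz kernel is a distribution on $\M^{n}\setminus\Diag(\M^{n})$ that serves as a prepared amplitude in the sense of Definition~\ref{def:PreparedAmplitude}.

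The core step is then to invoke the analysis of Section~\ref{sec:RegularizationOfDistributions}: since the prepared sum is an analytic regularization of a translation-invariant distribution defined off the thin diagonal, its principal part (as a Laurent series in $\zeta$) is, by (\ref{eq:Regularization-PP-Local}), a local distribution supported on $\Diag(\M^{n})$. Hence setting
\begin{equation*}
Z_{\mu,\zeta}^{(n)} \;:=\; -\pp\sum_{\Ptn\in\Part\{1,\dots,n\}\backslash\{\Ptn_{1}\}} \Sm_{\mu,\zeta}^{(|\Ptn|)}\cdot\Bigl(\bigotimes_{I\in\Ptn}Z_{\mu,\zeta}^{(|I|)}\Bigr)
\end{equation*}
yields a local counterterm (fulfilling {[}RG4{]} and {[}RG6{]}), and by construction the finite part $\rp(\,\cdot\,)$ survives the limit $\zeta\to 0$, giving a well-defined renormalized $\widehat{\Sm}_{\mu,\ren}^{(n)}$.

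The hard part, and the only non-algebraic ingredient, will be verifying that the prepared sum is genuinely a regularization in the strict sense of Definition~\ref{def:Regularization} on all of $\D(\M^{n})$ (and not merely off the thin diagonal), so that the MS argument of (\ref{eq:Regularization-PP-Local})--(\ref{eq:RegularizationMSDefinition}) applies cleanly. This amounts to checking that the inductively constructed lower-order counterterms, when combined with $\Sm_{\mu,\zeta}^{(|\Ptn|)}$ via the composition diagram (\ref{eq:CompositionDiagram}), extend the amplitude compatibly across all partial diagonals (the sub-thin-diagonals) while preserving the finite, non-integer heterogeneity that underlies Corollary~\ref{cor:ExtensionHeterogeneousDistribution}. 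Once this compatibility is in place, the locality of $\pp(\,\cdot\,)$ follows from the same argument that gave (\ref{eq:PP-Prepared-is-Local}), and the inductive construction closes.
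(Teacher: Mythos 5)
Your opening step is exactly the paper's argument: applying Lemma~\ref{lem:FaaDiBrunoSetPartitionVersion} together with [C2] and [RG2] isolates $Z_{\mu,\zeta}^{(n)}$ as the contribution of $\Ptn_{1}$, and every other partition involves only blocks of size $<n$. From there the paper finishes in two lines: it takes as given that $Z_{\mu,\zeta}\in\R$ is the renormalization group element rendering $\lim_{\zeta\rightarrow0}\left(\Sm_{\mu,\zeta}\circ Z_{\mu,\zeta}\right)^{(n)}$ finite, cf.~(\ref{eq:RenSMatrixByRGTransform}), so $Z_{\mu,\zeta}^{(n)}$ must cancel at least the principal part of the remaining sum, and the MS prescription fixes it to be exactly $-\pp$ of that sum; locality is then immediate from (\ref{eq:-n-foldDerivativeLocalMap}), since any $Z\in\R$ maps $\fps{\mathcal{F}_{\loc}(\M)}{\hbar}$ into itself and hence its derivatives at $0$ automatically have local image. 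No induction and no appeal to the distributional analysis of Section~\ref{sec:RegularizationOfDistributions} is needed for the corollary as stated.

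Your second and third paragraphs aim at something strictly stronger: a constructive proof that the MS prescription actually produces a well-defined element of $\R$ with local counterterms, deriving locality from (\ref{eq:Regularization-PP-Local}) rather than from membership in $\R$. That is a legitimate and indeed necessary program, but it is not this corollary --- it is the content of Theorem~\ref{thm:EGFF}, where the inductive verification that the counterterm $C_{\mu,\zeta}^{(n)}$ satisfies the recursion and is local is carried out. As you yourself flag, your route leaves the essential analytic step unproven: that the ``prepared sum'' is an analytic regularization in the strict sense of Definition~\ref{def:Regularization} on all of $\D(\M^{n})$, i.e., that the lower-order subtractions remove the singularities on all partial diagonals so that only the thin diagonal remains. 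Establishing this is precisely what the forest combinatorics of Section~\ref{sec:EGFF} is for, and it cannot be waved through at this point. Either trim your argument to the paper's short necessary-condition proof (which suffices for the statement), or accept that closing your version requires the full forest-formula machinery.
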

\begin{proof}
Using the starting element condition {[}C2{]} we get from (\ref{eq:FaaDiBrunoPartitionVersion}),
\[
\left(\Sm_{\mu,\zeta}\circ Z_{\mu,\zeta}\right)^{\left(n\right)}=Z_{\mu,\zeta}^{\left(n\right)}+\sum_{\Ptn\in\Part\{1,\dots,n\}\backslash\left\{ \Ptn_{1}\right\} }\!\!\Sm_{\mu,\zeta}^{\left(\left|\Ptn\right|\right)}\cdot\left(\bigotimes_{I\in\Ptn}Z_{\mu,\zeta}^{\left(\left|I\right|\right)}\right).\]
By assumption $Z_{\mu,\zeta}\in\R$ is a renormalization group transformation
which renders the limit $\zeta\rightarrow0$ of the left hand side
finite, cf.~Equation~(\ref{eq:RenSMatrixByRGTransform}). Thus the
counterterm $Z_{\mu,\zeta}^{\left(n\right)}$ has to subtract at least
the principal part of the sum on the right hand side. In the minimal
subtraction scheme, $Z_{\mu,\zeta}^{\left(n\right)}$ is fixed by
the requirement to remove exactly the principal part, hence formula
(\ref{eq:CountertermsMSRecursionFormula}). The locality of $Z_{\mu,\zeta}^{\left(n\right)}$
is implied directly by the functional calculus, cf.~(\ref{eq:-n-foldDerivativeLocalMap}).
\end{proof}
The expert reader%
\footnote{I assume here that $\left\{ \mbox{readers}\right\} \neq\emptyset$.
If you have a proof, please tell me: \href{mailto:kai.johannes.keller@desy.de}{kai.johannes.keller@desy.de}%
} readily recognizes the similarity of (\ref{eq:CountertermsMSRecursionFormula})
to the recursive formula for the antipode in the Faà~di~Bruno bialgebra.
However, observe that there are two products involved in the recursion
for the counterterms. The tensor product $\otimes$ and the composition
of linear maps {}``~$\cdot$~''. What might be obvious for the
expert, namely that this is a structure which cannot be described
by a commutative Hopf algebra alone, will be derived {}``by foot''
in Section~\ref{sec:MoreThanHopfAlgebra}. However, let us first
give the derivation of a forest formula for the $n$-fold finitely
regularized time-ordered product $\Sm_{\mu,\zeta,\ren}^{\left(n\right)}$.
The forest formula will solve the inductive construction of the renormalization
group transformation $Z_{\mu,\zeta}$, which renders $\Sm_{\mu,\zeta,\ren}=\Sm_{\mu,\zeta}\circ Z_{\mu,\zeta}$
finite (in the sense of formal power series in $\hbar$) in the limit
$\zeta\rightarrow0$.

\subsection{Derivation of the forest formula}

We begin with the definition of a forest as it was given by Zimmermann.
We will then derive some relations to partitions which will motivate
the definition of an Epstein-Glaser forest and make it possible to
give a transparent proof of the forest formula.

Let us start with Zimmermann's definition, which was contained in
\citep{Zimmermann1969}. However, we incorporate directly his results
from \citep{Zimmermann1975} and will only consider subgraphs, which
are given by a subset of the set of vertices of a graph $\Gamma$
and all lines in $\Gamma$ connecting them, cf. Definition~\ref{def:Subgraph}.
Zimmermann called such subgraphs \emph{full vertex parts}. Observe
that this makes it possible to work directly with the set of vertices
instead of the set of general graphs or full vertex parts. However,
to keep the relation to the original definitions transparent, let
$\Gamma\in\mathcal{G}$ be a graph.

A \emph{$\Gamma$-forest} $U$ is a set of subgraphs $\gamma\subseteq\Gamma$,
such that any two elements $\gamma,\gamma'\in U$ are \emph{non-overlapping},
i.e.\begin{equation}
\mbox{either}\quad\gamma\subset\gamma'\quad\mbox{or}\quad\gamma'\subset\gamma\quad\mbox{or}\quad\gamma\cap\gamma'=\emptyset\,.\label{eq:ForestProperty}\end{equation}
The empty set is referred to as the empty forest. The notation $\gamma\cap\gamma'=\emptyset$
means that $V(\gamma)\cap V(\gamma')=\emptyset$, and it follows from
the definition of a subgraph (Definition~\ref{def:Subgraph}) that
then also the sets of edges are disjoint, $E(\gamma)\cap E(\gamma')=\emptyset$.
A graph $\gamma\in U$ is called \emph{maximal} if there is no other
graph in $U$ containing it. A $\Gamma$-forest $U$ is called \emph{maximal},
if there is no other $\Gamma$-forest containing it. A forest is called
\emph{restricted}, if it contains only {}``divergent graphs'', $\div(\gamma)\geq0$,
cf. (\ref{eq:DegreeOfDivergenceGraph}). Note that a (Feynman-) graph
$\Gamma$ has more than one maximal restricted forest, if and only
if it has \emph{overlapping divergences}, i.e., at least two divergent
subgraphs $\gamma,\gamma'\subset\Gamma$ for which (\ref{eq:ForestProperty})
does not hold. Given a forest $U$ of $\Gamma$ and a subgraph $G\in U$,
we define the set\[
U(G):=\left\{ \gamma\in U:\gamma\subseteq G\right\} ,\]
and note that $U(G)$ is a $G$-forest as well as a $\Gamma$-forest.
\begin{lem}
[Structure of Maximal Forests]\label{lem:StructureMaxForests}Let
$U$ be a maximal $\Gamma$-forest.
\begin{enumerate}
\item \label{enu:ContainsComplement}With any element $G\in U$, $U$ contains
also its vertex complement, $\Gamma\obs G\in U$.
\item \label{enu:MaxSubForest}For any graph $G\in U$ the set $U(G)=\left\{ \gamma\in U:\,\gamma\subseteq G\right\} $
is a maximal $G$-forest.
\item \label{enu:DisjointUnion}The forest $U\backslash\left\{ \Gamma\right\} $
is the disjoint union of two maximal forests. If $G\in U\backslash\left\{ \Gamma\right\} $
is a maximal element, then\[
U\backslash\left\{ \Gamma\right\} =U(G)\dot{\cup}U(\Gamma\obs G)\,.\]

\end{enumerate}
\end{lem}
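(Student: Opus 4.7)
My plan is to prove the three claims in the order (2), (3), (1), using two preliminary facts that follow immediately from maximality: every maximal $\Gamma$-forest $U$ contains $\Gamma$ itself, and every singleton vertex subgraph $\{v\}$, $v\in V(\Gamma)$, since both are non-overlapping with any subgraph of $\Gamma$ and hence could otherwise be adjoined.

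For (2), the forest property of $U(G)$ is inherited from $U$, so only maximality needs proof. I would argue by contradiction: suppose $\gamma^{*}\subseteq G$ lies outside $U(G)$ and is non-overlapping with every element of $U(G)$. Any $\gamma\in U\setminus U(G)$ satisfies $\gamma\not\subseteq G$, so the forest property (\ref{eq:ForestProperty}) applied to $(\gamma,G)$ forces either $G\subseteq\gamma$, in which case $\gamma^{*}\subseteq G\subseteq\gamma$, or $V(\gamma)\cap V(G)=\emptyset$, in which case $V(\gamma)\cap V(\gamma^{*})=\emptyset$. Either way $\gamma^{*}$ is non-overlapping with $\gamma$, so $U\cup\{\gamma^{*}\}$ would be a strictly larger $\Gamma$-forest, contradicting maximality of $U$.

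For (3), I would first show that the maximal elements of $U\setminus\{\Gamma\}$ form a partition of $V(\Gamma)$: pairwise disjointness of their vertex sets comes from non-overlapping plus maximality, and their covering of $V(\Gamma)$ comes from each singleton in $U$ lying in some maximal element. I would then show there are exactly two such maximal elements. If there were three or more, $M_{1},M_{2},M_{3},\ldots$, the full vertex part on $V(M_{1})\cup V(M_{2})$ would be non-overlapping with every element of $U$ --- contained in $\Gamma$; containing $M_{1}$, $M_{2}$, and their descendants; disjoint from $M_{i}$ and its descendants for $i\geq 3$ --- hence adjoinable to $U$, contradicting maximality. Thus the two maximal elements are mutual vertex complements, which I may name $G$ and $\Gamma\obs G$. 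The decomposition $U\setminus\{\Gamma\}=U(G)\dot{\cup}U(\Gamma\obs G)$ then follows from the trichotomy (\ref{eq:ForestProperty}) applied to each $\gamma\in U\setminus\{\Gamma\}$ against $G$, with the maximality of $G$ in $U\setminus\{\Gamma\}$ ruling out $G\subsetneq\gamma$; claim (2) then yields the asserted maximality of each sub-forest.

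Claim (1) in the case where $G$ is a maximal element of $U\setminus\{\Gamma\}$ is exactly the content of (3). The general case I would handle by induction on the depth of $G$ in the tree structure revealed by iterating (3) inside the sub-forests $U(G)$ and $U(\Gamma\obs G)$, each of which is maximal in its own graph by (2). The hard part will be to show that the full vertex part $\Gamma\obs G$, assembled as a single subgraph from the complementary pieces appearing at each level of this tree, is simultaneously non-overlapping with every element of $U$; this is the delicate combinatorial core of the lemma and is the step that will require the most care.
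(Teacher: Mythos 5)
Your treatment of (2) coincides with the paper's argument, and your direct proof of (3) --- partitioning $V(\Gamma)$ by the maximal elements of $U\backslash\left\{ \Gamma\right\} $ and ruling out three or more of them by adjoining the full vertex part on the union of two --- is sound and in fact more self-contained than the paper's, which simply declares (3) a consequence of (1) and (2).

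The genuine problem is the step you yourself flag as the ``delicate combinatorial core'': it cannot be completed, because claim (1) is false for non-maximal $G$. Take $V(\Gamma)=\left\{ 1,2,3,4\right\} $ and let $U$ consist of $\Gamma$, the full vertex parts on $\left\{ 1,2,3\right\} $ and on $\left\{ 1,2\right\} $, and the four singletons. This is a maximal $\Gamma$-forest (every other vertex subset overlaps either $\left\{ 1,2\right\} $ or $\left\{ 1,2,3\right\} $), yet for $G$ the full vertex part on $\left\{ 1,2\right\} $ the vertex complement $\Gamma\obs G$ has vertex set $\left\{ 3,4\right\} $, which overlaps the element on $\left\{ 1,2,3\right\} $ and hence cannot lie in $U$. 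In general $\Gamma\obs G$ is non-overlapping with a given $\gamma\in U$ in the cases $\gamma\subseteq G$ and $\gamma\cap G=\emptyset$, but in the remaining case $G\subsetneq\gamma$ one finds $V(\gamma)\cap V(\Gamma\obs G)\neq\emptyset$ with neither containing the other unless $\gamma=\Gamma$; so (1) holds for $G$ exactly when no element of $U$ lies strictly between $G$ and $\Gamma$, i.e.\ when $G$ is a maximal element of $U\backslash\left\{ \Gamma\right\} $ --- precisely the case your proof of (3) already establishes. The paper's own one-line proof of (1) asserts that every $\gamma\in U$ satisfies $\gamma\subset G$ or $\gamma\subset\Gamma\obs G$, silently excluding the case $G\subsetneq\gamma\subsetneq\Gamma$. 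The lemma should therefore be read (and is only ever used) with (1) restricted to maximal elements of $U\backslash\left\{ \Gamma\right\} $; with that restriction your argument is complete, and the induction on depth you propose should be dropped rather than attempted.
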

\begin{proof}
(\ref{enu:ContainsComplement}). Let $G\in U$, then for any element
$\gamma\in U$, we have either $\gamma\subset G$ or $\gamma\subset\Gamma\obs G$.
Hence $U\cup\left\{ \Gamma\obs G\right\} $ is a forest, and by maximality
of $U$: $\Gamma\obs G\in U$.

(\ref{enu:MaxSubForest}). Let $U'$ be a $G$-forest properly containing
$U(G)$. Then there is a subgraph $\gamma'\subset G$ such that $\gamma'\notin U(G)$
does not overlap with any of the elements in $U(G)$. Since $G\in U$
it follows that $\gamma'\notin U$ is non-overlapping with any element
in $U$. Thus $U\dot{\cup}\left\{ \gamma'\right\} $ is a forest,
in contradiction with the maximality of $U$.

(\ref{enu:DisjointUnion}) follows from (\ref{enu:ContainsComplement})
and (\ref{enu:MaxSubForest}).
\end{proof}
The combinatorial result, which leads to the forests formula for Epstein-Glaser
renormalization, is that we can write (maximal) forests of the kind
defined above in terms of (complete) sets of partitions of the vertex
set.
\begin{lem}
[Partitions and Forests]\label{lem:MaximalForestsCompleteTotallyOrderedPartitions}The
set of partitions of $n$ elements,\linebreak[4] $\Part\{1,\dots,n\}$,
is a partially ordered set (poset). The partial order is defined by
saying that $\Ptn_{1}$ is finer than $\Ptn_{2}$ (and $\Ptn_{2}$
is coarser than $\Ptn_{1}$),\[
\Ptn_{1}\leq\Ptn_{2}\,,\]
if for any block $I\in\Ptn_{1}$ there is a block $J\in\Ptn_{2}$
containing $I$. $\left(\Part\{1,\dots,n\},\leq\right)$ is a complete
partial order (cpo) with finest element $\left\{ \left\{ 1\right\} ,\dots,\left\{ n\right\} \right\} $
and coarsest element $\left\{ \left\{ 1,\dots,n\right\} \right\} $.

Let $\Pto\subset\Part\{1,\dots,n\}$ be a totally ordered subset.
We call $\Pto$ maximal, if there is no totally ordered subset $\Pto'\subset\Part\{1,\dots,n\}$
containing $\Pto$. The union of any totally ordered subset $\Pto$
is a forest,\[
U(\Pto):=\bigcup_{\Ptn'\in\Pto}\Ptn'\,.\]
$U(\Pto)$ is a maximal forest, if $\Pto$ is maximal.\end{lem}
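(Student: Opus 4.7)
First I would establish that $\leq$ is a partial order on $\Part\{1,\dots,n\}$. Reflexivity ($\Ptn\leq\Ptn$) and transitivity (compose the containments) are immediate. For antisymmetry, suppose $\Ptn_1\leq\Ptn_2\leq\Ptn_1$: every $I\in\Ptn_1$ sits inside some $J\in\Ptn_2$ which in turn sits inside some $I'\in\Ptn_1$; distinct blocks of $\Ptn_1$ being disjoint forces $I=I'=J$, so each block of $\Ptn_1$ is a block of $\Ptn_2$ and vice versa. The claims about the extreme elements follow by inspection. The cpo property can either be deduced from finiteness of $\Part\{1,\dots,n\}$ (every chain has a maximum), or established constructively: the supremum of a family $\{\Ptn_i\}$ has as blocks the equivalence classes of the transitive closure of the relation \emph{``belonging to a common block of some $\Ptn_i$''}.

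For the forest property of $U(\Pto)$, I would take $\gamma,\gamma'\in U(\Pto)$ with $\gamma\in\Ptn$ and $\gamma'\in\Ptn'$. By the total ordering of $\Pto$, I may assume $\Ptn\leq\Ptn'$, so $\gamma$ lies in some block $J\in\Ptn'$. If $J=\gamma'$ then $\gamma\subseteq\gamma'$; if $J\neq\gamma'$, distinct blocks of $\Ptn'$ are disjoint, so $\gamma\cap\gamma'\subseteq J\cap\gamma'=\emptyset$. Either way the non-overlapping condition (\ref{eq:ForestProperty}) holds.

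The real content of the lemma is the maximality assertion, which I would prove by contradiction. Assume $\Pto$ is maximal yet some subgraph $\gamma\notin U(\Pto)$ satisfies that $U(\Pto)\cup\{\gamma\}$ is still a forest. Maximality of $\Pto$ forces both $\{\{1\},\dots,\{n\}\}$ and $\{\{1,\dots,n\}\}$ to lie in $\Pto$, for otherwise the chain extends at one of its ends. Hence there is a coarsest $\Ptn_i\in\Pto$ in which $V(\gamma)$ is a disjoint union of blocks (the minimum always qualifies); let $\Ptn_{i+1}\in\Pto$ be its immediate successor. Non-overlapping of $V(\gamma)$ with the blocks of $\Ptn_{i+1}$ forces $V(\gamma)\subseteq B$ for a unique $B\in\Ptn_{i+1}$, and $B\supsetneq V(\gamma)$ by the choice of $\Ptn_i$. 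I would then construct $\Ptn^*$ by merging, inside $\Ptn_i$, all blocks contained in $V(\gamma)$ into the single block $V(\gamma)$ and leaving the rest untouched. A routine check gives $\Ptn_i<\Ptn^*<\Ptn_{i+1}$, so $\Pto\cup\{\Ptn^*\}$ is a strictly larger totally ordered chain in $\Part\{1,\dots,n\}$, contradicting maximality.

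The main obstacle is the boundary bookkeeping at this last step: one must verify that neither $\Ptn^*=\Ptn_i$ (which would require $V(\gamma)\in\Ptn_i$) nor $\Ptn^*=\Ptn_{i+1}$ (which would require $B=V(\gamma)\in\Ptn_{i+1}$), since both contradict $V(\gamma)\notin U(\Pto)$. Selecting $\Ptn_i$ as the \emph{coarsest} partition in $\Pto$ with $V(\gamma)$ a union of its blocks — and using that the finest partition of $\Pto$ is $\{\{1\},\dots,\{n\}\}$ so that such a $\Ptn_i$ always exists — is precisely what makes these two exclusions clean.
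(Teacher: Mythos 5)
Your proof is correct. The poset, lattice/cpo, and forest parts coincide with the paper's argument (your constructive description of the join as the transitive closure of ``sharing a block'' is in fact cleaner than the paper's ``replace overlapping blocks by their union''). Where you genuinely diverge is the maximality claim. The paper proves it by characterizing both sides structurally: it first establishes (in the preceding Lemma on the structure of maximal forests) that a maximal forest is built from $\left\{ \left\{ 1,\dots,n\right\} \right\}$ by successive binary splits of blocks, then observes that there is no partition strictly between $\left\{ I_{1},I_{2}\right\}$ and $\left\{ I\right\}$ on a block $I=I_{1}\dot{\cup}I_{2}$, so maximal chains in $\Part\{1,\dots,n\}$ admit exactly the same description, and the two constructions match. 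You instead give a direct exchange argument: if $U(\Pto)$ were not maximal, a compatible subgraph $\gamma\notin U(\Pto)$ could be located between a well-chosen $\Ptn_{i}$ (the coarsest element of $\Pto$ in which $V(\gamma)$ is a union of blocks) and its immediate successor, and merging the relevant blocks of $\Ptn_{i}$ produces a new partition $\Ptn^{*}$ strictly between them, enlarging the chain. This is self-contained — it does not lean on the structure lemma for maximal forests — and your boundary checks ($\Ptn^{*}\neq\Ptn_{i}$, $\Ptn^{*}\neq\Ptn_{i+1}$, and $V(\gamma)\subsetneq B$ for a unique block $B$ of $\Ptn_{i+1}$) are exactly the points that need care and are handled correctly. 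Two trivial remarks you may wish to make explicit: $V(\gamma)$ is a proper nonempty subset of $\left\{ 1,\dots,n\right\}$ (else $\gamma$ already lies in the coarsest partition of $\Pto$), which guarantees that $\Ptn_{i}$ is not the top of $\Pto$ and hence has an immediate successor; and comparability of $\Ptn^{*}$ with \emph{all} elements of $\Pto$, not just $\Ptn_{i}$ and $\Ptn_{i+1}$, follows from the immediate-successor property together with transitivity.
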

\begin{proof}
{}``$\leq$'' is reflexive ($\Ptn\leq\Ptn$), antisymmetric ($\Ptn_{1}\leq\Ptn_{2}\,\wedge\,\Ptn_{2}\leq\Ptn_{1}$
$\Rightarrow$ $\Ptn_{1}=\Ptn_{2}$), and transitive ($\Ptn_{1}\leq\Ptn_{2}\,\wedge\,\Ptn_{2}\leq\Ptn_{3}$
$\Rightarrow$ $\Ptn_{1}\leq\Ptn_{3}$), hence a partial order. For
$\left(\Part\{1,\dots,n\},\leq\right)$ to be a cpo, we have to show
that any pair $\Ptn_{1},\Ptn_{2}\in\Part\{1,\dots,n\}$ has a least
upper bound $\Ptn_{1}\sqcup\Ptn_{2}$ and a greatest lower bound $\Ptn_{1}\sqcap\Ptn_{2}$
in $\Part\{1,\dots,n\}$. The assertion is trivial, if $\Ptn_{1}$
and $\Ptn_{2}$ are related by {}``$\leq$'', hence let $\Ptn_{1}$
and $\Ptn_{2}$ not be related by {}``$\leq$''. $\Ptn_{1}\sqcup\Ptn_{2}$
is the partition where all overlapping blocks (cf.~(\ref{eq:ForestProperty}))
are replaced by their union; filled up with the larger blocks of either
$\Ptn_{1}$ or $\Ptn_{2}$. In $\Ptn_{1}\sqcap\Ptn_{2}$ overlapping
blocks are replaced by their intersection and filled up with the smaller
sets of either $\Ptn_{1}$ or $\Ptn_{2}$, see Figure~\ref{fig:PartitionBounds}.
Since $\Ptn_{1}\sqcup\Ptn_{2}$ and $\Ptn_{1}\sqcap\Ptn_{2}$ are
partitions of $\left\{ 1,\dots,n\right\} $, we infer that $\Part\{1,\dots,n\}$
is a cpo.

Regard a totally ordered subset $\Pto\subset\Part\{1,\dots,n\}$ and
let $\Ptn\in\Pto$, then by definition $I\cap J=\emptyset$ for all
$I,J\in\Ptn$. Let $\Ptn,\Ptn'\in\Pto$ be two different partitions,
$\Ptn'\neq\Ptn$, then we have either $I\cap J=\emptyset$, $I\subset J$,
or $J\subset I$ for any pair $\left(I,J\right)\in\Ptn\times\Ptn'$
since $\Pto$ is totally ordered. Thus $I$ and $J$ are non-overlapping,
and $U(\Pto)$ is a forest. Let $\Pto$ be maximal, then it contains
$\left\{ \left\{ 1,\dots,n\right\} \right\} $ and $\left\{ \left\{ 1\right\} ,\dots,\left\{ n\right\} \right\} $,
and with any partition $\Ptn\in\Pto$ it contains all partitions which
can be constructed out of $\Ptn$ by successively dividing any of
its blocks into a pair of disjoint subsets. Let $I=I_{1}\dot{\cup}I_{2}$
be an index set, then there is no partition $\Ptn_{I}$ of $I$ such
that $\left\{ I_{1},I_{2}\right\} \leq\Ptn_{I}\leq\left\{ I\right\} $
and $\left\{ I_{1},I_{2}\right\} \neq\Ptn_{I}\neq\left\{ I\right\} $.
Hence the maximal set $\Pto$ can be constructed out of $\left\{ \left\{ 1,\dots,n\right\} \right\} $
by the successive division procedure described above. Conversely,
let $U$ be a maximal forest, by Lemma~\ref{lem:StructureMaxForests},
$U$ can be constructed in exactly the same way.
\end{proof}
\addtocounter{figure}{1}

\begin{figure}[th]
\includegraphics[width=5cm]{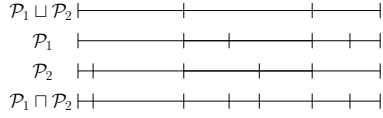}

\caption{\label{fig:PartitionBounds}Least upper and greatest lower bound of
a pair of partitions.}

\end{figure}

We will have to regard in the sequel unions of totally ordered sets
of partitions, which is non-trivial, since the set union of two totally
ordered sets of partitions will not be a totally ordered set of partitions,
just as the union of forests is not a forest in general.
\begin{defn}
[Position and Disjoint Union]\label{def:PositionDisjointUnionPto}Let
$\Pto\subset\Part\{1,\dots,n\}$ be a totally ordered set, then we
define the \emph{position} of any element $\Ptn\in\Pto$ by\[
\pos_{\Pto}(\Ptn):=\left|\left\{ \Ptn'\in\Pto:\Ptn\leq\Ptn'\right\} \right|.\]
It is easy to see that $\pos_{\Pto}(\Ptn_{c})=1$, if $\Ptn_{c}$
is the coarsest element of $\Pto$.

Let $\Pto(I)$ and $\Pto(J)$ be totally ordered subsets of $\Part I$
and $\Part J$, respectively. Any subset of \[
\left\{ \Ptn\dot{\cup}\Ptn':\Ptn\in\Pto(I)\mbox{ and }\Ptn'\in\Pto(J)\right\} ,\]
which is a totally ordered set of partitions of $I\dot{\cup}J$, we
call a \emph{disjoint union} of $\Pto(I)$ and $\Pto(J)$.
\end{defn}
Observe that there are forests which do not correspond to a totally
ordered subset of partitions, e.g., the empty forest or any forest
containing just one proper subset of $\left\{ 1,\dots,n\right\} $.
We now come to the definition of an Epstein-Glaser forest; a similar
definition for forests was considered in \citep{Figueroa2005} to
establish the relation to incidence Hopf algebras.
\begin{defn}
[Epstein-Glaser Forest]Let $V=\left\{ 1,\dots,n\right\} $ be a (vertex)
set. Then we call any totally ordered subset $\F$ of the set of partitions,
$\F\subset\Part\{1,\dots,n\}$, containing the finest partition $\Ptn_{n}:=\left\{ \left\{ 1\right\} ,\dots,\left\{ n\right\} \right\} $
of $V$, an \emph{Epstein-Glaser forest (EG forest)}, i.e., $\F$
has the form\[
\F=\left\{ \cdots\geq\Ptn_{n}\right\} ,\quad\Ptn_{n}=\left\{ \left\{ 1\right\} ,\dots,\left\{ n\right\} \right\} .\]
The EG forest containing only $\Ptn_{n}$ we denote by $\F_{n}:=\left\{ \Ptn_{n}\right\} $.
If an EG forest contains the coarsest partition $\Ptn_{1}:=\left\{ \left\{ 1,\dots,n\right\} \right\} $
we call it a \emph{full EG forest} \emph{(full forest)} and write
$\overline{\F}$. If an EG forest does not contain $\Ptn_{1}$, we
call it a \emph{normal EG forest (normal forest)} and write $\underline{\F}$.
For $n=1$ there is just one forest, the one with one vertex, and
we define this forest to be full. For $n>1$ there is a one to one
correspondence between full and normal forests, given by\[
\overline{\F}=\underline{\F}\cup\left\{ \Ptn_{1}\right\} \,.\]
An Epstein-Glaser forest $\F$ is called maximal, if $\F$ is maximal
as a totally ordered set of partitions.\end{defn}
\begin{cor}
\label{cor:NormalForestDisjointUnion}Any normal Epstein-Glaser forest
$\underline{\F}$ is a disjoint union of at least two full Epstein-Glaser
forests. If the coarsest partition in $\underline{\F}$ has $k$ elements,
then $\underline{\F}$ decomposes into $k$ full forests $\overline{\F}_{1},\dots,\overline{\F}_{k}$,
$k\geq2$, and we call $\underline{\F}$ $k$-fold connected. Conversely
one can say that $\underline{\F}$ is a disjoint union of $\overline{\F}_{1},\dots,\overline{\F}_{k}$
in the sense of Definition~\ref{def:PositionDisjointUnionPto},\begin{equation}
\underline{\F}=\dot{\bigcup_{i\in\left\{ 1,\dots,k\right\} }}\overline{\F}_{i}\,.\label{eq:NormalForestDecomposition}\end{equation}
If $I_{i}$ is the block of the coarsest partition in the full forest
$\overline{\F}_{i}$, then $\overline{\F}_{i}=\underline{\F}(I_{i})$.
In this sense the decomposition (\ref{eq:NormalForestDecomposition})
of $\underline{\F}$ is unique. See also Figure~\ref{fig:DecompositionAndDisjointUnion}.
\qed
\end{cor}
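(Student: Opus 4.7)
The plan is to build the decomposition directly from the coarsest partition of $\underline{\F}$ and then verify that it satisfies all requirements imposed on a disjoint union of forests. Let $\Ptn_c$ be the (unique) coarsest element of $\underline{\F}$. Because $\underline{\F}$ is normal, $\Ptn_c \neq \Ptn_1$, so $\Ptn_c = \{I_1,\dots,I_k\}$ with $k \geq 2$. For every other $\Ptn' \in \underline{\F}$ we have $\Ptn' \leq \Ptn_c$ (since $\Ptn_c$ is coarsest and $\underline{\F}$ is totally ordered), and hence each block of $\Ptn'$ is contained in exactly one $I_i$. This gives a canonical restriction $\Ptn'\!\mid_{I_i}$, which is a partition of $I_i$, and a factorization
\[
\Ptn' \;=\; \Ptn'\!\mid_{I_1} \,\dot\cup\, \cdots \,\dot\cup\, \Ptn'\!\mid_{I_k}\,.
\]

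First I would define, in analogy with the forest notation $U(G)$ used earlier, the set $\overline{\F}_i := \underline{\F}(I_i) := \{\Ptn'\!\mid_{I_i} : \Ptn' \in \underline{\F}\}$, and check that it is a full Epstein-Glaser forest of $I_i$. Total order is inherited from $\underline{\F}$: if $\Ptn' \leq \Ptn''$ in $\underline{\F}$, then $\Ptn'\!\mid_{I_i} \leq \Ptn''\!\mid_{I_i}$ block-by-block. Restriction of $\Ptn_n$ supplies the finest partition $\{\{j\} : j \in I_i\}$, so $\overline{\F}_i$ qualifies as an EG forest of $I_i$; restriction of $\Ptn_c$ supplies the coarsest partition $\{I_i\}$, which makes it \emph{full}.

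Second I would verify that the collection $\{\overline{\F}_1,\dots,\overline{\F}_k\}$ really realizes $\underline{\F}$ as a disjoint union in the sense of Definition~\ref{def:PositionDisjointUnionPto}. The factorization above shows that each element of $\underline{\F}$ lies in $\{\Ptn_1 \,\dot\cup\, \cdots \,\dot\cup\, \Ptn_k : \Ptn_i \in \overline{\F}_i\}$, and $\underline{\F}$ is totally ordered by hypothesis; together these are precisely the defining conditions of a disjoint union of totally ordered sets of partitions. Conversely, any choice of $(\Ptn_1,\dots,\Ptn_k) \in \overline{\F}_1 \times \cdots \times \overline{\F}_k$ whose components all arise from a common $\Ptn' \in \underline{\F}$ reassembles to that $\Ptn'$, so no spurious elements are introduced.

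For uniqueness I would argue that any decomposition $\underline{\F} = \dot{\bigcup}_i \overline{\F}'_i$ into full forests of subsets $I'_i$ is forced to have $\{I'_1,\dots,I'_k\} = \Ptn_c$: the coarsest partition of $\dot{\bigcup}_i \overline{\F}'_i$ must be $\{I'_i\}_i$ (each $\overline{\F}'_i$ being full contributes its top block $I'_i$), and the coarsest element of $\underline{\F}$ is $\Ptn_c$ by definition. Hence $I'_i = I_i$ up to relabeling, and $\overline{\F}'_i$ must equal the set of all restrictions $\underline{\F}(I_i)$, since every component of every $\Ptn' \in \underline{\F}$ lying in $I_i$ must appear. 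I don't foresee a real obstacle beyond bookkeeping — the only subtle point is choosing the right interpretation of the notation $\underline{\F}(I_i)$ so that it matches both the forest-theoretic notation $U(G)$ of Lemma~\ref{lem:StructureMaxForests} and the partition-theoretic setting of Lemma~\ref{lem:MaximalForestsCompleteTotallyOrderedPartitions}; the restriction map $\Ptn' \mapsto \Ptn'\!\mid_{I_i}$ is what reconciles the two viewpoints.
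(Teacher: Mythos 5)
Your argument is correct and is essentially the justification the paper leaves implicit: the corollary is stated without proof, and the intended reasoning is exactly your restriction map $\Ptn'\mapsto\Ptn'\big|_{I_{i}}$ applied block-by-block to the coarsest partition $\Ptn_{c}$ of $\underline{\F}$, which produces the $k$ full forests $\overline{\F}_{i}=\underline{\F}(I_{i})$ and exhibits $\underline{\F}$ as a totally ordered subset of the product set from Definition~\ref{def:PositionDisjointUnionPto}. The one point worth flagging is that your uniqueness step tacitly assumes the factors are ``fully used'' (so that the join of their top blocks lies in, or at least determines, $\Ptn_{c}$); under the most permissive reading of Definition~\ref{def:PositionDisjointUnionPto} a normal forest is formally also a disjoint union of strictly larger full forests, so the uniqueness holds precisely ``in the sense'' the corollary specifies, namely that each $\overline{\F}_{i}$ is forced to equal the set of restrictions $\underline{\F}(I_{i})$ --- which is what your construction delivers.
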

\begin{figure}
\includegraphics[height=2cm]{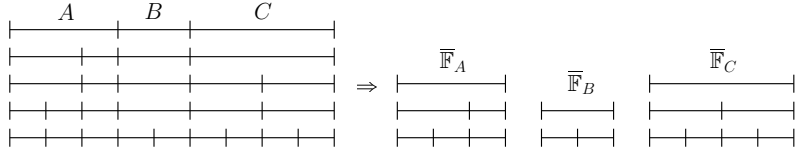}

\caption{\label{fig:DecompositionAndDisjointUnion}The depicted normal forest
of $A\dot{\cup}B\dot{\cup}C$ decomposes into tree full forests $\overline{\F}_{A}$,
$\overline{\F}_{B}$, and $\overline{\F}_{C}$. Conversely, given
the three full forests, any composition of them, which preserves the
order of partitions in each component separately, gives a different
disjoint union, i.e. a different normal forest of $A\dot{\cup}B\dot{\cup}C$.}

\end{figure}

So far for the combinatorial part, we now have to define the analytic
part, namely, the minimal subtractions in the blocks of a partition.
\begin{defn}
[MS Operator]\label{def:MS-OperatorPartitions}For any partition
$\Ptn\in\Part\{1,\dots,n\}$ define the minimal subtraction operator
(MS operator)\[
-T_{\Ptn}^{\MS}\Sm_{\mu,\zeta}^{\left(n\right)}:=\Sm_{\mu,\zeta}^{\left(\left|\Ptn\right|\right)}\cdot\left(\bigotimes_{I\in\Ptn}R_{\left|I\right|}\left(\Sm_{\mu,\zeta}^{\left(\left|I\right|\right)}\right)\right),\quad\mbox{where }R_{k}=\begin{cases}
\id & \mbox{if }k=1\\
-\pp & \mbox{if }k>1\,,\end{cases}\]
cf.~Definition~\ref{def:MS-OperatorSubsets}.
\end{defn}
This defines the operator $-T_{\Ptn}^{\MS}$ on the whole regularized
time-ordered product $\Sm_{\mu,\zeta}^{\left(n\right)}$. Observe,
however, that the above definition implies that we have chosen another
regularization parameter in each block $I$ of the partition; there
is one operator $R_{\left|I\right|}$ for each block. To consider
all partitions, which possibly contribute to the principal part, we
will regard the $n$-fold regularized time-ordered product as being
regularized in ${n \choose 2}$ different regularization parameters,
one for each pair of vertices. This is certainly possible regarding
the fact that we have {}``regularized the lines'' of each given
diagram, i.e. the propagators. At the stage of only one MS operator
$-T_{\Ptn}^{\MS}$ this consideration is not very important, since
the singularities in different blocks are independent, anyway. However,
the fact that we can choose the regularization parameters freely for
any pair of vertices becomes important as soon as we want to define
products of the MS operators applied to the same regularized time-ordered
product. Such products occur in the forest formula below, and we briefly
discuss one example in order to clarify this point. We choose the
position space representation for convenience. Let $\Ptn_{1}\leq\Ptn_{2}$
be different partitions in $\Part\{1,\dots,n\}$. Let $I_{1}\subset I_{2}$
be a pair of blocks, $I_{1}\in\Ptn_{1}$, $I_{2}\in\Ptn_{2}$, and
regard a special graph contribution, say $S_{\Gamma}^{\mu,\zeta}$,
to $\Sm_{\mu,\zeta}^{\left(n\right)}$ for simplicity, $\left|V(\Gamma)\right|=n$.
Let $\gamma$ be the full vertex part (in $\Gamma$) to the vertex
set $I_{1}$ and let $G$ be the full vertex part of $I_{2}$, then
$\gamma\subset G$. Regard the successive subtraction\[
s_{\Gamma\bbs G}^{\mu,\zeta}\, R_{I_{2}}\left(s_{G\bbs\gamma}^{\mu,\zeta}\, R_{I_{1}}s_{\gamma}^{\mu,\zeta}\right),\hspace{80mm}\]
and insert the definition from the previous chapter, cf.~(\ref{eq:ppSubgraphLaurentSeries}),\begin{align*}
 & =\sum_{n=-\infty}^{-1}\zeta^{n}\frac{1}{2\pi i}\oint_{C}d\xi\,\frac{1}{\xi^{n+1}}\sum_{k=-\infty}^{-1}\xi^{k}\frac{1}{2\pi i}\oint_{C'}d\xi'\,\frac{1}{{\xi'}^{k+1}}s_{\Gamma\bbs G}^{\mu,\zeta}\cdot s_{G\bbs\gamma}^{\mu,\xi}\cdot s_{\gamma}^{\mu,\xi'}\,,\end{align*}
where we had to introduce regularization parameters $\xi$ and $\xi'$
for the subgraphs in order to get independent subtractions in all
subgraphs. With these remarks concerning the regularization, we have
for the composition of MS operators corresponding to related partitions,
$\Ptn\geq\Ptn'$,\[
T_{\Ptn}^{\MS}T_{\Ptn'}^{\MS}\Sm_{\mu,\zeta}^{\left(n\right)}=\Sm_{\mu,\zeta}^{\left(\left|\Ptn\right|\right)}\cdot\left[\bigotimes_{I\in\Ptn}R_{\left|I\right|}\left[\Sm_{\mu,\zeta}^{\left(\left|I'/I\right|\right)}\cdot\left(\bigotimes_{I'\subset I}R_{\left|I'\right|}\left(\Sm_{\mu,\zeta}^{\left(\left|I'\right|\right)}\right)\right)\right]\right],\]
where $\left|I'/I\right|$ denotes the cardinality of $\left\{ I'\in\Ptn':I'\subset I\right\} $.
\begin{thm}
[Forest Formula for Epstein-Glaser Renormalization]\label{thm:EGFF}Let
$T_{\Ptn}^{\MS}$ be the minimal subtraction operator of Definition~\ref{def:MS-OperatorPartitions},
and let the product of two MS operators corresponding to related partitions
be defined as described above. Then\begin{equation}
\Sm_{\mu,\zeta,\ren}^{\left(n\right)}:=\sum_{\F\subset\Part\{1,\dots,n\}}\left(\oprod{\Ptn\in\F}-T_{\Ptn}^{\MS}\right)\Sm_{\mu,\zeta}^{\left(n\right)}\label{eq:EGFF}\end{equation}
gives a finite regularization of the $n$-fold regularized time-ordered
product $\Sm_{\mu,\zeta}^{\left(n\right)}\equiv\Time_{\mu,\zeta}^{n}$.
The sum is taken over all Epstein-Glaser forests. The product of the
operators is taken in the order prescribed by {}``$\geq$'', such
that the coarsest partition in $\F$ stands to the very left.\end{thm}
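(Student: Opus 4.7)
The plan is to derive the forest formula directly from Stora's main theorem (Theorem~\ref{thm:MainTheoremRenormalization}) together with Fa\`a~di~Bruno's formula and the recursive definition of the minimal subtraction counterterms. Since $\Sm_{\mu,\zeta}$ is an analytically regularized $\Sm$-matrix fulfilling [C1]--[C4], any renormalized $\Sm$-matrix fulfilling these axioms must be of the form $\Sm_{\mu,\ren} = \Sm_{\mu,\zeta} \circ Z_{\mu,\zeta}$ for a unique $Z_{\mu,\zeta} \in \R$, and in the MS scheme the counterterms $Z_{\mu,\zeta}^{(n)}$ are fixed by Corollary~\ref{cor:RecursionMSCounterterms}. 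Applying the termwise main theorem (Lemma~\ref{lem:FaaDiBrunoSetPartitionVersion}) yields
\[
\Sm_{\mu,\zeta,\ren}^{(n)} = \sum_{\Ptn \in \Part\{1,\dots,n\}} \Sm_{\mu,\zeta}^{(|\Ptn|)} \cdot \bigotimes_{I \in \Ptn} Z_{\mu,\zeta}^{(|I|)},
\]
and the task reduces to showing that unfolding the MS recursion in each factor $Z_{\mu,\zeta}^{(|I|)}$ rearranges this sum into the sum over EG forests displayed in (\ref{eq:EGFF}).

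The key intermediate identity is
\[
Z_{\mu,\zeta}^{(n)} = \sum_{\overline{\F}} \left(\oprod{\Ptn \in \overline{\F}} -T_\Ptn^{\MS}\right)\Sm_{\mu,\zeta}^{(n)},
\]
with the sum running over all full EG forests of $\{1,\dots,n\}$, which I will establish by induction on $n$. For $n=1$ there is a unique full forest $\{\{\{1\}\}\}$ and, using $R_1 = \id$ together with the starting element condition [C2], both sides collapse to $\id$. For $n \geq 2$ every full forest has the coarsest partition $\Ptn_1 = \{\{1,\dots,n\}\}$ as its top element. Stripping off the outer operator $-T_{\Ptn_1}^{\MS}$ peels off a factor $-\pp$ and leaves a normal forest $\underline{\F}$; its coarsest element $\Ptn$, necessarily having $|\Ptn| = k \geq 2$ blocks $I_1,\dots,I_k$, decomposes the remaining chain by Corollary~\ref{cor:NormalForestDisjointUnion} as a disjoint union of full forests $\overline{\F}_{I_j} = \underline{\F}(I_j)$ on the blocks. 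The inductive hypothesis identifies each $\overline{\F}_{I_j}$-contribution with $Z_{\mu,\zeta}^{(|I_j|)}$, so the overall sum becomes $-\pp\bigl[\sum_{\Ptn \neq \Ptn_1} \Sm_{\mu,\zeta}^{(|\Ptn|)} \cdot \bigotimes_j Z_{\mu,\zeta}^{(|I_j|)}\bigr]$, which is precisely the MS recursion of Corollary~\ref{cor:RecursionMSCounterterms}.

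Substituting this identity back into the Fa\`a~di~Bruno expansion produces a double sum indexed by pairs $(\Ptn, (\overline{\F}_I)_{I \in \Ptn})$. For $\Ptn = \Ptn_1$ the datum is simply a full EG forest of $\{1,\dots,n\}$, and the trivial prefactor $\Sm_{\mu,\zeta}^{(1)} = \id$ leaves such terms untouched, thereby exhausting all full-forest contributions to (\ref{eq:EGFF}). For $\Ptn \neq \Ptn_1$ Corollary~\ref{cor:NormalForestDisjointUnion} supplies a bijection between such pairs and normal EG forests with $\Ptn$ as coarsest element, and the recursive unfolding of the composite operator $\oprod -T^{\MS}$ described after Definition~\ref{def:MS-OperatorPartitions} identifies $\Sm_{\mu,\zeta}^{(|\Ptn|)} \cdot \bigotimes_I Z_{\mu,\zeta}^{(|I|)}$ with the evaluation of the corresponding normal forest on $\Sm_{\mu,\zeta}^{(n)}$. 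The degenerate case $\Ptn = \Ptn_n$ recovers the trivial normal forest $\F_n$ contributing $\Sm_{\mu,\zeta}^{(n)}$ itself. Summing over all $\Ptn$ thus reassembles every EG forest exactly once, so the right-hand side of (\ref{eq:EGFF}) equals $\Sm_{\mu,\zeta,\ren}^{(n)} = (\Sm_{\mu,\zeta} \circ Z_{\mu,\zeta})^{(n)}$; finiteness as $\zeta \to 0$ then follows by the main theorem, since $Z_{\mu,\zeta}$ is by construction the element of $\R$ that removes the principal parts at every order.

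The principal obstacle I anticipate is the careful bookkeeping of the nested Laurent expansions. The operators $-T_\Ptn^{\MS}$ corresponding to distinct partitions subtract principal parts in distinct auxiliary regularization parameters, one for each block, as illustrated by the two-variable residue integral in $\xi,\xi'$ given in the discussion preceding Definition~\ref{def:MS-OperatorPartitions}. One must verify that these nested subtractions compose unambiguously, that summing over the block-forests $\overline{\F}_I$ before applying the outer $-\pp$ agrees with summing afterwards, and that the outer $-T_{\Ptn_1}^{\MS}$ acting on an already partially renormalized composite indeed reproduces the single $-\pp$ of Corollary~\ref{cor:RecursionMSCounterterms}. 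Once this interpretation is fixed, the combinatorial matching is clean and locality of the $Z_{\mu,\zeta}^{(n)}$ is automatic via (\ref{eq:-n-foldDerivativeLocalMap}).
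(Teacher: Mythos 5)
Your argument is correct and is essentially the paper's own proof: both hinge on the identity $Z_{\mu,\zeta}^{(n)}=\sum_{\overline{\F}}\bigl(\oprod{\Ptn\in\overline{\F}}-T_{\Ptn}^{\MS}\bigr)\Sm_{\mu,\zeta}^{(n)}$, established by the same induction via Corollary~\ref{cor:NormalForestDisjointUnion} (decomposing normal forests into full forests on the blocks of their coarsest partition) and the MS recursion of Corollary~\ref{cor:RecursionMSCounterterms}. The only difference is direction of presentation — you substitute the full-forest expression into the Fa\`a~di~Bruno expansion, whereas the paper splits the forest sum into full and normal parts and verifies the counterterm recursion; the combinatorics are identical.
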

\begin{proof}
The forest formula implies an expression for the $n$th counterterm
in the renormalization scheme of analytic regularization and minimal
subtraction. We can split (\ref{eq:EGFF}) into a sum over full and
normal forests, \begin{align*}
\Sm_{\mu,\zeta,\ren}^{\left(n\right)} & =\sum_{\underline{\F}}\left(\oprod{\Ptn\in\underline{\F}}-T_{\Ptn}^{\MS}\right)\Sm_{\mu,\zeta}^{\left(n\right)}+\sum_{\overline{\F}}\left(\oprod{\Ptn\in\overline{\F}}-T_{\Ptn}^{\MS}\right)\Sm_{\mu,\zeta}^{\left(n\right)}\,.\end{align*}
Observe that for $n=1$ there are no normal forests, and the first
sum is empty. Since any forest in the second sum on the right hand
side contains the coarsest partition, ${\Ptn_{1}=\left\{ \left\{ 1,\dots,n\right\} \right\} }$,
we can factor out the corresponding $\MS$ operator and get from {[}C2{]},\begin{align*}
\Sm_{\mu,\zeta,\ren}^{\left(n\right)} & =\sum_{\underline{\F}}\left(\oprod{\Ptn\in\underline{\F}}-T_{\Ptn}^{\MS}\right)\Sm_{\mu,\zeta}^{\left(n\right)}+R_{n}\left[\sum_{\underline{\F}}\left(\oprod{\Ptn\in\underline{\F}}-T_{\Ptn}^{\MS}\right)\Sm_{\mu,\zeta}^{\left(n\right)}\right].\end{align*}
We show in the sequel, that\begin{equation}
C_{\mu,\zeta}^{\left(n\right)}:=R_{n}\left[\sum_{\underline{\F}}\left(\oprod{\Ptn\in\underline{\F}}-T_{\Ptn}^{\MS}\right)\Sm_{\mu,\zeta}^{\left(n\right)}\right]\equiv\sum_{\overline{\F}}\left(\oprod{\Ptn\in\overline{\F}}-T_{\Ptn}^{\MS}\right)\Sm_{\mu,\zeta}^{\left(n\right)},\label{eq:ProofLocalCountertermC}\end{equation}
is the local counterterm of regularized causal perturbation theory
in the minimal subtraction scheme. That is $C_{\mu,\zeta}^{\left(n\right)}$
fulfills the recursion relation of Corollary~\ref{cor:RecursionMSCounterterms}.
We proceed by induction.

For $n=1$ we have by {[}C2{]} and the definition of $R_{1}$,\[
C_{\mu,\zeta}^{\left(1\right)}=R_{1}\Sm_{\mu,\zeta}^{\left(1\right)}=\id=Z_{\mu,\zeta}^{\left(1\right)}\,.\]
Thus $C_{\mu,\zeta}^{\left(1\right)}=\id:\fps{\mathcal{F}_{\loc}(\M)}{\hbar}\rightarrow\fps{\mathcal{F}_{\loc}(\M)}{\hbar}$
is local, and we can assume that $C_{\mu,\zeta}^{\left(k\right)}=Z_{\mu,\zeta}^{\left(k\right)}$
for all $k<n$. For the induction step we have to show that\[
R_{n}\left[\sum_{\underline{\F}}\left(\oprod{\Ptn\in\underline{\F}}-T_{\Ptn}^{\MS}\right)\Sm_{\mu,\zeta}^{\left(n\right)}\right]=-\pp\sum_{\Ptn\in\Part\{1,\dots,n\}\backslash\left\{ \Ptn_{1}\right\} }\!\!\Sm_{\mu,\zeta}^{\left(\left|\Ptn\right|\right)}\cdot\left(\bigotimes_{I\in\Ptn}C_{\mu,\zeta}^{\left(\left|I\right|\right)}\right).\]
for $n\geq2$. By the definition of $R_{n}$ this is the case, if\begin{equation}
\sum_{\underline{\F}}\left(\oprod{\Ptn\in\underline{\F}}-T_{\Ptn}^{\MS}\right)\Sm_{\mu,\zeta}^{\left(n\right)}=\sum_{\Ptn\in\Part\{1,\dots,n\}\backslash\left\{ \Ptn_{1}\right\} }\!\!\Sm_{\mu,\zeta}^{\left(\left|\Ptn\right|\right)}\cdot\left(\bigotimes_{I\in\Ptn}C_{\mu,\zeta}^{\left(\left|I\right|\right)}\right).\label{eq:EGFF-Proof-A}\end{equation}
Regard the left hand side of this equation. Any normal forest is a
disjoint union of at least two full forests (Corollary~\ref{cor:NormalForestDisjointUnion}).
Hence we can write the sum over all normal forests as \begin{align}
\mbox{lhs} & =\sum_{k=2}^{n}\,\sum_{\substack{\Ptn'\in\Part\{1,\dots,n\}\\
\left|\Ptn'\right|=k}
}\,\sum_{\underline{\F}=\dot{\bigcup}_{I\in\Ptn'}\underline{\F}(I)}\left(\,\oprod{\Ptn\in\underline{\F}}-T_{\Ptn}^{\MS}\right)\Sm_{\mu,\zeta}^{\left(n\right)}\,.\nonumber \\
\intertext{The\, product\, splits\, and\, we\, get} & =\sum_{k=2}^{n}\sum_{\substack{\Ptn'\in\Part\{1,\dots,n\}\\
\left|\Ptn'\right|=k}
}\sum_{\underline{\F}=\dot{\bigcup}_{I\in\Ptn'}\underline{\F}(I)}\bigotimes_{I\in\Ptn'}\left(\,\,\,\,\oprod{\Ptn\in\underline{\F}(I)}-T_{\Ptn}^{\MS}\right)\Sm_{\mu,\zeta}^{\left(n\right)}\,.\label{eq:EGFF-Proof-B}\end{align}
The sum over all normal, i.e., multiply connected forests $\underline{\F}$
can be performed by summing over all full forests $\underline{\F}(I)$
in its connected components, cf.~Corollary~\ref{cor:NormalForestDisjointUnion}.
And we have to perform the sum in these components, in order to get
a well-defined expression for (\ref{eq:EGFF-Proof-B}),\[
\mbox{lhs}=\sum_{k=2}^{n}\sum_{\substack{\Ptn'\in\Part\{1,\dots,n\}\\
\left|\Ptn'\right|=k}
}\bigotimes_{I\in\Ptn'}\sum_{\underline{\F}(I)}\left(\,\,\,\,\oprod{\Ptn\in\underline{\F}(I)}-T_{\Ptn}^{\MS}\right)\Sm_{\mu,\zeta}^{\left(n\right)}\,.\]
Inserting the inductive assumption, $C_{\mu,\zeta}^{\left(k\right)}=Z_{\mu,\zeta}^{\left(k\right)}$
$\forall k<n$, gives the desired result ($\mbox{lhs}=\mbox{rhs}$).
\end{proof}

\begin{cor}
Let $\F_{1},\dots,\F_{c}$ be the maximal forests of the vertex set
$\left\{ 1,\dots,n\right\} $. Then we can write (\ref{eq:EGFF})
equivalently as\begin{equation}
\Sm_{\mu,\zeta,\ren}^{\left(n\right)}=\sum_{\substack{\emptyset\neq\left\{ i_{1},\dots,i_{\nu}\right\} \\
\subset\left\{ 1,\dots,c\right\} }
}\left(-1\right)^{\nu+1}\left(\,\,\,\,\oprod{\Ptn\in\F_{i_{1}}\cap\cdots\cap\F_{i_{\nu}}}\left(1-T_{\Ptn}^{\MS}\right)\right)\Sm_{\mu,\zeta}^{\left(n\right)}\,.\label{eq:EGFF-1-T}\end{equation}
\end{cor}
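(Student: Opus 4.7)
The plan is to derive (\ref{eq:EGFF-1-T}) from (\ref{eq:EGFF}) by a standard inclusion–exclusion argument, once we recognize that every Epstein–Glaser forest is contained in at least one maximal one. First, I would verify the extension property: any EG forest $\F$, being a totally ordered subset of $\Part\{1,\dots,n\}$ containing $\Ptn_n$, can be completed by successively refining or coarsening the blocks (as in the proof of Lemma~\ref{lem:MaximalForestsCompleteTotallyOrderedPartitions}) into a maximal chain. Hence the collection of EG forests equals the union $\bigcup_{i=1}^{c}\mathcal{P}(\F_i)$, where $\mathcal{P}(\F_i)$ denotes the set of all subsets of $\F_i$ that are themselves EG forests. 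Because any subset of a chain is a chain, and any such subset automatically has all the partitions from $\F_{i_1}\cap\cdots\cap\F_{i_\nu}$, one has the identity $\mathcal{P}(\F_{i_1})\cap\cdots\cap\mathcal{P}(\F_{i_\nu})=\mathcal{P}(\F_{i_1}\cap\cdots\cap\F_{i_\nu})$. This is the combinatorial backbone on which inclusion–exclusion will act.

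Next, applying the classical inclusion–exclusion principle to the indicator of $\bigcup_{i=1}^{c}\mathcal{P}(\F_i)$ I would obtain
\begin{equation*}
\sum_{\F}\Bigl(\,\oprod{\Ptn\in\F}-T_{\Ptn}^{\MS}\Bigr)
\;=\;\sum_{\emptyset\neq\{i_1,\dots,i_\nu\}\subset\{1,\dots,c\}}(-1)^{\nu+1}
\sum_{\F\subset\F_{i_1}\cap\cdots\cap\F_{i_\nu}}\Bigl(\,\oprod{\Ptn\in\F}-T_{\Ptn}^{\MS}\Bigr),
\end{equation*}
where on both sides the ordering of the operator products is inherited from the partial order $\leq$, and the inner sum ranges over all sub-chains (in particular the empty one, which by the starting-element condition~[C2] contributes $\Sm_{\mu,\zeta}^{(n)}$ itself, corresponding to the trivial EG forest $\F_n=\{\Ptn_n\}$ after factoring out the identity action of $-T_{\Ptn_n}^{\MS}$). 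Applying both sides to $\Sm_{\mu,\zeta}^{(n)}$ and invoking Theorem~\ref{thm:EGFF} rewrites the left-hand side as $\Sm_{\mu,\zeta,\ren}^{(n)}$.

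The third step is to recognize the inner sum as a binomial-type expansion. For any chain $\mathcal{Q}\subset\Part\{1,\dots,n\}$, distributing the operator product shows
\begin{equation*}
\oprod{\Ptn\in\mathcal{Q}}(1-T_{\Ptn}^{\MS})
\;=\;\sum_{\F\subset\mathcal{Q}}\oprod{\Ptn\in\F}(-T_{\Ptn}^{\MS}),
\end{equation*}
where the composition ordering is preserved because every subset of a chain is again a chain, so the relative order of the surviving factors is unambiguous. Specializing $\mathcal{Q}=\F_{i_1}\cap\cdots\cap\F_{i_\nu}$ and substituting yields precisely (\ref{eq:EGFF-1-T}).

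The main technical point that needs care is the interaction between the operator-ordering convention (``coarsest partition to the very left'') and the combinatorial manipulations: I must check that the bookkeeping of auxiliary regularization parameters introduced for related partitions in Section~\ref{sec:RedundantProjections} is compatible with reassembling the scattered $-T_{\Ptn}^{\MS}$ factors into a single product $\prod^{\geq}(1-T_\Ptn^{\MS})$. Because the chain structure guarantees that any two factors appearing in such a product are comparable under $\leq$, the prescription for successive minimal subtractions at nested partitions is unambiguously applicable term by term, and the inclusion–exclusion rearrangement above respects it.
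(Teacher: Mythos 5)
Your strategy is the same as the paper's one‑line proof (the paper simply says that multiplying out the factors $\left(1-T_{\Ptn}^{\MS}\right)$ and regrouping recovers (\ref{eq:EGFF})): you supply the two combinatorial ingredients that make the regrouping work, namely that every EG forest completes to a maximal chain and that the EG‑forest subsets common to $\F_{i_{1}},\dots,\F_{i_{\nu}}$ are exactly the EG‑forest subsets of $\F_{i_{1}}\cap\cdots\cap\F_{i_{\nu}}$, together with the observation that sub‑chains inherit an unambiguous ordering of the non‑commuting factors. That is the right skeleton.

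There is, however, one concrete step that fails as you have written it, and it sits exactly at the spot you wave at in your parenthetical about the empty sub‑chain. The finest partition $\Ptn_{n}=\left\{ \left\{ 1\right\} ,\dots,\left\{ n\right\} \right\} $ belongs to every maximal forest, hence to every intersection $\F_{i_{1}}\cap\cdots\cap\F_{i_{\nu}}$, and by Definition~\ref{def:MS-OperatorPartitions} the operator $-T_{\Ptn_{n}}^{\MS}$ acts as the identity, so that $\left(1-T_{\Ptn_{n}}^{\MS}\right)=2\,\id$. Your binomial expansion over \emph{all} subsets $\F$ of the chain $\mathcal{Q}=\F_{i_{1}}\cap\cdots\cap\F_{i_{\nu}}$ therefore counts every EG forest twice, once as $\F$ and once as $\F\backslash\left\{ \Ptn_{n}\right\} $ (these give identical contributions), whereas the left‑hand side of your inclusion--exclusion identity counts each EG forest once; the two sides of that identity differ by a factor $2$ in each term. (Check $n=2$: there is a single maximal forest $\left\{ \Ptn_{1},\Ptn_{2}\right\} $ and the literal right‑hand side of (\ref{eq:EGFF-1-T}) evaluates to $2\,\rp(\Sm_{\mu,\zeta}^{\left(2\right)})$, while $\Sm_{\mu,\zeta,\ren}^{\left(2\right)}=\rp(\Sm_{\mu,\zeta}^{\left(2\right)})$.) The repair is to treat the factor attached to the finest partition as the identity, i.e.\ to take the product over $\F_{i_{1}}\cap\cdots\cap\F_{i_{\nu}}\backslash\left\{ \Ptn_{n}\right\} $ --- exactly as Corollary~\ref{cor:PreparedAmplitude} excises $\Ptn_{1}$ --- and correspondingly to restrict your binomial expansion to subsets containing $\Ptn_{n}$. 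With that convention (which is evidently the intended reading of the statement, matching Zimmermann's formula where the trivial forest element never appears in the product) your argument closes; but your identification of ``the empty sub‑chain'' with ``the trivial forest $\F_{n}$'' conflates two distinct terms of the expansion and is precisely where the double count enters, so it must be made explicit rather than passed over.
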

\begin{proof}
This is merely another way of summing up the contributions to (\ref{eq:EGFF}).
Multiplying out the factors $\left(1-T_{\Ptn}^{\MS}\right)$ gives
the result, cf. \citep[Thm.~3.3]{Zimmermann1969}.
\end{proof}
{}
\begin{cor}
[EG Forest Formula in terms of graphs]Let $\Gamma$ be a graph with
$n$ vertices, $\left|V(\Gamma)\right|=n$. For any partition $\Ptn$
of $V(\Gamma)$ the action of the MS operator on the level of graphs
is given by \[
-T_{\Ptn}^{\MS}\left\langle s_{\Gamma}^{\mu,\zeta}\right\rangle :=\left\langle s_{\Gamma/\Ptn}^{\mu,\zeta}\right\rangle \left(\bigotimes_{I\in\Ptn}R_{\left|I\right|}\left\langle s_{\gamma_{I}}^{\mu,\zeta}\right\rangle \right),\quad\mbox{where }R_{k}=\begin{cases}
\id & \mbox{if }k=1\\
-\pp & \mbox{if }k>1\,,\end{cases}\]
where $\Gamma/\Ptn$ is the graph with the blocks $I\in\Ptn$ as vertices
and as lines all lines in $\Gamma$ which connect different blocks
of $\Ptn$. For each block $I\in\Ptn$, the graph $\gamma_{I}$ is
the full vertex part of $I$. $\left\langle s_{\Gamma}^{\mu,\zeta}\right\rangle $
denotes the analytically regularized amplitude in any representation
(momentum or position space). Then the limit\[
\left\langle s_{\Gamma,\ren}^{\mu}\right\rangle =\lim_{\zeta\rightarrow0}\sum_{\F\subset\Part\{1,\dots,n\}}\left(\oprod{\Ptn\in\F}-T_{\Ptn}^{\MS}\right)\left\langle s_{\Gamma}^{\mu,\zeta}\right\rangle \]
is well-defined and gives a UV finite, i.e., renormalized amplitude.\end{cor}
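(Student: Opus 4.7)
The plan is to reduce this graph-level statement to the functional-level forest formula of Theorem~\ref{thm:EGFF} by a graph-by-graph analysis of the compositions appearing there. Recall the expansion (\ref{eq:PerturbationSeries}) which realizes $\Sm_{\mu,\zeta}^{(n)}$ as a sum over graphs with $n$ vertices, with the translation-invariant amplitude $S_\Gamma^{\mu,\zeta}$ paired against the adjacency differential operator $\bld{\delta}^{\alpha}$. First I would work out what, at the level of graphs, the composition $\Sm_{\mu,\zeta}^{(|\Ptn|)}\cdot\bigl(\bigotimes_{I\in\Ptn}\Sm_{\mu,\zeta}^{(|I|)}\bigr)$ of Definition~\ref{def:MS-OperatorPartitions} produces. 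Expanding every slot, this composition is a sum over tuples consisting of an ``outer'' graph $G_{\mathrm{out}}$ on $|\Ptn|$ vertices together with a graph $G_I$ on the vertex set $I$ for each block $I\in\Ptn$; reassembling the blocks yields a graph on $V=\{1,\dots,n\}$ whose edge set is the disjoint union $E(G_{\mathrm{out}})\sqcup\bigsqcup_{I\in\Ptn}E(G_I)$. Collecting all tuples that yield a fixed $\Gamma$, one finds uniquely $G_{\mathrm{out}}=\Gamma/\Ptn$ and $G_I=\gamma_I$, and the symmetry factors match by the combinatorial identity $\Sym(\Gamma)=\Sym(\Gamma/\Ptn)\prod_{I\in\Ptn}\Sym(\gamma_I)$, which holds because $\Ptn$ is a set partition of a \emph{labelled} vertex set.

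Second, I would check that the pointwise product $\langle s_{\Gamma/\Ptn}^{\mu,\zeta}\rangle\cdot\bigotimes_{I\in\Ptn}R_{|I|}\langle s_{\gamma_I}^{\mu,\zeta}\rangle$ is well defined and coincides with the $\Gamma$-component of the functional operator $-T_\Ptn^{\MS}$. The key analytic input is the locality result of Section~\ref{sec:RegularizationOfDistributions}, equations (\ref{eq:Regularization-PP-Local}) and (\ref{eq:PP-Prepared-is-Local}): $R_{|I|}\langle s_{\gamma_I}^{\mu,\zeta}\rangle$ either equals $\langle s_{\gamma_I}^{\mu,\zeta}\rangle$ (for $|I|=1$) or is supported at the thin diagonal of $\M^{|I|}$, hence its tensor product with $\langle s_{\Gamma/\Ptn}^{\mu,\zeta}\rangle$ restricts cleanly to the appropriate subdiagonal of $\M^{|V|}$, producing the heterogeneous extension with the correct multidegree guaranteed by Corollary~\ref{cor:ExtensionHeterogeneousDistribution}. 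As stressed in the paragraphs preceding Theorem~\ref{thm:EGFF}, one must also refine the single regularization parameter $\zeta$ to a tuple of independent parameters, one per block, so that the principal parts in different blocks can be extracted independently; this is permissible because each line of $\Gamma$ carries its own regularized propagator $H_F^{m,\mu,\zeta}$.

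Once these two identifications are in place, the functional forest formula (\ref{eq:EGFF}) for $\Sm_{\mu,\zeta,\ren}^{(n)}$ is diagonal in the graph index and its $\Gamma$-component reproduces exactly the displayed formula of the corollary. Finiteness of the limit $\zeta\to 0$ follows from Theorem~\ref{thm:EGFF}: the renormalized time-ordered product has a well-defined limit, and at each fixed order $\hbar^{|E(\Gamma)|}$ the sum over graphs is finite (cf.\ the discussion following Definition~\ref{def:DimRegSMatrix}), so each individual graph contribution must itself be finite in the limit. The main obstacle I expect is the combinatorial bookkeeping in the first step: keeping careful track of symmetry factors, orientations, and the external-structure multiindex $\vec{k}$ from Section~\ref{sub:SimplicialCohomology} through the composition, and verifying that no spurious overcounting or sign occurs. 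A secondary technical point is the compatibility of taking the principal part in independent parameters $(\xi_I)_{I\in\Ptn}$ and then sending each to zero with the eventual single-parameter limit $\zeta\to 0$; this ultimately rests on the analyticity of each $\langle s_{\gamma}^{\mu,\zeta}\rangle$ in its own regularization parameter and on the fact that the locality of $\pp(s_{\gamma,\prep}^{\mu,\zeta})$ decouples the singular structures of subgraphs sitting in disjoint blocks.
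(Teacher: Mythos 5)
Your overall architecture --- expand $\Sm_{\mu,\zeta}^{(n)}$ into graphs via (\ref{eq:PerturbationSeries}), identify the $\Gamma$-component of the composition in Definition~\ref{def:MS-OperatorPartitions} with the displayed graph-level operator, and then quote Theorem~\ref{thm:EGFF} --- is the same route the paper takes, except that the paper compresses all of this into the observation that the MS operators are tensor products of the set-level operators of Definition~\ref{def:MS-OperatorSubsets} and hence linear, so that the functional statement of Section~\ref{sec:MS-For-TProduct} and Theorem~\ref{thm:EGFF} pass directly to the graph components. However, two of the specific steps you rely on do not hold as stated. First, the symmetry-factor identity $\Sym(\Gamma)=\Sym(\Gamma/\Ptn)\prod_{I\in\Ptn}\Sym(\gamma_I)$ is false in general: take $\Gamma$ the $4$-cycle on vertices $1,2,3,4$ and $\Ptn=\{\{1,2\},\{3,4\}\}$; then $\Sym(\Gamma)=1$ while $\Gamma/\Ptn$ is a double edge with $\Sym(\Gamma/\Ptn)=2$ and both $\gamma_I$ have $\Sym=1$. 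The discrepancy is compensated by the number of distinct ways the parallel edges of $\Gamma/\Ptn$ reattach to vertices inside the blocks, i.e., by the Leibniz-rule combinatorics of the adjacency operators $\bld{\delta}^{\alpha}$ in (\ref{eq:MS-Operator-GraphExpansion}) and (\ref{eq:LocalDistriButionsLocalFunctionals}), not by a product formula for $\Sym$. Your reduction should therefore be phrased through the functional-derivative bookkeeping the paper already set up, rather than through this identity.

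Second, your finiteness argument --- the total limit exists and the sum over graphs at fixed order in $\hbar$ is finite, hence each graph contribution is finite --- is not a valid inference: distinct graphs with the same vertex set and the same multi-index $\alpha$ pair against the same functional $\bld{\delta}^{\alpha}(F^{\otimes n})$, so divergences could a priori cancel between graphs. The correct argument, implicit in the rest of your write-up, is that the induction in the proof of Theorem~\ref{thm:EGFF} applies to each $\left\langle s_{\Gamma}^{\mu,\zeta}\right\rangle$ separately: the prepared amplitude of each individual graph is a regularization in the strict sense with local principal part, cf.~(\ref{eq:PP-Prepared-is-Local}) and Proposition~\ref{pro:RedundantProjections}, so the counterterm recursion of Corollary~\ref{cor:RecursionMSCounterterms} and hence the forest formula close graph by graph. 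With these two repairs your proof matches the paper's.
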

\begin{proof}
The MS operators are tensor products of the corresponding operators
on sets (Definition~\ref{def:MS-OperatorSubsets}) and thus linear.
Hence the corollary is a direct consequence of the discussion given
in Section~\ref{sec:MS-For-TProduct} and the above Theorem~\ref{thm:EGFF}.
\end{proof}
{}
\begin{cor}
[Prepared Amplitude]\label{cor:PreparedAmplitude}Let $\Gamma$ be
a graph with $n$ vertices, $n>1$, i.e., $\left\langle S_{\Gamma}^{\mu,\zeta},\bld{\delta}^{\alpha}\right\rangle $
with $\alpha\in\mathbb{N}^{n}$ and $\left|\alpha\right|=2\left|E(\Gamma)\right|$
is a contribution to the $n$-fold, regularized time ordered product
\[
\Sm_{\mu,\zeta}^{\left(n\right)}:\fps{\mathcal{F}_{\loc}(\M)}{\hbar}^{\otimes n}\rightarrow\fps{\mathcal{F}(\M)}{\hbar}.\]
Then the prepared amplitude to $\Gamma$ is given by,\begin{equation}
S_{\Gamma,\prep}^{\mu,\zeta}:=\sum_{\substack{\emptyset\neq\left\{ i_{1},\dots,i_{\nu}\right\} \\
\subset\left\{ 1,\dots,c\right\} }
}\left(-1\right)^{\nu+1}\left(\,\,\,\,\oprod{\Ptn\in\F_{i_{1}}\cap\cdots\cap\F_{i_{\nu}}\backslash\left\{ \Ptn_{1}\right\} }\left(1-T_{\Ptn}^{\MS}\right)\right)S_{\Gamma}^{\mu,\zeta}\,.\qed\label{eq:TransInvPreparedAmplitude}\end{equation}

\end{cor}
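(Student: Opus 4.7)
The plan is to identify the right-hand side of (\ref{eq:TransInvPreparedAmplitude}) as precisely the quantity upon which the outermost minimal subtraction acts in the forest formula (\ref{eq:EGFF}), and then verify that it satisfies the two requirements of Definition~\ref{def:PreparedAmplitude}. The argument splits into a combinatorial step and an analytic step.

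\emph{Combinatorial step.} From the proof of Theorem~\ref{thm:EGFF}, specifically the identification (\ref{eq:ProofLocalCountertermC}), the quantity
\[
\sum_{\underline{\F}}\left(\,\,\oprod{\Ptn\in\underline{\F}}-T_{\Ptn}^{\MS}\right)S_{\Gamma}^{\mu,\zeta},
\]
indexed by normal EG forests of $V(\Gamma)$, is exactly what the outermost subtraction $R_{n}$ acts on to produce the $n$th counterterm, and hence what $(1-T_{\Ptn_{1}}^{\MS})$ acts on to produce the renormalized $n$-fold time-ordered product. I would then rewrite this sum by the same inclusion-exclusion used in the passage from (\ref{eq:EGFF}) to (\ref{eq:EGFF-1-T}), but carried out only over partitions different from the coarsest $\Ptn_{1}$. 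Expanding each factor $(1-T_{\Ptn}^{\MS})$ in the claimed formula yields a sum of $\oprod{\Ptn\in\F'}(-T_{\Ptn}^{\MS})$ indexed by subsets $\F'\subset(\F_{i_{1}}\cap\cdots\cap\F_{i_{\nu}})\setminus\{\Ptn_{1}\}$. Collecting terms, each such $\F'$ carries the coefficient $\sum_{\emptyset\neq S\subset I(\F')}(-1)^{|S|+1}$ with $I(\F'):=\{i\in\{1,\dots,c\}:\F'\subset\F_{i}\}$. By Lemma~\ref{lem:MaximalForestsCompleteTotallyOrderedPartitions}, every totally ordered subset of $\Part V(\Gamma)$ not containing $\Ptn_{1}$ is contained in at least one maximal forest, so $I(\F')\neq\emptyset$, and standard inclusion-exclusion returns coefficient $1$. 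This reproduces the normal-forest sum above and hence verifies the combinatorial identity.

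\emph{Analytic step.} To conclude that $S_{\Gamma,\prep}^{\mu,\zeta}$ is a prepared amplitude in the sense of Definition~\ref{def:PreparedAmplitude}, I would verify both defining properties. For any $\Ptn\neq\Ptn_{n}$ the operator $T_{\Ptn}^{\MS}$ contains at least one factor $\pp(\cdot)$ applied to a proper subgraph amplitude, which by (\ref{eq:Regularization-PP-Local}) and (\ref{eq:PP-Prepared-is-Local}) produces a distribution whose support lies in a subdiagonal of the large diagonal $\DIAG$. Paired against any $f\in\D(\M^{|V(\Gamma)|}\setminus\DIAG)$ all such contributions vanish identically, so $\langle S_{\Gamma,\prep}^{\mu,\zeta},f\rangle=\langle S_{\Gamma}^{\mu,\zeta},f\rangle$, and Proposition~\ref{pro:RegularizedAmplitude} then supplies the regularization property outside the large diagonal as $\zeta\rightarrow0$. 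Heterogeneity of finite, non-integer order for $\zeta\in\Omega\setminus\{0\}$, $\zeta\notin\mathbb{Q}_{+}$, is inherited from $S_{\Gamma}^{\mu,\zeta}$ as constructed in Section~\ref{sec:TheRegularizedAmplitude}, since each local counterterm $\pp(\cdot)$ contributes only finitely many additional homogeneous components with integer degree shifts that do not spoil the non-integer character of the original multidegree.

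\emph{Main obstacle.} The delicate point is the inclusion-exclusion: one must justify that removing $\Ptn_{1}$ at the level of the intersection of maximal forests correctly enumerates all normal forests with the right multiplicity, and in particular that the ``empty-subset'' contributions from each intersected tuple sum to the single bare term $S_{\Gamma}^{\mu,\zeta}$. Equally, the ordered product $\oprod{\Ptn\in\F}$ must be tracked carefully in the expansion of the $(1-T_{\Ptn}^{\MS})$ factors, because each nested $\pp$-operator tacitly requires an independent regularization parameter of the kind introduced immediately before Theorem~\ref{thm:EGFF}; this bookkeeping, however, is already handled in the proof of the forest formula itself, so the corollary rests on that scaffolding without having to redo it.
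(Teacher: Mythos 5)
Your proposal is correct and follows essentially the route the paper intends: the corollary is stated there with no proof at all, being treated as the immediate result of factoring the outermost subtraction $\left(1-T_{\Ptn_{1}}^{\MS}\right)$ out of the inclusion--exclusion form (\ref{eq:EGFF-1-T}) of Theorem~\ref{thm:EGFF} and identifying what remains with the normal-forest sum (\ref{eq:ProofLocalCountertermC}) on which $R_{n}$ acts --- which is exactly your combinatorial step --- while your analytic step merely makes explicit the verification of Definition~\ref{def:PreparedAmplitude} that the paper leaves tacit. The one caveat, inherited from the paper's own Equation~(\ref{eq:EGFF-1-T}) rather than introduced by you, is the bookkeeping of the finest partition $\Ptn_{n}$: since $-T_{\Ptn_{n}}^{\MS}=\id$, the factor $\left(1-T_{\Ptn_{n}}^{\MS}\right)$ must be read as absent from the ordered product (otherwise each normal forest is double-counted), and with that convention your coefficient computation via $I(\F')\neq\emptyset$ is exact.
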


\section{\label{sec:MoreThanHopfAlgebra}More than Hopf Algebra}

The investigation of the combinatorial structure of perturbative renormalization
theory is a vivid field of research in mathematical physics and for
the characterization of the underlying pattern Hopf algebras, and
in particular the \linebreak[4]Faà~di~Bruno bialgebra introduced
by Joni and Rota in 1982 \citep{JoniRota1982}, became more and more
important in recent years \citep{Figueroa2005}. We will show in this
last section, that there is a more intricate pattern underlying the
combinatorial structure of renormalization than is described by a
Hopf algebra. However, the relation to the Hopf algebra of graphs
originally encountered by Connes and Kreimer in BPHZ \citep{Kreimer1998,Connes2000,Connes2001}
and later by Gracia-Bondía, Lazzarini, and Pinter in Epstein-Glaser
renormalization \citep{Gracia-Bondia2000,Pinter2000b} will become
transparent. The attractive feature of our derivation is that we can
understand the emerging Hopf algebraic structure as a direct consequence
of the main theorem of renormalization (Theorem~\ref{thm:MainTheoremRenormalization}).
The elements of this Hopf algebra will be differential operators which
give the time-ordered products, when acting on a $\Sm$-matrix, and
local counterterms when acting on an element of the Stückelberg-Petermann
renormalization group. Hence, in a graphical representation they correspond
to sums of graphs with the same set of vertices. The Hopf algebra
structure for individual graphs is regained by linearity. The regularized
Feynman rules and the renormalization map will appear naturally as
soon as one specializes to an analytically regularized $\Sm$-matrix.
This is a major difference to the Connes-Kreimer approach, where the
Feynman rules had to be put by hand as characters into the commutative
ring of Laurent series. We will show, however, that the recursion
formula for minimally subtracted counterterms (\ref{eq:CountertermsMSRecursionFormula}),
which was seen to be a direct consequence of the main theorem, cannot
be described within the commutative Hopf algebra of Connes and Kreimer
\citep{Connes2000,Connes2001}.

As a first step we will use the Faà~di~Bruno Hopf algebra introduced
by Joni and Rota to derive the commutative, non-cocommutative Hopf
algebra of graphs described briefly above. It will be clear from the
given presentation that one needs an additional, non-commutative,
composition product, to get back the recursion formula for the counterterms
(Corollary~\ref{cor:RecursionMSCounterterms}). We will sketch in
the last section how this composition product can be implemented into
the Hopf algebra to describe algebraically the construction of counterterms
in pAQFT. An interpretation of the maps in terms of graph operations
will be given in the end.

\subsection{The Hopf Algebra}

As shown by Joni and Rota,\linebreak[1] Faà~di~Bruno's formula
for the chain rule gives rise to a natural bialgebra structure, which
the authors called the Faà~di~Bruno bialgebra \citep{JoniRota1982}.
In modern presentations it is often introduced as a bialgebra of the
coefficients in (\ref{eq:FaaDiBrunoSortedBlockSize}), cf.~\citep{Figueroa2005}.
However, to keep the correspondence to causal perturbation theory
transparent also in this last section of the present thesis we regard
instead directly the partition version given in Lemma~\ref{lem:FaaDiBrunoSetPartitionVersion}.
Apart from that we follow essentially the steps of \citep{Figueroa2005}
as far as the Hopf algebra structure is concerned.

Regard Faà~di~Bruno's formula (\ref{eq:FaaDiBrunoPartitionVersion})
in the termwise form of the main theorem of renormalization (Lemma~\ref{lem:FaaDiBrunoSetPartitionVersion}).
We denote the coefficients by \[
a_{n}(\Sm):=\Sm^{\left(n\right)}(0)\qquad\mbox{and}\qquad a_{n}(Z):=Z^{\left(n\right)}(0)\]
and get\begin{equation}
a_{n}(\Sm\circ Z)=\sum_{\Ptn\in\Part\{1,\dots,n\}}a_{\left|\Ptn\right|}(\Sm)\cdot\bigodot_{I\in\Ptn}a_{\left|I\right|}(Z)\,.\label{eq:FaaDiBrunoCoefficients}\end{equation}
We want to make the symmetry of the functional derivative explicit
here and replaced the tensor product in (\ref{eq:FaaDiBrunoPartitionVersion})
by the symmetrized tensor product\[
\bigodot_{i=1}^{k}A_{i}:=\frac{1}{k!}\sum_{\sigma\in\Perm(k)}\bigotimes_{i=1}^{k}A_{\sigma(i)}\,,\]
where $\Perm(k)$ denotes the group of permutations of $k$ elements.
Observe that besides this commutative product ($\odot$), there is
a second, non-commutative product in formula (\ref{eq:FaaDiBrunoCoefficients}).
Namely, the composition of linear maps\begin{equation}
\linc:a_{\left|\Ptn\right|}(\Sm)\otimes\bigodot_{I\in\Ptn}a_{\left|I\right|}(Z)\mapsto a_{\left|\Ptn\right|}(\Sm)\cdot\bigodot_{I\in\Ptn}a_{\left|I\right|}(Z)\,,\label{eq:CompositionLinearMaps}\end{equation}
cf.~(\ref{eq:CompositionDiagram}). This second product is absent
if we regard the coefficients in (\ref{eq:FaaDiBrunoCoefficients})
as scalars. And we will sketch how to implement this additional non-commutative
product into the Hopf algebra in the next section. Let us first regard
the commutative part.

Regard the coefficients $a_{k}$ in (\ref{eq:FaaDiBrunoCoefficients})
as differential operators\begin{equation}
a_{k}:\Sm\mapsto a_{k}(\Sm)\equiv\Sm^{\left(k\right)}(0)\quad\mbox{and}\quad a_{k}:Z\mapsto a_{k}(Z)\equiv Z^{\left(k\right)}(0)\,.\label{eq:FdBCoefficientsDifferentialOperators}\end{equation}
The operators will produce multi-linear maps on local functionals;
with local image in the case they act on $Z\in\R$,\[
a_{k}(Z):\fps{\mathcal{F}_{\loc}(\M)}{\hbar}^{\otimes k}\rightarrow\fps{\mathcal{F}_{\loc}(\M)}{\hbar}\,,\]
and with possibly non-local image if they act of $\Sm$ fulfilling
{[}C1{]}-{[}C4{]},\[
a_{k}(\Sm):\fps{\mathcal{F}_{\loc}(\M)}{\hbar}^{\otimes k}\rightarrow\fps{\mathcal{F}(\M)}{\hbar}\,.\]
Denote by $\Halg$ the space of these coefficients $a_{k}$. Since
the $Z^{\left(k\right)}$ and $\Sm^{\left(k\right)}$ (evaluation
at zero understood) are linear maps on tensor products of local functionals,
$\Halg$ carries a natural $\mathbb{C}$-vector space structure induced
by the $\mathbb{C}$-vector space structure on local functionals.
As already remarked above, the symmetry of the functional derivative
induces a commutative product on $\Halg$,\[
\begin{array}{rccl}
M_{\odot}: & \Halg\otimes\Halg & \rightarrow & \Halg\\
 & (a_{k}\otimes a_{l}) & \mapsto & a_{k}\odot a_{l}:=\tfrac{1}{2}\left[a_{k}\otimes a_{l}+a_{l}\otimes a_{k}\right],\end{array}\]
where we set\begin{equation}
\left(a_{k}\odot a_{l}\right)(Z):=a_{k}(Z)\odot a_{l}(Z)\,,\label{eq:CummutativeHproduct}\end{equation}
and likewise for $\Sm$. We regard $\Halg$ as the free, commutative
algebra generated by the $a_{k}$, and commit the usual abuse of notation
by using the same symbol for the (symmetrized) tensor product of linear
maps on the right hand side of (\ref{eq:CummutativeHproduct}) and
the commutative product of the algebra $\left(\Halg,\odot\right)$
on the left. It will be clear from the context, where we mean which.
$\left(\Halg,\odot\right)$ is a unital algebra with unit \[
\openone:Z\mapsto\openone(Z):=\id\,,\qquad\openone:\Sm\mapsto\openone(\Sm):=\id\,,\]
where $\id:\fps{\mathcal{F}_{\loc}(\M)}{\hbar}\rightarrow\fps{\mathcal{F}_{\loc}(\M)}{\hbar}$
denotes the identity map on the space of local functionals. We denote
the corresponding unit map by\[
\begin{array}{rccl}
e: & \mathbb{C} & \rightarrow & \Halg\\
 & \alpha & \mapsto & \alpha\,\openone\,.\end{array}\]
We set $\Halg\odot\openone=\Halg$. Joni and Rota interpreted Faà~di~Bruno's
formula as a coproduct rule for the coefficients $a_{k}$ , and we
can do exactly the same thing here, by defining the coproduct, $\Delta:\Halg\rightarrow\Halg\otimes\Halg$,
as\begin{equation}
\left(\Delta a_{n}\right):=\sum_{\Ptn\in\Part\{1,\dots,n\}}a_{\left|\Ptn\right|}\otimes\left(\bigodot_{I\in\Ptn}a_{\left|I\right|}\right).\label{eq:FaaDiBrunoCoproduct}\end{equation}
It is obvious from (\ref{eq:FaaDiBrunoCoefficients}) that this coproduct
is induced by\[
\left(\Delta a_{n}\right)(\Sm\otimes Z)=a_{n}(\Sm\circ Z)\,,\]
and we break with the tradition of flipping the arguments of $\Delta a_{n}$
here,%
\footnote{see, e.g. page 2 of \citep{FigueroaGraciaBondiaVarilly2005}.%
} since, in contrast to the chain rule for functions or formal power
series with scalar coefficients, we have a composition of linear maps
with a prescribed order rather than a commutative product on the right
hand side of Faà~di~Bruno's formula, cf.~(\ref{eq:CompositionLinearMaps}).
Consequently we have the linear part of the coproduct on the left
hand side of $\otimes$. Equipping $\Halg$ with the counit, defined
on generators \[
\begin{array}{rccl}
\overline{e}: & \Halg & \rightarrow & \mathbb{C}\\
 & a_{n} & \mapsto & \overline{e}(a_{n}):=\begin{cases}
1 & \mbox{if }a_{n}=\openone\\
0 & \mbox{else}\,,\end{cases}\end{array}\]
gives the usual commutative, non-cocommutative Faà~di~Bruno bialgebra,
however, now interpreted in terms of the functional differential operators
(\ref{eq:FdBCoefficientsDifferentialOperators}).

It will be helpful for the construction of the antipode to first discuss
the natural gradings on $\Halg$.%
\footnote{Two articles by Kastler were very helpful in learning about the relevance
of grading and other Hopf related topics \citep{Kastler2000,Kastler2004}.
They are probably not the standard references to be cited at this
point, however, they contain explicit proofs of the results from Hopf
algebra theory needed here. See also more standard literature like
\citep{Sweedler1969,Abe1977}. Since we will not make any connection
to more advanced structures in algebraic geometry, the given references
will fully suffice for the discussion in this section. %
} As any tensor algebra, $\Halg$ is graded by the number of factors,\[
\begin{array}{rccl}
\deg^{\odot}: & \Halg & \rightarrow & \mathbb{N}\\
 & \bigodot_{i=1}^{k}a_{l_{i}} & \mapsto & \deg^{\odot}(a_{l_{1}}\odot\cdots\odot a_{l_{k}}):=k\,.\end{array}\]
That is, $\Halg$ can be written as the direct sum \[
\Halg=\bigoplus_{k=0}^{\infty}\Halg^{\odot k}\quad,\quad\Halg^{\odot k}=\left\{ a\in\Halg:\deg^{\odot}(a)=k\right\} ,\quad\Halg^{\odot0}=\mathbb{C}\,.\]
Subordinate to this tensor algebra grading is a naturally induced
grading of the individual $a_{l}\in\Halg^{\odot1}$ given by the order
of the derivative (minus one), \[
\deg^{v}(a_{l}):=l-1\,,\]
and we have\[
\Halg^{\odot1}=\bigoplus_{n=0}^{\infty}\Halg_{n}^{\odot1}\quad,\quad\Halg_{n}^{\odot1}=\left\{ a\in\Halg^{\odot1}:\deg^{v}(a)=n\right\} .\]

Observe that $\deg^{v}+1$ corresponds to the order of derivative
of $\Sm$ (or $Z$) at zero. This determines the number of interaction
functionals in the argument of the corresponding linear map\[
a_{n}(\Sm)=\Sm^{\left(n\right)}(0):\fps{\mathcal{F}_{\loc}(\M)}{\hbar}^{\otimes n}\rightarrow\fps{\mathcal{F}(\M)}{\hbar}\,,\quad n=\deg^{v}+1\,,\]
and hence $\deg^{v}+1$ is the number of vertices of the graphs contributing
to the graph expansion of $\Sm^{\left(n\right)}(0)\equiv\Time_{n}$,
cf.~Equation~(\ref{eq:TimeOrderedGraphStructure}).

This grading by vertex number can be extended to the tensor product
$\Halg^{\odot k}$ by setting\[
\deg^{v}(a_{k}\odot a_{l}):=\deg^{v}(a_{k})+\deg^{v}(a_{l})\,,\quad\mbox{i.e.,}\quad\deg^{v}(a_{l_{1}}\odot\cdots\odot a_{l_{k}})=\sum_{i=1}^{k}l_{i}-k\,.\]
With this definition also $\deg^{v}$ is an algebra grading of the
algebra $\Halg$, and we have\[
\Halg=\bigoplus_{n=0}^{\infty}\Halg_{n}\quad,\quad\Halg_{n}=\left\{ a\in\Halg:\deg^{v}(a)=n\right\} \,.\]
Furthermore the vertex grading $\deg^{v}$ is compatible with the
coproduct (\ref{eq:FaaDiBrunoCoproduct}),\[
\deg^{v}(\Delta a_{n})=\left|\Ptn\right|-1+\sum_{I\in\Ptn}\left(\left|I\right|-1\right)=n-1\quad\forall\Ptn\in\Part\{1,\dots,n\}\,,\]
and hence $\left(\Halg,\odot,e,\Delta,\overline{e}\right)$ is graded
as a bialgebra. The starting element conditions {[}C2{]} and {[}RG2{]},
\[
a_{1}(\Sm)=\Sm^{\left(1\right)}(0)=\id\quad\mbox{and}\quad a_{1}(Z)=Z^{\left(1\right)}(0)=\id\]
imply that $a_{1}=\openone$ and hence $\Halg$ is $\mathbb{N}_{0}$-graded
connected as a bialgebra, i.e., $\Halg_{0}=\mathbb{C}$ (we implicitly
identify $\openone=\openone\odot\openone$ here). It is a well-known
fact of Hopf algebra theory that any $\mathbb{N}_{0}$-graded connected
bialgebra possesses an antipode \citep[Prop.~2.7]{Kastler2000},\[
\ap:\Halg\rightarrow\Halg\,,\]
and thus $\Halg$ is a Hopf algebra. We will now derive a recursion
formula for this antipode. By definition, the antipode $\ap$ of a
Hopf algebra is the inverse of the identity with respect to the induced
convolution product on the Hopf algebra automorphisms $\Aut(\Halg)$.
The convolution product on $\Aut(\Halg)$ is induced by the product
and coproduct on $\Halg$, we denote it by\[
\phi\oast\psi:=M_{\odot}\circ\left(\phi\otimes\psi\right)\circ\Delta\,,\quad\phi,\psi\in\Aut(\Halg)\,.\]
It is a standard computation to prove that \[
e\circ\overline{e}:\Halg\rightarrow\Halg\]
defines a unit in the algebra $\left(\Aut(\Halg),\oast\right)$. A
similar computation will be done below for the second product on the
Hopf algebra, so we leave it out here. The antipode of an $\mathbb{N}_{0}$-graded
connected bialgebra can then be constructed directly from its defining
condition,\[
\id_{\Halg}\oast\ap=e\circ\overline{e}\,.\]
We have\begin{align*}
M_{\odot}\circ\left(\id_{\Halg}\otimes\ap\right)\circ\Delta(a_{n}) & =e(\overline{e}(a_{n}))\\
\sum_{\Ptn\in\Part\{1,\dots,n\}}a_{\left|\Ptn\right|}\odot\left(\bigodot_{I\in\Ptn}\ap(a_{\left|I\right|})\right) & =\begin{cases}
\openone & \mbox{if }n=1\\
0 & \mbox{else}\,,\end{cases}\end{align*}
since, in particular, $\ap$ is an algebra homomorphism. From the
case $n=1$ we get $\ap(a_{1})=\openone$, and since there is only
one partition of $\left\{ 1,\dots,n\right\} $ with one block, $\Ptn_{1}=\left\{ \left\{ 1,\dots,n\right\} \right\} $,
we infer by connectedness of $\Halg$, i.e., by $a_{1}=\openone$
that\begin{equation}
\ap(a_{n})=-\sum_{\Ptn\in\Part\{1,\dots,n\}\backslash\left\{ P_{1}\right\} }a_{\left|\Ptn\right|}\odot\left(\bigodot_{I\in\Ptn}\ap(a_{\left|I\right|})\right).\label{eq:CKantipode}\end{equation}
Observe the similarity to the recursion for the counterterms in Corollary~\ref{cor:RecursionMSCounterterms}.
However, observe also that the composition structure of (\ref{eq:CountertermsMSRecursionFormula})
is completely absent in (\ref{eq:CKantipode}). This is no problem,
if we regard the Feynman rules as characters of $\left(\Halg,\odot\right)$
into a commutative ring of Laurent series with scalar coefficients,
as it was done in \citep{Connes2000,Connes2001}. We want to emphasize
the relation of the Hopf algebra $\left(\Halg,\odot,e,\Delta,\overline{e},\ap\right)$
to the Connes-Kreimer Hopf algebra of graphs. The elements of $\Halg$
are differential operators whose order is determined by the vertex
grading $\deg^{v}$. By implementing the graph expansion (\ref{eq:TimeOrderedGraphStructure}),
the elements $a_{k}\in\Halg$ can be interpreted as sums over graphs
with the same set of vertices. By linearity of the maps and the fact
that only finitely many graphs contribute to the perturbative expansion
at a given order of $\hbar$, {[}C3{]}, we can break the Hopf algebra
structure down to the level of graphs. However, since the structure
for the algebraic construction of counterterms is not complete yet,
we will give a more detailed account of this interpretation only at
the end of the next section. 

The fact that we found the Hopf algebra structure in the sums of graphs
is in accordance with the results of Brouder and Frabetti, who found
in different examples (including gauge theories) that Connes and Kreimer's
Hopf algebra structure is preserved when one sums up the graph contributions
at certain orders or perturbation theory \citep{BrouderFrabetti2000,BrouderFrabetti2001,Frabetti2007},
see also \citep{Suijlekom2007c}. Brouder and Frabetti, in collaboration
with Krattenthaler and Menous, respectively, also observed the relation
to the Faà~di~Bruno Hopf algebra \citep{BrouderFrabettiKra2006,BrouderFrabettiMenous2009},
however, the relation to the main theorem of perturbative renormalization
as proven in \citep{DuetschFredenhagen2004,Duetsch2005,Brunetti2009}
was, to the best of my knowledge, unobserved before.

We will now incorporate the non-commutative composition structure
into the commutative Hopf algebra $\left(\Halg,\odot,e,\Delta,\overline{e},\ap\right)$
constructed above.

\subsection{Algebraic Construction of Counterterms}

In contrast to the Connes-Kreimer approach to renormalization, in
our approach the Feynman rules are naturally induced as evaluation
maps of the differential operators $a_{n}\in\Halg$. The basic evaluation
operator, which gives the $n$-fold time-ordered product, and thus
corresponds to the Feynman rules is given by \[
\feyn:a_{n}\mapsto a_{n}(\Sm)=\Sm^{\left(n\right)}\equiv\Time^{n}\,.\]
The image of $\feyn$ is a multi-linear map between spaces of (local)
functionals. On linear maps there are two natural products, one is
the (symmetrized) tensor product discussed above, and the other is
the composition. Composition of linear maps is a non-commutative operation
in the generic case, and as a consequence it is impossible to derive
the action of the counterterms $Z^{\left(n\right)}$ on the time-ordered
products $\Sm^{\left(n\right)}$ described by Lemma~\ref{lem:FaaDiBrunoSetPartitionVersion}
from the commutative Hopf algebra $\left(\Halg,\odot,e,\Delta,\overline{e},\ap\right)$
alone. Regard once again the expression given in the lemma,

\[
\left(\Sm\circ Z\right)^{\left(n\right)}\!(0)=\sum_{\Ptn\in\Part\{1,\dots,n\}}\!\!\Sm^{\left(\left|\Ptn\right|\right)}(0)\cdot\left(\bigodot_{I\in\Ptn}\left[Z^{\left(\left|I\right|\right)}(0)\right]\right),\]
and observe that we need both, the commutative product $\odot$ and
the \linebreak[4]non-commutative composition {}``~$\cdot$~'',
as well as the coproduct $\Delta$ for defining the action of $\R$
on itself and on the set of $\Sm$-matrices term by term in an algebraic
fashion.

We want to incorporate the composition as an additional product in
the commutative, non-cocommutative Hopf algebra constructed above,
\[
\left(\Halg,\odot,e,\Delta,\overline{e},\ap\right)\]
with generators $a_{k}\in\Halg$, $k\in\mathbb{N}$. We define a map\[
\linc:\Halg\otimes\Halg\rightarrow\Halg\,,\]
induced by the composition of linear maps (\ref{eq:CompositionLinearMaps}),
\[
\linc(a_{k}\otimes\bigodot_{i=1}^{k}a_{l_{i}})=a_{k}\ocomp\bigodot_{i=1}^{k}a_{l_{i}}\,,\qquad\mbox{with}\qquad\left(a_{k}\ocomp\bigodot_{i=1}^{k}a_{l_{i}}\right)(Z):=a_{k}(Z)\cdot\bigodot_{i=1}^{k}a_{l_{i}}(Z)\,.\]
Observe that the application on $\Sm$ is ill-defined in the generic
case, since the derivatives $\Sm^{\left(n\right)}$ do not have local
images for $n>1$. The composition product $\ocomp$ can be seen as
(the dual of) the termwise group action of the Stückelberg-Petermann
group on itself. We implement compatibility with the vertex grading
by defining \begin{align*}
\deg^{v}(a_{k}\ocomp\bigodot_{i=1}^{k}a_{l_{i}}) & :=\deg^{v}(a_{k})+\deg^{v}(\bigodot_{i=1}^{k}a_{l_{i}})\\
 & =k-1+\sum_{i=1}^{k}\left(l_{i}-1\right)=\sum_{i=1}^{k}l_{i}-1,\end{align*}
and with the Hopf algebra unit \[
\openone\ocomp a_{k}=a_{k}\ocomp\openone=a_{k}\,,\]
in accordance with {[}RG2{]}. The product is also compatible with
the coproduct, in the sense that we can define another convolution
on $\Aut(\Halg)$,\[
\phi\bullet_{\linc}\psi:=\linc\circ\left(\phi\otimes\psi\right)\circ\Delta\,.\]
We check that $e\circ\overline{e}$ is a (both sided) identity for
this convolution,\begin{align*}
\left(e\circ\overline{e}\bullet_{\linc}\psi\right)(a_{k}) & =\sum_{\Ptn\in\Part\left\{ 1,\dots,k\right\} }e\circ\overline{e}\left(a_{\left|\Ptn\right|}\right)\ocomp\psi(\bigodot_{I\in\Ptn}a_{\left|I\right|})\\
 & =\sum_{\Ptn\in\Part\left\{ 1,\dots,k\right\} }\delta_{\left|\Ptn\right|,1}\openone\ocomp\psi(\bigodot_{I\in\Ptn}a_{\left|I\right|})\\
 & =\psi(a_{k})\,,\end{align*}
where $\delta_{\left|\Ptn\right|,1}$ is the Kronecker-$\delta$.
Observe that $\left|\Ptn\right|=1$ implies $\Ptn=\left\{ \left\{ 1,\dots,k\right\} \right\} $.
Conversely we get\begin{align*}
\left(\phi\bullet_{\linc}e\circ\overline{e}\right)(a_{k}) & =\sum_{\Ptn\in\Part\left\{ 1,\dots,k\right\} }\phi\left(a_{\left|\Ptn\right|}\right)\ocomp\bigodot_{I\in\Ptn}e\circ\overline{e}(a_{\left|I\right|})\\
 & =\sum_{\Ptn\in\Part\left\{ 1,\dots,k\right\} }\phi\left(a_{\left|\Ptn\right|}\right)\ocomp\bigodot_{I\in\Ptn}\delta_{\left|I\right|,1}\\
 & =\phi(a_{k})\,.\end{align*}
Here the result is obtained since $\left|I\right|=1$ $\forall I\in\Ptn$
implies $\Ptn=\left\{ \left\{ 1\right\} ,\dots,\left\{ k\right\} \right\} $.
We define the $Z$-Feynman rules,\[
\feyn_{Z}(a_{k}):=a_{k}(Z)\equiv Z^{\left(k\right)}(0):\fps{\mathcal{F}_{\loc}(\M)}{\hbar}^{\otimes k}\rightarrow\fps{\mathcal{F}_{\loc}(\M)}{\hbar}\,.\]
These are algebra homomorphisms with respect to both algebra products,\[
\feyn_{Z}:\left(\Halg,\odot\right)\rightarrow\left(\Lin,\odot\right)\qquad\mbox{and}\qquad\feyn_{Z}:\left(\Halg,\ocomp\right)\rightarrow\left(\Lin,\cdot\right)\,,\]
where we denoted by $\Lin$ the space of multi-linear maps between
spaces of local functionals. We get at order $n$ of causal perturbation
theory the finite renormalizations, the changes of renormalization
scheme, or the action of the Stückelberg-Petermann renormalization
group on itself,\[
\left(\feyn_{Z_{1}}\bullet_{\linc}\feyn_{Z_{2}}\right)(a_{n})=\sum_{\Ptn\in\Part\{1,\dots,n\}}Z_{1}^{\left(\left|\Ptn\right|\right)}\cdot\left(\bigodot_{I\in\Ptn}Z_{2}^{\left(\left|I\right|\right)}\right).\]

Let us compute the right sided antipode of the convolution $\bullet_{\linc}$,\begin{align*}
\id_{\Halg}\bullet_{\linc}\ap_{\linc}(a_{n}) & =e\circ\overline{e}(a_{n})\\
\sum_{\Ptn\in\Part\left\{ 1,\dots,n\right\} }a_{\left|\Ptn\right|}\ocomp\ap_{\linc}(\bigodot_{I\in\Ptn}a_{\left|I\right|}) & =\delta_{n,1}\end{align*}
From $n=1$ we get $\ap_{\linc}(a_{1})=\openone$. Furthermore we
get from the connectedness of the Hopf algebra, i.e., from $a_{1}=\openone$,\[
\ap_{\linc}(a_{n})=-\sum_{\Ptn\in\Part\left\{ 1,\dots,n\right\} \backslash\left\{ \Ptn_{1}\right\} }a_{\left|\Ptn\right|}\ocomp\ap_{\linc}(\bigodot_{I\in\Ptn}a_{\left|I\right|})\,,\]
where, as before, $\Ptn_{1}=\left\{ \left\{ 1,\dots,n\right\} \right\} $.
Observe the similarity to the result of Connes-Kreimer in the commutative
case, i.e., the recursion formula (\ref{eq:CKantipode}) for the antipode
$\ap$. And observe the difference, the composition product $\ocomp$.
The augmented Hopf algebra\[
\left(\Halg,\odot,e,\Delta,\overline{e},\ocomp,\ap_{\linc}\right)\]
constructed above can be interpreted as the algebraic dual of the
St{\"u}ckelberg-Petermann renormalization group. We want to discuss
now the action of this algebraic dual on the (regularized) time-ordered
products, and its relation to the original formulation of the Connes-Kreimer
theory of renormalization.

Given the nice circumstance that we have a preferred renormalization
prescription at all orders of perturbation theory as it is provided
by any analytic regularization of the $\Sm$-matrix combined with
minimal subtraction. Then we define the regularized Feynman rules
as\[
\feyn_{\mu,\zeta}(a_{k}):=a_{k}(\Sm_{\mu,\zeta})\equiv\Sm_{\mu,\zeta}^{\left(k\right)}:\fps{\mathcal{F}_{\loc}(\M)}{\hbar}^{\otimes k}\rightarrow\fps{\mathcal{F}(\M)}{\hbar}\,.\]
Since the derivatives of $\Sm_{\mu,\zeta}$ have non-local images
in the generic case, we have that, in contrast to the $Z$-Feynman
rules above, the regularized Feynman rules cannot be iterated. This
is reflected in the algebraic setting by the fact that the regularized
Feynman rules are algebra homomorphisms with respect to $\odot$,
but not with respect to $\ocomp$, $ $\[
\feyn_{\mu,\zeta}(a_{k}\odot a_{l})=\feyn_{\mu,\zeta}(a_{k})\odot\feyn_{\mu,\zeta}(a_{l})\,.\]
Following the idea of Kreimer \citep{Kreimer1999}, we define\[
\ap_{\linc,\MS}^{\feyn_{\mu,\zeta}}:=R\circ\feyn_{\mu,\zeta}\circ\ap_{\linc}\,,\]
where $R$ denotes minimal subtraction, i.e., the {}``renormalization
map'',\[
R(\Sm_{\mu,\zeta}^{\left(k\right)})=\begin{cases}
\id & \mbox{for }k=1\\
\pp(\Sm_{\mu,\zeta}^{\left(k\right)}) & \mbox{for }k>1\,.\end{cases}\]
Assuming that $R$ is only applied to prepared time-ordered products,
we have that $R(\Sm_{\mu,\zeta}^{\left(k\right)})$ is a multi-linear
map from local functionals to local functionals, and then $R\circ\feyn_{\mu,\zeta}$
defines an algebra homomorphism of $\left(\Halg,\odot,\ocomp\right)$
with respect to both products\[
R\circ\feyn_{\mu,\zeta}\left(a_{k}\ocomp\bigodot_{i=1}^{k}a_{l_{i}}\right)=R(\Sm_{\mu,\zeta}^{\left(k\right)})\cdot\left(\bigodot_{i=1}^{k}R(\Sm_{\mu,\zeta}^{\left(l_{i}\right)})\right).\]
That $R$ is applied only to prepared time-ordered products is guaranteed
by the recursive definition of the antipode. In particular $R$ itself
is a homomorphism of the symmetrized tensor product $\odot$, which
makes the Rota-Baxter argument of \citep{Kreimer1999} redundant in
the presented framework (see Remark~\ref{rem:RotaBaxter} below).
We infer that $\ap_{\linc,\MS}^{\feyn,\mu,\zeta}$ is an algebra homomorphism\[
\ap_{\linc,\MS}^{\feyn,\mu,\zeta}:\left(\Halg,\odot\right)\rightarrow\left(\Lin,\odot\right)\,.\]
We have that $\ap_{\linc,\MS}^{\feyn_{\mu,\zeta}}(a_{1})=\id$, the
identity on local functionals, and for $n>1$, \[
\ap_{\linc,\MS}^{\feyn_{\mu,\zeta}}(a_{n})=-R\sum_{\Ptn\in\Part\left\{ 1,\dots,n\right\} \backslash\left\{ \Ptn_{1}\right\} }a_{\left|\Ptn\right|}(\Sm_{\mu,\zeta})\cdot\ap_{\linc,\MS}^{\feyn_{\mu,\zeta}}(\bigodot_{I\in\Ptn}a_{\left|I\right|})\,.\]
This is just the recursion formula we got for the counterterms (\ref{eq:CountertermsMSRecursionFormula})
by applying the minimal subtraction condition to the main theorem
of renormalization in the form of Lemma~\ref{lem:FaaDiBrunoSetPartitionVersion}.
Observe that one loses the information on the product $\ocomp$ if
one regards the amplitudes, or regularized time-ordered products as
elements in a commutative ring of Laurent series, only. However, similar
as in Connes-Kreimer theory of renormalization, we can define the
algebra homomorphism \begin{equation}
\ap_{\linc,\ren}^{\feyn_{\mu,\zeta}}=\feyn_{\mu,\zeta}\:\bullet_{\linc}\:\ap_{\linc,\MS}^{\feyn_{\mu,\zeta}}\,,\label{eq:FiniteAntipode}\end{equation}
which gives the finitely regularized $n$-fold time ordered product,
when applied to a generator $a_{n}\in\Halg$,\begin{equation}
\ap_{\linc,\ren}^{\feyn_{\mu,\zeta}}(a_{n})=\Sm_{\mu,\zeta,\ren}^{\left(n\right)}.\label{eq:FiniteAntipodeTimeOrdProd}\end{equation}
However, although (\ref{eq:FiniteAntipode}) gives the correct result
in the general case, i.e., for arbitrary local interactions, this
is merely a compact notation for the successive subtraction of counterterms
in the sense of BPH rather than a forest formula in the sense of Zimmermann
\citep{Zimmermann1969,Zimmermann1975} or Theorem~\ref{thm:EGFF}.
Solving the recursion would be equivalent to giving a closed formula
for the character $\ap_{\linc,\ren}^{\feyn_{\mu,\zeta}}$ in (\ref{eq:FiniteAntipodeTimeOrdProd}).
This was done in Section~\ref{sec:EGFF}, although not in this abstract
algebraic setting. After the remark on Rota-Baxter maps, we give a
graphical interpretation of the maps constructed above.

\vspace{-1mm}
\begin{rem}
[Rota-Baxter Maps]\label{rem:RotaBaxter}For the reader, less familiar
with the {}``Hopf algebra school'' in renormalization theory, we
probably have to remark here that a linear map fulfilling the algebraic
relation\begin{equation}
R(a)R(b)=R\left[R(a)b\right]+R\left[aR(b)\right]-R(ab)\,,\label{eq:RotaBaxter}\end{equation}
is called a \emph{Rota-Baxter map of weight one}. Examples of such
maps are the projections in a \emph{Birkhoff sum}, i.e., an algebra
which splits into a direct sum of algebras, both closed under multiplication,\[
A=A^{+}\oplus A^{-}.\]
The Laurent series with scalar coefficients are elements of a Birkhoff
sum. It is straight forward to show the above claim that any linear
projection in a Birkhoff sum, which projects to one of its components
is Rota-Baxter of weight one. Denote by $a^{+}+a^{-}\in A^{+}\oplus A^{-}$
the elements of a Birkhoff sum, then by linearity and the fact that
the components are closed under multiplication we have\begin{align*}
\left[\left(a^{+}+a^{-}\right)\left(b^{+}+b^{-}\right)\right]^{-} & =\left[a^{-}b^{+}+a^{+}b^{-}\right]^{-}+a^{-}b^{-}\\
 & =\left[a^{-}b^{+}+a^{+}b^{-}+2a^{-}b^{-}\right]^{-}-a^{-}b^{-}\\
 & =\left[a^{-}b\right]^{-}+\left[ab^{-}\right]^{-}-a^{-}b^{-}.\end{align*}
Following a suggestion by Brouder, Kreimer used this property for
the projection to the principal part of a Laurent series in order
to solve the {}``multiplicativity constraints'', see~\citep[Sec.~3]{Kreimer1999}.
Kreimer's {}``multiplicativity constraints'' were originally formulated
to have the map \[
\pp\circ\phi\circ\ap:\left(\Halg,\odot\right)\rightarrow\mathcal{L}^{-}\,,\]
defined as a character of the commutative Hopf algebra to the ring
of scalar Laurent series $\mathcal{L}$. $\phi:\Halg\rightarrow\Lrt$
denotes the character which induces Kreimer's (regularized) Feynman
rules, $\ap$ is the antipode of the commutative Hopf algebra, and
$\pp:\Lrt^{+}\oplus\Lrt^{-}\rightarrow\Lrt^{-}$ is the projection
to the principal part. Kreimer was able to show that the Rota-Baxter
condition implies the multiplicativity of $R\circ\phi\circ\ap$ (Prop.~2,
loc. cit.). More advanced topics related to Rota-Baxter algebras in
Physics and Mathematics partially induced by Kreimer's observation
can be found, e.g., in \citep{Ebrahimi-FardGuo2007}.

Observe, however, that in the framework advocated here all maps, and
in particular $R$, are homomorphisms of the commutative algebra $\left(\Halg,\odot\right)$,
so that a Rota-Baxter argument is not necessary, since any algebra
homomorphism, trivially, is a Rota-Baxter map of weight one.
\end{rem}

\subsection{Graphs}

The role of the additional composition product $\ocomp$ and antipode
$\ap_{\linc}$ in \[
\left(\Halg,\odot,e,\Delta,\overline{e},\ap,\ocomp,\ap_{\linc}\right)\,.\]
may become clearer if we break them down to the graph level. This
graphwise interpretation is regained, if we insert for $\Sm$ its
perturbative expansion (\ref{eq:PerturbationSeries}), and regard
the corresponding operations on the level of the graph contributions.
This can be done since all maps involved are linear and since by condition
{[}C3{]} all sums are finite at each fixed order of $\hbar$. Remember
that we only regard full vertex parts as subgraphs.

In accordance with the structure of $\left(\Halg,\odot,e\right)$,
we regard the abstract algebra of graphs with disjoint union $\dot{\cup}$
as product and the empty set $\emptyset$ as unit. Let $\Gamma\in\mathcal{G}$
be a graph with $n$ vertices, hence a contribution to $a_{n}(\Sm$).
Let $\Part^{c}V(\Gamma)$ be the set of all connected partitions of
the vertex set of $\Gamma$. By connected partition we mean a partition
$\Ptn$ whose blocks $I\in\Ptn$ give rise to connected full vertex
parts $\gamma_{I}\subset\Gamma$. We can restrict to connected partitions,
since the principal parts of the regularized amplitudes corresponding
to disconnected graphs vanish. Denote by $\Gamma/\Ptn$ the graph,
which has the blocks $I\in\Ptn$ as vertices and as lines the lines
in $\Gamma$, which connect different blocks of $\Ptn$. Let $\gamma_{I}$,
$I\in\Ptn$, be the full vertex part of the block $I\in\Ptn$. Then
we can write the coproduct on the level of graphs as\[
\Delta\Gamma=\sum_{\Ptn\in\Part^{c}V(\Gamma)}\Gamma/\Ptn\otimes\dot{\bigcup_{I\in\Ptn}}\gamma_{I}\,.\]
Applying the map $\id_{\Halg}\otimes\ap_{\linc}$ gives\[
\left(\id_{\Halg}\otimes\ap_{\linc}\right)\circ\Delta\Gamma=\sum_{\Ptn\in\Part^{c}V(\Gamma)}\Gamma/\Ptn\otimes\dot{\bigcup_{I\in\Ptn}}\ap_{\linc}(\gamma_{I})\,,\]
which corresponds to (recursively) computing the counterterms for
the connected subgraphs $\gamma_{I}$, e.g., in DimReg+MS. Applying
the composition product $\linc$, inserts the counterterms $\ap_{\linc}(\gamma_{I})$
at the vertices of $\Gamma/\Ptn$ to give one contribution to the
{}``renormalized graph'',\[
\id_{\Halg}\bullet_{\linc}\ap_{\linc}=\sum_{\Ptn\in\Part^{c}V(\Gamma)}\Gamma/\Ptn\ocomp\dot{\bigcup_{I\in\Ptn}}\ap_{\linc}(\gamma_{I})\,.\]
It is one of the results of the present thesis that this procedure,
performed on the level of analytically regularized amplitudes in arbitrary
representation (momentum or position space) leads to finitely regularized,
i.e., UV convergent integrals, and local counterterms in all orders
of perturbation theory.

\end{fmffile}

\chapter*{Conclusion}

The investigation undertaken for this thesis has shown that the methods
of dimensional regularization and minimal subtraction can consistently
be implemented into causal perturbation theory in the framework of
perturbative Algebraic Quantum Field Theory (pAQFT). This enriches
the framework by a renormalization technique, which has a preferred
extension at all orders of causal perturbation theory, given the fact
one disposes of an analytic regularization of the $\Sm$-matrix. A
concrete form of such a regularization was given in Minkowski position
space in terms of the dimensionally regularized scattering matrix,
$\Sm_{\mu,\zeta}$. It was proven that the incorporation of an analytically
regularized $\Sm$-matrix makes it possible to solve the Epstein-Glaser
induction at all orders in perturbation theory, and the result was
given in terms of the Epstein-Glaser forest formula (Theorem~\ref{thm:EGFF}).
This result was derived directly from the main theorem of renormalization
and was given in a form which is independent of the chosen representation.
In particular the derived forest formula is valid in both, momentum
and position space, whatever space its better suited for the concrete
calculation.

Besides this forest formula, I gave a direct derivation of the Hopf
algebra of Feynman graphs from the main theorem of perturbative renormalization.
This Hopf algebra first occurred in the work of Kreimer and Connes-Kreimer
in their analysis of BPHZ renormalization \citep{Kreimer1998,Connes2000,Connes2001}
and was later found also in causal perturbation theory \citep{Gracia-Bondia2000,Pinter2000b}.
In the present thesis the Hopf algebra of graphs was derived in a
summed up form, i.e. the elements can be regarded as sums over all
graphs with the same set of vertices. This is in accordance with the
findings of \citep{BrouderFrabetti2000,BrouderFrabetti2001,Frabetti2007,Suijlekom2007c}.
The reduced Hopf algebra of Pinter with only full vertex parts can
be derived by linearity, however, the pure BPHZ subgraphs of the Connes-Kreimer
approach do not emerge here. This is in accordance with the proof
of Zimmermann that pure BPHZ graphs do not contribute to the renormalized
amplitude \citep{Zimmermann1975}. Although the Hopf algebra of graphs
emerged as a commutative, non-cocommutative Hopf algebra, we could
show that it is necessary to augment it with a non-commutative product
stemming from the composition of linear maps in order to get the recursion
relation for the pAQFT counterterms, which has been derived independently
from the main theorem. The recursion relation for the counterterms
is described algebraically as the antipode of the convolution induced
by the coproduct and the additional non-commutative composition product.
A main difference to the Connes-Kreimer theory of renormalization
is that in the algebraic setting described here the Feynman rules
emerge naturally from the theory and are not assumed to give Laurent
series with scalar coefficients. I see applications of these results
in three main areas of current research in physics, mathematical physics,
and mathematics.

First, physics. The forest formula was proven directly for the time-ordered
products. However, we also gave a {}``graph form'' of the formula,
which could be relevant for concrete computations. I want to emphasize,
that the combinatorial pattern underlying the Epstein-Glaser forest
formula is much simpler than the one underlying Zimmermann's original
version, which is still used in modern computations. Spurious subtractions
do not occur in the EG forest formula. Although the spurious subtractions
do not play a role in QED calculations, they do occur in Quantum Chromo
Dynamics (QCD), since this theory has a four valent vertex. Besides
this, there is a second simplification in the forest formula proven
in this work. I showed that one can replace the Zimmermann forests
of full vertex parts by totally ordered sets of partitions of the
vertex set. This simplifies a lot the intricate combinatorics of Zimmermann's
forest formula, and might make it possible to implement the advocated
method in an algorithm. This, in turn, is certainly relevant for the
computation of higher order contributions to the perturbative expansion
in high energy physics phenomenology. Since the method was proven
for any analytic regularization, also gauge symmetries should be preserved
if one chooses a regularization which preserves these symmetries.
This assertion is affirmed by the concrete computations of \citep{Falk2009},
however, the case of gauge theories was not discussed in the present
thesis.

Second, mathematical physics. The covariant formulation of perturbative
Algebraic Quantum Field Theory makes the formalism applicable also
in curved, globally hyperbolic spacetimes. Although the construction
of the regularized $\Sm$-matrix was done in the present thesis for
Minkowski space, the results of the last chapter, in particular the
recursion relation for the counterterms (Corollary~\ref{cor:RecursionMSCounterterms})
and the Epstein-Glaser forest formula (Theorem~\ref{thm:EGFF}) were
derived in the more general, covariant framework. Thus they can be
applied directly, given the fact that one disposes of an analytically
regularized $\Sm$-matrix. Considering the convenient properties dimensional
regularization has in flat spacetime when it comes to gauge theories,
one may want to have a similar concept in curved spacetime. However,
in the construction of the dimensionally regularized $\Sm$-matrix
in \linebreak[4]Minkowski spacetime we made explicit use of translation
invariance, and the choice of relative coordinates was made using
the graph cohomology. As shown in \citep{Brunetti2000} the wave front
set condition on local functionals can be understood as a microlocal
remnant of translation invariance. However, one has to understand
better the interplay of this microlocal condition with the graph cohomology
in order to give a direct translation of the results.

Third, for mathematics. The analysis of algebraic structures is an
active field of research in pure mathematics which is of interest
in its own right. The Hopf algebra found by Connes and Kreimer in
perturbative renormalization theory affected this research on a profound
basis, and the relation to the main theorem of perturbative renormalization
and the framework of pAQFT which was established in this thesis could
possibly be a new seed for research in this field. I showed that it
is necessary to incorporate an additional composition structure into
the Hopf algebra to have an interpretation for the antipode in terms
of minimally subtracted counterterms in pAQFT, with the merit of having
naturally emerging Feynman rules. There are many more intriguing questions
about the connection of perturbative quantum field theory to pure
mathematics. Questions about the role of multiple zeta values, graph
polynomials, shuffle and stuffle products and the like in (algebraic)
quantum field theory and causal perturbation theory. Such relations
were established on the level of examples in the pioneering works
of Connes and Kreimer \citep{Connes2000,Connes2001} and Bloch, Esnault,
and Kreimer \citep{BlochEsnaultKreimer2006}. One might hope that
the tools developed in this thesis contribute to further investigation
of the suggested relations, and to a better understanding of the relation
of the framework of perturbative algebraic quantum field theory to
the more abstract algebraic setting of Connes and Marcolli \citep{Connes2004a,Connes2004b}.

\begin{appendix}

\chapter{\label{app:ModifiedBesselFunctions}Solutions of the Modified Bessel
equation}

In this Appendix we briefly review the solution theory of the (modified)
Bessel equation. The interested reader may want to refer to \citep{SpainSmith1970}
(e.g.) for a more detailed discussion of the topic. The modified Bessel
equation%
\footnote{The modified Bessel equation (\ref{eq:ModifiedBesselEquation}) is
related to Bessel's differential equation by the coordinate transform
$x\mapsto ix$.%
}\begin{equation}
\frac{d^{2}y}{dx^{2}}+\frac{1}{x}\frac{dy}{dx}-\frac{1}{x^{2}}\left(x^{2}+\nu^{2}\right)y=0\,,\quad\Re(\nu)\geq0\label{eq:ModifiedBesselEquation}\end{equation}
is a second order ordinary differential equation with a \emph{regular
singular point} at the origin. That is, (at least one of) the coefficients
$p(x)$ of $\frac{dy}{dx}$ and $q(x)$ of $y$ are singular at $x=0$,
but $x\, p(x)$ and $x^{2}q(x)$ are regular in a neighborhood of
zero. Let \[
x\, p(x)=\sum_{n=0}^{\infty}p_{n}x^{n}=1\qquad\mbox{and}\qquad x^{2}q(x)=\sum_{n=0}^{\infty}q_{n}x^{n}=-\left(x^{2}+\nu^{2}\right)\]
be the corresponding Taylor expansions. A differential equation with
a regular singular point at $0$ is solved with the ansatz \[
y=x^{\alpha}\sum_{s=0}^{\infty}c_{s}x^{s}\,,\quad c_{0}\neq0\,.\]
In order $x^{\alpha}$ one finds the \emph{indicial equation}\[
\alpha\left(\alpha-1\right)+p_{0}\alpha+q_{0}=0\,,\]
whose roots are called the \emph{exponents} of the differential equation.
In the case of the (modified) Bessel equation (\ref{eq:ModifiedBesselEquation}),
we evidently have\[
\alpha\left(\alpha-1\right)+\alpha-\nu^{2}=\left(\alpha+\nu\right)\left(\alpha-\nu\right)=0\,,\]
hence (\ref{eq:ModifiedBesselEquation}) has exponents $\alpha\in\left\{ \pm\nu\right\} $.
It is a straight forward calculation to see that, in the case $\nu\in\mathbb{C}\backslash\mathbb{N}_{0}$,
we have $c_{1}=0$ and \begin{equation}
c_{s}=\frac{c_{s-2}}{\left(s+\alpha+\nu\right)\left(s+\alpha-\nu\right)}\,,\quad\alpha\in\left\{ \pm\nu\right\} ,\quad\nu\in\mathbb{C}\backslash\mathbb{N}_{0}.\label{eq:ModifiedBesselRecursion}\end{equation}
This leads to the linear independent set of solutions $\left\{ I_{-\nu},I_{\nu}\right\} $,
where\begin{equation}
I_{\nu}(x)=\sum_{s=0}^{\infty}\frac{1}{s!\,\Gamma(\nu+s+1)}\left(\frac{x}{2}\right)^{\nu+2s}\,,\quad\nu\in\mathbb{C}\backslash\mathbb{N}_{0}.\label{eq:ModifiedBesselFirstKind}\end{equation}
The functions $I_{\nu}$ and $I_{-\nu}$ are called modified Bessel
function of first kind, they are related to the Bessel functions of
first kind $J_{\nu}$ by\[
I_{\nu}=i^{-\nu}J_{\nu}(ix)\,.\]

Observe that in the case of integer order, $n\in\mathbb{N}_{0}$,
$I_{n}=I_{-n}$, as can also be seen from the recursion relations
(\ref{eq:ModifiedBesselRecursion}). This is a general feature of
(second order) ordinary differential equations with regular singular
point. Problems occur, when their exponents differ by an integer value.

A linearly independent set of solutions of (\ref{eq:ModifiedBesselEquation})
for arbitrary order can be constructed in the following way. For non-integer
order, $\nu\in\mathbb{C}\backslash\mathbb{N}_{0}$, we replace $I_{-\nu}$
by \begin{equation}
K_{\nu}:=\frac{\pi}{2\sin(\nu\pi)}\left[I_{-\nu}-I_{\nu}\right]\,,\quad\nu\in\mathbb{C}\backslash\mathbb{N}_{0}\,,\label{eq:BesselFunctionsSecondKind}\end{equation}
evidently giving a linearly independent set of solutions $\left\{ I_{\nu},K_{\nu}\right\} $
for non-integer order. The limit \[
K_{n}:=\lim_{\nu\rightarrow n}K_{\nu}\]
exists for all integers $n\in\mathbb{N}_{0}$, and $\left\{ I_{n},K_{n}\right\} $
is a complete, linear independent set of solutions of (\ref{eq:ModifiedBesselEquation}).
That $\left\{ I_{n},K_{n}\right\} $ is linearly independent, also
in the case $n\in\mathbb{N}_{0}$ can be seen by the following argument.
Just setting $\nu=n$ clearly results in the situation $\frac{0}{0}$,
hence we apply l'Hôspital's rule to compute the limit \[
\lim_{\nu\rightarrow n}K_{\nu}=\frac{\pi}{2}\left[\frac{\left(\partial_{\nu}I_{-\nu}\right)-\partial_{\nu}I_{\nu}}{\pi\cos(\nu\pi)}\right]_{\nu=n}=\tfrac{\left(-1\right)^{n}}{2}\left[\left(\partial_{\nu}I_{-\nu}\right)-\partial_{\nu}I_{\nu}\right]_{\nu=n}\]
We have seen that $I_{\nu}(x)\sim x^{\nu}f_{\nu}(x^{2})$, with entire
analytic functions $f_{\nu}$, hence the derivatives of $I_{\nu}$
with respect to $\nu$ will introduce logarithmic terms, $\partial_{\nu}\left(x^{\nu}\right)=\ln(x)\cdot x^{\nu}$,
which do not cancel, since $f_{\nu}\neq f_{-\nu}$ for all $\nu\neq0$.

The introduction of logarithmic terms in the limit $\nu\rightarrow n$
is not a feature of the special choice of $K_{\nu}$ but merely a
consequence of the singularity structure of the (modified) Bessel
differential equation at the origin (see e.g. \citep[Sec. 1.6]{SpainSmith1970}).

\end{appendix}

\begin{center}
\textbf{\Huge Danke.}
\par\end{center}{\Huge \par}

\vspace{15mm}

\thispagestyle{empty}

\noindent Mein größter Dank gilt dem Betreuer meiner Dissertation,
Klaus Fredenhagen. Ohne seinen Überblick und sein tiefes Verständnis
wären die Ergebnisse dieser Arbeit wohl kaum zu Stande gekommen. Neben
den vielen Dingen, die ich hier als Dankesgründe anführen könnte möchte
ich ihm vor allem für seine Geduld in den gemeinsamen Diskussionen
und während der gesamten Bearbeitungszeit danken. Ich hätte mir keinen
besseren Betreuer für meine Doktorarbeit wünschen können.

\noindent Ich möchte mich bei der gesamten Hamburger AQFT-Gruppe
bedanken, die mir immer mit Rat und Tat zur Seite stand. Vor allem
die ungezwungenen Diskussionen beim Kaffee werden mir fehlen. Allen
voran möchte ich Thomas Hack danken, {}``Sag mal Thomas, Du kennst
Dich doch aus mit ...'' I'd like to express my gratidude to our PostDocs
Nicola Pinamonti, Claudio Dappiaggi, and Pedro \linebreak[3]Lauridsen
Ribeiro for many helpful discussions with and without coffee and cookies.
Bei Katarzyna Rejzner möchte ich mich für die hilfreichen Hinweise
in der Endphase der Arbeit bedanken, und ich begrüße natürlich alle
neuen Mitglieder der Arbeitsgruppe.

\noindent Ich danke Christian Fleischhack f{\"u}r den {}``Privatunterricht''
in Funktionalanalysis.

\noindent Ich möchte meinem guten Freund und Weggefährten Christian
Bogner für viele hilfreiche Diskussionen und seine moralische Unterstützung
danken.

\noindent I'd like to thank Romeo Brunetti and Michael Dütsch, who
clarified more than one misunderstanding on my side.

\noindent Many thanks go to Alessandra Frabetti, Christian Brouder,
and José Gracia-Bondía for the discussions in Cargèse. I am grateful
to José for directing my attention to the almost forgotten proof by
Zimmermann and for kindly accepting to be a referee of this thesis.
I want to thank Alessandra for the invitation to Lyon, many discussions,
manifestations, and the great hospitality I was offered there.

\noindent Además quiero decirles un gran Gracias a los organizadores
de la escuela de verano en Villa de Leyva. Gracias a Andrés Reyes
por su amistad, la invitación y su apoyo. Hasta la próxima! Gracias
a Sylvie Paycha y Francis Brown por las discussiones sobre los aspectos
matemáticos de las integrales y las diagramas de Feynman.

\noindent I'd like to thank Christoph Bergbauer for his useful remarks
and corrections.

\noindent Ich möchte meiner Freundin Corinna für so vieles danken,
mir fehlen nur die Worte. \hspace{50mm}\textifsymbol{"0C} \textifsymbol{"1C} \textifsymbol[ifwea]{"11}

\noindent Ich danke meiner Familie, meiner Schwester Marion und meinen
Eltern, auf deren Unterstützung ich auch über die Entfernung immer
bauen konnte.

\chapter*{References}

\nocite{CaswellKennedy1982}

\begin{spacing}{1.23}

\bibliographystyle{kaialpha}

\newcommand{\etalchar}[1]{$^{#1}$}
\begin{thebibliography}{EEMLRRVM99}

\bibitem[Bat]{BatemanProjectVol2}


\bibitem[Kun]{Kunz2007}


\bibitem[Ovs]{OvsienkoTabachnikov2005}


\bibitem[Veb]{Veblen1946}


\bibitem[Con02]{ConfHesselberg2002}
{T}heory of {R}enormalization and {R}egularization.
\newblock 2002.
\newblock \url{http://wwwthep.physik.uni-mainz.de/~scheck/Hessbg02.html}.

\bibitem[???07]{PoincareGroup2007}
???
\newblock {T}he {P}oincar{\'e} {G}roup.
\newblock Zusammenfassung, 2007.

\bibitem[Abd02]{Abdesselam2002}
A.~Abdesselam.
\newblock {Feynman diagrams in algebraic combinatorics}.
\newblock 2002.

\bibitem[Abe77]{Abe1977}
E.~Abe.
\newblock {\em {H}opf algebras}.
\newblock Cambridge UP, 2004 edition, 1977.

\bibitem[AS70]{Abramowitz1970}
M.~Abramowitz and I.~A. Stegun, editors.
\newblock {\em {H}andbook of {M}athematical {F}unctions with {F}ormulas,
  {G}raphs, and {M}athematical {T}ables}.
\newblock Dover, 1970.

\bibitem[AA87]{AhAn87}
Y.~Aharonov and J.~Anandan.
\newblock {P}hase {C}hange during {C}yclic {Q}uantum {E}volution.
\newblock {\em Phys. Rev. Lett.}, 58(16):1593--1596, 1987.

\bibitem[AB59]{Aharonov1959}
Y.~Aharonov and D.~Bohm.
\newblock {S}ignificance of {E}lectromagnetic {P}otential in {Q}uantum
  {T}heory.
\newblock {\em Phys. Rev.}, 115(3):485--491, 1959.

\bibitem[All]{Allan}
A.~Allan.
\newblock {S}olutions to {A}tiyah {M}acdonald.
\newblock
  \href{http://www.math.uchicago.edu/~allanaa}{http://www.math.uchicago.edu/~a%
llanaa}.
\newblock \url{http://www.math.uchicago.edu/~allanaa}.

\bibitem[AB95]{ALPBELL}
J.~L. Alperin and R.~B. Bell.
\newblock {\em {G}roups and {R}epresentations}, volume 162 of {\em Graduate
  Texts in Mathematics}.
\newblock Springer, New York, 1995.

\bibitem[AGG84]{Alvarez-Gaume1984}
L.~Alvarez-Gaum{\'e} and P.~Ginsparg.
\newblock {T}he {T}opological {M}eaning of {N}on-{A}belian {A}nomalies.
\newblock {\em Nuclear Physics B}, 243:449--474, 1984.

\bibitem[Ana01]{Anastopoulos2001}
C.~Anastopoulos.
\newblock {Q}uantum {T}heory without {H}ilbert {S}paces.
\newblock {\em Foundations of Physics}, 31(11):1545--1580, 2001.

\bibitem[Ana03]{Anastopoulos2003}
C.~Anastopoulos.
\newblock {Q}uantum processes on phase space.
\newblock {\em Ann. Physics}, 303:275--320, 2003.

\bibitem[AEP05]{AnshelevichEffrosPopa2005}
M.~Anshelevich, E.~G. Effros, and M.~Popa.
\newblock {Z}immermann {T}ype {C}ancellation in the {F}ree {F}a{\`a} di {B}runo
  {A}lgebra.
\newblock {\em Journal of Functional Analysis}, 237(1):76--104, 2005.
\newblock \href{http://arxiv.org/abs/math/0504436}{arXiv:math/0504436}.

\bibitem[AIT90]{Antoine:1990zza}
J.-P. Antoine, A.~Inoue, and C.~Trapani.
\newblock {Partial *-algebras of closable operators. 1. The basic theory and
  the abelian case}.
\newblock {\em Publ. Res. Inst. Math. Sci. Kyoto}, 26:359--395, 1990.

\bibitem[AIT91]{Antoine:1991zza}
J.-P. Antoine, A.~Inoue, and C.~Trapani.
\newblock {Partial *-algebras of closable operators. 2. States and
  representations of partial *-algebras}.
\newblock {\em Publ. Res. Inst. Math. Sci. Kyoto}, 27:399--430, 1991.

\bibitem[AIT96]{Antoine:1996zzb}
J.-P. Antoine, A.~Inoue, and C.~Trapani.
\newblock {Partial *-algebras of closable operators: A review}.
\newblock {\em Rev. Math. Phys.}, 8:1--42, 1996.

\bibitem[AH67]{ArakiHaag1967}
H.~Araki and R.~Haag.
\newblock {C}ollision cross sections in terms of local observables.
\newblock {\em Commun. Math. Phys.}, 4(2):77--91, April 1967.

\bibitem[Art47]{Artin:47}
E.~Artin.
\newblock {T}heory of braids.
\newblock {\em Ann. Math.}, 48:101--126, 1947.

\bibitem[Art57]{ARTIN}
E.~Artin.
\newblock {\em {G}eometric {A}lgebra}.
\newblock Interscience Publ., New York, 1957.

\bibitem[Ash72]{Ashmore1972}
J.~Ashmore.
\newblock {A} method of gauge-invariant regularization.
\newblock {\em Lettere Al Nuovo Cimento}, 4(8):289--290, June 1972.

\bibitem[Ash73]{Ashmore1973}
J.~F. Ashmore.
\newblock {O}n renormalization and complex space-time dimensions.
\newblock {\em Communications in Mathematical Physics}, 29(3):177--187,
  September 1973.

\bibitem[AS97]{Ashtekar1997}
A.~Ashtekar and T.~A. Schilling.
\newblock {G}eometric {F}ormulation of {Q}uantum {M}echanics.
\newblock arXiv:gr-qc/9706069 v1, Jun 1997.

\bibitem[ACdO83]{AsCaOl83}
M.~Asorey, J.~F. Cari{\~n}ena, and M.~A. del Olmo.
\newblock {V}ector bundle representations of groups in quantum physics.
\newblock {\em J. Phys. A}, 16:1603--1609, 1983.

\bibitem[AT03]{AsteTrautmann2003}
A.~Aste and D.~Trautmann.
\newblock {F}inite calculation of divergent self-energy diagrams.
\newblock {\em Can. J. Phys.}, 81(12):1433--1445, 2003.

\bibitem[Ati88]{Atiyah1988}
M.~F. Atiyah.
\newblock {T}opological quantum field theory.
\newblock {\em Publ. Math. IHES}, 68:175--186, 1988.

\bibitem[Aus06]{Austin}
C.~Austin.
\newblock {A} {BPHZ} convergence proof in {E}uclidean position space.
\newblock 2006.

\bibitem[Avr99]{AvrachenkovPHD1999}
K.~Avrachenkov.
\newblock {\em {A}nalytic {P}erturbation {T}heory and its {A}pplications}.
\newblock PhD thesis, U of South Australia, 1999.
\newblock
  \url{http://www-sop.inria.fr/mistral/personnel/K.Avrachenkov/me.html}.

\bibitem[Avr01]{Avramidi2001}
I.~G. Avramidi.
\newblock {E}ffective {A}ction {A}pproach to {Q}uantum {F}ield {T}heory.
\newblock
  \href{http://infohost.nmt.edu/~iavramid/}{http://infohost.nmt.edu/~iavramid/%
}, 2001.
\newblock \url{http://infohost.nmt.edu/~iavramid/}.

\bibitem[BSZ92]{BaezSegalZhou1992}
J.~C. Baez, I.~E. Segal, and Z.~Zhou.
\newblock {\em {I}ntroduction to {A}lgebraic and {C}onstructive {Q}uantum
  {F}ield {T}heory}.
\newblock Princeton University Press, 1992.

\bibitem[BS08]{BahnsSchweigert2008}
D.~Bahns and C.~Schweigert.
\newblock {\em {S}oftwarepraktikum - {A}nalysis und {L}ineare {A}lgebra}.
\newblock 2008.

\bibitem[BGS87]{Balachandran1987}
A.~Balachandran, H.~Gomm, and R.~Sorkin.
\newblock {Q}uantum {S}ymmetries from {Q}uantum {P}hases, {F}ermions from
  {B}osons, {A} $\mathbb{Z}_2$ {A}nomaly and {G}alilean {I}nvariance.
\newblock {\em Nuclear Physics B}, 281:573--612, 1987.

\bibitem[BF09]{QFTonCST2009}
C.~B{\"a}r and K.~Fredenhagen, editors.
\newblock {\em {Q}uantum {F}ield {T}heory on {C}urved {S}pacetimes - {C}oncepts
  and {M}athematical {F}oundations}, volume 786 of {\em Lecture Notes in
  Physics}, Berlin Heidelberg, 2009. Springer.

\bibitem[BGP06]{Baer2006}
C.~B{\"a}r, N.~Ginoux, and F.~Pf{\"a}ffle.
\newblock {\em {W}ave {E}quations on {L}orentzian {M}anifolds and
  {Q}uantization}.
\newblock 2006.
\newblock \href{arXiv:0806.1036v1
  [math.DG]}{http://xxx.lanl.gov/abs/0806.1036v1}.
\newblock \url{http://xxx.lanl.gov/abs/0806.1036v1}.

\bibitem[Bar64]{BARGMANN}
V.~Bargmann.
\newblock {N}ote on {W}igner's {T}heorem on {S}ymmetry {O}perations.
\newblock {\em J. Math. Phys.}, 5(7):862--868, 1964.

\bibitem[Bas06]{Basu2006}
B.~Basu.
\newblock {R}elation between concurrence and {B}erry phase of an entangled
  state of two spin 1/2 particles.
\newblock arXiv:quant-ph/0602089 v1, Feb 2006.

\bibitem[Bau92]{Bauer1992}
H.~Bauer.
\newblock {\em {M}a{\ss}- und {I}ntegrationstheorie}.
\newblock de Gruyter, 1992.

\bibitem[Bau93]{BAUM}
H.~Baum.
\newblock {\em {\"U}ber {A}bbildungen in {S}ph{\"a}ren und komplexe projektive
  {R}{\"a}ume}.
\newblock PhD thesis, Universit{\"a}t Mannheim, 1993.

\bibitem[BK96]{BaumKath1996}
H.~Baum and I.~Kath.
\newblock {N}ormally hyperbolic operators, the {H}uygens property and conformal
  geometry.
\newblock {\em Annals of Global Analysis and Geometry}, 14(4):315--371,
  November 1996.

\bibitem[BFF{\etalchar{+}}78]{Bayen1978-I+II}
F.~Bayen, M.~Flato, C.~Fronsdal, A.~Lichnerowicz, and D.~Sternheimer.
\newblock {D}eformation theory and quantization {I}+{II}.
\newblock {\em Ann. Physics}, 111(1):61, 111, 1978.

\bibitem[BRS75]{BecchiRouetStora1975}
C.~Becchi, A.~Rouet, and R.~Stora.
\newblock {R}enormalization of the abelian {H}iggs-{K}ibble model.
\newblock {\em Communications in Mathematical Physics}, 42(2):127--162, June
  1975.

\bibitem[Bel64]{Bell1964}
J.~S. Bell.
\newblock {O}n the {E}instein {P}odolsky {R}osen {P}aradox.
\newblock {\em Physics}, 1:195--200, 1964.

\bibitem[Ben07]{Benavides2007}
C.~Benavides.
\newblock {C}anonical {G}roup {Q}uantization and {R}otation {G}enerators for
  {I}ndistinguishable {P}articles.
\newblock Master's thesis, Universidad de los Andes, Bogot{\'a}, Colombia,
  2007.

\bibitem[BK05]{Bergbauer2005}
C.~Bergbauer and D.~Kreimer.
\newblock {T}he {H}opf {A}lgebra of {R}ooted {T}rees in {E}pstein-{G}laser
  {R}enormalization.
\newblock {\em Ann. Inst. Henri Poincar{\'e}}, 6(2):343--367, 2005.

\bibitem[BBK09]{BergbauerBrunettiKreimer2009}
C.~Bergbauer, R.~Brunetti, and D.~Kreimer.
\newblock {R}enormalization and resolution of singularities.
\newblock 2009.
\newblock \href{http://arxiv.org/abs/0908.0633}{arXiv:0908.0633}.

\bibitem[BM94]{BergenMontgomery1994}
J.~Bergen and S.~Montgomery, editors.
\newblock {\em {A}dvances in {H}opf {A}lgebras}, New York Basel Hong Kong,
  1994. Marcel Dekker, Inc.

\bibitem[BL76]{BergereLam1976}
M.~C. Berg{\`e}re and Y.-M.~P. Lam.
\newblock {B}ogolubov--{P}arasiuk theorem in the alpha-parametric
  representation.
\newblock {\em J. Math. Phys.}, 17(8):1546--1557, August 1976.

\bibitem[BZ74]{BergereZuber1974}
M.~C. Berg{\`e}re and J.~B. Zuber.
\newblock {R}enormalization of {F}eynman amplitudes and parametric integral
  representation.
\newblock {\em Communications in Mathematical Physics}, 35(2):113--140, June
  1974.

\bibitem[BCK{\etalchar{+}}]{BCKRS2006}
Z.~Bern, M.~Czakon, D.~A. Kosower, R.~Roiban, and V.~A. Smirnov.
\newblock {T}wo-{L}oop {I}teration of {F}ive-{P}oint {N}=4
  {S}uper-{Y}ang-{M}ills {A}mplitudes.

\bibitem[Ber84]{Berry1984}
M.~V. Berry.
\newblock {Q}uantal {P}hase {F}actors {A}ccompanying {A}diabatic {C}hanges.
\newblock {\em Proceedings of the Royal Society of London. Series A,
  Mathematical and Physical Sciences (1934-1990)}, 392(1802):45--57, March
  1984.

\bibitem[Ber88]{Berry1988}
M.~V. Berry.
\newblock {T}he {Q}uantum {P}hase, {F}ive {Y}ears {A}fter.
\newblock In A.~Shapere and F.~Wilczek, editors, {\em Geometric Phases in
  Physics}, volume~5 of {\em Advanced Series in Mathematical Physics}, chapter
  Introduction [1.1], pages 7--28. World Scientific, 1988.
\newblock
  \url{http://books.google.de/books?id=5jOvlny96AkC&printsec=frontcover&dq=geo%
metric+phases+in+physics&sig=GKNLX30O6_1wSkKeDWvGPIjg0zM#PPP1,M1}.

\bibitem[BR97]{Berry1997}
M.~V. Berry and J.~M. Robbins.
\newblock {I}ndistinguishability for quantum particles: spin, statistics and
  the geometric phase.
\newblock {\em Proc. R. Soc. Lond. A}, 453:1771--1790, 1997.

\bibitem[BCL08]{Bertozzini2008}
P.~Bertozzini, R.~Conti, and W.~Lewkeeratiyutkul.
\newblock {N}on-{C}ommutative {G}eometry, {C}ategories and {Q}uantum {P}hysics.
\newblock 2008.
\newblock \href{http://arxiv.org/abs/0801.2826}{arXiv:0801.2826v1 [math.OA]}.

\bibitem[BBG07]{Bervillier2007}
C.~Bervillier, B.~Boisseau, and H.~Giacomini.
\newblock {A}nalytical approximation schemes for solving exact renormalization
  group equations in the local potential approximation.
\newblock 2007.
\newblock \href{http://arxiv.org/abs/0706.0990}{arXiv:0706.0990v2 [hep-th]}.

\bibitem[BBG08]{Bervillier2008}
C.~Bervillier, B.~Boisseau, and H.~Giacomini.
\newblock {A}nalytical approximation schemes for solving exact renormalization
  group equations. {II} {C}onformal mappings.
\newblock 2008.
\newblock \href{http://arxiv.org/abs/0802.1970}{arXiv:0802.1970v1 [hep-th]}.

\bibitem[Bie06]{Bierenbaum2006}
I.~Bierenbaum.
\newblock {\em {T}he massless two-loop two-point function and zeta functions in
  counterterms of {F}eynman diagrams}.
\newblock PhD thesis, Johannes Gutenberg-Universit{\"a}t Mainz, Institut
  f{\"u}r Physik (WA THEP), 2006.

\bibitem[BT04]{jaxodraw}
D.~Binosi and L.~Theu{\ss}l.
\newblock {J}axo{D}raw: {A} graphical user interface for drawing {F}eynman
  diagrams.
\newblock {\em Computer Physics Communications}, 161(1-2):76--86, 2004.

\bibitem[BD86]{BirrellDavies1986}
Birrell and Davies.
\newblock {\em {Q}uantum {F}ields in {C}urved {S}pacetime}.
\newblock Cambridge UP, 1986.

\bibitem[BC64]{Bishop1964}
R.~L. Bishop and R.~J. Crittenden.
\newblock {\em {G}eometry of {M}anifolds}.
\newblock Academic Press, New York, London, 1964.

\bibitem[BD65]{BjorkenDrell1965}
J.~D. Bjorken and S.~D. Drell.
\newblock {\em {R}elativistic {Q}uantum {F}ields}.
\newblock McGraw-Hill Book Company, 1965.

\bibitem[BD98]{BJO_DRELL}
J.~D. Bjorken and S.~D. Drell.
\newblock {\em {R}elativistische {Q}uantenmechanik}.
\newblock Spektrum Akademischer Verlag, Berlin and Heidelberg, 1998.

\bibitem[BV05]{Blagojevic2005}
M.~Blagojevi{\'c} and M.~Vasili{\'c}.
\newblock {O}n the classical central charge.
\newblock arXiv:hep-th/0410111 v2, Sep 2005.

\bibitem[BB03]{BlanchardBruening2003}
P.~Blanchard and E.~Br{\"u}ning.
\newblock {\em {M}athematical {M}ethods in {P}hysics: {D}istributions,
  {H}ilbert {S}pace {O}perators, and {V}ariational {M}ethods}.
\newblock Birkh{\"a}user, 2003.

\bibitem[BS75]{Blanchard1975}
P.~Blanchard and R.~S{\'e}n{\'e}or.
\newblock {G}reen's functions for theories with massless particles (in
  perturbation theory).
\newblock {\em Ann. Inst. Henri Poincar{\'e} A}, 23(2):147--209, 1975.

\bibitem[BEK06]{BlochEsnaultKreimer2006}
S.~Bloch, H.~Esnault, and D.~Kreimer.
\newblock {On Motives Associated to Graph Polynomials}.
\newblock {\em Commun. Math. Phys.}, 267:181--225, 2006.

\bibitem[BK08]{Bloch2008}
S.~Bloch and D.~Kreimer.
\newblock {M}ixed {H}odge {S}tructures and {R}enormalization in {P}hysics.
\newblock 2008.
\newblock \href{http://arxiv.org/abs/0804.4399}{arXiv:0804.4399}.

\bibitem[Boa00]{Boas2000}
F.-M. Boas.
\newblock {\em {G}auge theories in local causal perturbation theory}.
\newblock PhD thesis, DESY-THESIS-1999-032, 2000.
\newblock \href{http://arxiv.org/abs/hep-th/0001014}{arXiv:hep-th/0001014v1}.
\newblock \url{http://arxiv.org/abs/hep-th/0001014}.

\bibitem[BW09]{BognerWeinzierl2009}
C.~Bogner and S.~Weinzierl.
\newblock {P}eriods and {F}eynman integrals.
\newblock {\em J. Math. Phys.}, 50:042302,2009, 2009.
\newblock \href{http://arxiv.org/abs/0711.4863}{arXiv:0711.4863 [hep-th]}.

\bibitem[BW10]{BognerWeinzierl2010}
C.~Bogner and S.~Weinzierl.
\newblock {F}eynman graph polynomials.
\newblock 2010.
\newblock \href{http://arxiv.org/abs/1002.3458}{arXiv:1002.3458 [hep-th]}.

\bibitem[BS59]{BogoliubovShirkov1959}
N.~N. Bogoliubov and D.~V. Shirkov.
\newblock {\em {I}ntroduction to the {T}heory of {Q}uantized {F}ields}.
\newblock Interscience Publishers, 1959.

\bibitem[BP57]{Bogoliubow1957}
N.~N. Bogoliubow and O.~S. Parasiuk.
\newblock {\"U}ber die {M}ultiplikation der {K}ausalfunktionen in der
  {Q}uantentheorie der {F}elder.
\newblock {\em Acta Mathematica}, 97(1-4):227--266, 1957.

\bibitem[BS55]{BogoljubowSchirkow1955}
N.~N. Bogoljubow and D.~W. Schirkow.
\newblock {P}robleme der {Q}uantentheorie der {F}elder.
\newblock {\em Fortschritte der Physik}, 3(9-10):439--495, 1955.

\bibitem[BS56]{BogoljubowSchirkow1956}
N.~N. Bogoljubow and D.~W. Schirkow.
\newblock {P}robleme der {Q}uantenfeldtheorie {II}. {B}eseitigung der
  {D}ivergenzen aus der {S}treumatrix.
\newblock {\em Fortschritte der Physik}, 4(9-10):438--517, 1956.

\bibitem[BLOT87]{Bogolubov1987}
N.~N. Bogolubov, A.~A. Logunov, A.~I. Oksak, and I.~T. Todorov.
\newblock {\em {G}eneral {P}rinciples of {Q}uantum {F}ield {T}heory}, volume~10
  of {\em Mathematical Physics and Applied Mathematics}.
\newblock Kluwer Academic Publishers, 1987.

\bibitem[BLT75]{Bogolubov1975}
N.~N. Bogolubov, A.~A. Logunov, and I.~T. Todorov.
\newblock {\em {I}ntroduction to {A}xiomatic {Q}uantum {F}ield {T}heory}.
\newblock W. A. Benjamin Inc., 1975.

\bibitem[BBK91]{Bohm1991}
A.~Bohm, L.~J. Boya, and B.~Kendrick.
\newblock {D}erivation of the geometrical phase.
\newblock {\em Phys. Rev. A}, 43(3):1206--1210, 1991.

\bibitem[BDJ01]{BoehmDennerJoos2001}
M.~B{\"o}hm, A.~Denner, and H.~Joos.
\newblock {\em {G}auge {T}heories of the {S}trong and {E}lectroweak
  {I}nteraction}.
\newblock Teubner, 3 edition, 2001.

\bibitem[Boh35]{Bohr1935}
N.~Bohr.
\newblock {C}an {Q}uantum-{M}echanical {D}escription of {P}hysical {R}eality be
  {C}onsidered {C}omplete?
\newblock {\em Phys. Rev.}, 48:696 -- 702, 1935.

\bibitem[BG72a]{BolliniGiambiagi1972a}
C.~G. Bollini and J.~J. Giambiagi.
\newblock {D}imensional {R}enormalization: {T}he {N}umber of {D}imensions as a
  {R}egularizing {P}arameter.
\newblock {\em Nuovo Cimento B}, 12:20--25, 1972.

\bibitem[BG72b]{BolliniGiambiagi1972b}
C.~G. Bollini and J.~J. Giambiagi.
\newblock {L}owest order "divergent" graphs in v-dimensional space.
\newblock {\em Physics Letters B}, 40(5):566--568, August 1972.

\bibitem[BG92]{BolliniGiambiagi1992}
C.~G. Bollini and J.~J. Giambiagi.
\newblock {D}imensional regularization, {B}ochner's theorem and perturbative
  calculations.
\newblock {\em CBPF Notas de F{\'i}sica}, CBPF-NF-022/92:1--11, 1992.

\bibitem[BG96]{BolliniGiambiagi1996}
C.~G. Bollini and J.~J. Giambiagi.
\newblock {D}imensional regularization in configuration space.
\newblock {\em Phys. Rev. D}, 53(10):5761, May 1996.

\bibitem[BGD65]{BolliniGiambiagi1965}
C.~G. Bollini, J.~J. Giambiagi, and A.~G. Dominguez.
\newblock {O}n the {R}eduction {F}ormula of {F}einberg and {P}ais.
\newblock {\em J. Math. Phys.}, 6(1):165--166, January 1965.

\bibitem[BCR83]{Bonora1983}
L.~Bonora and P.~Cotta-Ramusino.
\newblock {Some Remarks on BRS Transformations, Anomalies and the Cohomology of
  the Lie Algebra of the Group of Gauge Transformations}.
\newblock {\em Commun. Math. Phys.}, 87:589, 1983.

\bibitem[Bor66]{Borchers1966}
H.~J. Borchers.
\newblock {E}nergy and momentum as observables in quantum field theory.
\newblock {\em Commun. Math. Phys.}, 2(1):49--54, 1966.

\bibitem[Bor84]{Borchers1984}
H.~J. Borchers.
\newblock {T}ranslation group and spectrum condition.
\newblock {\em Commun. Math. Phys.}, 96(1):1--13, 1984.

\bibitem[BS75]{Borchers1975}
H.~J. Borchers and R.~N. Sen.
\newblock {R}elativity {G}roups in the {P}resence of {M}atter.
\newblock {\em Commun. Math. Phys.}, 42:101--126, 1975.

\bibitem[BY76]{BorchersYngvason1976}
H.~J. Borchers and J.~Yngvason.
\newblock {Necessary and Sufficient Conditions for Integral Representations of
  Wightman Functionals at Schwinger Points}.
\newblock {\em Commun. Math. Phys.}, 47:197, 1976.

\bibitem[BW96]{Bordemann1996a}
M.~Bordemann and S.~Waldmann.
\newblock {Formal GNS Construction and WKB Expansion in Deformation
  Quantization}.
\newblock 1996.

\bibitem[BW98]{Bordemann1998}
M.~Bordemann and S.~Waldmann.
\newblock {F}ormal {GNS} {C}onstruction and {S}tates in {D}eformation
  {Q}uantization.
\newblock {\em Commun. Math. Phys.}, 195(3):549--583, 1998.

\bibitem[Bou98]{Bourbaki1998}
Bourbaki.
\newblock {\em {E}lements of {M}athematics: {A}lgebra 1}.
\newblock Springer, 1998.

\bibitem[Boy89]{Boya1989}
L.~J. Boya.
\newblock {S}tate {S}pace as {P}rojective {S}pace. {T}he {C}ase of {M}assless
  {P}articles.
\newblock {\em Foundations of Physics}, 19(11):1363--1370, 1989.

\bibitem[BS91]{Boya1991}
L.~J. Boya and E.~C.~G. Sudarshan.
\newblock {R}ays and {P}hases in {Q}uantum {M}echanics.
\newblock {\em Foundations of Physics Letters}, 4(3):283--287, 1991.

\bibitem[BMS75]{Bracci1975}
L.~Bracci, G.~Morchio, and F.~Strocchi.
\newblock {W}igner's {T}heorem on {S}ymmetries in {I}ndefinite {M}etric
  {S}paces.
\newblock {\em Commun. Math. Phys.}, 41:289--299, 1975.

\bibitem[BR97]{BratteliRobinson1997}
O.~Bratteli and D.~W. Robinson.
\newblock {\em {O}perator {A}lgebras and {Q}uantum {S}tatistical {M}echanics
  2}.
\newblock Springer, 1997.

\bibitem[BR02]{BratteliRobinson2002Vol1}
O.~Bratteli and D.~W. Robinson.
\newblock {\em {O}perator {A}lgebras and {Q}uantum {S}tatistical {M}echanics
  1}, volume~1.
\newblock Springer, 2002.

\bibitem[Bre93]{Bredon1993}
G.~E. Bredon.
\newblock {\em {T}opology and {G}eometry}, volume 139 of {\em Graduate Texts in
  Mathematics}.
\newblock Springer, 1993.

\bibitem[Bre08]{Breen2008}
L.~Breen.
\newblock {D}ifferential {G}eometry of {G}erbes and {D}ifferential {F}orms.
\newblock 2008.
\newblock \href{http://arxiv.org/abs/0802.1833}{arXiv:0802.1833v1 [math.CT]}.

\bibitem[BM75]{BreitenlohnerMaison1975}
P.~Breitenlohner and D.~Maison.
\newblock {Dimensional Renormalization of Massless Yang-Mills Theories}.
\newblock 1975.
\newblock MPI-PAE-PTH-26-75.

\bibitem[BM77a]{BreitenlohnerMaison1977a}
P.~Breitenlohner and D.~Maison.
\newblock {Dimensional Renormalization and the Action Principle}.
\newblock {\em Commun. Math. Phys.}, 52:11--38, 1977.

\bibitem[BM77b]{BreitenlohnerMaison1977b}
P.~Breitenlohner and D.~Maison.
\newblock {Dimensionally Renormalized Green's Functions for Theories with
  Massless Particles. 1}.
\newblock {\em Commun. Math. Phys.}, 52:39--54, 1977.

\bibitem[BM77c]{BreitenlohnerMaison1977c}
P.~Breitenlohner and D.~Maison.
\newblock {Dimensionally Renormalized Green's Functions for Theories with
  Massless Particles. 2}.
\newblock {\em Commun. Math. Phys.}, 52:55--75, 1977.

\bibitem[BH59]{BrenigHaag1959}
W.~Brenig and R.~Haag.
\newblock {A}llgemeine {Q}uantentheorie der {S}to{\ss}prozesse.
\newblock {\em Fortschritte der Physik}, 7(4-5):183--242, 1959.

\bibitem[BD07]{Brennecke2007}
F.~Brennecke and M.~D{\"u}tsch.
\newblock {R}emoval of violations of the {M}aster {W}ard {I}dentity in
  perturbative {QFT}.
\newblock 2007.
\newblock arXiv:0705.3160v3 [hep-th].

\bibitem[BD08]{Brennecke2008}
F.~Brennecke and M.~D{\"u}tsch.
\newblock {T}he {Q}uantum {A}ction {P}rinciple in the framework of {C}ausal
  {P}erturbation {T}heory.
\newblock 2008.
\newblock \href{http://arxiv.org/abs/0801.1408v1}{arXiv:0801.1408v1 [hep-th]}.

\bibitem[BK97]{BroadhurstKreimer1996}
D.~J. Broadhurst and D.~Kreimer.
\newblock {Association of multiple zeta values with positive knots via Feynman
  diagrams up to 9 loops}.
\newblock {\em Phys. Lett.}, B393:403--412, 1997.

\bibitem[BK99]{BroadhurstKreimer1999}
D.~J. Broadhurst and D.~Kreimer.
\newblock {R}enormalization automated by {H}opf algebra.
\newblock {\em J. Symb. Comput.}, 27:581, 1999.

\bibitem[BtD85]{BroeckerTomDieck1985}
T.~Br{\"o}cker and T.~tom Dieck.
\newblock {\em {R}epresentations of {C}ompact {L}ie {G}roups}.
\newblock Number~98 in Graduate Texts in Mathematics. Springer, 2003 edition,
  1985.

\bibitem[BSMM00]{BRONSTEIN}
I.~N. Bronstein, K.~A. Semendjajew, G.~Musiol, and H.~M\"{u}hlig.
\newblock {\em {T}aschenbuch der {M}athematik}.
\newblock Verlag Harri Deutsch, Thun und Frankfurt am Main, 2000.

\bibitem[Bro02]{Brouder2002}
C.~Brouder.
\newblock {Q}uantum {G}roups and {I}nteracting {Q}uantum {F}ields.
\newblock In J.-P. Gazeau, R.~Kerner, J.-P. Antoine, S.~M{\'e}tens, and J.-Y.
  Thibon, editors, {\em Group 24: Physical and Mathematical Aspects of
  Symmetries}, pages 787--790, Bristol, 2002. IOP Publishing.
\newblock \href{arXiv:hep-th/0208131}{http://de.arxiv.org/abs/hep-th/0208131}.
\newblock \url{http://de.arxiv.org/abs/hep-th/0208131}.

\bibitem[Bro06]{Brouder2006}
C.~Brouder.
\newblock {Q}uantum field theory meets {H}opf algebra.
\newblock arXiv:hep-th/0611153 v2, Nov 2006.

\bibitem[Bro09]{BrouderCargese2009}
C.~Brouder.
\newblock {H}opf algebraic structures of quantum field and many-body theories.
\newblock Lecture given at the ACQFT conference in Carg{\`e}se, Corsica, 2009.
\newblock \url{http://www.lmcp.jussieu.fr/~brouder/talk.html}.

\bibitem[BFFO04]{Brouder2004}
C.~Brouder, B.~Fauser, A.~Frabetti, and R.~Oeckl.
\newblock {Q}uantum field theory and {H}opf algebra cohomology.
\newblock {\em J. Phys. A}, 37(22):5895--5927, 2004.

\bibitem[BF00b]{BrouderFrabetti2000}
C.~Brouder and A.~Frabetti.
\newblock {N}oncommutative renormalization for massless {QED}.
\newblock 2000.
\newblock
  \href{http://fr.arxiv.org/abs/hep-th/0011161v2}{arXiv:hep-th/0011161v2}.

\bibitem[BO02]{BrouderOeckl2002}
C.~Brouder and R.~Oeckl.
\newblock {Q}uantum groups and quantum field theory: {I}. {T}he free scalar
  field.
\newblock 2002.
\newblock \href{arXiv:hep-th/0208118v1}{http://arxiv.org/abs/hep-th/0208118v1}.

\bibitem[BF01]{BrouderFrabetti2001}
C.~Brouder and A.~Frabetti.
\newblock {R}enormalization of {QED} with planar binary trees.
\newblock {\em The European Physical Journal C - Particles and Fields},
  19(4):715--741, March 2001.

\bibitem[Bro00]{Brouder2000a}
C.~Brouder.
\newblock {On the trees of quantum fields}.
\newblock {\em Eur. Phys. J.}, C12:535--549, 2000.

\bibitem[BD08]{BrouderDuetsch2008}
C.~Brouder and M.~D{\"u}tsch.
\newblock {R}elating on-shell and off-shell formalism in perturbative quantum
  field theory.
\newblock {\em J. Math. Phys.}, 49:052303, 2008.
\newblock \href{http://arxiv.org/abs/0710.3040}{arXiv:0710.3040v2 [hep-th]}.

\bibitem[BF03]{BrouderFrabetti2003}
C.~Brouder and A.~Frabetti.
\newblock {QED} {H}opf algebras on planar binary trees.
\newblock {\em Journal of Algebra}, 267(1):298--322, September 2003.

\bibitem[BFK06]{BrouderFrabettiKra2006}
C.~Brouder, A.~Frabetti, and C.~Krattenthaler.
\newblock {N}on-commutative {H}opf algebra of formal diffeomorphisms.
\newblock {\em Advances in Mathematics}, 200(2):479--524, March 2006.

\bibitem[BFM09]{BrouderFrabettiMenous2009}
C.~Brouder, A.~Frabetti, and F.~Menous.
\newblock {C}ombinatorial {H}opf algebras from renormalization.
\newblock 2009.
\newblock \href{http://arxiv.org/abs/0909.3362}{arXiv:/abs/0909.3362}.

\bibitem[BS07]{BrouderSchmitt2007}
C.~Brouder and W.~Schmitt.
\newblock {R}enormalization as a functor on bialgebras.
\newblock {\em Journal of Pure and Applied Algebra}, 209(2):477--495, May 2007.

\bibitem[Bro82]{Brown1982}
K.~S. Brown.
\newblock {\em {C}ohomology of {G}roups}, volume~87 of {\em Graduate Texts in
  Mathematics}.
\newblock Springer, 1982.

\bibitem[BDF09]{Brunetti2009}
R.~Brunetti, M.~D{\"u}tsch, and K.~Fredenhagen.
\newblock {P}erturbative {A}lgebraic {Q}uantum {F}ield {T}heory and the
  {R}enormalization {G}roups.
\newblock 2009.
\newblock \href{http://arxiv.org/abs/0901.2038}{arXiv:0901.2038}.

\bibitem[BF97a]{Brunetti1997b}
R.~Brunetti and K.~Fredenhagen.
\newblock {I}nteracting {Q}uantum {F}ields in {C}urved {S}pace:
  {R}enormalizability of $\varphi^4$.
\newblock 1997.
\newblock arXiv:gr-qc/9701048v1.

\bibitem[BF97b]{Brunetti1997a}
R.~Brunetti and K.~Fredenhagen.
\newblock {I}nteracting {Q}uantum {F}ields on a {C}urved {B}ackground.
\newblock In {\em Proceedings of the ICMP Brisbane}, 1997.

\bibitem[BF00a]{Brunetti2000}
R.~Brunetti and K.~Fredenhagen.
\newblock {M}icrolocal {A}nalysis and {I}nteracting {Q}uantum {F}ield
  {T}heories: {R}enormalization on {P}hysical {B}ackgrounds.
\newblock {\em Commun. Math. Phys.}, 208(3):623--661, 2000.

\bibitem[BF02]{Brunetti2002}
R.~Brunetti and K.~Fredenhagen.
\newblock {T}ime of occurrence observable in quantum mechanics.
\newblock {\em Phys. Rev. A}, 66(4):044101, 2002.

\bibitem[BF04]{Brunetti2004}
R.~Brunetti and K.~Fredenhagen.
\newblock {A}lgebraic approach to {Q}uantum {F}ield {T}heory.
\newblock 2004.
\newblock \href{http://arxiv.org/abs/math-ph/0411072}{arXiv:math-ph/0411072}.

\bibitem[BF07]{Brunetti2007a}
R.~Brunetti and K.~Fredenhagen.
\newblock {T}owards a {B}ackground {I}ndependent {F}ormulation of
  {P}erturbative {Q}uantum {G}ravity.
\newblock In B.~Fauser, J.~Tolksdorf, and E.~Zeidler, editors, {\em Quantum
  Gravity - Mathematical Models and Experimental Bounds}, pages 151--159.
  Birkh{\"a}user, Basel, 2007.
\newblock gr-qc/0603079v3.

\bibitem[BF09]{BrunettiFredenhagen2009}
R.~Brunetti and K.~Fredenhagen.
\newblock {Q}uantum {F}ield {T}heory on {C}urved {B}ackgrounds.
\newblock In {\em Kompaktkurs zur QFT auf gekr{\"u}mmten Raumzeiten, Potsdam
  2007}, 2009.

\bibitem[BFK96]{Brunetti1996}
R.~Brunetti, K.~Fredenhagen, and M.~K{\"o}hler.
\newblock {T}he microlocal spectrum condition and {W}ick polynomials of free
  fields on curved spacetimes.
\newblock {\em Commun. Math. Phys.}, 180(3):633--652, 1996.

\bibitem[BFLR10]{BrunettiFredenhagenRibeiro2009}
R.~Brunetti, K.~Fredenhagen, and P.~Lauridsen~Ribeiro.
\newblock 2010.
\newblock forthcoming paper.

\bibitem[BFV03]{Brunetti2003}
R.~Brunetti, K.~Fredenhagen, and R.~Verch.
\newblock {T}he {G}enerally {C}ovariant {L}ocality {P}rinciple - {A} {N}ew
  {P}aradigm for {L}ocal {Q}uantum {F}ield {T}heory.
\newblock {\em Commun. Math. Phys.}, 237:31--68, 2003.

\bibitem[BGY83]{Bryant1983}
R.~L. Bryant, P.~A. Griffiths, and D.~Yang.
\newblock {C}haracteristics and {E}xistence of {I}sometric {E}mbeddings.
\newblock {\em Duke Math. J.}, 50(4):893--994, 1983.

\bibitem[BS03]{Brzezinski2003}
T.~Brzezi{\'n}ski and A.~Sitarz.
\newblock {D}irac monopoles from the {M}atsumoto non-commutative spheres.
\newblock arXiv:math-ph/0304037 v1, Apr 2003.

\bibitem[BOR02]{Buchholz2002}
D.~Buchholz, I.~Ojima, and H.~Roos.
\newblock {T}hermodynamic {P}roperties of {N}on-{E}quilibrium {S}tates in
  {Q}uantum {F}ield {T}heory.
\newblock {\em Ann. Physics}, 297(2):219--242, 2002.

\bibitem[BV95]{Buchholz1995}
D.~Buchholz and R.~Verch.
\newblock {S}caling {A}lgebras and {R}enormalization {G}roup {I}n {A}lgebraic
  {Q}uantum {F}ield {T}heory.
\newblock {\em Rev. Math Phys.}, 7(8):1195--1239, 1995.
\newblock hep-th/9501063v1.

\bibitem[Bun81]{Bunch1981}
T.~S. Bunch.
\newblock {BPHZ} renormalization of $\lambda\varphi^4$ field theory in curved
  spacetime.
\newblock {\em Ann. Physics}, 131(1):118--148, 1981.

\bibitem[BS00]{BurrisSankappanavar2000}
S.~Burris and H.~P. Sankappanavar.
\newblock {\em {A} {C}ourse in {U}niversal {A}lgebra}.
\newblock online, 2000.
\newblock \url{http://www.math.uwaterloo.ca/~snburris/htdocs/ualg.html}.

\bibitem[BCG06]{Bursztyn2006}
H.~Bursztyn, G.~R. Cavalcanti, and M.~Gualtieri.
\newblock {R}eduction of {C}ourant algebroids and generalized complex
  structures.
\newblock arXiv:math.DG/0509640 v2, Mar 2006.

\bibitem[But08]{Buth2008}
M.~Buth.
\newblock 2008.

\bibitem[BG03]{Bytsenko2003}
A.~A. Bytsenko and Y.~P. Goncharov.
\newblock {D}irac {M}onopoles {I}n {T}he {E}rnst-{S}chwarzschild {S}pacetime.
\newblock arXiv:hep-th/0305030 v2, May 2003.

\bibitem[CMN07]{Calcagni2007}
G.~Calcagni, M.~Montobbio, and G.~Nardelli.
\newblock {L}ocalization of nonlocal theories.
\newblock 2007.
\newblock arXiv:0712.2237v1 [hep-th].

\bibitem[Cal95]{Caldwell1995}
C.~Caldwell.
\newblock {G}raph {T}heory {G}lossary, 1995.
\newblock \url{http://www.utm.edu/departments/math/graph/glossary.html}.

\bibitem[Cal70]{Callan1970}
C.~G. Callan.
\newblock {B}roken {S}cale {I}nvariance in {S}calar {F}ield {T}heory.
\newblock {\em Phys. Rev. D}, 2(8):1541--7, 1970.

\bibitem[CD79]{Campbell1979}
P.~Campbell and C.~T.~J. Dodson.
\newblock {P}rojective {G}eometry for {P}olarizations in {G}eometric
  {Q}uantization.
\newblock {\em Int. J. Theor. Phys.}, 18(1):1--7, 1979.

\bibitem[Car97]{CarrollGR1997}
S.~M. Carroll.
\newblock {L}ecture {N}otes on {G}eneral {R}elativity, 1997.
\newblock \href{http://arxiv.org/abs/gr-qc/9712019v1}{arXiv:gr-qc/9712019v1}.
\newblock \url{http://arxiv.org/abs/gr-qc/9712019v1}.

\bibitem[CDLL97]{Cassinelli1997}
G.~Cassinelli, E.~DeVito, P.~J. Lahti, and A.~Levrero.
\newblock {S}ymmetry {G}roups in {Q}uantum {M}echanics and the {T}heorem of
  {W}igner on the {S}ymmetry {T}ransformations.
\newblock {\em Rev. Math. Phys.}, 9(8):921--941, 1997.

\bibitem[CK82]{CaswellKennedy1982}
W.~E. Caswell and A.~D. Kennedy.
\newblock {A Simple Approach to Renormalization Theory}.
\newblock {\em Phys. Rev.}, D25:392, 1982.

\bibitem[Cey10]{Ceyhan2010}
{\"O}.~Ceyhan.
\newblock {A}n incarnation of {C}onnes-{M}arcolli's renormalization group in
  {E}pstein-{G}laser scheme.
\newblock 2010.
\newblock \href{http://arxiv.org/abs/1002.3389}{arXiv:1002.3389}.

\bibitem[CF08]{Chilian2008}
B.~Chilian and K.~Fredenhagen.
\newblock {T}he time slice axiom in perturbative quantum field theory on
  globally hyperbolic spacetimes.
\newblock 2008.
\newblock arXiv:0802.1642v1 [math-ph].

\bibitem[CBG69]{Choquet-Bruhat1969}
Y.~Choquet-Bruhat and R.~Geroch.
\newblock {G}lobal {A}spects of the {C}auchy {P}roblem in {G}eneral
  {R}elativity.
\newblock {\em Commun. Math. Phys.}, 14:329--335, 1969.

\bibitem[CBDMDB77]{CHODEWDIL}
Y.~Choquet-Bruhaut, C.~Dewitt-Morette, and M.~Dillard-Bleick.
\newblock {\em {A}nalysis, {M}anifolds and {P}hysics}.
\newblock North-Holland PC, Amsterdam, 1977.

\bibitem[Cho97]{Chow1997}
T.~L. Chow.
\newblock {G}auge transformations are canonical transformations.
\newblock {\em Eur. J. Phys.}, 18:467--468, 1997.

\bibitem[CJ04]{CHR-JAM}
D.~Chru{\'s}ci{\'n}ski and A.~Jamio{\l}kowski.
\newblock {\em {G}eometric {P}hases in {C}lassical and {Q}uantum {M}echanics}.
\newblock Birkh{\"a}user, Boston, 2004.

\bibitem[CM72]{CicutaMontaldi1972}
G.~Cicuta and E.~Montaldi.
\newblock {A}nalytic renormalization via continuous space dimension.
\newblock {\em Lettere Al Nuovo Cimento}, 4(9):329--332, July 1972.

\bibitem[CH60]{Coester1960}
F.~Coester and R.~Haag.
\newblock {R}epresentation of {S}tates in a {F}ield {T}heory with {C}anonical
  {V}ariables.
\newblock {\em Phys. Rev.}, 117:1137 -- 1145, 1960.

\bibitem[Col84]{Collins1984}
J.~Collins.
\newblock {\em {R}enormalization}.
\newblock Cambridge Monographs on Mathematical Physics. Cambridge UP, 1984.

\bibitem[Com74]{Comtet1974}
L.~Comtet.
\newblock {\em {A}dvanced {C}ombinatorics}.
\newblock Reidel, 1974.

\bibitem[Con94]{Connes1994}
A.~Connes.
\newblock {\em {N}oncommutative {G}eometry}.
\newblock published as ftp://ftp.alainconnes.org/book94bigpdf.pdf, 1994.
\newblock \url{http://www.alainconnes.org/}.

\bibitem[CK99]{Connes1999}
A.~Connes and D.~Kreimer.
\newblock {R}enormalization in quantum field theory and the {R}iemann-{H}ilbert
  problem.
\newblock {\em J. High Energy Physics}, JHEP09(024):1--7, 1999.

\bibitem[CK00]{Connes2000}
A.~Connes and D.~Kreimer.
\newblock {R}enormalization in {Q}uantum {F}ield {T}heory and the
  {R}iemann-{H}ilbert {P}roblem {I}: {T}he {H}opf {A}lgebra {S}tructure of
  {G}raphs and the {M}ain {T}heorem.
\newblock {\em Commun. Math. Phys.}, 210(1):249--273, 2000.

\bibitem[CK01]{Connes2001}
A.~Connes and D.~Kreimer.
\newblock {R}enormalization in {Q}uantum {F}ield {T}heory and the
  {R}iemann-{H}ilbert {P}roblem {II}: {T}he $\beta$-{F}unction,
  {D}iffeomorphisms and the {R}enormalization {G}roup.
\newblock {\em Commun. Math. Phys.}, 216(1):215--241, 2001.

\bibitem[CM04a]{Connes2004a}
A.~Connes and M.~Marcolli.
\newblock {F}rom {P}hysics to {N}umber {T}heory via {N}oncommutative
  {G}eometry. {P}art {I}: {Q}uantum {S}tatistical {M}echanics of {Q}-lattices.
\newblock 2004.
\newblock \href{http://arxiv.org/abs/math/0404128}{arXiv:math/0404128}.

\bibitem[CM04b]{Connes2004b}
A.~Connes and M.~Marcolli.
\newblock {F}rom {P}hysics to {N}umber {T}heory via {N}oncommutative
  {G}eometry, {P}art {II}: {R}enormalization, the {R}iemann-{H}ilbert
  correspondence, and motivic {G}alois theory.
\newblock 2004.
\newblock \href{http://arxiv.org/abs/hep-th/0411114}{arXiv:hep-th/0411114}.

\bibitem[CM04c]{Connes2004c}
A.~Connes and M.~Marcolli.
\newblock {R}enormalization and motivic {G}alois theory.
\newblock {\em arxive}, 2004.
\newblock math/0409306.

\bibitem[CM06]{Connes2006}
A.~Connes and M.~Marcolli.
\newblock {Q}uantum fields and motives.
\newblock {\em J. Geom. Phys.}, 56(1):55--85, 2006.
\newblock hep-th/0504085.

\bibitem[CM07]{ConnesMarcolli2007}
A.~Connes and M.~Marcolli.
\newblock {\em {N}oncommutative {G}eometry, {Q}uantum {F}ields and {M}otives}.
\newblock 2007.
\newblock \url{http://www.alainconnes.org/downloads.html}.

\bibitem[Con85]{Connes1985}
A.~Connes.
\newblock {N}on-commutative differential geometry.
\newblock {\em Publ. Math. IHES}, 62:41--144, 1985.

\bibitem[CK98]{Connes1998}
A.~Connes and D.~Kreimer.
\newblock {H}opf {A}lgebras, {R}enormalization and {N}oncommutative {G}eometry.
\newblock {\em Commun. Math. Phys.}, 199:203--242, 1998.

\bibitem[Con78]{Conway1978}
J.~B. Conway.
\newblock {\em {F}unctions of {O}ne {C}omplex {V}ariable}, volume~11 of {\em
  Graduate Texts in Mathematics}.
\newblock Springer, New York, 2nd edition, 1978.

\bibitem[Con85]{Conway1985}
J.~B. Conway.
\newblock {\em {A} {C}ourse in {F}unctional {A}nalysis}.
\newblock Number~96 in Graduate Texts in Mathematics. Springer, 1985.

\bibitem[DV80]{DaVi80}
M.~Daniel and C.~M. Viallet.
\newblock {T}he geometrical setting of gauge theories of {Y}ang-{M}ills type.
\newblock {\em Rev. Mod. Phys.}, 52(1):175--197, 1980.

\bibitem[DHP09]{DappiaggiHackPinamonti2009}
C.~Dappiaggi, T.-P. Hack, and N.~Pinamonti.
\newblock {T}he extended algebra of observables for {D}irac fields and the
  trace anomaly of their stress-energy tensor.
\newblock 2009.
\newblock \href{arXiv:0904.0612}{http://arxiv.org/abs/0904.0612}.

\bibitem[DFP08]{Dappiaggi2008}
C.~Dappiaggi, K.~Fredenhagen, and N.~Pinamonti.
\newblock {S}table cosmological models driven by a free quantum scalar field.
\newblock {\em Phys. Rev. D}, 77:104015, 2008.
\newblock \href{http://arxiv.org/abs/0801.2850}{arXiv:0801.2850}.

\bibitem[dB74]{Barra1974}
G.~de~Barra.
\newblock {\em {I}ntroduction to {M}easure {T}heory}.
\newblock van Nostrand Reinhold Company, 1974.

\bibitem[dMBIM06]{RigorousQFT2006}
A.~B. de~Monvel, D.~Buchholz, D.~Iagolnitzer, and U.~Moschella, editors.
\newblock {\em {R}igorous {Q}uantum {F}ield {T}heory}.
\newblock Progress in Mathematics. Birkh{\"a}user, 2006.

\bibitem[Dei07]{Deiser2007}
O.~Deiser.
\newblock {\em {R}eelle {Z}ahlen}.
\newblock Springer, 2007.

\bibitem[DB60]{DeWitt1960}
B.~S. DeWitt and R.~W. Brehme.
\newblock {R}adiation damping in a gravitational field.
\newblock {\em Ann. Physics}, 9(2):220--259, 1960.

\bibitem[DS83]{LesHouches1983}
B.~S. DeWitt and R.~Stora, editors.
\newblock {\em {R}elativity, {G}roups, and {T}opology {II}}, Les Houches, 1983.

\bibitem[DM72]{DietzMeetz1972}
K.~Dietz and K.~Meetz.
\newblock {Light cone singularities in classical and quantum field theories
  with interaction}.
\newblock {\em Nucl. Phys.}, B46:89--99, 1972.

\bibitem[Dim80]{Dimock1980}
J.~Dimock.
\newblock {A}lgebras of local observables on a manifold.
\newblock {\em Commun. Math. Phys.}, 77(3):219--228, 1980.

\bibitem[Dir27]{Dirac1927}
P.~A.~M. Dirac.
\newblock {T}he {Q}uantum {T}heory of {E}mission and {A}bsorption of
  {R}adiation.
\newblock {\em Proc. Roy. Soc. London A}, 114:243--265, 1927.

\bibitem[Dir31]{Dirac1931}
P.~A.~M. Dirac.
\newblock {Q}uantised {S}ingularities in the {E}lectromagnetic {F}ield.
\newblock {\em Proc. Roy. Soc. London A}, 133(60):1--13, 1931.

\bibitem[Dit90]{Dito1990}
J.~Dito.
\newblock {S}tar-product approach to quantum field theory: {T}he free scalar
  field.
\newblock {\em Lett. Math. Phys.}, 20(2):125--134, 1990.

\bibitem[Dit91]{Dito1991}
J.~Dito.
\newblock {S}tar-products and nonstandard quantization for {K}lein-{G}ordon
  equation.
\newblock {\em J. Math Phys.}, 33(2):791--801, 1991.

\bibitem[Dixa]{Dixon:2006us}
L.~Dixon.
\newblock {Presentations from LoopFest V: Radiative Corrections For The
  International Linear Collider: Multi-Loops And Multi- Legs, 19-21 June 2006,
  SLAC, Menlo Park, California}.
\newblock SLAC-WP-071.

\bibitem[Dixb]{DixonLoopV2006}
L.~Dixon.
\newblock {Multi-loop miracles in N=4 super-Yang-Mills theory}.
\newblock Presented at LoopFest V: Radiative Corrections for the International
  Linear Collider: Multi-loops and Multi-legs, SLAC, Menlo Park, California,
  19-21 Jun 2006.

\bibitem[DGM10a]{DixonGardiMagnea2010b}
L.~J. Dixon, E.~Gardi, and L.~Magnea.
\newblock {All-order results for infrared and collinear singularities in
  massless gauge theories}.
\newblock 2010.
\newblock \href{http://arxiv.org/abs/1001.4709}{arXiv:1001.4709}.

\bibitem[DGM10b]{DixonGardiMagnea2010a}
L.~J. Dixon, E.~Gardi, and L.~Magnea.
\newblock {On soft singularities at three loops and beyond}.
\newblock {\em JHEP}, 02:081, 2010.

\bibitem[Dop65]{Doplicher1965}
S.~Doplicher.
\newblock {A}n algebraic spectrum condition.
\newblock {\em Commun. Math. Phys.}, 1(1):1--5, 1965.

\bibitem[DFR94]{Doplicher1994}
S.~Doplicher, K.~Fredenhagen, and J.~E. Roberts.
\newblock {S}pacetime quantization induced by classical gravity.
\newblock {\em Physics Letters B}, 331(1-2):39--44, 1994.

\bibitem[DFR95]{Doplicher1995}
S.~Doplicher, K.~Fredenhagen, and J.~E. Roberts.
\newblock {T}he quantum structure of spacetime at the {P}lanck scale and
  quantum fields.
\newblock {\em Commun. Math. Phys.}, 172(1):187--220, 1995.

\bibitem[DI08]{Doering2008}
A.~D{\"o}ring and C.~Isham.
\newblock \textquoteleft {W}hat is a {T}hing?\textquoteright: {T}opos {T}heory
  in the {F}oundations of {P}hysics.
\newblock 2008.
\newblock \href{http://arxiv.org/abs/0803.0417}{arXiv:0803.0417v1 [quant-ph]}.

\bibitem[DBH{\etalchar{+}}08]{Duchamp2008}
G.~H.~E. Duchamp, P.~Blasiak, A.~Horzela, K.~A. Penson, and A.~I. Solomon.
\newblock {H}opf {A}lgebras in {G}eneral and in {C}ombinatorial {P}hysics: a
  practical introduction.
\newblock 2008.
\newblock arXiv:0802.0249v1 [quant-ph].

\bibitem[DH72]{Duistermaat1972}
J.~J. Duistermaat and L.~H{\"o}rmander.
\newblock {F}ourier integral operators. {II}.
\newblock {\em Acta Mathematica}, 128:183--269, 1972.

\bibitem[DS67]{DunfordSchwartz1967}
N.~Dunford and J.~T. Schwartz.
\newblock {\em {L}inear {O}perators, {P}art {I}: {G}eneral {T}heory}.
\newblock Pure and Applied Mathematics. Wiley Interscience Publishers, 1967.

\bibitem[DB02]{Dutsch2002}
M.~D{\"u}tsch and F.-M. Boas.
\newblock {T}he master ward identity.
\newblock {\em Rev. Math. Phys.}, 14(9):977--1049, 2002.

\bibitem[DF99]{Dutsch1999}
M.~D{\"u}tsch and K.~Fredenhagen.
\newblock {A} {L}ocal ({P}erturbative) {C}onstruction of {O}bservables in
  {G}auge {T}heories: {T}he {E}xample of {QED}.
\newblock {\em Commun. Math. Phys.}, 203(1):71--105, 1999.

\bibitem[DF01a]{Dutsch2001}
M.~D{\"u}tsch and K.~Fredenhagen.
\newblock {A}lgebraic {Q}uantum {F}ield {T}heory, {P}erturbation {T}heory, and
  the {L}oop {E}xpansion.
\newblock {\em Commun. Math. Phys.}, 219(1):5--30, 2001.

\bibitem[DF01b]{Dutsch2000}
M.~D{\"u}tsch and K.~Fredenhagen.
\newblock {P}erturbative {A}lgebraic {F}ield {T}heory, and {D}eformation
  {Q}uantization.
\newblock In R.~Longo, editor, {\em Mathematical Physics in Mathematics and
  Physics: Quantum and Operator Algebraic Aspects}, volume~30 of {\em Fields
  Institute Communications}, Providence, RI, 2001. AMS.
\newblock \href{http://arxiv.org/abs/hep-th/0101079}{arXiv:hep-th/0101079}.

\bibitem[DF03]{DuetschFredenhagen2003}
M.~D{\"u}tsch and K.~Fredenhagen.
\newblock {T}he {M}aster {W}ard {I}dentity and {G}eneralized
  {S}chwinger-{D}yson {E}quation in {C}lassical {F}ield {T}heory.
\newblock {\em Commun. Math. Phys.}, 243(2):275--314, 2003.

\bibitem[DF04]{DuetschFredenhagen2004}
M.~D{\"u}tsch and K.~Fredenhagen.
\newblock {C}ausal {P}erturbation {T}heory in {T}erms of {R}etarded {P}roducts,
  and a {P}roof of the {A}ction {W}ard {I}dentity.
\newblock {\em Rev. Math. Phys.}, 16(10):1291--1348, 2004.

\bibitem[DF07]{Duetsch2005}
M.~D{\"u}tsch and K.~Fredenhagen.
\newblock {A}ction {W}ard {I}dentity and the {S}t{\"u}ckelberg-{P}etermann
  renormalization group.
\newblock {\em Prog. Math.}, 251:113--124, 2007.
\newblock \href{http://arxiv.org/abs/hep-th/0501228}{arXiv:hep-th/0501228}.

\bibitem[DS00]{DuetschSchroer2000}
M.~D{\'u}tsch and B.~Schroer.
\newblock {M}assive vector mesons and gauge theory.
\newblock {\em Journal of Physics A: Mathematical and General},
  33(23):4317--4356, 2000.
\newblock \href{http://arxiv.org/abs/hep-th/9906089v3}{arXiv:hep-th/9906089v3}.

\bibitem[Dys49a]{Dyson1949}
F.~J. Dyson.
\newblock {T}he {R}adiation {T}heories of {T}omonaga, {S}chwinger, and
  {F}eynman.
\newblock {\em Phys. Rev.}, 75(3):486--502, 1949.

\bibitem[Dys49b]{Dyson1949a}
F.~J. Dyson.
\newblock {T}he {S} {M}atrix in {Q}uantum {E}lectrodynamics.
\newblock {\em Phys. Rev.}, 75(11):1736--1755, 1949.

\bibitem[EFGK04]{Ebrahimi-Fard2004}
K.~Ebrahimi-Fard, L.~Guo, and D.~Kreimer.
\newblock {S}pitzer's identity and the algebraic {B}irkhoff decomposition in
  p{QFT}.
\newblock {\em J. Phys. A}, 37(45):11037--11052, 2004.

\bibitem[EFGBGV06]{Ebrahimi-Fard2006}
K.~Ebrahimi-Fard, J.~M. Gracia-Bondia, L.~Guo, and J.~C. Varilly.
\newblock {C}ombinatorics of renormalization as matrix calculus.
\newblock {\em Physics Letters B}, 632(4):552--558, January 2006.

\bibitem[EFG07]{Ebrahimi-FardGuo2007}
K.~Ebrahimi-Fard and L.~Guo.
\newblock {R}ota-{B}axter {A}lgebras in {R}enormalization of {P}erturbative
  {Q}uantum {F}ield {T}heory.
\newblock {\em Fields Institute Communications}, 50:47--105, 2007.
\newblock \href{http://arxiv.org/abs/hep-th/0604116}{arXiv:hep-th/0604116}.

\bibitem[EFM07]{Ebrahimi-Fard2007}
K.~Ebrahimi-Fard and D.~Manchon.
\newblock {T}he {C}ombinatorics {O}f {B}ogoliubov's {R}ecursion {I}n
  {R}enormalization.
\newblock {\em arxive}, 2007.
\newblock arXiv:0710.3675v1 [math-ph].

\bibitem[EV96]{Ebrahimi-Vishki1996}
Ebrahimi-Vishki.
\newblock {J}oint {C}ontinuity of {S}eparately {C}ontinuous {M}appings on
  {T}opological {G}roups.
\newblock 1996.

\bibitem[EEMLRRVM99]{ECHE}
A.~Echeverr{\'i}a-Enr{\'i}quez, M.~C. Mu{\~n}oz-Lecanda, N.~Rom{\'a}n-Roy, and
  C.~Victoria-Monge.
\newblock {M}athematical {F}oundations of {G}eometric {Q}uantization.
\newblock arXiv:math-ph/9904008 v1, April 1999.

\bibitem[EE79]{EckmannEpstein1979}
J.~P. Eckmann and H.~Epstein.
\newblock {T}ime-ordered products and {S}chwinger functions.
\newblock {\em Commun. Math. Phys.}, 64(2):95--130, June 1979.

\bibitem[EEF76]{Eckmann1976}
J.-P. Eckmann, H.~Epstein, and J.~Fr{\"o}hlich.
\newblock {A}symptotic perturbation expansion for the {S}-matrix and the
  definition of time ordered functions in relativistic quantum field models.
\newblock {\em Ann. Inst. Henri Poicar{\'e} (A)}, 25(1):1--34, 1976.

\bibitem[EGH80]{Eguchi1980}
T.~Eguchi, P.~B. Gilkey, and A.~J. Hanson.
\newblock {G}ravitation, {G}auge {T}heories and {D}ifferential {G}eometry.
\newblock {\em Physics Reports}, 66(6):213--393, 1980.

\bibitem[EPR35]{EPR1935}
A.~Einstein, B.~Podolsky, and N.~Rosen.
\newblock {C}an {Q}uantum-{M}echanical {D}escription of {P}hysical {R}eality
  {B}e {C}onsidered {C}omplete?
\newblock {\em Phys. Rev.}, 47:777--780, 1935.

\bibitem[Eli]{Elizalde:1995ic}
E.~Elizalde.
\newblock {Applications of zeta function regularization in QFT}.
\newblock Prepared for 3rd Workshop on Quantum Field Theory under the Influence
  of External Conditions, Leipzig, Germany, 18-22 Sep 1995.

\bibitem[ER89]{ElizaldeRomeo1989}
E.~Elizalde and A.~Romeo.
\newblock {R}igorous extension of the proof of zeta-function regularization.
\newblock {\em Phys. Rev. D}, 40(2):436--443, 1989.

\bibitem[Eli08]{Elizalde2008}
E.~Elizalde.
\newblock {Z}eta {F}unction {M}ethods and {Q}uantum {F}luctuations.
\newblock {\em J. Phys. A}, 41:304040, 2008.
\newblock arXiv:0712.1346v1 [hep-th].

\bibitem[EP63]{EMCH}
G.~Emch and C.~Piron.
\newblock {S}ymmetry in {Q}uantum {T}heory.
\newblock {\em J. Math. Phys.}, 4(4):469--473, Apr. 1963.

\bibitem[Eps63]{Epstein1963}
H.~Epstein.
\newblock {O}n the {B}orchers {C}lass of a {F}ree {F}ield.
\newblock {\em Nuovo Cimento}, 27:886, 1963.

\bibitem[EG73]{Epstein1973}
H.~Epstein and V.~Glaser.
\newblock {T}he {R}ole of {L}ocality in {P}erturbation {T}heory.
\newblock {\em Ann. Inst. Henri Poincar{\'e}}, 19(3):211--295, 1973.

\bibitem[EGBV98]{EstradaGraciaBondiaVarilly1998}
R.~Estrada, J.~M. Gracia-Bond{\'i}a, and J.~C. V{\'a}rilly.
\newblock {O}n {S}ummability of {D}istributions and {S}pectral {G}eometry.
\newblock {\em Commun. Math. Phys.}, 191(1):219--248, January 1998.

\bibitem[Est98a]{Estrada1998a}
R.~Estrada.
\newblock {R}egularization of distributions.
\newblock {\em International Journal of Mathematics and Mathematical Sciences},
  21(4):625--636, 1998.

\bibitem[Est98b]{Estrada1998}
R.~Estrada.
\newblock {T}he {C}es{\`a}ro behaviour of distributions.
\newblock {\em Proceedings of the Royal Society of London. Series A:
  Mathematical, Physical and Engineering Sciences}, 454(1977):2425--2443, 1998.

\bibitem[FdB55]{FaaDiBruno1855}
F.~Fa{\`a}~di Bruno.
\newblock {S}ullo sviluppo delle {F}unzioni.
\newblock {\em Annali di Scienze Matematiche e Fisiche}, 6:479--480, 1855.

\bibitem[Fal05]{Falk2005}
S.~Falk.
\newblock {\em {R}egularisierung und {R}enormierung in der {Q}uantenfeldtheorie
  - {R}esultate aus dem {V}ergleich konsistenter und praktikabler {M}ethoden}.
\newblock PhD thesis, Institut f{\"u}r Physik der Johannes
  Gutenberg-Universit{\"a}t Mainz, 2005.
\newblock
  \url{http://wwwthep.physik.uni-mainz.de/site/research/publications/theses/di%
ss/}.

\bibitem[FHS10]{Falk2009}
S.~Falk, R.~H{\"a}u{\ss}ling, and F.~Scheck.
\newblock {R}enormalization in {Q}uantum {F}ield {T}heory: {A}n {I}mproved
  {R}igorous {M}ethod.
\newblock {\em J. Phys. A}, 43:035401, 2010.
\newblock \href{http://arxiv.org/abs/0901.2252}{arXiv:0901.2252}.

\bibitem[Fau01]{Fauser2001}
B.~Fauser.
\newblock {O}n the {H}opf algebraic origin of {W}ick normal ordering.
\newblock {\em J. Phys. A}, 34(1):105--115, 2001.

\bibitem[Fed69]{Federer1969}
H.~Federer.
\newblock {\em {G}eometric measure theory}, volume 153 of {\em Die Grundlehren
  der mathematischen Wissenschaften}.
\newblock Springer, 1969.

\bibitem[FMRS85]{Feldman1985}
J.~Feldman, J.~Magnen, V.~Rivasseau, and R.~S{\'e}n{\'e}or.
\newblock {M}assive {G}ross-{N}eveu {M}odel: {A} {R}igorous {P}erturbative
  {C}onstruction.
\newblock {\em Phys. Rev. Lett.}, 54(14):1479--81, April 1985.

\bibitem[Fel98]{Felsager1998}
B.~Felsager.
\newblock {\em {G}eometry, {P}articles, and {F}ields}.
\newblock Springer, 1998.

\bibitem[FP06]{Fewster2006}
C.~J. Fewster and M.~J. Pfenning.
\newblock {Q}uantum {E}nergy {I}nequalities and local covariance {I}:
  {G}lobally hyperbolic spacetimes.
\newblock {\em J. Math. Phys.}, 47:082303--(1 -- 39), 2006.

\bibitem[Few06]{Fewster2006a}
C.~J. Fewster.
\newblock {Q}uantum energy inequalities and local covariance {II}:
  {C}ategorical formulation.
\newblock arXiv:math-ph/0611058 v1, Nov 2006.

\bibitem[Fey49a]{Feynman1949a}
R.~P. Feynman.
\newblock {S}pace-{T}ime {A}pproach to {Q}uantum {E}lectrodynamics.
\newblock {\em Phys. Rev.}, 76(6):769--789, 1949.

\bibitem[Fey49b]{Feynman1949}
R.~P. Feynman.
\newblock {T}he {T}heory of {P}ositrons.
\newblock {\em Phys. Rev.}, 76:749--759, 1949.

\bibitem[Fey50]{Feynman1950}
R.~P. Feynman.
\newblock {M}athematical {F}ormulation of the {Q}uantum {T}heory of
  {E}lectromagnetic {I}nteraction.
\newblock {\em Phys. Rev.}, 80(3):440--457, 1950.

\bibitem[FLS99]{FEYNMAN3}
R.~P. Feynman, R.~B. Leighton, and M.~Sands.
\newblock {\em {F}eynman {V}orlesungen \"{u}ber {P}hysik}, volume~3.
\newblock R. Oldenbourg Verlag, M\"{u}nchen and Wien, 1999.

\bibitem[FGB05]{Figueroa2005}
H.~Figueroa and J.~M. Gracia-Bond{\'i}a.
\newblock {C}ombinatorial {H}opf {A}lgebras {I}n {Q}uantum {F}ield {T}heory
  {I}.
\newblock {\em Rev. Math. Phys.}, 17(8):881--976, 2005.

\bibitem[FGBV05]{FigueroaGraciaBondiaVarilly2005}
H.~Figueroa, J.~M. Gracia-Bond{\'i}a, and J.~C. Varilly.
\newblock {F}a{\`a} di {B}runo {H}opf algebras.
\newblock 2005.
\newblock \href{http://arxiv.org/abs/math/0508337}{arXiv:math/0508337}.

\bibitem[Fil04]{Filk2004}
T.~Filk.
\newblock {G}rundlagen und {P}robleme der {Q}uantenmechanik.
\newblock 2004.

\bibitem[FR68]{Finkelstein1968}
D.~Finkelstein and J.~Rubinstein.
\newblock {C}onnection between {S}pin, {S}tatistics, and {K}inks.
\newblock {\em J. Math. Phys.}, 9(11):1762--1779, 1968.

\bibitem[Fin02]{Finster2002}
F.~Finster.
\newblock {P}artielle {D}ifferentialgleichungen aus der {G}eometrie und
  {P}hysik {II}: {H}yperbolische {P}robleme.
\newblock 2002.
\newblock
  \url{http://www.uni-regensburg.de/Fakultaeten/nat_Fak_I/Mat1/wm/Lehre/lehre.%
html}.

\bibitem[Fis92]{FISCHER}
G.~Fischer.
\newblock {\em {A}nalytische {G}eometrie}.
\newblock Vieweg, Braunschweig, 1992.

\bibitem[Flu75]{Flume1975}
R.~Flume.
\newblock {T}he invariance of the {S}-matrix under point transformations in
  renormalized perturbation theory.
\newblock {\em Commun. Math. Phys.}, 40(1):49--54, February 1975.

\bibitem[For97]{Ford1997}
L.~H. Ford.
\newblock {Q}uantum field theory in curved spacetime.
\newblock arXiv:gr-qc/9707062 v1, Jul 1997.

\bibitem[For08]{Forster2008}
O.~Forster.
\newblock {\em {A}nalysis 1}.
\newblock Vieweg+Teubner, 2008.

\bibitem[Fra07]{Frabetti2007}
A.~Frabetti.
\newblock {R}enormalization {H}opf algebras and combinatorial groups.
\newblock lecture notes for summer school \emph{Geometric an Topological
  Methods for Quantum Field Theory}, Villa de Leyva, Colombia., 2007.
\newblock \url{http://math.univ-lyon1.fr/~frabetti/papers/frabetti2007v1.pdf}.

\bibitem[Fra08]{Frabetti2008}
A.~Frabetti.
\newblock {G}roups of tree-expanded series.
\newblock {\em Journal of Algebra}, 319(1):377--413, January 2008.

\bibitem[FA07]{FrancoAcebal2007}
D.~Franco and J.~Acebal.
\newblock {M}icrolocal {A}nalysis and {R}enormalization in {F}inite
  {T}emperature {F}ield {T}heory.
\newblock {\em International Journal of Theoretical Physics}, 46(2):383--398,
  February 2007.

\bibitem[Fre09]{FredenhagenGoettingen2009}
K.~Fredenhagen.
\newblock {T}he impact of the algebraic approach on perturbative quantum field
  theory.
\newblock Talk at conference Algebraic Quantum Field Theory - the first 50
  years, G{\"o}ttingen, 29-31 July 2009.

\bibitem[FH87]{Fredenhagen1987}
K.~Fredenhagen and R.~Haag.
\newblock {G}enerally covariant quantum field theory and scaling limits.
\newblock {\em Commun. Math. Phys.}, 108(1):91--115, 1987.

\bibitem[FRS06]{Fredenhagen2006}
K.~Fredenhagen, K.-H. Rehren, and E.~Seiler.
\newblock {Q}uantum {F}ield {T}heory: {W}here {W}e {A}re.
\newblock 2006.
\newblock arXiv:hep-th/0603155 v1.

\bibitem[Fre94]{FredenhagenSuSeSectors1995}
K.~Fredenhagen.
\newblock {S}uper {S}election {S}ectors.
\newblock Lecture notes, Hamburg, WS 94/95,
  \href{http://unith.desy.de/research/aqft/lecture_notes/}{http://unith.desy.d%
e/research/aqft/lecture{\_}notes/}, 1994.
\newblock \url{http://unith.desy.de/research/aqft/lecture_notes/}.

\bibitem[Fre99]{Fredenhagen1999}
K.~Fredenhagen.
\newblock {Q}uantenfeldtheorie in gekr{\"u}mmter {R}aumzeit.
\newblock Skript, 1999.

\bibitem[Fre00]{Fredenhagen2000}
K.~Fredenhagen.
\newblock {Q}uantenfeldtheorie.
\newblock Skript, 2000.

\bibitem[Fre01]{Fredenhagen2001}
K.~Fredenhagen.
\newblock {Q}uantenfeldtheorie {II}.
\newblock Skript, 2001.

\bibitem[Fre02]{Fredenhagen2002}
K.~Fredenhagen.
\newblock {T}heoretische {P}hysik {II}.
\newblock Skript, 2002.

\bibitem[Fre03]{Fredenhagen2003}
K.~Fredenhagen.
\newblock {A}lgebraische {Q}uantenfeldtheorie.
\newblock Skript, 2003.

\bibitem[Fre04]{Fredenhagen2004}
K.~Fredenhagen.
\newblock {A}lgebraic {Q}uantum {F}ield {T}heory.
\newblock lecture notes, 2004.

\bibitem[Fre05]{FredenhagenTD2005}
K.~Fredenhagen.
\newblock {T}hermodynamik und {S}tatistische {M}echanik.
\newblock Vorlesungsskript, Hamburg, WS 04/05,
  \href{http://unith.desy.de/research/aqft/lecture_notes/}{http://unith.desy.d%
e/research/aqft/lecture{\_}notes/}, 2005.
\newblock
  \url{http://unith.desy.de/sites/site_unith/content/e20/e72/e180/e193/infobox%
Content199/Thermodynamik.pdf}.

\bibitem[Fre06]{FredenhagenQFT2006}
K.~Fredenhagen.
\newblock {Q}uantenfeldtheorie.
\newblock Skript, Universit{\"a}t Hamburg, 2006.
\newblock
  \url{http://unith.desy.de/research/aqft/lecture_notes/quantum_field_theory/}.

\bibitem[Fre08a]{Fredenhagen2008}
K.~Fredenhagen.
\newblock {O}n the renormalization group in perturbative algebraic quantum
  field theory.
\newblock Talk in HH on 16 January 2008, Jan 2008.

\bibitem[Fre08b]{Fredenhagen20112008}
K.~Fredenhagen.
\newblock {R}enormalization.
\newblock Talk Hamburg, 20. 11. 2008, 2008.

\bibitem[FKRW95]{Frenkel1995}
E.~Frenkel, V.~Kac, A.~Radul, and W.~Wang.
\newblock $\mathcal{W}_{1+\infty}$ and $\mathcal{W}(gl_{N})$ with {C}entral
  {C}harge ${N}$.
\newblock {\em Commun. Math. Phys.}, 170:337--357, 1995.

\bibitem[Fr{\"o}74]{Froehlich1974}
J.~Fr{\"o}hlich.
\newblock {S}chwinger {F}unctions and their {G}enerating {F}unctionals.
\newblock {\em Helv. Phys. Acta}, 47:265--306, 1974.

\bibitem[Fuc92]{Fuchs1992}
J.~Fuchs.
\newblock {\em {A}ffine {L}ie {A}lgebras and {Q}uantum {G}roups}.
\newblock Cambridge Monographs on Mathematical Physics. Cambridge UP, 1992.

\bibitem[Ful73]{Fulling1973}
S.~A. Fulling.
\newblock {N}onuniqueness of {C}anonical {F}ield {Q}uantization in {R}iemannian
  {S}pace-{T}ime.
\newblock {\em Phys. Rev. D}, 7:2850 -- 2862, 1973.

\bibitem[FSW78]{Fulling1978}
S.~A. Fulling, M.~Sweeny, and R.~M. Wald.
\newblock {S}ingularity structure of the two-point function in quantum field
  theory in curved spacetime.
\newblock {\em Commun. Math. Phys.}, 63(3):257--264, 1978.

\bibitem[GT88]{Galbiati1988}
M.~Galbiati and A.~Tognoli, editors.
\newblock {\em {R}eal {A}nalytic and {A}lgebraic {G}eometry}, Lecture Notes in
  Mathematics. Springer, 1988.
\newblock Proceedings of the Conference held in Trento, Italy, October 3-7,
  1988.

\bibitem[Gar98]{Garabedian1998}
P.~Garabedian.
\newblock {\em {P}artial {D}ifferential {E}quations}.
\newblock American Mathematical Society, 1998.

\bibitem[GK85]{Gawedzki1985}
K.~Gaw{\c{e}}dzki and A.~Kupiainen.
\newblock {E}xact {R}enormalization for the {G}ross-{N}eveu {M}odel of
  {Q}uantum {F}ields.
\newblock {\em Phys. Rev. Lett.}, 54(20):2191--, May 1985.

\bibitem[GN43]{GelfandNeumark1943}
I.~Gelfand and M.~Neumark.
\newblock {O}n the {I}mbedding of {N}ormed {R}ings into the {R}ing of
  {O}perators in {H}ilbert {S}pace.
\newblock {\em Rec. Math. {[}Mat. Sbornik{]}}, 12(54)(2):197--217, 1943.

\bibitem[GS64]{GelfandShilov1964}
I.~M. Gel'fand and G.~E. Shilov.
\newblock {\em {G}eneralized {F}unctions: {P}roperties and {O}perations},
  volume~1.
\newblock Academic Press, 1964.

\bibitem[GV64]{Gelfand1964e}
I.~M. Gelfand and N.~Y. Vilenkin.
\newblock {\em {A}pplications of harmonic analysis}, volume~4 of {\em
  Generalized Functions}.
\newblock Academic Press, 1964.

\bibitem[GW64]{Gelfand1964}
I.~M. Gelfand and N.~J. Wilenkin.
\newblock {\em {E}inige {A}nwendungen der harmonischen {A}nalyse: {G}elfandsche
  {R}aumtripel}, volume~4 of {\em Verallgemeinerte Funktionen
  (Distributionen)}.
\newblock Deutscher Verlag der Wissenschaften, 1964.

\bibitem[GML54]{Gell-MannLow1954}
M.~Gell-Mann and F.~E. Low.
\newblock {Quantum electrodynamics at small distances}.
\newblock {\em Phys. Rev.}, 95:1300--1312, 1954.

\bibitem[GML51]{Gell-Mann1951}
M.~Gell-Mann and F.~Low.
\newblock {B}ound {S}tates in {Q}uantum {F}ield {T}heory.
\newblock {\em Phys. Rev.}, 84(2):350--354, 1951.

\bibitem[Geo99]{Georgi1999}
H.~Georgi.
\newblock {\em {L}ie {A}lgebras in {P}article {P}hysics}.
\newblock Perseus Books, 1999.

\bibitem[Giu08]{Giulini2008}
D.~Giulini.
\newblock {T}he {R}ich {S}tructure of {M}inkowski {S}pace.
\newblock 2008.
\newblock \href{http://arxiv.org/abs/0802.4345}{arXiv:0802.4345v1 [math-ph]}.

\bibitem[GJ87]{GlimmJaffe1987}
Glimm and Jaffe.
\newblock {\em {Q}uantum {P}hysics: {A} {F}unctional {I}ntegral {P}oint of
  {V}iew}.
\newblock Springer, 2nd edition, 1987.

\bibitem[GS87]{Goeckeler1987}
M.~G{\"o}ckeler and T.~Sch{\"u}cker.
\newblock {\em {D}ifferential {G}eometry, {G}auge {T}heories, and {G}ravity}.
\newblock Cambridge Monographs on Mathematical Physics. Cambridge Univerity
  Press, 1987.

\bibitem[Gom03]{Gomi2003}
K.~Gomi.
\newblock {R}elationship between equivariant gerbes and gerbes over the
  quotient space.
\newblock arXiv:math.DG/0308032 v1, Aug 2003.

\bibitem[Gor77]{Gorbatsevich1975}
V.~V. Gorbatsevich.
\newblock {T}ree-{D}imensional {H}omogeneus {S}paces.
\newblock {\em Siberian Mathematical Journal}, 18(2):200--210, 1977.

\bibitem[GS00]{Gottker-Schnetmann2000}
J.~G{\"o}ttker-Schnetmann.
\newblock {\em {A}nalytische und {N}umerische {U}ntersuchungen $\mathbb{Z}_2$-
  und ${O}(2)$-symmetrischer {M}odelle: {L}inked {C}luster {E}ntwicklung,
  {R}enormierungsgruppe und kritische {E}xponenten}.
\newblock PhD thesis, Westf{\"a}lische Wilhelms-Universit{\"a}t M{\"u}nster,
  2000.

\bibitem[GB03]{Gracia-Bondia2003}
J.~M. Gracia-Bond{\'i}a.
\newblock {I}mproved {E}pstein-{G}laser {R}enormalization in {C}oordinate
  {S}pace {I}. {E}uclidean {F}ramework.
\newblock {\em Mathematical Physics, Analysis and Geometry}, 6(1):59--88, 2003.
\newblock \href{http://arxiv.org/abs/hep-th/0202023}{arXiv:hep-th/0202023}.

\bibitem[GBL03]{GraciaBondiaLazzarini2003}
J.~M. Gracia-Bond{\'i}a and S.~Lazzarini.
\newblock {I}mproved {E}pstein-{G}laser renormalization {II}. {L}orentz
  invariant framework.
\newblock {\em J. Math. Phys.}, 44(9):3863--3875, 2003.
\newblock \href{http://arxiv.org/abs/hep-th/0212156}{arXiv:hep-th/0212156}.

\bibitem[GBL00]{Gracia-Bondia2000}
J.~M. Gracia-Bond{\'i}a and S.~Lazzarini.
\newblock {C}onnes-{K}reimer-{E}pstein-{G}laser {R}enormalization.
\newblock 2000.
\newblock \href{http://arxiv.org/abs/hep-th/0006106}{arXiv:hep-th/0006106}.

\bibitem[GP75]{Greum1975}
W.~Greub and H.-R. Petry.
\newblock {M}inimal {C}oupling and complex line bundles.
\newblock {\em J. Math. Phys.}, 16(6):1347--1351, 1975.

\bibitem[Gri01a]{Grigore2001a}
D.~R. Grigore.
\newblock {Gauge invariance of quantum electrodynamics in the causal approach
  to renormalization theory}.
\newblock {\em Annalen der Physik}, 10:439--471, 2001.

\bibitem[Gri01b]{Grigore2001b}
D.~R. Grigore.
\newblock {Scale invariance in the causal approach to renormalization theory}.
\newblock {\em Annalen der Physik}, 10:473--496, 2001.

\bibitem[Gri10]{Grigore2010}
D.~R. Grigore.
\newblock {Perturbative Gravity in the Causal Approach}.
\newblock {\em Class. Quant. Grav.}, 27:015013, 2010.

\bibitem[GS08]{GrigoreScharf2008}
D.~Grigore and G.~Scharf.
\newblock {N}o-{G}o {R}esult for {S}upersymmetric {G}auge {T}heories in the
  {C}ausal {A}pproach.
\newblock {\em Annalen der Physik}, 17(11):864--880, 2008.

\bibitem[Gri03]{Grigorescu2003}
M.~Grigorescu.
\newblock {P}hysical {F}ramework of {Q}uantization {P}roblem.
\newblock arXiv:physics/0306079 v1, Jun 2003.

\bibitem[GKP07]{GrooteKoernerPivovarov2007}
S.~Groote, J.~G. K{\"o}rner, and A.~A. Pivovarov.
\newblock {O}n the evaluation of a certain class of {F}eynman diagrams in
  x-space: {S}unrise-type topologies at any loop order.
\newblock {\em Annals of Physics}, 322(10):2374--2445, 2007.

\bibitem[GZZZ03]{TEUBNER2}
D.~D.~G. Grosche, P.~D.~E. Zeidler, D.~Ziegler, and D.~V. Ziegler.
\newblock {\em {T}eubner-{T}aschenbuch der {M}athematik ({T}eil 2)}.
\newblock Teubner Verlag, Wiesbaden, 2003.

\bibitem[GW73]{Gross1973}
D.~J. Gross and F.~Wilczek.
\newblock {U}ltraviolet {B}ehavior of {N}on-{A}belian {G}auge {T}heories.
\newblock {\em Phys. Rev. Lett.}, 30(26):1343 -- 1346, 1973.

\bibitem[GY03a]{Gross2003}
J.~L. Gross and J.~Yellen.
\newblock {\em {H}andbook of {G}raph {T}heory}.
\newblock CRC Press, 2003.
\newblock
  \url{http://books.google.de/books?id=mKkIGIea_BkC&pg=PA487&lpg=PA487&dq=grap%
h+theory+%22edge+automorphism%22&source=web&ots=VV1INQW0Cw&sig=XAntqmU4BV54O0j%
jzd4wJ-DnBDg&hl=de&sa=X&oi=book_result&resnum=3&ct=result#PPA487,M1}.

\bibitem[GY03b]{GrossYellen2003}
J.~L. Gross and J.~Yellen, editors.
\newblock {\em {H}andbook of {G}raph {T}heory}.
\newblock Discrete Mathematics and its Applications. CRC Press, 2003.

\bibitem[Gro88]{Grossmann1988}
S.~Gro{\ss}mann.
\newblock {\em {F}unktionalanalysis im {H}inblick auf {A}nwendungen in der
  {P}hysik}.
\newblock Aula-Verlag, 4 edition, 1988.

\bibitem[Gro60]{Grothendieck1960}
A.~Grothendieck.
\newblock {\'E}l{\'e}ments de g{\'e}om{\'e}trie alg{\'e}brique (r{\'e}dig{\'e}s
  avec la collaboration de {J}ean {D}ieudonn{\'e}) : {I}. {L}e langage des
  sch{\'e}mas.
\newblock {\em Pub. Math. IHES}, 4:5--228, 1960.

\bibitem[Gua03]{Gualteri2003}
M.~Gualteri.
\newblock {\em {G}eneralized complex geometry}.
\newblock PhD thesis, University of Oxford, 2003.
\newblock arXiv:math.DG/0401221 v1 18 Jan 2004.

\bibitem[GS82]{Guillemin1982}
V.~Guillemin and S.~Sternberg.
\newblock {G}eometric {Q}uantization and {M}ultiplicities of {G}roup
  {R}epresentations.
\newblock {\em Inventiones mathematicae}, 67:515--538, 1982.

\bibitem[GZ06]{Guo2006}
L.~Guo and B.~Zhang.
\newblock {R}enormalization of multiple zeta values.
\newblock 2006.
\newblock arXiv:math/0606076v3 [math.NT].

\bibitem[GMRT08]{Gurau2008}
R.~Gurau, J.~Magnen, V.~Rivasseau, and A.~Tanasa.
\newblock {A} translation-invariant renormalizable non-commutative scalar
  model.
\newblock 2008.
\newblock arXiv:0802.0791v1 [math-ph].

\bibitem[Haa55]{Haag1955}
R.~Haag.
\newblock {O}n {Q}uantum {F}ield {T}heories.
\newblock {\em Dan Mat. Fys. Medd.}, 29(12):1--37, 1955.

\bibitem[Haa58]{Haag1958}
R.~Haag.
\newblock {Quantum field theories with composite particles and asymptotic
  conditions}.
\newblock {\em Phys. Rev.}, 112:669--673, 1958.

\bibitem[Haa59]{Haag1959}
R.~Haag.
\newblock {D}iscussion des "axiomes" et des propri\'et\'es asymptotiques d'une
  th\'eorie des champs locale avec particules compos\'ees.
\newblock {\em Colloques Int. Centre Nat. Rech. Sci.}, 75:151--162, 1959.

\bibitem[Haa96]{Haag1996}
R.~Haag.
\newblock {\em {L}ocal {Q}uantum {P}hysics}.
\newblock Springer, 2 edition, 1996.

\bibitem[HK64]{Haag1964}
R.~Haag and D.~Kastler.
\newblock {A}n {A}lgebraic {A}pproach to {Q}uantum {F}ield {T}heory.
\newblock {\em J. Math. Phys.}, 5(7):848--861, 1964.

\bibitem[HS61]{Haag1961}
R.~Haag and B.~Schroer.
\newblock {P}ostulates of {Q}uantum {F}ield {T}heory.
\newblock {\em J. Math. Phys.}, 3:248--256, 1961.

\bibitem[Haa59]{Haag1959a}
R.~Haag.
\newblock {T}he {F}ramework of {Q}uantum {F}ield {T}heory.
\newblock {\em Il Nuovo Cimento (1955-1965)}, 14(1):131--152, 1959.

\bibitem[HNS84]{HaagNarnhoferStein1984}
R.~Haag, H.~Narnhofer, and U.~Stein.
\newblock {O}n quantum field theory in gravitational background.
\newblock {\em Commun. Math. Phys.}, 94(2):219--238, June 1984.

\bibitem[HSZ03]{TEUBNER1}
P.~D.~W. Hackbusch, P.~D.~H. Schwarz, and P.~D.~E. Zeidler.
\newblock {\em {T}eubner-{T}aschenbuch der {M}athematik ({T}eil 1)}.
\newblock Teubner Verlag, Wiesbaden, 2003.

\bibitem[Hag59]{HAGEDORN}
R.~Hagedorn.
\newblock {N}ote {O}n {S}ymmetry {O}perations in {Q}uantum {M}echanics.
\newblock {\em Nuovo Cimento}, XII(X):553--566, 1959.

\bibitem[HS89]{HaimanSchmitt1989}
M.~Haiman and W.~Schmitt.
\newblock {I}ncidence algebra antipodes and lagrange inversion in one and
  several variables.
\newblock {\em Journal of Combinatorial Theory, Series A}, 50(2):172--185,
  March 1989.

\bibitem[HW57]{HallWightman1957}
D.~Hall and A.~S. Wightman.
\newblock {A} {T}heorem on {I}nvariant {A}nalytic {F}unctions with
  {A}pplications to {R}elativistic {Q}uantum {F}ield {T}heory.
\newblock {\em Matematisk-fysiske Meddelelse}, 31(5):1--41, 1957.

\bibitem[Hal05]{Halliwell2005}
J.~J. Halliwell.
\newblock {C}ommuting position and momentum operators, exact decoherence, and
  emergent classicality.
\newblock {\em Phys. Rev. A}, 72:042109, 2005.

\bibitem[Hal06]{Halliwell2006}
J.~J. Halliwell.
\newblock {I}nvariant class operators in the decoherent histories analysis of
  timeless quantum theories.
\newblock {\em Phys. Rev. D}, 73:024011, 2006.

\bibitem[Hal50]{Halmos1950}
P.~R. Halmos.
\newblock {\em {M}easure {T}heory}.
\newblock D. van Nostrand Company, Inc., 1950.

\bibitem[Hal58]{Halmos1958}
P.~R. Halmos.
\newblock {\em {F}inite-{D}imensional {V}ector {S}paces}.
\newblock Litton Educational (Springer), 1958.

\bibitem[HM06]{Halvorson2006}
H.~Halvorson and M.~M{\"u}ger.
\newblock {A}lgebraic {Q}uantum {F}ield {T}heory.
\newblock 2006.
\newblock \href{http://arxiv.org/abs/math-ph/0602036}{arXiv:math-ph/0602036}.

\bibitem[Ham82]{Hamilton1982}
R.~S. Hamilton.
\newblock {T}he inverse function theorem of {N}ash and {M}oser.
\newblock {\em Bull. Amer. Math. Soc. (N. S.)}, 7(1):65, 1982.

\bibitem[Han74]{Hannabuss1974}
K.~C. Hannabuss.
\newblock {T}he {I}rreducible {C}omponents of {H}omogeneous {F}unctions and
  {S}ymmetric {T}ensors.
\newblock {\em Teaching Mathematics Applications}, 14(1):83--88, 1974.

\bibitem[HHM08]{HarrisHirstMossinghoff2008}
J.~M. Harris, J.~L. Hirst, and M.~J. Mossinghoff.
\newblock {\em {C}ombinatorics and {G}raph {T}heory}.
\newblock Undergraduate Texts in Mathematics. Springer, 2008.

\bibitem[Har77]{Hartshorne1977}
R.~Hartshorne.
\newblock {\em {A}lgebraic {G}eometry}, volume~52 of {\em Graduate Texts in
  Mathematics}.
\newblock Springer, 1977.

\bibitem[Hat02]{Hatcher2002}
A.~Hatcher.
\newblock {\em {A}lgebraic {T}opology}.
\newblock Cambridge University Press, 2002.
\newblock http://www.math.cornell.edu/~hatcher/.
\newblock \url{http://www.math.cornell.edu/~hatcher/}.

\bibitem[Hat03]{Hatcher2003}
A.~Hatcher.
\newblock {\em {V}ector {B}undles and {K}-{T}heory}.
\newblock Allen Hatcher, 2003.
\newblock http://www.math.cornell.edu/~hatcher/.
\newblock \url{http://www.math.cornell.edu/~hatcher/}.

\bibitem[HY75]{Hattori1975}
A.~Hattori and T.~Yoshida.
\newblock {L}ifting compact group actions in fiber bundles.
\newblock {\em Japan. J. Math.}, 2(1):13--25, 1975.

\bibitem[Haw75]{Hawking1975}
S.~W. Hawking.
\newblock {P}article {C}reation by {B}lack {H}oles.
\newblock {\em Commun. Math. Phys.}, 43:199--220, 1975.
\newblock ERRATA: Received April 12, 1974.

\bibitem[Haz02]{SPOR-2002}
M.~Hazewinkel, editor.
\newblock {\em {E}ncyclopaedia of {M}athematics}.
\newblock Springer, 2002.
\newblock http://eom.springer.de/default.htm.
\newblock \url{http://eom.springer.de/default.htm}.

\bibitem[Hep66]{Hepp1966}
K.~Hepp.
\newblock {P}roof of the {B}ogoliubov-{P}arasiuk {T}heorem on
  {R}enormalization.
\newblock {\em Commun. Math. Phys.}, 2(4):301--326, 1966.

\bibitem[Heu90]{Heuser1990}
H.~Heuser.
\newblock {\em {L}ehrbuch der {A}nalysis {T}eil 1}.
\newblock Teubner Stuttgart, Stuttgart, 1990.

\bibitem[HR06]{HigsonRoe2006}
N.~Higson and J.~Roe, editors.
\newblock {\em {S}urveys in {N}oncommutative {G}eometry}, volume~6 of {\em Clay
  Mathematics Proceedings}. Clay MI, 2006.

\bibitem[HS70]{HILSTA}
P.~J. Hilton and U.~Stammbach.
\newblock {\em {A} {C}ourse in {H}omological {A}lgebra}.
\newblock Graduate Texts in Mathematics. Springer-Verlag, New York, Heidelberg,
  Berlin, 1970.

\bibitem[HH02]{Hirshfeld2002}
A.~C. Hirshfeld and P.~Henselder.
\newblock {S}tar {P}roducts and {P}erturbative {Q}uantum {F}ield {T}heory.
\newblock {\em Ann. Physics}, 298(2):382--393, 2002.

\bibitem[Hoc65]{Hochschild1965}
G.~Hochschild.
\newblock {\em {T}he {S}tructure of {L}ie {G}roups}.
\newblock Holden-Day, Inc., 1965.

\bibitem[Hol06]{Hollands2006}
S.~Hollands.
\newblock {T}he operator product expansion for perturbative quantum field
  theory in curved spacetime.
\newblock {\em arXiv:gr-qc/0605072}, May 2006.
\newblock \href{http://arxiv.org/abs/gr-qc/0605072}{arXiv:gr-qc/0605072}.

\bibitem[Hol08]{Hollands2008}
S.~Hollands.
\newblock {Q}uantum field theory in terms of consistency conditions {I}:
  {G}eneral framework, and perturbation theory via {H}ochschild cohomology.
\newblock 2008.
\newblock arXiv:0802.2198v1 [hep-th].

\bibitem[HW03]{Hollands2003}
S.~Hollands and R.~M. Wald.
\newblock {O}n the {R}enormalization {G}roup in {C}urved {S}pacetime.
\newblock {\em Commun. Math. Phys.}, 237(1-2):123--160, 2003.

\bibitem[Hol01]{Hollands2001}
S.~Hollands.
\newblock {N}oether {C}harges for self-interacting quantum field theories in
  curved spacetimes with a {K}illing-vector.
\newblock {\em Annalen der Physik}, 10(10):859--883, 2001.
\newblock \href{http://arxiv.org/abs/gr-qc/0011069v5}{arXiv:gr-qc/0011069v5}.

\bibitem[Hol04]{Hollands2004}
S.~Hollands.
\newblock {Algebraic approach to the 1/N expansion in quantum field theory}.
\newblock {\em Rev. Math. Phys.}, 16:509, 2004.

\bibitem[Hol07]{Hollands2007}
S.~Hollands.
\newblock {R}enormalized {Q}uantum {Y}ang-{M}ills {F}ields in {C}urved
  {S}pacetime.
\newblock 2007.
\newblock \href{http://arxiv.org/abs/0705.3340v3}{arXiv:0705.3340v3 [gr-qc]}.

\bibitem[HW01]{HollandsWald2001}
S.~Hollands and R.~M. Wald.
\newblock {L}ocal {W}ick {P}olynomials and {T}ime {O}rdered {P}roducts of
  {Q}uantum {F}ields in {C}urved {S}pacetime.
\newblock {\em Commun. Math. Phys.}, 223(2):289--326, 2001.

\bibitem[HW02]{HollandsWald2002}
S.~Hollands and R.~M. Wald.
\newblock {E}xistence of {L}ocal {C}ovariant {T}ime {O}rdered {P}roducts of
  {Q}uantum {F}ields in {C}urved {S}pacetime.
\newblock {\em Commun. Math. Phys.}, 231(2):309--345, December 2002.

\bibitem[HS93]{Holton1993}
D.~A. Holton and J.~Sheehan.
\newblock {\em {T}he {P}etersen {G}raph}.
\newblock Cambridge UP, 1993.
\newblock
  \url{http://books.google.de/books?id=nld40slgtdkC&printsec=frontcover&dq=the%
+petersen+graph,+sheehan&sig=ACfU3U3dk5YfrTqJS0P31yoWInKP59h2wQ#PPA25,M1}.

\bibitem[Hoo78]{Hoogland1978}
H.~Hoogland.
\newblock {G}auge equivalence of representations of symmetry groups in quantum
  mechanics.
\newblock {\em J. Phys. A}, 11(8):1557--1568, 1978.

\bibitem[H{\"o}p03]{Hoepfner2003}
R.~H{\"o}pfner.
\newblock {S}tochastik {I} + {II}.
\newblock Skript, 2003.
\newblock
  \url{http://joguinf.informatik.uni-mainz.de/~hoepfner/neufassung.html}.

\bibitem[H{\"o}r71]{Hoermander1971}
L.~H{\"o}rmander.
\newblock {F}ourier integral operators. {I}.
\newblock {\em Acta Mathematika}, 127:79--183, 1971.

\bibitem[H{\"o}r90]{Hoermander1990}
L.~H{\"o}rmander.
\newblock {\em {T}he {A}nalysis of {L}inear {P}artial {D}ifferential
  {O}perators {I}: {D}istribution {T}heory and {F}ourier {A}nalysis}.
\newblock Springer, 2 edition, 1990.

\bibitem[H{\"o}r03]{Hoermander2003}
L.~H{\"o}rmander.
\newblock {\em {T}he {A}nalysis of {L}inear {P}artial {D}ifferential
  {O}perators {I}: {D}istribution {T}heory and {F}ourier {A}nalysis}.
\newblock Classics in Mathematics. Springer, 2003.

\bibitem[H{\"o}r05]{Hoermander2005}
L.~H{\"o}rmander.
\newblock {\em {T}he {A}nalysis of {L}inear {P}artial {D}ifferential
  {O}perators {II}: {D}ifferential {O}perators with {C}onstant {C}oefficients}.
\newblock Classics in Mathematics. Springer, 2005.

\bibitem[H{\"o}r07]{Hoermander2007}
L.~H{\"o}rmander.
\newblock {\em {T}he {A}nalysis of {L}inear {P}artial {D}ifferential
  {O}perators {III}: {P}seudo-{D}ifferential {O}perators}.
\newblock Classics in Mathematics. Springer, 2007.

\bibitem[Hum95]{Humphreys1995}
J.~E. Humphreys.
\newblock {\em {L}inear {A}lgebraic {G}roups}.
\newblock Springer, 1995.

\bibitem[Hus66]{Husemoller1966}
D.~Husemoller.
\newblock {\em {F}ibre {B}undles}.
\newblock Springer, 1966.

\bibitem[IS78]{slavnov1978}
V.~A. Il'in and D.~A. Slavnov.
\newblock {A}lgebras of observables in the {S}-matrix approach.
\newblock {\em Theor. Math. Phys.}, 36(1):578--585, 1978.

\bibitem[Ish84]{Isham1984}
C.~J. Isham.
\newblock {T}opological and {G}lobal {A}spects of {Q}uantum {T}heory.
\newblock In B.~S. DeWitt and R.~Stora, editors, {\em Relativity, groups and
  topology II}, Les Houches, Session XL, pages 1059--1290. Elsevier Science
  Publishers B.V., 1984.

\bibitem[IZ80]{ItzyksonZuber1980}
C.~Itzykson and J.-B. Zuber.
\newblock {\em {Q}uantum {F}ield {T}heory}.
\newblock International Series in Pure and Applied Physics. McGraw-Hill Inc.,
  1980.

\bibitem[Jac04]{live-of-grothendieck}
A.~Jackson.
\newblock {C}omme {A}ppel{\'e} du {N}{\'e}ant - {A}s {I}f {S}ummoned from the
  {V}oid: {T}he {L}ife of {A}lexandre {G}rothendieck.
\newblock {\em Notices of the AMS}, 51(9 \& 10):1038, 2004.

\bibitem[J{\"a}n01]{JAENICH2001}
K.~J{\"a}nich.
\newblock {\em {A}nalysis f{\"u}r {P}hysiker und {I}ngenieure}.
\newblock Springer-Verlag, 2001.

\bibitem[JKR05]{Jarvis2005}
P.~D. Jarvis, J.~Kijowski, and G.~Rudolph.
\newblock {O}n the structure of the observable algebra of {QCD} on the lattice.
\newblock {\em J. Phys. A}, 38(23):5359--5377, 2005.

\bibitem[Jef04]{Jefferies2004}
B.~Jefferies.
\newblock {\em {S}pectral {P}roperties of {N}oncommuting {O}perators}.
\newblock Number 1843 in Lecture Notes in Mathematics. Springer, 2004.

\bibitem[Jel89]{JELITO}
P.~D. R.~J. Jelito.
\newblock {\em {T}heoretische {P}hysik 6: {T}hermodynamik und {S}tatistik}.
\newblock AULA-Verlag, Wiesbaden, 1989.

\bibitem[JOCB07]{Jha2007}
A.~K. Jha, M.~N. O'Sullivan, K.~W.~C. Chan, and R.~W. Boyd.
\newblock {T}emporal {C}oherence and {I}ndistinguishability in {T}wo-{P}hoton
  {I}nterference {E}ffects.
\newblock 2007.
\newblock arXiv:0712.2735v1 [quant-ph].

\bibitem[Joh72]{Johnson1972}
B.~E. Johnson.
\newblock {C}ohomology in {B}anach {A}lgebras.
\newblock {\em Memoirs of the AMS}, 127:1--96, 1972.

\bibitem[Joh02]{Johnson2002}
W.~P. Johnson.
\newblock {T}he {C}urious {H}istory of {F}a{\`a} di {B}runo's {F}ormula.
\newblock {\em The American Mathematical Monthly}, 109(3):217--234, 2002.

\bibitem[JR82]{JoniRota1982}
S.~Joni and G.-C. Rota.
\newblock {C}oalgebras and {B}ialgebras in {C}ombinatorics.
\newblock In R.~Morris, editor, {\em Umbral Calculus and Hopf Algebras},
  volume~6 of {\em Contemporary Mathematics}. AMS, 1982.

\bibitem[Jon07]{Jonsson2007}
J.~Jonsson.
\newblock {\em {S}implicial {C}omplexes of {G}raphs}.
\newblock Number 1928 in Lecture Notes in Mathematics. Springer, 2007.

\bibitem[Jos65]{Jost1965}
R.~Jost.
\newblock {\em {T}he {G}eneral {T}heory of {Q}uantized {F}ields}, volume~IV of
  {\em Lectures in Applied Mathematics}.
\newblock AMS, 1965.

\bibitem[Jun96]{Junker1996}
W.~Junker.
\newblock {A}diabatic {V}acua and {H}adamard {S}tates for {S}calar {Q}uantum
  {F}ields on {C}urved {S}pacetime.
\newblock {\em Rev. Math. Phys.}, 8(8):1091--1159, 1996.

\bibitem[Kas95]{Kassel1995}
C.~Kassel.
\newblock {\em {Q}uantum {G}roups}, volume 155 of {\em Graduate Texts in
  Mathematics}.
\newblock Springer, 1995.

\bibitem[Kas00]{Kastler2000}
D.~Kastler.
\newblock {C}onnes-{M}oscovici-{K}reimer {H}opf {A}lgebras.
\newblock In R.~Longo, editor, {\em Mathematical Physics in Mathematics and
  Physics: Quantum and Operator Algebraic Aspects}, volume~30 of {\em Fields
  Institute Communications}, 2000.
\newblock \href{http://arxiv.org/abs/math-ph/0104017}{arXiv:math-ph/0104017}.

\bibitem[Kas03]{Kastler2003}
D.~Kastler.
\newblock {R}udolf {H}aag - {E}ighty {Y}ears.
\newblock {\em Commun. Math. Phys.}, 273(1-2):3--6, 2003.

\bibitem[Kas04]{Kastler2004}
D.~Kastler.
\newblock {O}n the exterior structure of graphs.
\newblock {\em J. Math. Phys.}, 45(5):1777--1786, 2004.

\bibitem[KRS66]{Kastler1966}
D.~Kastler, D.~W. Robinson, and A.~Swieca.
\newblock {C}onserved currents and associated symmetries; {G}oldstone's
  theorem.
\newblock {\em Commun. Math. Phys.}, 2(1):108--120, 1966.

\bibitem[KW91]{Kay1991}
B.~S. Kay and R.~M. Wald.
\newblock {T}heorems on the uniqueness and thermal properties of stationary,
  nonsingular, quasifree states on spacetimes with a bifurcate killing horizon.
\newblock {\em Phys. Rep.}, 207(2):49--136, 1991.

\bibitem[Kel07]{Keller2007a}
K.~J. Keller.
\newblock {W}igner's legacy and the fundamental theorem of {P}rojective
  {G}eometry.
\newblock wwwthep.physik.uni-mainz.de/~keller, Feb 2007.

\bibitem[Kel08]{KellerCMPSemDec2008}
K.~J. Keller.
\newblock {C}onnes-{K}reimer {H}opf algebra and {R}enormalization.
\newblock Talk given at the CMP seminar, Hamburg, December 2008.
\newblock \url{http://www.desy.de/~keller}.

\bibitem[Kel09]{Keller2009}
K.~J. Keller.
\newblock {E}uclidean {E}pstein-{G}laser {R}enormalization.
\newblock {\em J. Math. Phys.}, 50(10):103503, 2009.
\newblock \href{http://arxiv.org/abs/0902.4789}{arXiv:0902.4789}.

\bibitem[Kel07]{Keller-LinZsh-2007}
K.~Keller.
\newblock {L}inearer {Z}usammenhang.
\newblock 2007.

\bibitem[Kel06]{Keller2006}
K.~J. Keller.
\newblock {\"U}ber die {R}olle der projektiven {G}eometrie in der
  {Q}uantenmechanik.
\newblock Master's thesis, Johannes Gutenberg-Universit{\"a}t Mainz,
  http://wwwthep.physik.uni-mainz.de/Publications/theses/dip-keller.pdf, 2006.

\bibitem[Kel07]{Keller2007}
K.~J. Keller.
\newblock {R}enormierung nach {C}onnes und {K}reimer.
\newblock wwwthep.physik.uni-mainz.de/~keller/, Feb 2007.
\newblock Seminarvortrag in Seminar {\"u}ber Feynmandiagramme und
  Polylogarithmen.

\bibitem[KPR08]{Keller2008}
K.~J. Keller, N.~A. Papadopoulos, and A.~F. Reyes.
\newblock {O}n the {R}ealization of {S}ymmetries in {Q}uantum {M}echanics.
\newblock {\em Mathematische Semesterberichte}, 2008.
\newblock \href{http://arxiv.org/abs/0712.0997}{arXiv:0712.0997v1 [quant-ph]}.

\bibitem[Ker03]{Kerimov2003}
G.~A. Kerimov.
\newblock {C}orrespondence between quantum field theory on {R}iemannian and
  pseudo-{R}iemannian space of positive constant curvature.
\newblock {\em Physics Letters B}, 567(1-2):133--138, August 2003.

\bibitem[Key97]{Keyl1997}
M.~Keyl.
\newblock {A}lgebraische {M}ethoden der {Q}uantentheorie.
\newblock 1997.
\newblock \url{http://wwwitp.physik.tu-berlin.de/~keyl0433/vorlesung.html}.

\bibitem[KR05]{Kijowski2005}
J.~Kijowski and G.~Rudolph.
\newblock {T}he {O}bservable {A}lgebra of {L}attice {QCD}.
\newblock {\em Reports on Mathematical Physics}, 55(2):199--210, 2005.

\bibitem[Kle25]{KLEIN}
F.~Klein.
\newblock {\em {E}lementarmathematik vom h{\"o}heren {S}tandpunkte aus},
  volume~XV of {\em Die Grundlagen der mathematischen Wissenschaften in
  Einzeldarstellungen}.
\newblock Springer, 1925.

\bibitem[Knu92]{Knuth1992}
D.~E. Knuth.
\newblock {T}wo {N}otes on {N}otation.
\newblock 1992.

\bibitem[KN63]{Kobayashi1963}
S.~Kobayashi and K.~Nomizu.
\newblock {\em {F}oundations of {D}ifferential {G}eometry}, volume~1.
\newblock John Wiley \& Sons, 1963.

\bibitem[Kob07]{Kobe}
D.~H. Kobe.
\newblock {L}agrangian {D}ensities and {P}rinciple of {L}east {A}ction in
  {N}onrelativistic {Q}uantum {M}echanics.
\newblock 2007.
\newblock arXiv:0712.1608v1 [quant-ph].

\bibitem[Koc07]{Kochergin2007}
A.~V. Kochergin.
\newblock {N}ondegenerate saddles and nonmixing transformations {II}.
\newblock {\em Mathematical Notes}, 81(1):126--129, 2007.

\bibitem[K{\"o}h94]{Kohler1994}
M.~K{\"o}hler.
\newblock {\em {T}he stress energy tensor of a locally supersymmetric quantum
  field on a curved spacetime}.
\newblock PhD thesis, II. Institut f{\"u}r theoretische Physik der
  Universit{\"a}t Hamburg, arXiv:gr-qc/9505014v1, 1994.

\bibitem[KMS93]{KolarMichorSlovak1993}
I.~Kol{\'a}{\v{r}}, P.~W. Michor, and J.~Slov{\'a}k.
\newblock {\em {N}atural {O}perations in {D}ifferential {G}eometry}.
\newblock Springer, 1993.
\newblock \url{http://www.emis.de/monographs/KSM/}.

\bibitem[K{\"o}n04]{Koenigsberger2004}
K.~K{\"o}nigsberger.
\newblock {\em {A}nalysis 2}.
\newblock Springer, Berlin, Heidelberg, 5 edition, 2004.

\bibitem[Kos70]{Kostant1970}
B.~Kostant.
\newblock {\em {L}ectures in {M}odern {A}nalysis and {A}pplication}, volume 170
  of {\em Lecture Notes in Mathematics}, chapter Quantization and Unitary
  Representations, Part I: Prequantization, pages 87--208.
\newblock Springer, Berlin, 1970.

\bibitem[KW99]{Krajewski1999}
T.~Krajewski and R.~Wulkenhaar.
\newblock {O}n {K}reimer's {H}opf algebra structure of {F}eynman graphs.
\newblock {\em Eur. Phys. J. C}, 7(4):697--708, 1999.

\bibitem[KP92]{Krantz1992}
S.~G. Krantz and H.~R. Parks.
\newblock {\em {A} {P}rimier of {R}eal {A}nalytic {F}unctions}.
\newblock Birkh{\"a}user, 1992.

\bibitem[Kra00]{Kratzert2000}
K.~Kratzert.
\newblock {S}ingularity structure of the two point function of the free {D}irac
  field on a globally hyperbolic spacetime.
\newblock {\em Annalen der Physik}, 9(6):475--498, 2000.

\bibitem[Kra70]{Kraus1970}
K.~Kraus.
\newblock {A}n algebraic spectrum condition.
\newblock {\em Commun. Math. Phys.}, 16(2):138--141, 1970.

\bibitem[Kre98]{Kreimer1998}
D.~Kreimer.
\newblock {O}n the {H}opf algebra structure of perturbative quantum field
  theories.
\newblock {\em Adv. Theor. Math. Phys.}, 2(2):303--334, 1998.

\bibitem[Kre99]{Kreimer1999}
D.~Kreimer.
\newblock {C}hen's {I}terated {I}ntegral represents the {O}perator {P}roduct
  {E}xpansion.
\newblock {\em Adv. Theor. Math. Phys.}, 3(3):627--670, 1999.
\newblock hep-th/9901099v5.

\bibitem[KD99]{KreimerDelbourgo1999}
D.~Kreimer and R.~Delbourgo.
\newblock {U}sing the {H}opf algebra structure of {QFT} in calculations.
\newblock {\em Phys. Rev.}, D60:105025, 1999.

\bibitem[Kre99]{Kreimer1999a}
D.~Kreimer.
\newblock {O}n {O}verlapping {D}ivergences.
\newblock {\em Commun. Math. Phys.}, 204:669--689, 1999.

\bibitem[Kre02]{Kreimer2002}
D.~Kreimer.
\newblock {C}ombinatorics of (perturbative) quantum field theory.
\newblock {\em Physics Reports}, 363:387--424, 2002.
\newblock \href{http://arxiv.org/abs/hep-th/0010059}{arXiv:hep-th/0010059}.

\bibitem[Kre03]{Kreimer2003}
D.~Kreimer.
\newblock {N}ew mathematical structures in renormalizable quantum field
  theories.
\newblock {\em Annals of Physics}, 303(1):179--202, January 2003.

\bibitem[KM97]{KrieglMichor1997}
A.~Kriegl and P.~W. Michor.
\newblock {\em {T}he {C}onvenient {S}etting of {G}lobal {A}nalysis}, volume~53
  of {\em Mathematical Surveys and Monographs}.
\newblock AMS, 1997.

\bibitem[Kuc96]{KUCHLING}
H.~Kuchling.
\newblock {\em {T}aschenbuch der {P}hysik}.
\newblock Fachbuchverlag Leipzig im Carl Hanser Verlag, Leipzig, 1996.

\bibitem[Kuc94]{Kuckert1994}
B.~Kuckert.
\newblock {A} {N}ew {A}pproach to {S}pin \& {S}tatistics.
\newblock arXiv:hep-th/9412130 v1, Dec 1994.

\bibitem[Kuc04]{KUCKERT04}
B.~Kuckert.
\newblock {S}pin and statistics in nonrelativistic quantum mechanics, {I}.
\newblock {\em Phys. Lett. A}, 322:47--53, 2004.

\bibitem[KM04]{Kuckert2004-2}
B.~Kuckert and J.~Mund.
\newblock {S}pin \& {S}tatistics in {N}onrelativistic {Q}uantum {M}echanics,
  {II}.
\newblock arXiv:quant-ph/0411197, Nov 2004.

\bibitem[KM05]{Kuckert2005}
B.~Kuckert and J.~Mund.
\newblock {S}pin \& statistics in nonrelativistic quantum mechanics, {II}.
\newblock {\em Ann. Physics}, 14(5):309--311, 2005.

\bibitem[KTV96]{KuznetsovTkachovVlasov1996}
A.~N. Kuznetsov, F.~V. Tkachov, and V.~V. Vlasov.
\newblock {T}echniques of {D}istributions in {P}erturbative {Q}uantum {F}ield
  {T}heory.
\newblock 1996.
\newblock \href{http://arxiv.org/abs/hep-th/9612037}{arXiv:hep-th/9612037}.

\bibitem[LD71]{Laidlaw1971}
M.~G.~G. Laidlaw and C.~M. DeWitt.
\newblock {F}eynman {F}unctional {I}ntegrals for {S}ystems of
  {I}ndistinguishable {P}articles.
\newblock {\em Phys. Rev. D}, 3(6):1375--1378, 1971.

\bibitem[Lam69a]{Lam1969b}
C.~S. Lam.
\newblock {Feynman-parameter representation of old-fashioned perturbation
  diagrams}.
\newblock {\em Nuovo Cim.}, A59:435--443, 1969.

\bibitem[Lam69b]{Lam1969a}
C.~S. Lam.
\newblock {Renormalization procedure in feynman-parameter representations}.
\newblock {\em Nuovo Cim.}, A59:422--434, 1969.

\bibitem[LL69]{LamLebrun1969}
C.~S. Lam and J.~P. Lebrun.
\newblock {Feynman-parameter representations for momentum- and configuration-
  space diagrams}.
\newblock {\em Nuovo Cim.}, A59:397--421, 1969.

\bibitem[LL97a]{LLFeld}
L.~D. Landau and E.~M. Lifschitz.
\newblock {\em {K}lassische {F}eldtheorie}.
\newblock Verlag Harri Deutsch, Thun and Frankfurt am Main, 1997.

\bibitem[LL97b]{LLMech}
L.~D. Landau and E.~M. Lifschitz.
\newblock {\em {M}echanik}.
\newblock Verlag Harri Deutsch, Thun and Frankfurt am Main, 1997.

\bibitem[Lan98]{Landsman1998}
N.~P. Landsman.
\newblock {\em {M}athematical {T}opics {B}etween {C}lassical and {Q}uantum
  {M}echanics}.
\newblock Springer, 1998.

\bibitem[Lan64]{Lang1964}
S.~Lang.
\newblock {\em {I}ntroduction to {A}lgebraic {G}eometry}.
\newblock Addison-Wesley, 1964.

\bibitem[Lan02]{Lang2002}
S.~Lang.
\newblock {\em {A}lgebra}, volume 211 of {\em Graduate Texts in Mathematics}.
\newblock Springer, 2002.

\bibitem[Lan05]{Lange2005}
A.~Lange.
\newblock {T}he {E}pstein-{G}laser approach to perturbative quantum field
  theory: graphs and {H}opf algebras.
\newblock {\em J. Math. Phys.}, 46:062304, 2005.

\bibitem[Lan87]{Langlands1987}
R.~Langlands.
\newblock {T}he {D}irac {M}onopole {A}nd {I}nduced {R}epresentations.
\newblock {\em Pacific Journal of Mathematics}, 126:145--151, 1987.

\bibitem[Las99]{Lasserre1999}
J.~B. Lasserre.
\newblock {I}ntegration and homogeneous functions.
\newblock {\em Proc. Amer. Math. Soc.}, 127:813--818, 1999.

\bibitem[Lee03]{Lee2003}
J.~M. Lee.
\newblock {\em {I}ntroduction to {S}mooth {M}anifolds}.
\newblock Graduate Texts in Mathematics. Springer, 2003.

\bibitem[Leh54]{Lehmann1954}
H.~Lehmann.
\newblock {\"U}ber {E}igenschaften von {A}usbreitungsfunktionen und
  {R}enormierungskonstanten quantisierter {F}elder.
\newblock {\em Il Nuovo Cimento}, 11(4):342--357, 1954.

\bibitem[LSZ55]{LSZ1955}
H.~Lehmann, K.~Symanzik, and W.~Zimmermann.
\newblock {Z}ur {F}ormulierung quantisierter {F}eldtheorien.
\newblock {\em Il Nuovo Cimento}, 1(1):205--225, 1955.

\bibitem[LSZ57]{LSZ1957}
H.~Lehmann, K.~Symanzik, and W.~Zimmermann.
\newblock {O}n the {F}ormulation of {Q}uantized {F}ield {T}heories - {II}.
\newblock {\em Il Nuovo Cimento}, 6(2):319--333, 1957.

\bibitem[Lei75]{Leibbrandt1975}
G.~Leibbrandt.
\newblock {Introduction to the Technique of Dimensional Regularization}.
\newblock {\em Rev. Mod. Phys.}, 47:849, 1975.

\bibitem[LM77]{Leinaas:77}
J.~Leinaas and J.~Myrheim.
\newblock {O}n the theory of identical particles.
\newblock {\em Il Nuovo Cimento}, 37B:1--23, 1977.

\bibitem[Lin93]{LINDNER}
H.~Lindner.
\newblock {\em {P}hysik f{\"u}r {I}ngenieure}.
\newblock Fachbuchverlag Leipzig im Carl Hanser Verlag, Leipzig, 1993.

\bibitem[Loi03]{Loinger2003}
A.~Loinger.
\newblock {O}n {D}irac's {M}agnetic {M}onopole.
\newblock arXiv:physics/0309091 v1, Sep 2003.

\bibitem[LM63]{LOMMEN}
J.~S. Lomont and P.~Mendelson.
\newblock {T}he {W}igner {U}nitary-{A}ntiunitary {T}heorem.
\newblock {\em Ann. Math.}, 78(3):548--559, 1963.

\bibitem[Lud99]{LUDVIGSEN}
M.~Ludvigsen.
\newblock {\em {G}eneral {R}elativity}.
\newblock Cambridge University Press, Cambridge, 1999.

\bibitem[Lud83]{LUDWIG}
G.~Ludwig.
\newblock {\em {F}oundations of {Q}uantum {M}echanics}.
\newblock Springer-Verlag, New York, 1983.

\bibitem[L{\"u}s82]{Luescher1982}
M.~L{\"u}scher.
\newblock {D}imensional regularisation in the presence of large background
  fields.
\newblock {\em Annals of Physics}, 142(2):359--392, 1982.

\bibitem[Mag05]{Maggiore2005}
M.~Maggiore.
\newblock {\em {A} {M}odern {I}ntroduction to {Q}uantum {F}ield {T}heory}.
\newblock Number~12 in Oxford Master Series in Physics. Oxford University
  Press, 2005.

\bibitem[Man02]{Manchon2002}
D.~Manchon.
\newblock {H}opf algebras, from basics to applications to renormalization.
\newblock {\em Comptes Rendus des Rencontres Mathematiques de Glanon 2001},
  2002.
\newblock math/0408405v2.

\bibitem[MV68]{Manuceau1968}
J.~Manuceau and A.~Verbeure.
\newblock {Q}uasi-free states of the {C}.{C}.{R}.-{A}lgebra and {B}ogoliubov
  transformations.
\newblock {\em Commun. Math. Phys.}, 9(4):293--302, 1968.

\bibitem[Mar06]{Marcolli2006}
M.~Marcolli.
\newblock {R}enormalisierung als {G}alois-{S}ymmetrie, 2006.
\newblock
  \url{http://www.mpg.de/bilderBerichteDokumente/dokumentation/jahrbuch/2006/m%
athematik/forschungsSchwerpunkt1/}.

\bibitem[Mar03]{Marecki2003}
P.~Marecki.
\newblock {\em {Q}uantum {E}lectrodynamics on {B}ackground {E}xternal
  {F}ields}.
\newblock PhD thesis, II. Institut f{\"u}r theoretische Physik der
  Universit{\"a}t Hamburg, 2003.
\newblock arXiv:hep-th/0312304.
\newblock
  \url{http://unith.desy.de/research/aqft/doctoraltheses/index_eng.html}.

\bibitem[MT07]{Marmi2007}
S.~Marmi and P.~Tempesta.
\newblock {P}olylogarithms, hyperfunctions and generalized {L}ipschitz
  summation formulae.
\newblock 2007.
\newblock arXiv:0712.1046v1 [math.NT].

\bibitem[MT02]{Martinez2002}
D.~Martinez and J.~Trout.
\newblock {A}symptotic {S}pectral {M}easures, {Q}uantum {M}echanics, and
  {E}-{T}heory.
\newblock {\em Commun. Math. Phys.}, 226(1):41--60, 2002.

\bibitem[Mat02a]{Matschull2002c}
H.-J. Matschull.
\newblock {A}llgemeine {R}elativit{\"a}tstheorie.
\newblock 2002.
\newblock \url{http://wwwthep.physik.uni-mainz.de/~matschul/}.

\bibitem[Mat02b]{Matschull2002b}
H.-J. Matschull.
\newblock {R}iemann'sche {G}eometrie.
\newblock 2002.
\newblock \url{http://wwwthep.physik.uni-mainz.de/~matschul/}.

\bibitem[Mat02c]{Matschull2002a}
H.-J. Matschull.
\newblock {S}pezielle {R}elativit{\"a}tstheorie.
\newblock 2002.
\newblock \url{http://wwwthep.physik.uni-mainz.de/~matschul/}.

\bibitem[Meh28]{Mehmke1928}
R.~Mehmke.
\newblock {Z}um {R}echnen mit {P}otenzreihen.
\newblock {\em Mathematische Annalen}, 99(1):616--624, 1928.

\bibitem[Men06]{Menous2006}
F.~Menous.
\newblock {T}he {B}irkhoff decomposition in groups of formal diffeomorphisms.
\newblock {\em Comptes Rendus Mathematique}, 342(10):737--740, 2006.

\bibitem[Mes71]{MLA1971}
H.~T. Meserole.
\newblock {T}he {MLA} {B}ibliographical {S}ystem: {P}ast, {P}resent, and
  {F}uture.
\newblock {\em PMLA}, 86(4):580--586, 1971.

\bibitem[MO06]{Mestre2006}
{\^A}.~Mestre and R.~Oeckl.
\newblock {C}ombinatorics of n-point functions via {H}opf algebra in quantum
  field theory.
\newblock {\em J. Math. Phys.}, 47(5):052301 (16), 2006.

\bibitem[MM65]{Milnor1965}
J.~W. Milnor and J.~C. Moore.
\newblock {O}n the {S}tructure of {H}opf {A}lgebras.
\newblock {\em Ann. Math.}, 81(2):211--264, 1965.

\bibitem[Mit99]{Mitchell1999}
D.~Mitchell.
\newblock {\em {G}hostwritten - {A} {N}ovel in {N}ine {P}arts}.
\newblock Hodder \& Stoughton, 1999.

\bibitem[Mon80]{Montgomery1980}
S.~Montgomery.
\newblock {\em {F}ixed {R}ings of {F}inite {A}utomorphism {G}roups of
  {A}ssociative {R}ings}.
\newblock Number 818 in Lecture Notes in Mathematics. Springer, Berlin
  Heidelberg New York, 1980.

\bibitem[Mon93]{Montgomery1993}
S.~Montgomery.
\newblock {\em {H}opf {A}lgebras and {T}heir {A}ctions on {R}ings}.
\newblock Number~82 in CBMS. AMS and NSF, 1993.

\bibitem[MS02]{MontgomerySchneider2002}
S.~Montgomery and H.-J. Schneider, editors.
\newblock {\em {N}ew {D}irections in {H}opf {A}lgebras}.
\newblock Number~43 in Mathematical Sciences Research Institute Publications.
  Cambridge UP, 2002.

\bibitem[Mor92]{Morandi1992}
G.~Morandi.
\newblock {\em {T}he {R}ole of {T}opology in {C}lassical an {Q}uantum
  {P}hysics}.
\newblock Number m 7 in Lecture Notes in Physics. Springer, 1992.

\bibitem[MS08]{Moschella2008}
U.~Moschella and R.~Schaeffer.
\newblock {A} note on canonical quantization of fields on a manifold.
\newblock 2008.
\newblock arXiv:0802.2447v1 [gr-qc].

\bibitem[Mur90]{Murphy1990}
G.~Murphy.
\newblock {\em {C}*-{A}lgebras and {O}perator {T}heory}.
\newblock Academic Press, 1990.

\bibitem[Nak03]{Nakahara2003}
M.~Nakahara.
\newblock {\em {G}eometry, {T}opology and {P}hysics}.
\newblock Graduate Student Series in Physics. IOP Publishing Ltd, Bristol,
  Philadelphia, 2003.

\bibitem[Nak76]{Nakanishi1976}
N.~Nakanishi.
\newblock {C}omplex-dimensional invariant delta functions and lightcone
  singularities.
\newblock {\em Communications in Mathematical Physics}, 48(2):97--118, June
  1976.

\bibitem[NS83]{Nash1983}
C.~Nash and S.~Sen.
\newblock {\em {T}opology and {G}eometry for {P}hysicists}.
\newblock Academic Press, 1983.

\bibitem[Nas07]{Nastase2007}
H.~Nastase.
\newblock {I}ntroduction to {A}d{S}-{CFT}.
\newblock 2007.
\newblock arXiv:0712.0689v2 [hep-th].

\bibitem[Nee05]{Neeb2005}
K.-H. Neeb.
\newblock {M}onastir {S}ummer {S}chool: {I}nfinite-{D}imensional {L}ie
  {G}roups.
\newblock lecture notes, 2005.
\newblock \url{http://www.math.uni-hamburg.de/home/wockel/data/monastir.pdf}.

\bibitem[Nev07]{Nevins2007}
M.~Nevins.
\newblock {L}inear {A}lgebraic {G}roups.
\newblock lecture notes, http://padic.mathstat.uottawa.ca/~MAT5327/, 2007.
\newblock \url{http://padic.mathstat.uottawa.ca/~MAT5327/}.

\bibitem[NW04a]{Nieper-Wisskirchen2004}
M.~A. Nieper-Wi{\ss}kirchen.
\newblock {T}opologie: {G}rundlagen.
\newblock 2004.

\bibitem[NW04b]{Nieper-Wisskirchen2004a}
M.~A. Nieper-Wi{\ss}kirchen.
\newblock {T}opology.
\newblock 2004.

\bibitem[NW05]{Nieper-Wisskirchen2005-DG}
M.~A. Nieper-Wi{\ss}kirchen.
\newblock {D}ifferentialgeometrie.
\newblock 2005.

\bibitem[Nik09]{Nikolov2009}
N.~M. Nikolov.
\newblock {A}nomalies in {Q}uantum {F}ield {T}heory and {C}ohomologies of
  {C}onfiguration {S}paces.
\newblock 2009.
\newblock \href{http://arxiv.org/abs/0903.0187}{arXiv:0903.0187}.

\bibitem[Nik07]{Nikolov2007}
N.~M. Nikolov.
\newblock {C}ohomological analysis of the {E}pstein-{G}laser renormalization.
\newblock 2007.
\newblock arXiv:0712.2194v2 [hep-th].

\bibitem[Noe18a]{Noether1918-1}
E.~Noether.
\newblock {I}nvariante {V}ariationsprobleme.
\newblock {\em Nachrichten von der Gesellschaft der Wissenschaften zu
  G{\"o}ttingen, Mathematisch-Physikalische Klasse}, 1918:235--257, 1918.

\bibitem[Noe18b]{Noether1918-2}
E.~Noether.
\newblock {I}nvarianten beliebiger {D}ifferentialausdr{\"u}cke.
\newblock {\em Nachrichten von der Gesellschaft der Wissenschaften zu
  G{\"o}ttingen, Mathematisch-Physikalische Klasse}, 1918:37--44, 1918.

\bibitem[Nor04]{sep-spacetime-holearg}
J.~Norton.
\newblock {T}he {H}ole {A}rgument.
\newblock In E.~N. Zalta, editor, {\em The Stanford Encyclopedia of
  Philosophy}. 2004.
\newblock
  \url{http://plato.stanford.edu/archives/spr2004/entries/spacetime-holearg/}.

\bibitem[Ohl96]{FeynMF}
T.~Ohl.
\newblock {\em feyn{MF}: {D}rawing {F}eynman {D}iagrams with {L}a{T}e{X} and
  {METAFONT}}, 1996.

\bibitem[Olo02]{Oloff2002}
R.~Oloff.
\newblock {\em {G}eometrie der {R}aumzeit}.
\newblock Vieweg, 2002.

\bibitem[Olv93]{Olver1993}
P.~J. Olver.
\newblock {\em {A}pplications of {L}ie {G}roups to {D}ifferential {E}quations}.
\newblock Number 107 in Graduate Texts in Mathematics. Springer, 1993.

\bibitem[Olv95]{Olver1995}
P.~J. Olver.
\newblock {\em {E}quivalence, {I}nvariants, and {S}ymmetry}.
\newblock Cambridge UP, 1995.

\bibitem[OV00]{EMS}
A.~L. Onishchik and E.~B. Vinberg, editors.
\newblock {\em {L}ie {G}roups and {L}ie {A}lgebras {II}}, volume~21 of {\em
  Encyclopaedia of Mathematical Sciences}.
\newblock Springer, Berlin, Heidelberg, New York, 2000.

\bibitem[Ost84]{Ostendorf1984}
A.~Ostendorf.
\newblock {F}eynman rules for {W}ightman functions.
\newblock {\em Ann. Inst. Henri Poicar{\'e} (A)}, 40(3):273--290, 1984.

\bibitem[OS73]{OsterwalderSchrader1973}
K.~Osterwalder and R.~Schrader.
\newblock {A}xioms for {E}uclidean {G}reen's functions.
\newblock {\em Commun. Math. Phys.}, 31(2):83--112, 1973.

\bibitem[OS75]{OsterwalderSchrader1975}
K.~Osterwalder and R.~Schrader.
\newblock {A}xioms for {E}uclidean {G}reen's functions {II}.
\newblock {\em Commun. Math. Phys.}, 42(3):281--305, 1975.

\bibitem[Ott98]{OTTEN}
D.~E.~W. Otten.
\newblock {\em {R}epetitorium {E}xperimentalphysik}.
\newblock Springer-Verlag, Berlin and Heidelberg, 1998.

\bibitem[PC01]{PalmerCarrington2001}
C.~Palmer and M.~Carrington.
\newblock {A} {G}eneral {E}xpression for {S}ymmetry {F}actors of {F}eynman
  {D}iagrams.
\newblock 2001.
\newblock \href{http://arxiv.org/abs/hep-th/0108088}{arXiv:hep-th/0108088}.

\bibitem[PPRS04]{PaPaReSc}
N.~A. Papadopoulos, M.~Paschke, A.~Reyes, and F.~Scheck.
\newblock {T}he spin-statistics relation in nonrelativistic quantum mechanics
  and projective modules.
\newblock {\em Annales Mathematiques Blaise Pascal}, 11:205--220, 2004.

\bibitem[Par07]{Parisio2007}
F.~Parisio.
\newblock {O}n {B}argmann {R}epresentations of {W}igner {F}unction.
\newblock 2007.
\newblock arXiv:0712.2704v1 [quant-ph].

\bibitem[PV04]{Paschke2004}
M.~Paschke and R.~Verch.
\newblock {L}ocal covariant quantum field theory over spectral geometries.
\newblock arXiv:gr-qc/0405057 v1, May 2004.

\bibitem[PS95]{Peskin1995}
M.~E. Peskin and D.~V. Schroeder.
\newblock {\em {A}n introduction to quantum field theory}.
\newblock Perseus Books Publishing, 1995.

\bibitem[Pfl94]{Pflaum1994}
M.~J. Pflaum.
\newblock {Quantum groups on fiber bundles}.
\newblock {\em Commun. Math. Phys.}, 166:279--316, 1994.

\bibitem[Pfl96]{Pflaum:1996gv}
M.~J. Pflaum.
\newblock {A new concept of deformation quantization, I. Normal order
  quantization on cotangent bundles}.
\newblock 1996.

\bibitem[Pfl98a]{Pflaum1998a}
M.~J. Pflaum.
\newblock {A deformation-theoretical approach to Weyl quantization on
  Riemannian manifolds}.
\newblock {\em Lett. Math. Phys.}, 45:277--294, 1998.

\bibitem[Pfl98b]{Pflaum1998b}
M.~J. Pflaum.
\newblock {T}he {N}ormal symbol on {R}iemannian manifolds.
\newblock {\em New York Journal of Mathematics}, 4:97--125, 1998.
\newblock SFB-288-199.

\bibitem[Pia04]{Piacitelli2004}
G.~Piacitelli.
\newblock {Non local theories: New rules for old diagrams}.
\newblock {\em JHEP}, 08:031, 2004.

\bibitem[Pin00a]{PinterPHD2000}
G.~Pinter.
\newblock {\em {E}pstein-{G}laser {R}enormalization: {F}inite
  {R}enormalizations, the {S}-{M}atrix of $\phi^4$-{T}heory and the {A}ction
  {P}rinciple}.
\newblock PhD thesis, Universit{\"a}t Hamburg, 2000.
\newblock \url{http://unith.desy.de/research/aqft/doctoral_theses/}.

\bibitem[Pin00b]{Pinter2000b}
G.~Pinter.
\newblock {T}he {H}opf {A}lgebra {S}tructure of {C}onnes and {K}reimer in
  {E}pstein-{G}laser {R}enormalization.
\newblock {\em Letters in Mathematical Physics}, 54(3):227--233, November 2000.

\bibitem[Pin01]{Pinter2001}
G.~Pinter.
\newblock {F}inite {R}enormalizations in the {E}pstein {G}laser {F}ramework and
  {R}enormalization of the {S}-{M}atrix of $\phi^4$-{T}heory.
\newblock {\em Annalen der Physik}, 10:333--363, 2001.
\newblock \href{http://arxiv.org/abs/hep-th/9911063v3}{arXiv:hep-th/9911063v3}.

\bibitem[Pla01a]{Planck1901a}
M.~Planck.
\newblock {U}eber das {G}esetz der {E}nergieverteilung im {N}ormalspectrum.
\newblock {\em Annalen der Physik (Leipzig)}, 4(1):553--563, 1901.

\bibitem[Pla01b]{Planck1901b}
M.~Planck.
\newblock {U}eber die {E}lementarquanta der {M}aterie und der
  {E}lektricit{\"a}t.
\newblock {\em Annalen der Physik (Leipzig)}, 4(1):564--566, 1901.

\bibitem[Poh94]{Pohlmeyer1994}
K.~Pohlmeyer.
\newblock {Uncovering the detailed structure of the algebra formed by the
  invariant charges of closed bosonic strings moving in (1+2)-dimensional
  Minkowski space}.
\newblock {\em Commun. Math. Phys.}, 163:629--644, 1994.

\bibitem[Poh99]{Pohlmeyer1999}
K.~Pohlmeyer.
\newblock {The Nambu-Goto theory of closed bosonic strings moving in
  1+3-dimensional Minkowski space: The quantum algebra of observables}.
\newblock {\em Annalen Phys.}, 8(1):19--50, 1999.

\bibitem[Pol73]{Politzer1973}
H.~D. Politzer.
\newblock {R}eliable {P}erturbative {R}esults for {S}trong {I}nteractions?
\newblock {\em Phys. Rev. Lett.}, 30(26):1346 -- 1349, 1973.

\bibitem[PS82]{StoraPopineau1982}
G.~Popineau and R.~Stora.
\newblock {A} {P}edagogical {R}emark on the {M}ain {T}heorem of {P}erturbative
  {R}enormalization {T}heory.
\newblock unpublished notes, 1982.

\bibitem[PB02]{Popovych2002}
R.~O. Popovych and V.~M. Boiko.
\newblock {D}ifferential {I}nvariants of a {O}ne-{P}arameter {G}roup of {L}ocal
  {T}ransformations: {O}ne {I}ndependent {V}ariable.
\newblock {\em Nonlinear Oscillations}, 5(2):209--214, 2002.

\bibitem[Pow71]{Powers1971}
R.~T. Powers.
\newblock {S}elf-adjoint algebras of unbounded operators.
\newblock {\em Commun. Math. Phys.}, 21(2):85--124, 1971.

\bibitem[Pra97]{PrangeDipl1997}
D.~Prange.
\newblock {K}ausale {S}t{\"o}rungstheorie und {D}ifferentielle {R}enormierung.
\newblock Diploma thesis, Hamburg University, 1997.

\bibitem[Pra99]{Prange1999}
D.~Prange.
\newblock {E}pstein-{G}laser renormalization and differential renormalization.
\newblock {\em J. Phys. A}, 32(11):2225--2238, 1999.
\newblock hep-th/9710225.

\bibitem[Pra00]{Prange2000}
D.~Prange.
\newblock {\em {E}nergy {M}omentum {T}ensor and {O}perator {P}roduct
  {E}xpansion in {L}ocal {C}ausal {P}erturbation {T}heory}.
\newblock PhD thesis, Universit{\"a}t Hamburg, 2000.
\newblock \url{http://unith.desy.de/research/aqft/doctoral_theses/}.

\bibitem[PS86]{Pressley1986}
A.~Pressley and G.~Segal.
\newblock {\em {L}oop {G}roups}.
\newblock Oxford Science Publications, 1986.

\bibitem[Pri05]{Pridham2005}
J.~P. Pridham.
\newblock {T}he pro-unipotent radical of the pro-algebraic fundamental group of
  a compact {K}{\"a}hler manifold.
\newblock 2005.
\newblock \href{arXiv:math/0502451}{http://arxiv.org/abs/math/0502451v5}.

\bibitem[Rac]{Racke-Ana1}
R.~Racke.
\newblock {M}erkblatt zur {V}orlesung {E}inf{\"u}hrung in die {M}athematik,
  {A}nalysis {I}.
\newblock \url{http://www.math.uni-konstanz.de/~racke/mblatt/}.

\bibitem[Rad96]{Radzikowski1996}
M.~J. Radzikowski.
\newblock {M}icro-{L}ocal {A}pproach to the {H}adamard {C}ondition in {Q}uantum
  {F}ield {T}heory on {C}urved {S}pace-{T}ime.
\newblock {\em Commun. Math. Phys.}, 179:529--553, 1996.

\bibitem[RS75]{ReedSimon1975Vol2}
M.~Reed and B.~Simon.
\newblock {\em {F}ourier {A}nalysis, {S}elf-{A}djointness}, volume~2 of {\em
  Methods of Modern Mathematical Physics}.
\newblock Academic Press, 1975.

\bibitem[RS78]{ReedSimon1978Vol4}
M.~Reed and B.~Simon.
\newblock {\em {A}nalysis of {O}perators}, volume~4 of {\em Methods of Modern
  Mathematical Physics}.
\newblock Academic Press, 1978.

\bibitem[RS79]{ReedSimon1979Vol3}
M.~Reed and B.~Simon.
\newblock {\em {S}cattering {T}heory}, volume~3 of {\em Methods of Modern
  Mathematical Physics}.
\newblock Academic Press, 1979.

\bibitem[RS80]{ReedSimon1980Vol1}
M.~Reed and B.~Simon.
\newblock {\em {F}unctional {A}nalysis}, volume~1 of {\em Methods of Modern
  Mathematical Physics}.
\newblock Academic Press, 1980.

\bibitem[RS07]{Reuter2007}
M.~Reuter and F.~Saueressig.
\newblock {F}unctional {R}enormalization {G}roup {E}quations, {A}symptotic
  {S}afety, and {Q}uantum {E}instein {G}ravity.
\newblock lecture given at the summer school Geometric an Topological Methods
  for Quantum Field Theory, Villa de Leyva, Colombia, 2007.
\newblock \url{http://matematicas.uniandes.edu.co/summer2007/}.

\bibitem[RW08]{Reuter2008}
M.~Reuter and H.~Weyer.
\newblock {B}ackground {I}ndependence and {A}symptotic {S}afety in
  {C}onformally {R}educed {G}ravity.
\newblock 2008.
\newblock arXiv:0801.3287v1 [hep-th].

\bibitem[Rey06]{REYES}
A.~Reyes.
\newblock {\em {O}n the {G}eometry of the {S}pin-{S}tatistics {C}onnection in
  {Q}uantum {M}echanics}.
\newblock PhD thesis, Johannes Gutenberg-Universit{\"a}t Mainz, 2006.

\bibitem[RG68]{RieckersGuettinger1968}
A.~Rieckers and W.~G{\"u}ttinger.
\newblock {S}pectral representations of {L}orentz invariant distributions and
  scale transformation.
\newblock {\em Communications in Mathematical Physics}, 7(3):190--217,
  September 1968.

\bibitem[Rie01]{Riedle2001}
M.~Riedle.
\newblock {W}ahrscheinlichkeitsrechnung 1.
\newblock Skript, 2001.
\newblock \url{http://www.math.hu-berlin.de/~riedle/winter06/wt1.html}.

\bibitem[Rio58]{Riordan1958}
J.~Riordan.
\newblock {\em {A}n {I}ntroduction to {C}ombinatorial {A}nalysis}.
\newblock John Wiley \& Sons, Inc., 1958.

\bibitem[Rio79]{Riordan1979}
J.~Riordan.
\newblock {\em {C}ombinatorial {I}dentities}.
\newblock Robert E. Krieger Publishing, Huntington, New York, 1979.

\bibitem[RR69]{Roberts1969}
J.~E. Roberts and G.~Roepstorff.
\newblock {S}ome {B}asic {C}oncepts of {A}lgebraic {Q}uantum {T}heory.
\newblock {\em Commun. Math. Phys.}, 11(4):321--338, 1969.

\bibitem[Rob65]{Robinson1965}
D.~W. Robinson.
\newblock {T}he ground state of the {B}ose gas.
\newblock {\em Commun. Math. Phys.}, 1(2):159--174, 1965.

\bibitem[Roe94]{Roepstorff1994}
G.~Roepstorff.
\newblock {\em {P}ath {I}ntegral {A}pproach to {Q}uantum {P}hysics}.
\newblock Springer, 1994.

\bibitem[Rom05]{Roman2005}
S.~Roman.
\newblock {\em {A}dvanced {L}inear {A}lgebra}.
\newblock Springer, 2005.

\bibitem[Rud]{Rudra-AlgGeom}
A.~Rudra.
\newblock {A}lgebraic {G}eometry: {A}n {I}ntroduction.

\bibitem[Rue61]{Ruelle1961}
D.~Ruelle.
\newblock {C}onnection between {W}ightman functions and {G}reen functions in
  p-space.
\newblock {\em Nuovo Cimento}, 19:356ff, 1961.

\bibitem[Rue62]{Ruelle1962}
D.~Ruelle.
\newblock {O}n the asymptotic condition in quantum field theory.
\newblock {\em Helv. Phys. Acta}, 35:147ff, 1962.

\bibitem[R{\"u}h72]{Ruehl1972}
W.~R{\"u}hl.
\newblock {Distributions on Minkowski Space and their Connection with Analytic
  Representations of the Conformal Group}.
\newblock {\em Commun. Math. Phys.}, 27:53--86, 1972.

\bibitem[Ryd96]{Ryder1996}
L.~H. Ryder.
\newblock {\em {Q}uantum {F}ield {T}heory}.
\newblock Cambridge University Press, 2 edition, 1996.

\bibitem[Saf01]{Saffary2001}
T.~Saffary.
\newblock {D}er {H}awking-{E}ffekt.
\newblock Master's thesis, II. Institut f{\"u}r theoretische Physik der
  Universit{\"a}t Hamburg, 2001.
\newblock
  \url{http://unith.desy.de/research/aqft/diplomatheses/index_eng.html}.

\bibitem[SBG00]{Salehi2000}
H.~Salehi, Y.~Bisabr, and H.~Ghafarnejad.
\newblock {S}calar tensor theories and {H}adamard state condition.
\newblock {\em J. Math. Phys.}, 41(7):4582--4589, 2000.

\bibitem[Sal99]{Salmhofer1999}
M.~Salmhofer.
\newblock {\em {R}enormalization - {A}n {I}ntroduction}.
\newblock Springer, 1999.

\bibitem[San08]{Sanders2008}
K.~Sanders.
\newblock {O}n the {R}eeh-{S}chlieder {P}roperty in {C}urved {S}pacetime.
\newblock 2008.
\newblock arXiv:0801.4647 (math-ph).

\bibitem[Sau65]{Sauve1965}
J.~W. Sauve.
\newblock {A} {R}eduction of the {T}hree {B}ody {P}roblem by {M}eans of
  {T}ransformation {G}roups.
\newblock {\em Transactions of the AMS}, 119(1):21--36, 1965.

\bibitem[Sch99]{SCHECK1}
D.~F. Scheck.
\newblock {\em {T}heoretische {P}hysik 1: {M}echanik}.
\newblock Springer-Verlag, Berlin and Heidelberg and New York, 1999.

\bibitem[Sch00]{SCHECK2}
D.~F. Scheck.
\newblock {\em {T}heoretische {P}hysik 2: {N}ichtrelativistische
  {Q}uantentheorie}.
\newblock Springer-Verlag, Berlin and Heidelberg and New York, 2000.

\bibitem[Sch01]{SCHECK4}
D.~F. Scheck.
\newblock {\em {T}heoretische {P}hysik 4: {Q}uantisierte {F}elder}.
\newblock Springer-Verlag, Berlin and Heidelberg, 2001.

\bibitem[Sch97]{Schmidt1997}
Schmidt.
\newblock {E}uclidean {R}econstruction in {Q}uantum {F}ield {T}heory: {B}etween
  {T}empered {D}istributions and {F}ourier {H}yperfunctions.
\newblock 1997.
\newblock \href{http://arxiv.org/abs/math-ph/9811002}{arXiv:math-ph/9811002}.

\bibitem[Sch68]{Schulman1968}
L.~Schulman.
\newblock {A} {P}ath {I}ntegral for {S}pin.
\newblock {\em Phys. Rev.}, 5:1558--1569, 1968.

\bibitem[Sch00]{SCHWABL2}
D.~F. Schwabl.
\newblock {\em {Q}uantenmechanik f{\"u}r {F}ortgeschrittene}.
\newblock Springer-Verlag, Berlin and Heidelberg and New York, 2000.

\bibitem[Sch02]{SCHWABL1}
D.~F. Schwabl.
\newblock {\em {Q}uantenmechanik}.
\newblock Springer-Verlag, Berlin and Heidelberg and New York, 2002.

\bibitem[Sch66]{Schwartz1966}
L.~Schwartz.
\newblock {\em {M}athematics for the {P}hysical {S}ciences}.
\newblock Addison-Wesley Publishing Company, 1966.

\bibitem[Sch61]{Schweber1961}
S.~S. Schweber.
\newblock {\em {A}n {I}ntroduction to {R}elativistic {Q}uantum {F}ield
  {T}heory}.
\newblock Harper \& Row, 1961.

\bibitem[Sch59]{Schwinger1959}
J.~Schwinger.
\newblock {Euclidean Quantum Electrodynamics}.
\newblock {\em Phys. Rev.}, 115:721--731, 1959.

\bibitem[Seg47a]{Segal1947a}
I.~E. Segal.
\newblock {I}rreducible {R}epresentations of {O}perator {A}lgebras.
\newblock {\em Bull. Amer. Math. Soc.}, 53:73--88, 1947.

\bibitem[Seg47b]{Segal1947b}
I.~E. Segal.
\newblock {P}ostulates for {G}eneral {Q}uantum {M}echanics.
\newblock {\em Ann. Math.}, 48(4):930--948, 1947.

\bibitem[Ser65]{Serre1965}
J.-P. Serre.
\newblock {\em {L}ie algebras and {L}ie groups. 1964 lectures given at
  {H}arvard university}.
\newblock Benjamin, 1965.

\bibitem[Ser55]{Serre1955}
J.-P. Serre.
\newblock {F}aisceaux {A}lgebriques {C}oherents.
\newblock {\em The Annals of Mathematics 2}, 61(2):197--278, 1955.

\bibitem[Sha74]{Shafarevich1974}
I.~R. Shafarevich.
\newblock {\em {B}asic {A}lgebraic {G}eometry}.
\newblock Springer, 1974.

\bibitem[SZ98]{Shao1998}
M.-X. Shao and Z.-Y. Zhu.
\newblock {G}eneralization of {I}ntegrality {C}ondition of {P}requantization to
  {P}hase {S}pace with {B}oundaries.
\newblock arXiv:math-ph/9808009 v1, Aug 1998.

\bibitem[Shi01]{Shimabukuro2001}
A.~I. Shimabukuro.
\newblock {A}bout the {O}vercompleteness of {C}oherent {S}tate {S}ystems with a
  {L}ine {B}undles {V}iewpoint.
\newblock arXiv:math-ph/0102023 v1, Feb 2001.

\bibitem[SF09]{Shojaei-Fard2009}
A.~Shojaei-Fard.
\newblock {U}niversal {H}opf {A}lgebra of {R}enormalization and {H}opf
  {A}lgebras of {R}ooted {T}rees.
\newblock 2009.

\bibitem[Sib93]{Sibold1993}
K.~Sibold.
\newblock {S}t{\"o}rungstheoretische {R}enormierung, {Q}uantisierung von
  {E}ichtheorien.
\newblock lecture notes, TU M{\"u}nchen, MPI preprint MPI-Ph/93-1, 1993.

\bibitem[Sib08]{Sibold2008}
K.~Sibold.
\newblock {C}omment to \textquoteleft {N}o-{G}o {R}esult for {S}upersymmetric
  {G}auge {T}heories in the {C}ausal {A}pproach\textquoteright, by {D}. {R}.
  {G}rigore and {G}. {S}charf (previous article: {A}nn. {P}hys. ({B}erlin) 17,
  864 (2008)).
\newblock {\em Annalen der Physik}, 17(11):881--884, 2008.

\bibitem[Sie07]{Sierra2007}
G.~Sierra.
\newblock {A} quantum mechanical model of the {R}iemann zeros.
\newblock 2007.
\newblock arXiv:0712.0705v1 [math-ph].

\bibitem[Sil07]{Silagadze2007}
Z.~K. Silagadze.
\newblock {R}elativity without tears.
\newblock 2007.
\newblock arXiv:0708.0929v2 [physics.ed-ph].

\bibitem[Sim83]{Simon1983}
B.~Simon.
\newblock {H}olonomy, the {Q}uantum {A}diabatic {T}heorem, and {B}erry's
  {P}hase.
\newblock {\em Phys. Rev. Lett.}, 51(24):2167--, December 1983.

\bibitem[Sim92]{Simpson1992}
C.~T. Simpson.
\newblock {H}iggs bundles and local systems.
\newblock {\em Pub. Math. IHES}, 75:5--95, 1992.

\bibitem[Slo81]{Sloan1981}
A.~D. Sloan.
\newblock {T}he {S}trong {C}onvergence of {S}chrodinger {P}ropagators.
\newblock {\em Transactions of the AMS}, 264(2):557--570, 1981.

\bibitem[SWX04]{Smoczyk2004}
K.~Smoczyk, G.~Wang, and Y.~L. Xin.
\newblock {M}ean {C}urvature {F}low {W}ith {F}lat {N}ormal {B}undles.
\newblock arXiv:math.DG/0411010 v1, Oct 2004.

\bibitem[{\'S}ni74]{Sniatycki1974}
J.~{\'S}niatycki.
\newblock {P}requantization of charge.
\newblock {\em J. Math. Phys.}, 15(5):619--620, 1974.

\bibitem[Sol08]{Soloviev2008}
M.~A. Soloviev.
\newblock {O}n the failure of microcausality in noncommutative field theories.
\newblock 2008.
\newblock arXiv:0802.0997v1 [hep-th].

\bibitem[Son96]{Sonin1996}
K.~I. Sonin.
\newblock {S}emiprime and semiperfect rings of {L}aurent series.
\newblock {\em Mathematical Notes}, 60(2):222--226, 1996.

\bibitem[Sou97]{Souriau1997}
J.-M. Souriau.
\newblock {\em {S}tructure of {D}ynamical {S}ystems}.
\newblock Birkh{\"a}user, 1997.

\bibitem[SS70]{SpainSmith1970}
B.~Spain and M.~Smith.
\newblock {\em {F}unctions of {M}athematical {P}hysics}.
\newblock Van Nostrand Reinhold Company, London, 1970.

\bibitem[Spe71]{Speer1971}
E.~R. Speer.
\newblock {On the structure of analytic renormalization}.
\newblock {\em Commun. Math. Phys.}, 23:23--36, 1971.

\bibitem[Spe74a]{Speer1974b}
E.~R. Speer.
\newblock {Analytic renormalization using many space-time dimensions}.
\newblock {\em Commun. Math. Phys.}, 37:83--92, 1974.

\bibitem[Spe74b]{Speer1974a}
E.~R. Speer.
\newblock {Renormalization and ward identities using complex space- time
  dimension}.
\newblock {\em J. Math. Phys.}, 15:1--6, 1974.

\bibitem[Spe74c]{Speer1974c}
E.~R. Speer.
\newblock {The Convergence of BPH renormalization}.
\newblock {\em Commun. Math. Phys.}, 35:151--154, 1974.

\bibitem[Spi70]{Spivak1970}
M.~Spivak.
\newblock {\em {D}ifferential {G}eometry}, volume~2.
\newblock Publish or Perish, Inc., 1970.

\bibitem[Spr98]{Springer1998}
T.~A. Springer.
\newblock {\em {L}inear {A}lgebraic {G}roups}, volume~9 of {\em Progress in
  Mathematics}.
\newblock Birkh{\"a}user, 2 edition, 1998.

\bibitem[Ste51]{Steenrod1951}
N.~Steenrod.
\newblock {\em {T}he {T}opology of {F}ibre {B}undles}.
\newblock Princeton University Press, 1951.

\bibitem[{\c{S}}te97]{Stefan1997}
D.~{\c{S}}tefan.
\newblock {H}opf {S}ubalgebras of {P}ointed {H}opf {A}lgebras and
  {A}pplications.
\newblock {\em Proc. AMS}, 125(11):3191--3193, 1997.

\bibitem[Ste68]{Steinmann1968}
O.~Steinmann.
\newblock {A} rigorous formulation of {LSZ} field theory.
\newblock {\em Commun. Math. Phys.}, 10(3):245--268, October 1968.

\bibitem[Ste71]{Steinmann1971}
O.~Steinmann.
\newblock {\em {P}erturbation {E}xpansions in {A}xiomatic {F}ield {T}heory},
  volume~11 of {\em Lecture Notes in Physics}.
\newblock Springer, Berlin and Heidelberg, 1971.

\bibitem[Ste75]{Steinmann1975}
O.~Steinmann.
\newblock {A}xiomatic field theory and quantum electrodynamics : the massive
  case.
\newblock {\em Annales de l'institut Henri Poincar{\'e} (A) Physique
  th{\'e}orique}, 23(1):61--97, 1975.

\bibitem[Ste93]{Steinmann1993}
O.~Steinmann.
\newblock {P}erturbation theory of {W}ightman functions.
\newblock {\em Communications in Mathematical Physics}, 152(3):627--645, March
  1993.

\bibitem[Ste00]{Steinmann2000}
O.~Steinmann.
\newblock {\em {P}erturbative {Q}uantum {E}lectrodynamics and {A}xiomatic
  {F}ield {T}heory}.
\newblock Springer, 2000.

\bibitem[SX08]{Stienon2008}
M.~Sti{\'e}non and P.~Xu.
\newblock {R}eduction of generalized complex structures.
\newblock {\em Journal of Geometry and Physics}, 58(1):105--121, 2008.

\bibitem[Sto]{Stora-EllipticPerturbativeRenormalization}
R.~Stora.
\newblock {A} {N}ote on {E}lliptic {P}erturbative {R}enormalization on a
  {C}ompact {M}anifold.
\newblock {\em unpublished undated preprint}.
\newblock LAPP-TH.

\bibitem[Sto73]{Stora1971}
R.~Stora.
\newblock {L}agrangian {F}ield {T}heory.
\newblock In C.~de~Witt-Morette and C.~Itzykson, editors, {\em Particle
  Physics}, New York, 1973. Gordon and Breach.
\newblock Proceedings of the 1971 Les Houches Summer School.

\bibitem[Sto93]{Stora1993}
R.~Stora.
\newblock {D}ifferential {A}lgebras in {L}agrangian {F}ield {T}heory.
\newblock unpublished lecture notes, ETH Z{\"u}rich, 1993.

\bibitem[Sto04]{StoraSaclay2004}
R.~Stora.
\newblock {R}emarks on the {S}lavnov {S}ymmetry of {P}erturbative {G}auge
  {T}heories.
\newblock Talk, Saclay, 2004.

\bibitem[Sto06]{Stora2006}
R.~Stora.
\newblock {C}ausalit{\'e} et {G}roupes de {R}enormalisation {P}erturbatifs.
\newblock unpublished lecture notes, Ecole de Physique Th{\'e}orique de Jijel,
  Alg{\'e}rie, 2006.

\bibitem[Sto08]{Stora2008}
R.~Stora.
\newblock {Renormalized Perturbation Theory: A Missing Chapter}.
\newblock {\em Int. J. Geom. Meth. Mod. Phys.}, 5:1345--1360, 2008.

\bibitem[Str84]{Straube1984}
E.~J. Straube.
\newblock {H}armonic and {A}nalytic {F}unctions {A}dmitting a {D}istribution
  {B}oundary {V}alue.
\newblock {\em Annali della Scuola Normale Superiore di Pisa - Classe di
  Scienze}, 4(4):559--591, 1984.

\bibitem[Str02]{STRAUMANN}
N.~Straumann.
\newblock {\em {Q}uantenmechanik}.
\newblock Springer, Berlin, 2002.

\bibitem[SW64]{StreaterWightman1964}
R.~F. Streater and A.~S. Wightman.
\newblock {\em {PCT}, {S}pin and {S}tatistics, and all that}.
\newblock W.A. Benjamin, Inc., Amsterdam and New York, 1964.

\bibitem[SP53]{Stueckelberg1953}
E.~C.~G. St{\"u}ckelberg and A.~Petermann.
\newblock {L}a normalization des constantes dans la theorie des quanta.
\newblock {\em Helv. Phys. Acta}, 26:499--520, 1953.

\bibitem[SR50]{StueckelbergRivier1950}
E.~C.~G. St{\"u}ckelberg and D.~Rivier.
\newblock {A} propos des divergences en th{\'e}orie des champs quantifi{\'e}s.
\newblock {\em Helv. Phys. Acta}, 23 (Suppl. III):236--239, 1950.

\bibitem[Sud89]{Sudarshan1989}
E.~C.~G. Sudarshan.
\newblock {T}opology of {C}onfiguration {S}pace and {Q}uantizations.
\newblock In J.~S. Anandan, editor, {\em Quantum Coherence}, pages 309--319.
  World Scientific, 1989.
\newblock Proceedings of the International Conference on Fundamental Aspects of
  Quantum Theory to Celebrate 30 years of the Aharonov-Bohm Effect.

\bibitem[SW72]{Sulanke1972}
R.~Sulanke and P.~Wintgen.
\newblock {\em {D}ifferentialgeometrie und {F}aserb{\"u}ndel}.
\newblock Birkh{\"a}user, 1972.

\bibitem[SW87]{Summers1987}
S.~J. Summers and E.~H. Wichmann.
\newblock {C}oncerning the condition of additivity in quantum field theory.
\newblock {\em Annales de l'Institut Henri Poincar{\'e}, section A},
  47(2):113--124, 1987.

\bibitem[Sum08]{Summers2008}
S.~J. Summers.
\newblock {Y}et {M}ore {A}do {A}bout {N}othing: {T}he {R}emarkable
  {R}elativistic {V}acuum {S}tate.
\newblock 2008.
\newblock arXiv:0802.1854v1 [math-ph].

\bibitem[SS05]{Suzuki2005}
A.~T. Suzuki and J.~H.~O. Sales.
\newblock {G}auge transformations are not canonical transformations.
\newblock arXiv:hep-th/0511211 v1, Nov 2005.

\bibitem[Swe68]{Sweedler1968}
M.~E. Sweedler.
\newblock {C}ohomology of {A}lgebras {O}ver {H}opf {A}lgebras.
\newblock {\em Trans. Am. Math. Soc.}, 133(1):205--239, 1968.

\bibitem[Swe69]{Sweedler1969}
M.~E. Sweedler.
\newblock {\em {H}opf {A}lgebras}.
\newblock W. A. Benjamin, Inc., 1969.

\bibitem[Sym66]{Symanzik1966}
K.~Symanzik.
\newblock {E}uclidean {Q}uantum {F}ield {T}heory. {I}. {E}quations for a
  {S}calar {M}odel.
\newblock {\em J. Math. Phys.}, 7(3):510--525, March 1966.

\bibitem[Sym67]{Symanzik1967}
K.~Symanzik.
\newblock {E}uclidean proof of the {G}oldstone theorem.
\newblock {\em Commun. Math. Phys.}, 6(3):228--232, 1967.

\bibitem[Sym69]{Symanzik1969}
K.~Symanzik.
\newblock {E}uclidean {Q}uantum {F}ield {T}heory.
\newblock In R.~Jost, editor, {\em Local Quantum Theory}, number Course XLV in
  Proceedings of the International School of Physics 'Enrico Fermi', pages
  152--226, New York and London, 1969. Academic Press.

\bibitem[Sym70]{Symanzik1970}
K.~Symanzik.
\newblock {S}mall distance behaviour in field theory and power counting.
\newblock {\em Commun. Math. Phys.}, 18(3):227--246, 1970.

\bibitem[tHV72]{tHooftVeltman1972}
G.~'t~Hooft and M.~Veltman.
\newblock {R}egularization and renormalization of gauge fields.
\newblock {\em Nuclear Physics B}, 44(1):189--213, July 1972.

\bibitem[tH73]{tHooft1973}
G.~'t~Hooft.
\newblock {D}imensional regularization and the renormalization group.
\newblock {\em Nucl. Phys.}, B61:455--468, 1973.

\bibitem[Tay81]{Taylor1981}
M.~E. Taylor.
\newblock {\em {P}seudodifferential {O}perators}.
\newblock Number~34 in Princeton Mathematical Series. Princeton UP, Princeton,
  1981.

\bibitem[Tay96]{Taylor1996}
M.~E. Taylor.
\newblock {\em {P}artial {D}ifferential {E}quations}, volume~1 of {\em Applied
  Mathematical Sciences}.
\newblock Springer, 1996.

\bibitem[TZ97]{Tian1997}
Y.~Tian and W.~Zhang.
\newblock {S}ymplectic reduction and quantization.
\newblock {\em C. R. Acad. Sci. G{\'e}om{\'e}trie
  diff{\'e}rentielle/Differential Geometry}, 324:433--438, 1997.

\bibitem[tD79]{TomDieck1979}
T.~tom Dieck.
\newblock {\em {T}ransformation {G}roups and {R}epresentation {T}heory}.
\newblock Number 766 in Lecture Notes in Mathematics. Springer, 1979.

\bibitem[tD91]{Dieck1991}
T.~tom Dieck.
\newblock {\em {T}opologie}.
\newblock de Gruyter, 1991.

\bibitem[Tro03]{Trout2003}
J.~Trout.
\newblock {A}symptotic {S}pectral {M}easures: {B}etween {Q}uantum {T}heory and
  {E}-theory.
\newblock 2003.
\newblock \url{http://gauss.dartmouth.edu/~jodyt/}.

\bibitem[Uhl62]{UHLHORN}
U.~Uhlhorn.
\newblock {R}epresentation of symmetry transformations in quantum mechanics.
\newblock {\em Arkiv f{\"o}r Fysik}, 23(30):307--340, 1962.

\bibitem[Unr76]{Unruh1976}
W.~G. Unruh.
\newblock {N}otes on black-hole evaporation.
\newblock {\em Phys. Rev. D}, 14:870 -- 892, 1976.

\bibitem[vD03]{vanDongen}
P.~van Dongen.
\newblock {E}lektrodynamik und klassische {F}eldtheorie.
\newblock 2003.
\newblock \url{http://komet337.physik.uni-mainz.de/Group/}.

\bibitem[vS06]{Suijlekom2006}
W.~D. van Suijlekom.
\newblock {T}he {H}opf {A}lgebra of {F}eynman {G}raphs in {Q}uantum
  {E}lectrodynamics.
\newblock {\em Lett. Math. Phys.}, 77(3):265--281, 2006.
\newblock hep-th/0602126.

\bibitem[vS07a]{Suijlekom2007c}
W.~D. van Suijlekom.
\newblock {Multiplicative renormalization and Hopf algebras}.
\newblock 2007.
\newblock \href{http://arxiv.org/abs/0707.0555}{arXiv:0707.0555}.

\bibitem[vS07b]{Suijlekom2007}
W.~D. van Suijlekom.
\newblock {R}enormalization of gauge fields: {A} {H}opf algebra approach.
\newblock {\em Commun. Math. Phys.}, 276(3):773--798, 2007.
\newblock hep-th/0610137.

\bibitem[vS08]{Suijlekom2008}
W.~D. van Suijlekom.
\newblock {R}enormalization of gauge fields using {H}opf algebras.
\newblock 2008.
\newblock arXiv:0801.3170v1 [math-ph].

\bibitem[vV28]{Vleck1928}
J.~H. van Vleck.
\newblock {T}he {C}orrespondence {P}rinciple in the {S}tatistical
  {I}nterpretation of {Q}uantum {M}echanics.
\newblock {\em Proc. Nat. Acad. Sci. USA}, 14(2):178--188, 1928.

\bibitem[Van90]{Vanhecke1990}
F.~J. Vanhecke.
\newblock {A}n approach to infinite dimensional {L}ie groups.
\newblock {\em Brazilian Journal of Physics (Revista Brasileira de
  F{\'i}sica)}, 20(3):267--286, 1990.

\bibitem[Var68]{Varadarajan1968}
V.~S. Varadarajan.
\newblock {\em {G}eometry of {Q}uantum {T}heory}, volume~1 of {\em The
  University Series in Higher Mathematics}.
\newblock D. Van Nostrand Company, Inc., 1968.

\bibitem[Var67]{Varberg1967}
D.~E. Varberg.
\newblock {E}quivalent {G}aussian {M}easures with a {P}articularly {S}imple
  {R}adon-{N}ikodym {D}erivative.
\newblock {\em The Annals of Mathematical Statistics}, 38(4):1027--1030, 1967.

\bibitem[VW76]{VeloWightman1976}
G.~Velo and A.~S. Wightman, editors.
\newblock {\em {R}enormalization theory}, Dordrecht; Boston, 1976. D. Reidel
  Pub. Co.
\newblock proceedings of the NATO Advanced Study Institute held at the
  International School of Mathematical Physics at the "Ettore Majorana" Centre
  for Scientific Culture in Erice (Sicily) Italy, 17-31 August 1975.
\newblock
  \url{http://www.worldcat.org/oclc/1991271&tab=holdings?loc=Deutschland}.

\bibitem[Ver99]{Verch1999}
R.~Verch.
\newblock {W}avefront {S}ets in {A}lgebraic {Q}uantum {F}ield {T}heory.
\newblock {\em Commun. Math. Phys.}, 205(2):337--367, 1999.

\bibitem[Ver01]{Verch2001}
R.~Verch.
\newblock {A} {S}pin-{S}tatistics {T}heorem for {Q}uantum {F}ields on {C}urved
  {S}pacetime {M}anifolds in a {G}enerally {C}ovariant {F}ramework.
\newblock {\em Commun. Math. Phys.}, 223(2):261--288, 2001.

\bibitem[Ver05]{Verch2005}
R.~Verch.
\newblock {V}acuum {F}luctuations, {G}eometric {M}odular {A}ction and
  {R}elativistic {Q}uantum {I}nformation {T}heory.
\newblock arXiv:gr-qc/0512053 v1, Dec 2005.

\bibitem[Vis93]{Visser1993}
M.~Visser.
\newblock van {V}leck determinants: {G}eodesic focusing in {L}orentzian
  spacetimes.
\newblock {\em Phys. Rev. D}, 47(6):2395--2402, 1993.

\bibitem[Vla66]{Vladimirov1966}
V.~S. Vladimirov.
\newblock {\em {M}ethods of the {T}heory of {F}unctions of {M}any {C}omplex
  {V}ariables}.
\newblock MIT Press, 1966.

\bibitem[Vog97]{GERTHSEN}
D.~H. Vogel.
\newblock {\em {G}erthsen {P}hysik}.
\newblock Springer-Verlag, Berlin and Heidelberg and New York, 1997.

\bibitem[vW78]{WESTENHOLZ}
C.~von Westenholz.
\newblock {\em {D}ifferential {F}orms in {M}athematical {P}hysics}.
\newblock Elsevier Science Publishers B.V., Amsterdam, 1978.

\bibitem[Wal84]{Wald1984}
R.~M. Wald.
\newblock {\em {G}eneral {R}elativity}.
\newblock University of Chicago Press, 1984.

\bibitem[Wal95]{Wald1995}
R.~M. Wald.
\newblock {Q}uantum {F}ield {T}heory in {C}urved {S}pacetime.
\newblock arXiv:gr-qc/9509057 v1, Sep 1995.

\bibitem[Wal94]{Wald1994}
R.~M. Wald.
\newblock {\em {Q}uantum {F}ield {T}heory in {C}urved {S}pacetime and {B}lack
  {H}ole {T}hermodynamics}.
\newblock University of Chicago Press, 1994.

\bibitem[Wal07]{Waldmann2007}
S.~Waldmann.
\newblock {\em {P}oisson-{G}eometrie und {D}eformationsquantisierung}.
\newblock Springer, 2007.

\bibitem[Wat22]{Watson1922}
G.~N. Watson.
\newblock {\em {A} {T}reatise on the {T}heory of {B}essel {F}unctions}.
\newblock Cambridge UP, 1922.

\bibitem[Wei95]{Weibel1995}
C.~A. Weibel.
\newblock {\em {A}n {I}ntroduction to homological algebra}.
\newblock Number~38 in Cambridge studies in advanced mathematics. Cambridge
  University Press, 1995.

\bibitem[Wei60]{Weinberg1960}
S.~Weinberg.
\newblock {H}igh-{E}nergy {B}ehavior in {Q}uantum {F}ield {T}heory.
\newblock {\em Phys. Rev.}, 118(3):838, May 1960.

\bibitem[Wei95]{WEINBERG1}
S.~Weinberg.
\newblock {\em {T}he {Q}uantum {T}heory {O}f {F}ields. {I}: {F}oundations},
  volume~1.
\newblock University Of Cambridge, Cambridge, USA, 1995.

\bibitem[Wei96]{WEINBERG2}
S.~Weinberg.
\newblock {\em {T}he {Q}uantum {T}heory {O}f {F}ields. {II}: {M}odern
  {A}pplications}, volume~2.
\newblock University Of Cambridge, Cambridge, USA, 1996.

\bibitem[WZ04]{Weinstein2004}
A.~Weinstein and M.~Zambon.
\newblock {V}ariations on {P}requantization.
\newblock arXiv:math.SG/0412502 v1, Dec 2004.

\bibitem[Wei04]{Weinzierl2004}
S.~Weinzierl.
\newblock {H}opf algebra structures in particle physics.
\newblock {\em Eur. Phys. J. C}, 33:871--875, 2004.

\bibitem[Wei06]{Weinzierl2006}
S.~Weinzierl.
\newblock {T}he {A}rt of {C}omputing {L}oop {I}ntegrals.
\newblock In {\em Toronto 2005, Universality and Renormalization}, pages
  345--395, 2006.
\newblock \href{http://arxiv.org/abs/hep-ph/0604068}{arXiv:hep-ph/0604068}.

\bibitem[Wer07]{Werner2007}
D.~Werner.
\newblock {\em {F}unktionalanalysis}.
\newblock Springer, Berlin, 6 edition, 2007.

\bibitem[WW02]{WhittakerWatson1902}
E.~T. Whittaker and G.~Watson.
\newblock {\em {A} {C}ourse of {M}odern {A}nalysis}.
\newblock Cambridge UP, 1902.

\bibitem[Wic50]{Wick1950}
G.~C. Wick.
\newblock {T}he {E}valuation of the {C}ollision {M}atrix.
\newblock {\em Phys. Rev.}, 80(2):268 -- 272, 1950.

\bibitem[Wig56]{Wightman1956}
A.~S. Wightman.
\newblock {Quantum Field Theory in Terms of Vacuum Expectation Values}.
\newblock {\em Phys. Rev.}, 101:860--866, 1956.

\bibitem[WG64]{Wightman1964}
A.~S. Wightman and L.~G{\aa}rding.
\newblock {F}ields as operator-valued distributions in relativistic quantum
  theory.
\newblock {\em Arkiv f{\"o}r Fysik}, 28(13):129--184, 1964.

\bibitem[Wig31]{WIGNER_AS}
E.~P. Wigner.
\newblock {\em {G}ruppentheorie und ihre {A}nwendung auf die {Q}uantenmechanik
  der {A}tomspektren}.
\newblock Vieweg, Braunschweig, 1931.

\bibitem[Wig39]{WIGNER_REP}
E.~P. Wigner.
\newblock {O}n {U}nitary {R}epresentations of the {I}nhomogeneous {L}orentz
  {G}roup.
\newblock {\em Ann. Math.}, 40:149--204, 1939.

\bibitem[Wig59]{WIGNER_GT}
E.~P. Wigner.
\newblock {\em {G}roup {T}heory}.
\newblock Academic Press, London, 1959.

\bibitem[Wig67]{WIGNER_SYM}
E.~P. Wigner.
\newblock {\em {S}ymmetries and {R}eflections}.
\newblock Indiana University Press, Bloomington and London, 1967.

\bibitem[Wig77]{WIGNER_ORG}
E.~P. Wigner.
\newblock {\em {G}ruppentheorie und ihre {A}nwendung auf die {Q}uantenmechanik
  der {A}tomspektren}.
\newblock Vieweg, Braunschweig, 1977.
\newblock unver{\"a}nderter Nachdruck der Ausgabe von 1931.

\bibitem[Wil73]{Wilson1973}
K.~G. Wilson.
\newblock {Q}uantum {F}ield - {T}heory {M}odels in {L}ess {T}han 4
  {D}imensions.
\newblock {\em Phys. Rev. D}, 7(10):2911, May 1973.

\bibitem[WK74]{Wilson1974}
K.~G. Wilson and J.~Kogut.
\newblock {T}he renormalization group and the $\varepsilon$ expansion.
\newblock {\em Phys. Repts.}, 12(2):75--199, 1974.

\bibitem[Woo91]{WOODHOUSE}
N.~M.~J. Woodhouse.
\newblock {\em {G}eometric {Q}uantization}.
\newblock Oxford Science Publications, 2 edition, 1991.

\bibitem[Wul99]{Wulkenhaar1999}
R.~Wulkenhaar.
\newblock {O}n the {C}onnes-{M}oscovici {H}opf algebra associated to the
  diffeomorphisms of a manifold.
\newblock 1999.
\newblock math-ph/9904009.

\bibitem[Yan77a]{Yanagihara1977}
H.~Yanagihara.
\newblock {\em {T}heory of {H}opf {A}lgebras {A}ttached to {G}roup {S}chemes}.
\newblock Number 614 in Lecture Notes in Mathematics. Springer, 1977.

\bibitem[Yan77b]{Yang1977}
C.~N. Yang.
\newblock {G}eneralization of {D}irac's monopole to $\textrm{SU}_2$ gauge
  fields.
\newblock {\em J. Math. Phys.}, 19(1):320--328, 1977.

\bibitem[Yan88]{Yang1988}
K.~Yang.
\newblock {\em {C}ompact {R}iemann {S}urfaces and {A}lgebraic {C}urves},
  volume~10 of {\em Series in Pure Mathematics}.
\newblock World Scientific Publishing, 1988.

\bibitem[Yur94]{Yurtsever1994}
U.~Yurtsever.
\newblock {A}lgebraic approach to quantum field theory on
  non-globally-hyperbolic spacetimes.
\newblock {\em Classical and quantum gravity}, 11(4):999--1012, 1994.

\bibitem[ZS58]{Zariski1958}
O.~Zariski and P.~Samuel.
\newblock {\em {C}ommutative {A}lgebra {I}}, volume~28 of {\em Graduate Texts
  in Mathematics}.
\newblock Springer, 1958.

\bibitem[ZS60]{Zariski1960}
O.~Zariski and P.~Samuel.
\newblock {\em {C}ommutative {A}lgebra {II}}, volume~29 of {\em Graduate Texts
  in Mathematics}.
\newblock Springer, 1960.

\bibitem[Zei97]{Zeidler1997}
E.~Zeidler.
\newblock {\em {N}onlinear {F}unctional {A}nalysis and its {A}pplications}.
\newblock Springer, 1997.

\bibitem[Zei06]{Zeidler2006}
E.~Zeidler.
\newblock {\em {Q}uantum {F}ield {T}heory {I}: {B}asics in {M}athematics and
  {P}hysics. {A} {B}ridge between {M}athematicians and {P}hysicists}.
\newblock Springer, 2006.

\bibitem[Zei09]{Zeidler2009}
E.~Zeidler.
\newblock {\em {Q}uantum {F}ield {T}heory {II}: {Q}uantum {E}lectrodynamics.
  {A} {B}ridge between {M}athematicians and {P}hysicists}.
\newblock Springer, 2009.

\bibitem[Zim68]{Zimmermann1968}
W.~Zimmermann.
\newblock {T}he power counting theorem for {M}inkowski metric.
\newblock {\em Commun. Math. Phys.}, 11:1--8, 1968.

\bibitem[Zim69]{Zimmermann1969}
W.~Zimmermann.
\newblock {C}onvergence of {B}ogoliubov's {M}ethod of {R}enormalization in
  {M}omentum {S}pace.
\newblock {\em Commun. Math. Phys.}, 15(3):208--234, 1969.

\bibitem[Zim70]{Zimmermann1970}
W.~Zimmermann.
\newblock {L}ocal {O}perator {P}roducts and {R}enormalization in {Q}uantum
  {F}ield {T}heory.
\newblock In {\em Lectures on Elementary Particles and Quantum Field Theory},
  volume~1 of {\em Brandeis University Summer Institute in Theoretical
  Physics}, 1970.

\bibitem[Zim73]{Zimmermann1973}
W.~Zimmermann.
\newblock {C}omposite operators in the perturbation theory of renormalizable
  interactions,.
\newblock {\em Annals of Physics}, 77(1-2):536--569, May 1973.

\bibitem[Zim76]{Zimmermann1975}
W.~Zimmermann.
\newblock {R}emark on {E}quivalent {F}ormulations for {B}ogoliubov's {M}ethod
  of {R}enormalization.
\newblock In G.~Velo and A.~S. Wightman, editors, {\em Renormalization theory},
  NATO advanced study institutes series C: Mathematical and physical sciences,
  Dordrecht and Boston, 1976. Reidel.
\newblock Proceedings of Erice Conference, 1975.

\end{thebibliography}


\begin{thebibliography}{}

\end{thebibliography}
\begin{btSect}{Literatur_Physik}
\btPrintCited
\end{btSect}

\end{spacing}
\end{document}